\mathchardef\mh="2D
\renewcommand{\_}{\ensuremath{\rule{1ex}{.4pt}}}
\renewcommand{\phi}{\varphi}
\definecolor{Revolutionary}{RGB}{232,70,68}
\newcommand{\redtt}{\texttt{\color{Revolutionary}{red}tt}}
\newtheorem*{rep@theorem}{\rep@title}
\newcommand{\newreptheorem}[2]{%
\newenvironment{rep#1}[1]{%
 \def\rep@title{#2 \ref{##1}}%
 \begin{rep@theorem}}%
 {\end{rep@theorem}}}
\newcommand{\arcr}{\@arraycr}
\theoremstyle{definition}
\newtheorem{definition}{Definition}[section]
\newtheorem{recollection}{Recollection}[section]
\theoremstyle{remark}
\newtheorem{remark}[definition]{Remark}
\newtheorem{notation}[definition]{Notation}
\theoremstyle{plain}
\newtheorem{lemma}[definition]{Lemma}
\newtheorem{theorem}[definition]{Theorem}
\newtheorem{ruletheorem}[definition]{Rule}
\newtheorem{proposition}[definition]{Proposition}
\newtheorem*{proposition*}{Proposition}
\newtheorem{corollary}[definition]{Corollary}
\crefname{ruletheorem}{Rule}{Rules}
\newcommand{\eqdef}{\coloneqq}
\newcommand{\iffdef}{\mathrel{\,:\!\!\iff}}
\newcommand{\emp}{\varnothing}
\newcommand{\eq}{\ensuremath{\mathbin{\doteq}}}
\newcommand{\oft}[2]{#1\mathbin{:}#2}
\newcommand{\C}{\ensuremath{\mathcal{C}}}
\newcommand{\D}{\ensuremath{\mathcal{D}}}
\newcommand{\F}{\ensuremath{\mathcal{F}}}
\newcommand{\J}{\ensuremath{\mathcal{J}}}
\newcommand{\T}{\ensuremath{\mathcal{T}}}
\newcommand{\GG}{\ensuremath{\Gamma}}
\newcommand{\Ga}{\ensuremath{\alpha}}
\newcommand{\Gg}{\ensuremath{\gamma}}
\newcommand{\Ge}{\ensuremath{\varepsilon}}
\def\rightharpoonupfill@{\arrowfill@\relbar\relbar\rightharpoonup}
\newcommand{\overrightharpoonup}{%
\mathpalette{\overarrow@\rightharpoonupfill@}}
\newcommand{\lst}[1]{\overline{#1}}
\NewDocumentCommand\Infer{o m m}{%
  \IfValueTF{#1}
    {\inferrule*[vcenter,right={#1}]{#2}{#3}}
    {\inferrule{#2}{#3}}
  }
\newcommand{\rulename}[1]{(\textsc{#1})}
\newcommand{\subst}[3]{\ensuremath{#1 [#2 / #3]}}
\newcommand{\dsubst}[3]{\ensuremath{#1 \langle{#2}/{#3}\rangle}}
\newcommand{\bsubst}[3]{\ensuremath{#1 \bm{\langle}{#2}\bm{/}{#3}\bm{\rangle}}}
\newcommand{\arr}[2]{\ensuremath{#1 \to #2}}
\NewDocumentCommand\picl{s m m m}{%
  \ensuremath{({#2}{:}{#3}) \IfBooleanF{#1}{\to} #4}}
\newcommand{\lam}[2]{\ensuremath{\lambda{#1}.{#2}}}
\newcommand{\pto}{\ensuremath{\to_*}}
\newcommand{\negate}[1]{\lnot{#1}}
\newcommand{\prd}[2]{\ensuremath{#1 \times #2}}
\newcommand{\sigmacl}[3]{\ensuremath{({#1}{:}{#2}) \times #3}}
\NewDocumentCommand\pair{s m m}{%
  \ensuremath{\IfBooleanF{#1}{\langle} #2,#3\IfBooleanF{#1}{\rangle}}}
\newcommand{\fst}[1]{\ensuremath{\mathsf{fst}(#1)}}
\newcommand{\snd}[1]{\ensuremath{\mathsf{snd}(#1)}}
\NewDocumentCommand\Eq{g g g}{%
  \ensuremath{\mathsf{Eq}\IfValueT{#1}{_{#1}\IfValueT{#2}{(#2,#3)}}}}
\NewDocumentCommand\Path{g g g}{%
  \ensuremath{\mathsf{Path}\IfValueT{#1}{_{#1}\IfValueT{#2}{(#2,#3)}}}}
\NewDocumentCommand\Id{g g g}{%
  \ensuremath{\mathsf{Id}\IfValueT{#1}{_{#1}\IfValueT{#2}{(#2,#3)}}}}
\newcommand{\dlam}[2]{\ensuremath{\lambda^{\mathbb{I}} #1. #2}}
\newcommand{\dapp}[2]{\ensuremath{#1 @ #2}}
\newcommand{\blam}[2]{\ensuremath{\lambda^{\mathbf{2}} #1. #2}}
\newcommand{\bapp}[2]{\ensuremath{#1 \bm{@} #2}}
\NewDocumentCommand\Bridge{g g g}{%
  \ensuremath{\mathsf{Bridge}\IfValueT{#1}{_{#1}\IfValueT{#2}{(#2,#3)}}}}
\newcommand{\UU}{\mathcal{U}}
\newcommand{\UKan}{\ensuremath{\UU}}
\newcommand{\UBDisc}{\ensuremath{\UU}_{\mathsf{BDisc}}}
\newcommand{\UProp}{\ensuremath{\UU}_{\mathsf{Prop}}}
\newcommand{\UPtd}{\ensuremath{\UU_*}}
\newcommand{\void}{\ensuremath{\mathsf{void}}}
\NewDocumentCommand\voidelim{G{{}} g}{%
  \ensuremath{\mathsf{void}\mh\mathsf{elim}_{#1}\IfValueT{#2}{({#2})}}}
\newcommand{\unit}{\ensuremath{\mathsf{unit}}}
\newcommand{\triv}{\ensuremath{\mathsf{\star}}}
\NewDocumentCommand\natrec{g g g}{%
  \ensuremath{\mathsf{natrec}\IfValueT{#1}{(#1;#2,#3)}}}
\newcommand{\bool}{\ensuremath{\mathsf{bool}}}
\newcommand{\true}{\ensuremath{\mathsf{true}}}
\newcommand{\false}{\ensuremath{\mathsf{false}}}
\NewDocumentCommand\ifb{G{{}} g g g}{%
  \ensuremath{\mathsf{if}_{#1}
    \IfValueT{#2}{\IfNoValueTF{#3}{({#2})}{({#2};{#3},{#4})}}}}
\NewDocumentCommand\booleta{g}{\mathsf{\bool\mh\eta}\IfValueT{#1}{(#1)}}
\newcommand{\Circle}[1][1]{\ensuremath{{\mathsf{S}^{#1}}}}
\NewDocumentCommand\lp{g}{\ensuremath{\mathsf{loop}\IfValueT{#1}{^{#1}}}}
\NewDocumentCommand\Celim{G{{}} g g g}{%
  \ensuremath{\Circle\mh\mathsf{elim}_{#1}
    \IfValueT{#2}{\IfNoValueTF{#3}{({#2})}{({#2};{#3},{#4})}}}}
\newcommand{\bint}[1][1]{\ensuremath{\mathbbm{2}}}
\newcommand{\zero}{\ensuremath{\mathsf{zero}}}
\newcommand{\one}{\ensuremath{\mathsf{one}}}
\NewDocumentCommand\seg{g}{\ensuremath{\mathsf{seg}\IfValueT{#1}{^{#1}}}}
\newcommand{\susp}[1]{\ensuremath{\mathsf{susp}(#1)}}
\newcommand{\north}{\ensuremath{\mathsf{north}}}
\newcommand{\south}{\ensuremath{\mathsf{south}}}
\NewDocumentCommand\merid{g g}{\ensuremath{\mathsf{merid}\IfValueT{#1}{^{#1}(#2)}}}
\NewDocumentCommand\suspelim{G{{}} g g g g}{%
  \ensuremath{\mathsf{susp}\mh\mathsf{elim}_{#1}%
    \IfValueT{#2}{({#2};{#3},{#4},{#5})}}}
\NewDocumentCommand\smpair{g g}{\ensuremath{\mathsf{pair}\IfValueT{#1}{(#1,#2)}}}
\newcommand{\smbasel}{\ensuremath{\mathsf{basel}}}
\newcommand{\smbaser}{\ensuremath{\mathsf{baser}}}
\NewDocumentCommand\smgluel{g g}{\ensuremath{\mathsf{gluel}\IfValueT{#1}{^{#1}\IfValueT{#2}{(#2)}}}}
\NewDocumentCommand\smgluer{g g}{\ensuremath{\mathsf{gluer}\IfValueT{#1}{^{#1}\IfValueT{#2}{(#2)}}}}
\NewDocumentCommand\smelim{G{{}} g g g g g g}{%
  \ensuremath{\mathsf{\wedge}\mh\mathsf{elim}_{#1}%
    \IfValueT{#2}{({#2};{#3},{#4},{#5},{#6},{#7})}}}
\NewDocumentCommand\bigsmelim{m m m m m m m}{%
  \ensuremath{\mathsf{\wedge}\mh\mathsf{elim}_{#1}%
    \left({#2};
      \begin{array}{l}
        #3, \\
        #4, \\
        #5, \\
        #6, \\
        #7
      \end{array}
    \right)}}
\newcommand{\dataheading}[2]{$\mathsf{data}~{#1} : {#2}~\mathsf{where}$}
\NewDocumentCommand\niceconstr{m o m g}{$\mid {#1}\IfValueT{#2}{({#2})} : #3\ \IfValueT{#4}{~[{#4}]}$}
\newcommand{\Fiber}[4]{\ensuremath{\mathsf{Fiber}(#1,#2,#3;#4)}}
\newcommand{\isContr}[1]{\ensuremath{\mathsf{isContr}(#1)}}
\newcommand{\isEquiv}[3]{\ensuremath{\mathsf{isEquiv}(#1,#2,#3)}}
\newcommand{\Retract}[2]{\ensuremath{\mathsf{Retract}(#1,#2)}}
\NewDocumentCommand\isProp{g}{\ensuremath{\mathsf{isProp}\IfValueT{#1}{(#1)}}}
\newcommand{\Equiv}[2]{\ensuremath{\mathsf{Equiv}(#1,#2)}}
\newcommand{\ideq}[1]{\ensuremath{\mathsf{ideq}(#1)}}
\newcommand{\QEquiv}[2]{\ensuremath{\mathsf{QEquiv}(#1,#2)}}
\newcommand{\fwd}[1]{\ensuremath{\mathsf{fwd}(#1)}}
\newcommand{\bwd}[1]{\ensuremath{\mathsf{bwd}(#1)}}
\newcommand{\bwdfwd}[1]{\ensuremath{\mathsf{bwd{\mh}fwd}(#1)}}
\newcommand{\fwdbwd}[1]{\ensuremath{\mathsf{fwd{\mh}bwd}(#1)}}
\newcommand{\LEM}{\mathsf{LEM}}
\newcommand{\LEMW}{\mathsf{WLEM}}
\newcommand{\trunc}[1]{{|\!|}{#1}{|\!|}}
\newcommand{\isBDisc}[1]{\ensuremath{\mathsf{isBDisc}(#1)}}
\NewDocumentCommand\loosen{g g}{\ensuremath{\mathsf{loosen}\IfValueT{#1}{_{#1}\IfValueT{#2}{(#2)}}}}
\NewDocumentCommand\loosenrefl{g g}{\ensuremath{\mathsf{loosen{\mh}refl}\IfValueT{#1}{_{#1}\IfValueT{#2}{(#2)}}}}
\newcommand{\isParametric}[3]{\ensuremath{\mathsf{isParametric}(#1,#2,#3)}}
\NewDocumentCommand\V{G{{}} g g g}{%
  \ensuremath{\mathsf{V}_{#1}\IfValueT{#2}{(#2,#3,#4)}}}
\newcommand{\Vin}[3]{\ensuremath{\mathsf{Vin}_{#1}(#2;#3)}}
\newcommand{\Vproj}[3]{\ensuremath{\mathsf{Vproj}_{#1}(#2,#3)}}
\NewDocumentCommand\extent{s g g g g g}{%
  \ensuremath{\mathsf{\IfBooleanTF{#1}{ex}{extent}}\IfValueT{#2}{_{#2}\IfValueT{#3}{(#3;#4,#5,#6)}}}}
\NewDocumentCommand\bigextent{g g g g g}{%
  \ensuremath{\mathsf{extent}\IfValueT{#1}{_{#1}{\left(#2;\begin{array}{l}#3,\\#4,\\#5\end{array}\right)}}}}
\NewDocumentCommand\extentequiv{g g g g}{%
  \ensuremath{\mathsf{extent{\mh}equiv}\IfValueT{#1}{_{#1}{(#2;#3,#4)}}}}
\NewDocumentCommand\Gel{G{{}} g g g}{%
  \ensuremath{\mathsf{Gel}_{#1}\IfValueT{#2}{(#2,#3,#4)}}}
\NewDocumentCommand\gel{G{{}} g g g}{%
  \ensuremath{\mathsf{gel}_{#1}\IfValueT{#2}{(#2,#3,#4)}}}
\NewDocumentCommand{\ungel}{g}{\ensuremath{\mathsf{ungel}\IfValueT{#1}{(#1)}}}
\NewDocumentCommand\PathGel{G{{}} g}{%
  \ensuremath{\mathsf{P}_{#1}\IfValueT{#2}{(#2)}}}
\NewDocumentCommand\Gr{s G{{}} g g g}{%
  \ensuremath{\mathsf{Gr}\IfBooleanT{#1}{^*}_{#2}\IfValueT{#3}{(#3,#4,#5)}}}
\NewDocumentCommand\gr{G{{}} g}{%
  \ensuremath{\mathsf{gr}_{#1}\IfValueT{#2}{(#2)}}}
\NewDocumentCommand{\bridgetorel}{g}{%
  \ensuremath{\mathsf{bridge{\mh}rel}\IfValueT{#1}{(#1)}}}
\NewDocumentCommand{\reltobridge}{g g g}{%
  \ensuremath{\mathsf{ra}\IfValueT{#1}{(#1,#2;#3)}}}
\NewDocumentCommand\link{g g g}{\ensuremath{\mathsf{link}\IfValueT{#1}{(#1,#2;#3)}}}
\NewDocumentCommand\unlink{g g}{\ensuremath{\mathsf{read}\IfValueT{#1}{(#1,#2)}}}
\newcommand{\etc}[1]{\ensuremath{\overrightharpoonup{#1}}}
\newcommand{\tube}[2]{\ensuremath{#1\hookrightarrow #2}}
\newcommand{\arraytube}[2]{\ensuremath{#1&\hookrightarrow& #2}}
\newcommand{\sys}[2]{\etc{\tube{#1}{#2}}}
\NewDocumentCommand\NewCoercionOperator{m m O{\rightsquigarrow} O{M}}{%
  \NewDocumentCommand#1{s G{{}} g g g}{%
    \ensuremath{\mathsf{#2}_{##2}%
    \IfBooleanTF{##1}
      {^{r #3 s}(#4)}
      {\IfValueT{##3}{^{##3 #3 ##4}(##5)}}}}
}
\NewCoercionOperator{\coe}{coe}
\NewDocumentCommand\NewCompositionOperator{s m m O{\rightsquigarrow} O{M} O{y.N_i}}{%
  \IfBooleanTF{#1}
    {\NewDocumentCommand#2{s g g g g}{%
      \IfBooleanTF{##1}
        {\ensuremath{\mathsf{#3}^{r #4 s}(#5;\sys{##2}{#6})}}
        {\IfNoValueTF{##2}
          {\ensuremath{\mathsf{#3}}}
          {\ensuremath{\mathsf{#3}^{##2 #4 ##3}(##4\IfValueT{##5}{;##5})}}}}}
    {\NewDocumentCommand#2{s G{{}} g g g g}{%
      \IfBooleanTF{##1}
        {\ensuremath{\mathsf{#3}_{##2}^{r #4 s}(#5;\sys{##3}{#6})}}
        {\IfNoValueTF{##3}
          {\ensuremath{\mathsf{#3}_{##2}}}
          {\ensuremath{\mathsf{#3}_{##2}^{##3 #4 ##4}(##5\IfValueT{##6}{;##6})}}}}}
}
\NewDocumentCommand\NewBigCompositionOperator{s m m O{\rightsquigarrow} O{M} O{y.N_i}}{%
  \IfBooleanTF{#1}
    {\NewDocumentCommand#2{s g g g g}{%
      \IfBooleanTF{##1}
        {\ensuremath{\mathsf{#3}^{r #4 r'}\left(#5;\sys{##2}{#6}\right)}}
        {\IfNoValueTF{##2}
          {\ensuremath{\mathsf{#3}}}
          {\ensuremath{\mathsf{#3}^{##2 #4 ##3}\left(##4\IfValueT{##5}{;##5}\right)}}}}}
    {\NewDocumentCommand#2{s G{{}} g g g g}{%
      \IfBooleanTF{##1}
        {\ensuremath{\mathsf{#3}_{##2}^{r #4 r'}\left(#5;\sys{##3}{#6}\right)}}
        {\IfNoValueTF{##3}
          {\ensuremath{\mathsf{#3}_{##2}}}
          {\ensuremath{\mathsf{#3}_{##2}^{##3 #4 ##4}\left(##5\IfValueT{##6}{;##6}\right)}}}}}
}
\NewCompositionOperator{\hcom}{hcom}
\NewBigCompositionOperator{\bighcom}{hcom}
\NewCompositionOperator{\com}{com}
\NewCompositionOperator*{\fhcom}{fhcom}
\NewCompositionOperator{\fcom}{fcom}
\NewCompositionOperator*{\Kbox}{box}[\rightsquigarrow][M][N_i]
\NewCompositionOperator*{\Kcap}{cap}[\leftsquigarrow][M][y.B_i]
\newcommand{\steps}{\ensuremath{\longmapsto}}
\newcommand{\msteps}{\ensuremath{\longmapsto^*}}
\newcommand{\evals}{\ensuremath{\Downarrow}}
\newcommand{\isval}[1]{#1\ \mathsf{val}}
\NewDocumentCommand\PTy{g}{%
  \ensuremath{\textsc{PTy}\IfValueT{#1}{(#1)}}}
\newcommand{\Coh}[1]{\ensuremath{\textsc{Coh}(#1)}}
\NewDocumentCommand\Tm{g}{%
  \ensuremath{\textsc{Tm}\IfValueT{#1}{(#1)}}}
\NewDocumentCommand\relcts{s m m}{%
  \ensuremath{{#2} \models \IfBooleanTF{#1}{#3}{(#3)}}}
\NewDocumentCommand\PathR{O{} g g g}{%
  \ensuremath{\textsc{Path}^{#1}\IfValueT{#2}{_{#2}(#3,#4)}}}
\NewDocumentCommand\BridgeR{O{} g g g}{%
  \ensuremath{\textsc{Bridge}^{#1}\IfValueT{#2}{_{#2}(#3,#4)}}}
\NewDocumentCommand\VR{O{} g g g g}{%
  \ensuremath{\textsc{Gel}^{#1}\IfValueT{#2}{_{#2}(#3,#4,#5)}}}
\NewDocumentCommand\GelR{O{} g g g g}{%
  \ensuremath{\textsc{Gel}^{#1}\IfValueT{#2}{_{#2}(#3,#4,#5)}}}
\newcommand{\lfp}[1]{\mu{#1}}
\NewDocumentCommand\cx{s o d!! g d<>}{
  \IfBooleanT{#1}{(}%
  \IfValueTF{#3}{#3}{\Phi\IfValueT{#2}{#2}}\mathop{\mid}%
  \IfValueTF{#4}{#4}{\Psi\IfValueT{#2}{#2}}%
  \IfValueT{#5}{\mathop{\mid}#5}%
  \IfBooleanT{#1}{)}}
\newcommand{\dimj}[2][\cx]{\ensuremath{{#2}\ \mathsf{pdim}\ [{#1}]}}
\newcommand{\bridgej}[2][\cx]{\ensuremath{{#2}\ \mathsf{bdim}\ [{#1}]}}
\newcommand{\bridgesj}[2][\cx]{\ensuremath{{#2}\ \mathsf{bdims}\ [{#1}]}}
\newcommand{\tmj}[2][\cx]{\ensuremath{{#2}\ \mathsf{tm}\ [{#1}]}}
\newcommand{\constraintj}[2][\cx]{\ensuremath{{#2}\ \mathsf{eq}\ [#1]}}
\newcommand{\constraintsj}[2][\cx]{\ensuremath{{#2}\ \mathsf{eqs}\ [#1]}}
\newcommand{\evhole}{[-]}
\newcommand{\evalcx}[3]{\ensuremath{{#2} : {#3} \Leftarrow {#1}}}
\newcommand{\plug}[2]{\ensuremath{{#1}[{#2}]}}
\newcommand{\apartcx}[2]{{#1}^{\setminus{#2}}}
\newcommand{\vper}[2][]{\ensuremath{\llbracket #2 \rrbracket}^{#1}}
\newcommand{\tds}[3]{\ensuremath{{#2} : {#1} \to {#3}}}
\newcommand{\psitd}[1][]{\tds{\cx*[#1']}{\psi#1}{\cx*[#1]}}
\newcommand{\td}[2]{\ensuremath{{#1}{#2}}}
\newcommand{\id}{\ensuremath{\mathrm{id}}}
\newcommand{\pre}{\mathsf{pre}}
\newcommand{\Kan}{\mathsf{Kan}}
\NewDocumentCommand\cwfctx{O{\cx} m}
  {#2\ \mathsf{ctx}\ [#1]}
\NewDocumentCommand\cwftype{m O{\cx} m}
  {#3\ \mathsf{type}_\mathsf{#1}\ [#2]}
\NewDocumentCommand\ceqtype{m O{\cx} m m}
  {#3 \eq #4\ \mathsf{type}_\mathsf{#1}\ [#2]}
\NewDocumentCommand\wftype{m O{\cx} m m}
  {#3 \gg #4 \ \mathsf{type}_\mathsf{#1}\ [#2]}
\NewDocumentCommand\eqtype{m O{\cx} m m m}
  {#3 \gg #4 \eq #5\ \mathsf{type}_\mathsf{#1}\ [#2]}
\newcommand{\cwftypep}{\cwftype{\pre}}
\newcommand{\ceqtypep}{\ceqtype{\pre}}
\newcommand{\wftypep}{\wftype{\pre}}
\newcommand{\cwftypek}{\cwftype{\Kan}}
\newcommand{\ceqtypek}{\ceqtype{\Kan}}
\newcommand{\wftypek}{\wftype{\Kan}}
\newcommand{\eqtypek}{\eqtype{\Kan}}
\newcommand{\ceqtypex}{\ceqtype{\kappa}}
\newcommand{\wftypex}{\wftype{\kappa}}
\newcommand{\eqtypex}{\eqtype{\kappa}}
\NewDocumentCommand\coftype{O{\cx} m m}
  {#2 \in #3\ [#1]}
\NewDocumentCommand\ceqtm{O{\cx} m m m}
  {#2 \eq #3 \in #4\ [#1]}
\NewDocumentCommand\oftype{O{\cx} m m m}
  {#2 \gg #3 \in #4\ [#1]}
\NewDocumentCommand\eqtm{O{\cx} m m m m}
  {#2 \gg #3 \eq #4 \in #5\ [#1]}
\NewDocumentCommand\cjudg{O{\cx} m}
  {#2\ [#1]}
\NewDocumentCommand\judg{O{\cx} m m}
  {#2 \gg #3\ [#1]}
\title{Parametric Cubical Type Theory}
\author{Evan Cavallo\\Carnegie Mellon University\\\texttt{ecavallo@cs.cmu.edu}
  \and Robert Harper\\Carnegie Mellon University\\\texttt{rwh@cs.cmu.edu}}
\date{January 2019}
\begin{document}

\maketitle

\begin{abstract}
  We exhibit a computational type theory which combines the higher-dimensional structure of cartesian cubical
  type theory with the internal parametricity primitives of parametric type theory, drawing out the
  similarities and distinctions between the two along the way. The combined theory supports both univalence
  and its relational equivalent, which we call \emph{relativity}. We demonstrate the use of the theory by
  analyzing polymorphic types, including functions between higher inductive types, and we show by example how
  relativity can be used to characterize the relational interpretation of inductive types.
\end{abstract}

\section{Introduction}

This paper brings together two closely-related varieties of ``augmented'' dependent type theory: \emph{cubical
  type theory} \citep{bch,cchm,angiuli17,angiuli18} and \emph{parametric type theory}
\citep{bernardy12,bernardy13,bernardy15,nuyts17}. Each of these theories serves to
internalize a feature found in particular models of Martin-L\"of's dependent type theory, higher-dimensional
structure in the former case and parametric polymorphism in the latter. Moreover, each does so by introducing
a notion of \emph{dimension variable}. A term which varies over a dimension variable expresses some
relationship between its \emph{endpoints}, its values at a fixed set of dimension constants. In cubical type
theory, such a term is called a \emph{path}; we follow \cite{nuyts17} in calling the parametric equivalent a
\emph{bridge}. The connection between higher-dimensional and parametric type theories is no secret; it was
observed already by \citeauthor{bernardy12} in \citeyear{bernardy12}, and the two lines of work have continued
to influence each other. We present a type theory which combines the central tools of both: univalence and
higher inductive types on the cubical side, and what we will call \emph{relativity} on the parametric side.

Parametric type theory grows out of a long line of work on \emph{parametric polymorphism}. A polymorphic
function---one whose type contains free type variables---is intuitively said to be \emph{polymorphic} when its
behavior is uniform in those variables. \cite{reynolds83} observed that a type discipline could ensure
\emph{all} polymorphic functions are parametric, and that this could be proven using a relational
interpretation of the theory. Such interpretations have since been designed for a variety of theories,
including dependent type theories \citep{bernardy10}. Recently, \citeauthor{bernardy12} and
\citeauthor{nuyts17}\ have introduced type theories which internalize their own relational interpretation,
making it possible to prove and exploit parametricity results within the theory. Beginning with
\cite{bernardy13}, these theories have relied on dimension variables to organize the iterated relational
structure which arises thence. Roughly, a term of type $A$ in $n$ dimension variables---an $n$-dimensional
\emph{bridge}---represents a term in the $n$th iterated relational interpretation of $A$.

Cubical type theory, on the other hand, endows dependent type theory with a coarse, proof-relevant notion of
equality, the aforementioned \emph{path}. A path is again a term in a dimension variable, with two endpoints
given by substituting one of two constants $0,1$. As in the relational case, the use of dimension variables
serves to organize the structure of iterated path types. In contrast, however, cubical type theory also
includes \emph{Kan operations} which ensure that all types respect paths. A central feature of cubical type
theory is \emph{univalence}, which for any two types establishes a correspondence between the paths between
them and the equivalences (roughly, isomorphisms) between them. Cubical type theory also permits the
definition of \emph{higher inductive types} (HITs), types inductively defined by higher-dimensional path
generators \citep{coquand18,cavallo19}. Cubical type theory gives a constructive interpretation of
\emph{homotopy type theory}, an extension of dependent type theory with axioms asserting univalence and the
existence of HITs in terms of the Martin-L\"of identity type \citep{hott-book}.

\paragraph{Theory}

The development of the combined type theory and its semantics is largely straightforward, as interaction
between the bridge and path structure is minimal; we only need a minor modification to cubical type theory's
Kan conditions to make bridge types Kan. As compared with the work of Bernardy et al., our parametric side
hews closer to cubical type theory: our bridges have two endpoints instead of one, and we aim for an
\emph{equivalence} between bridges in the universe and relations (the aforementioned relativity) rather than
an exact equality. The latter means that we do not need the technical device of $I$-sets employed by
\cite{bernardy15}; instead, we rely on univalence. Throughout, we present the parametric aspects in a style
meant to clarify the connection to cubical type theory, drawing attention to the essential differences as they
arise. Foremost among these is the use of \emph{structural} dimension variables on the path side (following
\citeauthor{angiuli18}) as opposed to \emph{substructural} dimensions on the bridge side (following
\citeauthor{bernardy15}), which is reflected in the differences between paths and bridges at function and
universe types.

\paragraph{Applications}

By adding cubical structure to parametric type theory, we obtain function extensionality, which is
particularly convenient for working with Church encodings. For example, we can actually show that
$\picl{X}{\UKan}{X \to X}$ is equivalent to $\unit$ (\cref{sec:examples:id}). In the opposite direction, by
adding relational structure to cubical type theory, we are able to derive ``free theorems'' \citep{wadler89}
for polymorphic functions on HITs (\cref{sec:examples:hits}). Properties like these are of particular interest
because coherence obligations can seriously complicate proofs about functions between HITs. For example,
proofs about the monoidal structure of the smash product \citep[\S6.8]{hott-book} are notoriously
difficult. We also develop the methodology of parametric type theory beyond that discussed in prior work: we
introduce the essential notion of bridge-discrete type and show how to characterize the bridge type of data
types such as $\bool$ using relativity (\cref{sec:examples:bool}). These results should transfer in some form
to the theory of \citeauthor{bernardy15}, but have not previously been explored.  Finally, the combined theory
is a witness to the consistency of homotopy type theory and cubical type theory with the negation of (some
versions of) the law of the excluded middle (\cref{sec:examples:lem}).

\paragraph{Outline}

We develop our type theory primarily in the form of a partial equivalence relation (PER) semantics for a
programming language, following the work of \cite{angiuli17} for cubical type theory. We introduce the
language in \cref{sec:language}, followed by the PER semantics (including Kan conditions) in
\cref{sec:type-systems}. In \cref{sec:cubical}, we recall the $\Path$ and $\V$ type formers of
\citeauthor{angiuli17}, which translate into the extended theory without incident, together with a few
standard definitions and theorems of higher-dimensional type theory. We come to the parametric side in
\cref{sec:bridge-types}, introducing $\Bridge$-types and showing that these are Kan. We show how
$\Bridge$-types commute with the connectives of cubical type theory in \cref{sec:compound}; for function
types, this requires the introduction of the $\extent$ operator. Finally, we introduce $\Gel$-types, the
parametric equivalent of $\V$-types, in \cref{sec:gel}.  The $\Bridge$, $\extent$, and $\Gel$ operators
correspond to the $A \ni_i a$, $\langle t,_i u\rangle$, and $(x : A) \times_i P$ constructs of
\cite{bernardy15} respectively. (We include an extended translation dictionary in \cref{sec:related}).

This completes the design of the type theory. In \cref{sec:proof-theory}, we collect the inference rules we
have established for each type into a makeshift proof theory; we use only these rules in the remainder of the
paper.  We begin by proving relativity in \cref{sec:relativity}, establishing the correspondence between
bridges in the universe and relations on their endpoints. \cref{sec:bridge-discrete} defines the sub-universe
of bridge-discrete types. In \cref{sec:examples}, we give a series of examples illustrating the use of the
theory.

\paragraph{Acknowledgments}

We thank Carlo Angiuli, Steve Awodey, Daniel Gratzer, Kuen-Bang Hou (Favonia), Dan Licata, Anders M\"ortberg,
and Jonathan Sterling for their comments and insights. Conversations with the directed type theory group at
the 2017 Mathematics Research Communities workshop and with Emily Riehl and Christian Sattler afterwards were
also instrumental to the first author's understanding of relativity.

We gratefully acknowledge the support of the Air Force Office of Scientific Research through MURI grant
FA9550-15-1-0053. Any opinions, findings and conclusions or recommendations expressed in this material are
those of the authors and do not necessarily reflect the views of the AFOSR.

%%% Local Variables:
%%% mode: latex
%%% TeX-master: "main"
%%% End:

\section{Programming language}
\label{sec:language}

We begin by introducing the untyped programming language on which our type systems are based. The language has
three sorts: \emph{bridge dimensions}, \emph{path dimensions}, and \emph{terms}.

\subsection{Dimension terms and contexts}

The sorts of bridge and path dimensions are defined by the following grammar.
\[
  \begin{array}{lrcllrcl}
    \text{(bridge dim)} & \bm{r} &::=& \bm{x} \mid \bm{0} \mid \bm{1} \\
    \text{(path dim)} & r &::=& x \mid 0 \mid 1
  \end{array}
\]
We use the letters $x,y,z,\ldots$ for path dimension variables and $r,s,\ldots$ for path dimension terms, with
bridge dimensions using the same letters but written in \textbf{bold type}. We use $\Ge$ to stand for $0$ or
$1$, likewise $\bm{\Ge}$ for $\bm{0}$ or $\bm{1}$. We use $\bm{\rho}$ for lists of bridge dimensions.

\begin{definition}
  A \emph{bridge-path context} is a pair $\cx*$ where $\Phi = \{\bm{x}_1,\ldots,\bm{x}_n\}$ is a set of bridge
  dimension variables and $\Psi = \{x_1,\ldots,x_n\}$ is a set of path dimension variables. We use the letter
  $\D$ for bridge-path contexts. We write $\bridgej{\bm{r}}$ and $\dimj{r}$ to mean
  $\bm{r} \in \Phi \cup \{\bm{0},\bm{1}\}$ and $r \in \Psi \cup \{0,1\}$ respectively. Likewise, we write
  $\bridgesj{\bm{\rho}}$ when $\bm{\rho} = \etc{\bm{r}_i}$ with $\bridgej{\bm{r}_i}$ for each $i$.
\end{definition}

\begin{definition}
  A \emph{bridge-path substitution} $\psitd$ is a function $\psi$ taking $\bm{x} \in \Phi$ to
  $\psi(\bm{x}) \in \Phi \cup \{\bm{0},\bm{1}\}$ and $x \in \Psi$ to $\psi(x) \in \Psi \cup \{0,1\}$, such
  that if $\psi(\bm{x}) = \psi(\bm{y})$ then either $\psi(\bm{x}) \in \{\bm{0},\bm{1}\}$ or $\bm{x} =
  \bm{y}$. We write $\td{\bm{r}}{\psi}$ and $\td{r}{\psi}$ for the action of $\psi$ on some $\bm{r}$ or $r$,
  which replaces a variable by its image under $\psi$ and leaves the constants $\bm{0},\bm{1},0,1$ untouched.
\end{definition}

The category of bridge-path contexts and their substitutions is thus the product of a category of bridge
contexts and substitutions and a category of path contexts and substitutions. The former is adapted from
\citeauthor{bernardy15}; the only change is the addition of a second constant $\bm{1}$ (corresponding to a
move from unary to binary relations). It is also the base category used in the model of homotopy type theory
due to \cite{bch,bezem17}, as well as that used by \cite{johann17} to construct parametric models of System
F. In the terminology of \cite{buchholtz17}, it is $\mathbb{C}_{(\mathrm{we},\cdot)}$: we have weakening and
exchange for bridge variables, but not contraction.  The category of path contexts is taken unchanged from
\citeauthor{angiuli17}; it is the \emph{cartesian cube category} $\mathbb{C}_{(\mathrm{wec},\cdot)}$ which
additionally supports contraction. The choice of the latter category is not essential; we conjecture that
cartesian cubical type theory could be replaced in this development with any other cubical type theory without
much change. This could be the theory of \cite{cchm}, which is based on
$\mathbb{C}_{(\mathrm{wec},\land\lor')}$, or that of \citeauthor{bezem17} (though the latter is problematic
for higher inductive types, as discussed in \cref{sec:related}). The choice of a category without contraction
for bridge dimensions is, however, apparently forced, for reasons we will first encounter in
\cref{sec:compound}.

\begin{definition}
  Given $\Phi$ and $\bm{r} \in \Phi \cup \{\bm{0},\bm{1}\}$, we write
  $\apartcx{\Phi}{\bm{r}} \eqdef \Phi \setminus \{\bm{r}\}$. Given $\D = \cx*$, we write
  $\apartcx{\D}{\bm{r}}$ for $\cx*!\apartcx{\Phi}{\bm{r}}!$. For a list $\bm{\rho} = \etc{\bm{r}_i}$, we write
  $\apartcx{-}{\bm{\rho}}$ to mean $-^{\setminus \bm{r}_1 \cdots \setminus \bm{r}_n}$.
\end{definition}

\begin{definition}
  Given $\psitd$, we write $\tds{\cx*[']!\Phi',\bm{x}!}{(\psi,\bm{x})}{\cx*}$ and
  $\tds{\cx*[']{\Psi',x}}{(\psi,x)}{\cx*}$ for the result of weakening the substitution with
  $\bm{x} \not\in \Phi'$ or $x \not\in \Psi'$ respectively. Given $\bridgej{\bm{r}}$, we write
  $\tds{\cx*!\apartcx{\Phi'}{\td{\bm{r}}{\psi}}!}{\apartcx{\psi}{\bm{r}}}{\cx*!\apartcx{\Phi}{\bm{r}}!}$ for
  the restriction of $\psi$ to $\cx*!\apartcx{\Phi}{\bm{r}}!$. Finally, given $\cx*[_0]$ disjoint from $\cx*$
  and $\cx*[']$, we write
  $\tds{\cx*!\Phi'\Phi_0!{\Psi'\Psi_0}}{\psi \times \cx*[_0]}{\cx*!\Phi\Phi_0!{\Psi\Psi_0}}$ for $\psi$
  extended by the identity on $\cx*[_0]$.
\end{definition}

\subsection{Operational semantics}

\begin{figure}
  \[
    \begin{array}{rcl}
      M &::=& \picl{a}{M}{M} \mid \lam{a}{M} \mid MM \mid { } \\[0.3em]
        && \sigmacl{a}{M}{M} \mid \pair{M}{M} \mid \fst{M} \mid \snd{M} \mid { } \\[0.3em]
        % && \unit \mid \triv \mid { } \\[0.3em]
        && \Path{x.M}{M}{M} \mid \dlam{x}{M} \mid \dapp{M}{r} \mid { } \\[0.3em]
        && \V{r}{M}{M}{M} \mid \Vin{r}{M}{M} \mid \Vproj{r}{M}{M} \mid { } \\[0.3em]
        && \Bridge{\bm{x}.M}{M}{M} \mid \blam{\bm{x}}{M} \mid \bapp{M}{\bm{r}}  \mid { } \\[0.3em]
        && \Gel{\bm{r}}{M}{x.M}{M} \mid \gel{\bm{r}}{M}{x.M}{M} \mid \ungel{\bm{x}.M} \mid { } \\[0.3em]
        && \extent{\bm{r}}{M}{a.M}{a.M}{a.a.a.M} \mid { } \\[0.3em]
        && \hcom{M}{r}{r}{M}{\sys{\xi}{x.M}} \mid \coe{x.M}{r}{r}{M} \mid \\
        && \cdots
    \end{array}
  \]
  \caption{The term language}
  \label{fig:language}
\end{figure}

\begin{definition}
  We write $M,N,\ldots$ for \emph{terms}, which are drawn from some fixed superset of the grammar shown in
  \cref{fig:language}. We write $\td{M}{\psi}$ for the action of a path-bridge substitution on a term; such
  substitution instances are called the \emph{aspects} of $M$. We write $\tmj{M}$ when every dimension
  variable occurring in $M$ appears either in $\Phi$ or in $\Psi$.
\end{definition}

\begin{definition}
  We write $\bsubst{M}{\bm{r}}{\bm{x}}$ for the result of substituting $\bm{r}$ for $\bm{x}$ in $M$. Likewise,
  we write $\dsubst{M}{r}{x}$ for the substitution of $r$ for $x$ in $M$. We have
  $\tds{\cx*}{\bsubst{}{\bm{r}}{\bm{x}}}{\cx*!\apartcx{\Phi}{\bm{r}},\bm{x}!}$ and
  $\tds{\cx*}{\dsubst{}{r}{x}}{\cx*{\Psi,x}}$.
\end{definition}

\begin{definition}
  An \emph{evaluation system} is a pair of judgments $M \steps M'$ and $\isval{M}$ over terms $M,M'$ which are
  \begin{enumerate}
  \item \emph{deterministic}: if $M \steps M'$ and $M \steps M''$ then $M' = M''$, and it is never the case
    that both $M \steps M'$ and $\isval{M}$,
  \item \emph{context-preserving}: if $\tmj{M}$ and $M \steps M'$ then $\tmj{M'}$.
  \end{enumerate}
  We write $M \evals V$ when $M \msteps V$ and $\isval{V}$.
\end{definition}

We fix an evaluation system for the remainder of the paper. We will require that the judgments $M \steps M'$
and $\isval{M}$ satisfy various inference rules concerning the grammar in \cref{fig:language}, which we
introduce as we discuss each operator. The results hold for any language and evaluation system which extend
those we present.

%%% Local Variables:
%%% mode: latex
%%% TeX-master: "main"
%%% End:

\section{Type systems and judgments}
\label{sec:type-systems}

We next introduce $\D$-PERs, which serve as the semantics of types in context $\D$, and \emph{path-bridge type
  systems}, which define a partial equivalence relation on type names and associate a $\D$-relation to each
equivalence class of types. Finally, we define a notion of \emph{Kan type}. These definitions constitute a
straightforward extension of the \citeauthor{chtt-iii}\ semantics from path dimension contexts to path-bridge
contexts.

\subsection[D-relations]{$\D$-relations}
\label{sec:type-systems:relations}

\begin{definition}
  \label{def:relation}
  Let a bridge-path context $\D$ be given. A \emph{(value) $\D$-relation} $\Ga$ is a mapping from
  substitutions $\tds{\cx*[']}{\psi'}{\D}$ to relations $\Ga_\psi(-,-)$ on terms (values)
  $\tmj[{\cx[']}]{M,M'}$. We say $\Ga$ is a \emph{$\D$-PER} when each $\Ga_\psi$ is a PER.
\end{definition}

Value $\D$-PERs will serve as the semantics of types in context $\D$. The $\D$-PER $\Ga$ assigned to a type
name $A$ gives, for each $\tds{\D'}{\psi}{\D}$, the PER $\Ga_\psi$ of values dubbed equal in the aspect
$\td{A}{\psi}$ of $A$.

\begin{notation}
  We abbreviate $\Ga_{\id}(M,M')$ as $\Ga(M,M')$. When $\tds{\D'}{\psi}{\D}$ and $\Ga$ is a $\D$-relation, we
  define a $\D'$-relation $\td{\Ga}{\psi}$ by $(\td{\Ga}{\psi})_{\psi'} \eqdef \Ga_{\psi\psi'}$.
\end{notation}

\begin{definition}
  Let $\Ga$ be a value $\D$-relation. We define a $\D$-relation $\Tm{\Ga}$ as follows: $\Tm{\Ga}_\psi(M,M')$
  holds for $\tds{\D'}{\psi}{\D}$ when for every $\tds{D_1}{\psi_1}{\D'}$ and $\tds{\D_2}{\psi_2}{\D_1}$,
  there exist terms $\tmj[\D_1]{M_1,M_1'}$ and $\tmj[\D_2]{M_2,M_2',M_{12},M_{12}'}$ such that
  \[
    \begin{array}{ccc}
      \td{M}{\psi_1} \evals M_1 & \td{M_1}{\psi_2} \evals M_2 & \td{M}{\psi_1\psi_2} \evals M_{12} \\
      \td{M'}{\psi_1} \evals M'_1 & \td{M'_1}{\psi_2} \evals M'_2 & \td{M'}{\psi_1\psi_2} \evals M'_{12}
    \end{array}
  \]
  with $\Ga_{\psi\psi_1\psi_2}(V,V')$ for all $V \in \{M_2,M_{12}\}$ and $V' \in \{M_2',M_{12}'\}$.
\end{definition}

When $\Ga$ is the value $\D$-PER assigned to a type $A$, we use $\Tm{\Ga}$ as the $\D$-PER of \emph{elements}
of $A$: the terms which evaluate to values in $A$ in a way that is coherent with dimension substitution.

\begin{remark}
  The $\D$-relation $\Tm{\Ga}$ is always stable under dimension substitution, in the sense that given
  $\tds{\D'}{\psi}{\D}$ and $\tmj[\D']{M,M'}$, $\Tm{\Ga}_\psi(M,M')$ implies
  $\Tm{\Ga}_{\psi\psi'}(\td{M}{\psi'},\td{M'}{\psi'})$ for every $\tds{\D''}{\psi'}{\D'}$. Determinism of the
  operational semantics ensures that $\Tm{\Ga}$ is a PER whenever $\Ga$ is a PER.
\end{remark}

\begin{definition}
  A value $\D$-relation $\Ga$ is \emph{value-coherent}, written $\Coh{\Ga}$, when $\Ga \subseteq \Tm{\Ga}$.
\end{definition}

To prove theorems about $\D$-relations, we use a toolbox of utility lemmas. These are minor variations on
lemmas used by \citeauthor{chtt-iii}; we include proofs in \cref{app:lemmas}.

\begin{replemma}{lem:introduction}[Introduction]
  Let $\Ga$ be a value $\D$-relation. If for every $\tds{\D'}{\psi}{\D}$, either
  $\Ga_\psi(\td{M}{\psi},\td{M'}{\psi})$ or $\Tm{\Ga}_\psi(\td{M}{\psi},\td{M'}{\psi})$, then
  $\Tm{\Ga}(M,M')$.
\end{replemma}

\begin{replemma}{lem:expansion}[Coherent expansion]
  Let $\Ga$ be a value $\D$-PER and let $\tmj[\D]{M,M'}$. If for every $\tds{\D'}{\psi}{\D}$, there exists
  $M''$ such that $\td{M}{\psi} \msteps M''$ and $\Tm{\Ga}_\psi(M'',\td{M'}{\psi})$, then $\Tm{\Ga}(M,M')$.
\end{replemma}

\begin{replemma}{lem:evaluation}[Evaluation]
  Let $\Ga$ be a value-coherent $\D$-PER and let $\tmj[\D]{M,M'}$ with $\Tm{\Ga}(M,M')$. Then $M \evals V$ and
  $M' \evals V'$ where $\Tm{\Ga}(Q,Q')$ holds for all $Q \in \{M,V\}$ and $Q' \in \{M',V'\}$.
\end{replemma}

We defer a final lemma for proving elimination theorems to \cref{sec:type-systems:type-systems}, as it will be
simpler to state using notation we have not yet introduced.

\subsection{Type systems}
\label{sec:type-systems:type-systems}

\begin{definition}
  A \emph{candidate path-bridge type system} is a four-place relation $\tau(\D,A_0,A'_0,\phi)$ on path-bridge
  contexts $\D$, values $\tmj[\D]{A_0,A_0'}$, and (ordinary) relations $\phi$ on values $\tmj[\D]{V,V'}$.
\end{definition}

\begin{definition}
  Given a candidate path-bridge type system $\tau$, we say that $\PTy{\tau}(\D,A,A',\Ga)$ holds of a
  path-bridge context $\D$, terms $\tmj[\D]{A,A'}$, and a value $\D$-relation $\Ga$ when for all
  $\tds{\D_1}{\psi_1}{\D}$ and $\tds{\D_2}{\psi_2}{\D_1}$, there exist terms $\tmj[D_1]{A_1,A'_1}$ and
  $\tmj[\D_2]{A_2,A'_2,A_{12},A'_{12}}$ such that
  \[
    \begin{array}{ccc}
      \td{A}{\psi_1} \evals A_1 & \td{A_1}{\psi_2} \evals A_2 & \td{A}{\psi_1\psi_2} \evals A_{12} \\
      \td{A'}{\psi_1} \evals A'_1 & \td{A'_1}{\psi_2} \evals A'_2 & \td{A'}{\psi_1\psi_2} \evals A'_{12}
    \end{array}
  \]
  and $\tau(\D_2,V,V',\Ga_{\psi_1\psi_2})$ for all $V \in \{A_2,A_{12}\}$ and
  $V' \in \{A'_2,A'_{12}\}$.
\end{definition}

\begin{definition}
  A candidate path-bridge type system $\tau$ is a \emph{path-bridge type system} when
  \begin{enumerate}
  \item if $\tau(\D,A_0,A_0',\phi)$ and $\tau(\D,A_0,A_0',\phi')$ then $\phi = \phi'$,
  \item if $\tau(\D,A_0,A_0',\phi)$ then $\phi$ is a PER,
  \item $\tau(\D,-,-,\phi)$ is a PER for each $\D,\phi$,
  \item if $\tau(\D,A_0,A_0',\phi)$ then $\PTy{\tau}(\D,A_0,A_0',\Ga)$ for some $\Ga$.
  \end{enumerate}
\end{definition}

\begin{definition}
  \label{def:closed-judgments}
  Given a candidate path-bridge type system $\tau$, we define the closed judgments of type theory as follows.
  \[
    \begin{array}{lcl}
      \relcts*{\tau}{\ceqtypep[\D]{A}{A'}} &\iffdef& \exists \Ga.\ \PTy{\tau}(\D,A,A',\Ga) \land \Coh{\Ga} \\
      \relcts*{\tau}{\ceqtm[\D]{M}{M'}{A}} &\iffdef& \exists \Ga.\ \PTy{\tau}(\D,A,A,\Ga) \land \Ga(M,M')
    \end{array}
  \]
  We abbreviate $\relcts*{\tau}{\ceqtypep[\D]{A}{A}}$ as $\relcts*{\tau}{\cwftypep[\D]{A}}$ and
  $\relcts*{\tau}{\ceqtm[\D]{M}{M}{A}}$ as $\relcts*{\tau}{\coftype[\D]{M}{A}}$.  If $\tau$ is a path-bridge
  type system and $\relcts*{\tau}{\cwftypep[\D]{A}}$, we write $\vper[\tau]{A}$ for the (necessarily unique)
  $\Ga$ such that $\PTy{\tau}(\D,A,A,\Ga)$ holds.  For much of this paper, we work relative to a fixed ambient
  path-bridge type system, so will drop the prefix $\tau \models$ and superscript on $\vper[\tau]{-}$.
\end{definition}

As with $\D$-relations, we have a pair of lemmas for proving that terms are related by $\PTy{\tau}$.

\begin{replemma}{lem:formation}[Formation]
  Let $\tau$ be a bridge-path type system, let $\tmj[\D]{A,A'}$, and let $\Ga$ be a value
  $\D$-relation. If for every $\tds{\D'}{\psi}{\D}$, either
  $\PTy{\tau}(\D',\td{A}{\psi},\td{A'}{\psi},\td{\Ga}{\psi})$ holds or
  $\tau(\D',\td{A}{\psi},\td{A'}{\psi},\Ga_\psi)$ holds, then $\PTy{\tau}(\D,A,A',\Ga)$.
\end{replemma}

\begin{replemma}{lem:type-expansion}[Coherent type expansion]
  Let $\tau$ be a bridge-path type system, let $\tmj[\D]{A,A'}$, and let $\Ga$ be a value $\D$-relation. If
  for all $\tds{\D'}{\psi}{\D}$, there exists $A''$ such that $\td{A}{\psi} \msteps A''$ and
  $\PTy{\tau}(\D',A'',\td{A'}{\psi},\td{\Ga}{\psi})$, then $\PTy{\tau}(\D,A,A',\Ga)$.
\end{replemma}

Finally, we state the elimination lemma referred to in \cref{sec:type-systems:relations}. We use this lemma to
prove typing rules for operators which evaluate their arguments. Certain operators of bridge-path type theory
will evaluate arguments under dimension binders or in a restricted path-bridge context, so we first introduce
a notion of expression context.

\begin{definition}
  An \emph{expression context} $\C$ is a term with at most one hole, written $\evhole$; we write
  $\plug{\C}{M}$ for the result of filling the hole with a term $M$. We write $\evalcx{\D_0}{\C}{\D}$ where
  $\D$ and $\D_0$ are disjoint when $\tmj[\D]{\plug{\C}{M}}$ for every $\tmj[\D\D_0]{M}$. For example, we have
  $\evalcx{\cx*!\bm{x}!{y}}{\blam{\bm{x}}{\dlam{y}{\evhole}}}{\cx*}$. Given $\evalcx{\D_0}{\C}{\D}$ and
  $\tds{\D'}{\psi}{\D}$ with $\D'$ disjoint from $\D_0$, we have $\evalcx{\D_0}{\td{\C}{\psi}}{\D'}$.
\end{definition}

\begin{definition}
  An expression context $\evalcx{\D_0}{\C}{\D}$ is \emph{eager} when for any $\tmj[\D\D_0]{M,M'}$ with
  $M \steps M'$, we have $\plug{\C}{M} \steps \plug{\C}{M'}$.
\end{definition}

\begin{replemma}{lem:elimination}[Elimination]
  Let $\evalcx{\D_0}{\C,\C',\T}{\D}$, $\bridgesj[\D]{\bm{\rho}}$, and let $\Ga$ be a value-coherent
  $(\apartcx{\D}{\bm{\rho}}\D_0)$-PER. Suppose that for every $\tds{\D'}{\psi}{\D}$ with $\D'$ disjoint from
  $\D_0$, we have
  \begin{enumerate}
  \item $\ceqtypep[\D']{\plug{\td{\T}{\psi}}{M}}{\plug{\td{\T}{\psi}}{M'}}$ for all
    $\Tm{\Ga}_{\apartcx{\psi}{\bm{\rho}} \times \D_0}(M,M')$,
  \item $\td{\C}{\psi}$,$\td{\C'}{\psi}$ are eager and
    $\ceqtm[\D']{\plug{\td{\C}{\psi}}{V}}{\plug{\td{\C'}{\psi}}{V'}}{\plug{\td{\T}{\psi}}{V}}$ for all
    $\Ga_{\apartcx{\psi}{\bm{\rho}} \times \D_0}(V,V')$.
  \end{enumerate}
  Then $\ceqtm[\D]{\plug{\C}{M}}{\plug{\C'}{M'}}{\plug{\T}{M}}$ for every $\Tm{\Ga}(M,M')$.
\end{replemma}

\subsection{Kan operations}

\begin{figure}
  \begin{mdframed}
    \begin{mathpar}
      \Infer
      {A \steps A'}
      {\coe{y.A}{r}{s}{M} \steps \coe{y.A'}{r}{s}{M'}}
      \and
      \Infer
      {A \steps A'}
      {\hcom{A}{r}{s}{M}{\sys{\xi_i}{y.N_i}} \steps \hcom{A'}{r}{s}{M}{\sys{\xi_i}{y.N_i}}}
      \and
      \Infer
      { }
      {\com{y.A}{r}{s}{M}{\sys{\xi_i}{y.N_i}} \steps \hcom{\dsubst{A}{s}{y}}{r}{s}{\coe{y.A}{r}{s}{M}}{\sys{\xi_i}{y.\coe{y.A}{y}{s}{N_i}}}}
    \end{mathpar}
  \end{mdframed}
  \caption{Non-type-specific operational semantics of $\hcom$, $\coe$, and $\com$}
  \label{fig:kan-opsem}
\end{figure}

We have so far explained what it means for a term $\tmj{A}$ to be a \emph{pretype}; a \emph{type} is a pretype
which additionally supports the \emph{Kan operations} $\coe$ and $\hcom$. The former, $\coe$, ensures that $A$
respects paths: for any $x \in \Psi$, it gives a function from $\dsubst{A}{r}{x}$ to $\dsubst{A}{s}{x}$ for
every $r$ and $s$. The latter, $\hcom$, is necessary to ensure that the iterated path and bridge types of any
Kan type are also Kan: it takes a term in $A$ and adjusts its lower-dimensional boundary by a given collection
of paths. Both $\coe$ and $\hcom$ are fixed operators which take the type $A$ as a parameter and evaluate it,
as shown in \cref{fig:kan-opsem}. Once $A$ is in canonical form, the further evaluation of $\coe$ and $\hcom$
is dependent on its form; for example, $\coe$ in a pair type steps to a pair of $\coe$s in the component
types. We will introduce the type-specific operational semantics of $\coe$ and $\hcom$ later on, in tandem
with the types themselves.

The definition of $\coe$ is exactly that given by \citeauthor{angiuli18}\ for cubical type theory.

\begin{definition}
  \label{def:coe-kan}
  We say that $\ceqtypep{A}{A'}$ are \emph{equally $\coe$-Kan} when for all
  $\tds{\cx*[']{\Psi',y}}{\psi}{\cx*}$, $\dimj[{\cx[']}]{r,s}$, and
  $\ceqtm[\cx[']]{M}{M'}{\dsubst{\td{A}{\psi}}{r}{y}}$, we have
  \begin{enumerate}
  \item $\ceqtm[\cx[']]{\coe{y.\td{A}{\psi}}{r}{s}{M}}{\coe{y.\td{A'}{\psi}}{r}{s}{M'}}{\dsubst{\td{A}{\psi}}{s}{y}}$,
  \item $\ceqtm[\cx[']]{\coe{y.\td{A}{\psi}}{r}{s}{M}}{M}{\dsubst{\td{A}{\psi}}{s}{y}}$ if $r = s$.
  \end{enumerate}
\end{definition}

For $\hcom$, a minor change is necessary, as we need to ensure that not only the path types but also the
bridge types of a Kan type are Kan. We do so by adding $\bm{r} = \bm{0}$ and $\bm{r} = \bm{1}$ in the
following definition.

\begin{definition}
  \label{def:constraint}
  A \emph{constraint} $\xi$ is an equation drawn from the following grammar.
  \[
    \begin{array}{rcl}
      \xi &::=& \bm{r} = \bm{0} \mid \bm{r} = \bm{1} \mid r = r'
    \end{array}
  \]
  We write $\constraintj{\xi}$ when $\xi$ is an equation on variables in $\cx*$, $\bm{x} \not\in \xi$ to mean
  that $\bm{x}$ does not occur in $\xi$, and $\models \xi$ to mean that $\xi$ is true (i.e., is a reflexive
  equation). We use $\Xi$ for lists of constraints and likewise write $\constraintsj{\Xi}$. We write
  $\apartcx{\Xi}{\bm{r}}$ for the result of removing all equations mentioning a given $\bm{r}$ from $\Xi$.
\end{definition}

\begin{definition}
  For each of the closed judgments $\cjudg{\J}$ defined in \cref{def:closed-judgments}, we define its
  \emph{restricted form} $\cjudg[\cx<\Xi>]{\J}$ to hold when $\cjudg[\cx[']]{\td{\J}{\psi}}$ holds for every
  $\psitd$ such that $\models \td{\Xi}{\psi}$.
\end{definition}

\begin{definition}
  \label{def:hcom-kan}
  We say that $\ceqtypep{A}{A'}$ are \emph{equally $\hcom$-Kan} when for all
  $\tds{\cx*[']}{\psi}{\cx*}$, $\dimj[{\cx[']}]{r,s}$, and $\constraintsj[{\cx[']}]{\etc{\xi_i}}$, if
  \begin{enumerate}
  \item $\ceqtm[\cx[']]{M}{M'}{\td{A}{\psi}}$,
  \item $\ceqtm[\cx[']{\Psi',y}<\xi_i,\xi_j>]{N_i}{N'_j}{\td{A}{\psi}}$ for all $i,j$,
  \item $\ceqtm[\cx[']<\xi_i>]{\dsubst{N_i}{r}{y}}{M}{\td{A}{\psi}}$ for all $i$,
  \end{enumerate}
  then
  \begin{enumerate}
  \item $\ceqtm[\cx[']]{\hcom{\td{A}{\psi}}{r}{s}{M}{\sys{\xi_i}{y.N_i}}}{\hcom{\td{A'}{\psi}}{r}{s}{M'}{\sys{\xi_i}{y.N'_i}}}{\td{A}{\psi}}$,
  \item $\ceqtm[\cx[']]{\hcom{\td{A}{\psi}}{r}{s}{M}{\sys{\xi_i}{y.N_i}}}{\dsubst{N_i}{s}{y}}{\td{A}{\psi}}$ for
    all $i$ with $\models \xi_i$,
  \item $\ceqtm[\cx[']]{\hcom{\td{A}{\psi}}{r}{s}{M}{\sys{\xi_i}{y.N_i}}}{M}{\td{A}{\psi}}$ if $r = s$.
  \end{enumerate}
\end{definition}

\begin{remark}
  \citeauthor{chtt-iii}\ impose an additional \emph{validity condition} on the list of constraints
  $\etc{\xi_i}$, which enables a stronger canonicity result for higher inductive types. This choice is
  orthogonal to the addition of bridges, so for simplicity's sake we will leave it out.
\end{remark}

The $\hcom$ operator takes a term $M$, a list of constraints $\etc{\xi_i}$, and a list of paths $\etc{y.N_i}$
each of which is defined on the corresponding constraint and matches $M$ on its $\dsubst{}{r}{y}$ face. The
output of $\hcom$ is a term which matches the $\dsubst{}{r'}{y}$ face of $N_i$ under $\xi_i$ for each $i$. For
example, given terms $\coftype[\cx!\Phi,\bm{x}!]{M}{A}$ and terms $\coftype[\cx{\Psi,y}]{N_0,N_1}{A}$ which
agree with $M$ on their $\dsubst{}{0}{y}$ faces, we have the following picture.
\[
  \begin{tikzpicture}
    \draw (0 , 2) [thick,->] to node [above] {\small $\bm{x}$} (0.5 , 2) ;
    \draw (0 , 2) [->] to node [left] {\small $y$} (0 , 1.5) ;
    % nodes
    \node (tl) at (1.5 , 2) {$\cdot$} ;
    \node (tr) at (5.5 , 2) {$\cdot$} ;
    \node (bl) at (1.5 , 0) {$\cdot$} ;
    \node (br) at (5.5 , 0) {$\cdot$} ;
    % edges
    \draw (tl) [thick,->] to node [above] {$M$} (tr) ;
    \draw (tl) [->] to node [left] {$N_0$} (bl) ;
    \draw (tr) [->] to node [right] {$N_1$} (br) ;
    \draw (bl) [thick,->,dashed] to node [below] {$\hcom{A}{0}{1}{M}{\cdots}$} (br) ;
  \end{tikzpicture}
\]

A type's support for $\hcom$ implies that its paths \emph{compose}, hence the name: given paths from $M$ to
$N$ and $N$ to $P$ in $A$, $\hcom$ can be used to construct a path from $M$ to $P$ in $A$. Likewise, $\hcom$
can be used to compose bridges with paths. For example, we can combine a bridge from $M$ to $N$ and a path
from $N$ to $P$ into a bridge from $M$ to $P$. However, note that $\hcom$ does \emph{not} allow the
composition of bridges with other bridges.

\begin{definition}
  When $\ceqtypep{A}{A'}$ are both equally $\hcom$-Kan and $\coe$-Kan, we say they are \emph{equally Kan}
  and write $\ceqtypek{A}{A'}$. We will use $\kappa$ as a metavariable standing for either $\pre$ or
  $\Kan$.
\end{definition}

In any Kan type, we have a derived \emph{heterogeneous composition} operation $\com$ which combines the
functions of $\hcom$ and $\coe$. This operator, which has operational semantics defined in
\cref{fig:kan-opsem}, satisfies the following typing rules.

\begin{proposition}
  \label{prop:com}
  Let $\ceqtypek{A}{A'}$. For every $\tds{\cx*[']{\Psi',y}}{\psi}{\cx*}$, $\dimj[{\cx[']}]{r,s}$, and
  $\constraintsj[{\cx[']}]{\etc{\xi_i}}$, if
  \begin{enumerate}
  \item $\ceqtm[\cx[']]{M}{M'}{\dsubst{\td{A}{\psi}}{r}{y}}$,
  \item $\ceqtm[\cx[']{\Psi',y}<\xi_i,\xi_j>]{N_i}{N'_j}{\td{A}{\psi}}$ for all $i,j$,
  \item $\ceqtm[\cx[']<\xi_i>]{\dsubst{N_i}{r}{y}}{M}{\dsubst{\td{A}{\psi}}{r}{y}}$ for all $i$,
  \end{enumerate}
  then
  \begin{enumerate}
  \item
    $\ceqtm[\cx[']]{\com*{y.\td{A}{\psi}}{\xi_i}}{\com{y.\td{A'}{\psi}}{r}{s}{M'}{\sys{\xi_i}{y.N'_i}}}{\dsubst{\td{A}{\psi}}{s}{y}}$,
  \item $\ceqtm[\cx[']]{\com*{y.\td{A}{\psi}}{\xi_i}}{\dsubst{N_i}{s}{y}}{\dsubst{\td{A}{\psi}}{s}{y}}$ for all
    $i$ with $\models \xi_i$,
  \item $\ceqtm[\cx[']]{\com{y.\td{A}{\psi}}{r}{s}{M}{\sys{\xi_i}{y.N_i}}}{M}{\dsubst{\td{A}{\psi}}{s}{y}}$ if
    $r = s$.
  \end{enumerate}
\end{proposition}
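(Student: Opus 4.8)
The plan is to remove the single type-independent reduction step that $\com$ takes in \cref{fig:kan-opsem} and then reduce the statement to the already-available properties of $\hcom$ and $\coe$. Fix the aspect $\psi$ and the data $M,M',N_i,N_i'$ satisfying the three hypotheses; since everything in sight is stable under dimension substitution, it does no harm to suppress $\psi$ notationally and argue at an arbitrary fixed aspect, writing $A$ for $\td{A}{\psi}$ and so on. Set $C \eqdef \dsubst{A}{s}{y}$ and $C' \eqdef \dsubst{A'}{s}{y}$; substituting $\dsubst{}{s}{y}$ into the Kan hypothesis on $A,A'$ shows that $C$ and $C'$ are equal and equally Kan. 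By \cref{fig:kan-opsem}, at every further aspect the term $G \eqdef \com{y.A}{r}{s}{M}{\sys{\xi_i}{y.N_i}}$ takes one step to $H \eqdef \hcom{C}{r}{s}{\coe{y.A}{r}{s}{M}}{\sys{\xi_i}{y.\coe{y.A}{y}{s}{N_i}}}$, and this step commutes with all dimension substitutions; let $G', H'$ be the analogous terms built from $A',M',N_i'$.

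First I would verify that $H$ is a legitimate $\hcom$-expression by checking the hypotheses of \cref{def:hcom-kan} for the cap $\coe{y.A}{r}{s}{M}$ and the tube $\sys{\xi_i}{y.\coe{y.A}{y}{s}{N_i}}$. Each requirement follows by feeding a hypothesis of the proposition into \cref{def:coe-kan}: the cap lies in $C$ by $\coe$-Kan clause (1) and hypothesis (1); each tube face $y.\coe{y.A}{y}{s}{N_i}$ lies in $C$ (in the context extended by $y$) by $\coe$-Kan and hypothesis (2); the faces agree on $\xi_i,\xi_j$ because $N_i,N_j$ do and $\coe$ is functional; and the cap/tube coherence on $\xi_i$ holds because $\dsubst{\bigl(\coe{y.A}{y}{s}{N_i}\bigr)}{r}{y}$ is syntactically $\coe{y.A}{r}{s}{\dsubst{N_i}{r}{y}}$, which by hypothesis (3) and functionality of $\coe$ equals the cap on $\xi_i$. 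The same checks apply to $H'$. Hence clause (1) of \cref{def:hcom-kan} (using that $C,C'$ are equal and Kan) gives that $H$ and $H'$ are equal elements of $C$, and then \cref{lem:expansion}, applied on each side to the step $G \steps H$ (resp.\ $G' \steps H'$), shows $G \eq H$ and $G' \eq H'$ in $C$. Conclusion (1) follows by transitivity.

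Conclusions (2) and (3) follow by combining the expansion with the degeneracy clauses of \cref{def:hcom-kan} and \cref{def:coe-kan}. When $\models\xi_i$, clause (2) of \cref{def:hcom-kan} identifies $H$ with $\dsubst{\bigl(\coe{y.A}{y}{s}{N_i}\bigr)}{s}{y}$, which is $\coe{y.A}{s}{s}{\dsubst{N_i}{s}{y}}$ since $s$ is free of $y$, and this equals $\dsubst{N_i}{s}{y}$ in $C$ by clause (2) of \cref{def:coe-kan}; combining with $G \eq H$ gives (2). When $r = s$ we have $C = \dsubst{A}{r}{y}$, and clause (3) of \cref{def:hcom-kan} identifies $H$ with its cap $\coe{y.A}{r}{s}{M} = \coe{y.A}{r}{r}{M}$, which equals $M$ by clause (2) of \cref{def:coe-kan}; combining gives (3). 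The argument has no genuine obstacle; the step demanding the most care is the cap/tube coherence above, where one must commute $\dsubst{}{r}{y}$ past a $\coe$ whose source dimension is the very variable $y$ before invoking functionality of $\coe$, together with the parallel collapses of degenerate $\coe$s needed for (2) and (3).
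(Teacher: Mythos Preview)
Your argument is correct and is the standard derivation of $\com$'s properties from $\hcom$-Kan and $\coe$-Kan via the single reduction step in \cref{fig:kan-opsem}. The paper itself does not give a proof but simply defers to \cite[Theorem 44]{chtt-iii}; what you have written is essentially that argument spelled out in the present setting, so there is no meaningful difference in approach to compare.
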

\begin{proof}
  See \cite[Theorem 44]{chtt-iii}.
\end{proof}

%%% Local Variables:
%%% mode: latex
%%% TeX-master: "main"
%%% End:

\subsection{Open judgments}
\label{sec:type-systems:open}

Finally, we extend the closed judgments $\ceqtypek{A}{A'}$ and $\ceqtm{M}{M'}{A}$ to open judgments in a term
context $\GG$. In order to properly reason with the substructural context $\Phi$, we also record the
introduction of bridge dimensions in term contexts \`a la \cite{cheney09,cheney12}. A context of the form
$(\GG,\bm{x},\GG')$ indicates that the dimension $\bm{x}$ was introduced after the variables in $\GG$ but
before those in $\GG'$; thus we may substitute terms mentioning $\bm{x}$ for variables in $\GG'$, but not for
variables in $\GG$.

\begin{definition}
  We define the judgments $\cwfctx{\GG}$, $\ceqtm{\lst{M}}{\lst{M}'}{\GG}$, $\eqtypex{\GG}{B}{B'}$, and
  $\eqtm{\GG}{N}{N'}{B}$ by mutual induction as follows.

  \begin{enumerate}[label=\textbf{\Alph*.}]
  \item The context judgment $\cwfctx{\GG}$ is defined inductively by the following rules:
    \begin{mathpar}
      \Infer
      { }
      {\cwfctx{\emp}}
      \and
      \Infer
      {\cwfctx{\GG} \\ \wftypep{\GG}{A}}
      {\cwfctx{\GG,\oft{a}{A}}}
      \and
      \Infer
      {\bridgej{\bm{r}} \\
        \cwfctx[\cx!\apartcx{\Phi}{\bm{r}}!]{\GG}}
      {\cwfctx{\GG,\bm{r}}}
    \end{mathpar}
  \item The context element equality judgment $\ceqtm{\lst{M}}{\lst{M}'}{\GG}$, which presupposes
    $\cwfctx{\GG}$, is defined inductively by the following rules.
    \begin{mathpar}
      \Infer
      { }
      {\ceqtm{\emp}{\emp}{\emp}}
      \and
      \Infer
      {\ceqtm{\lst{M}}{\lst{M}'}{\GG} \\ \ceqtm{N}{N'}{\subst{A}{\lst{M}}{\Gg}}}
      {\ceqtm{(\lst{M},N)}{(\lst{M}',N')}{(\GG,\oft{a}{A})}}
      \and
      \Infer
      {\bridgej{\bm{r}} \\
        \ceqtm[\cx!\apartcx{\Phi}{\bm{r}}!]{\lst{M}}{\lst{M}'}{\GG}}
      {\ceqtm{\lst{M}}{\lst{M}'}{(\GG,\bm{r})}}
    \end{mathpar}
    In the second rule, we write $\Gg$ for the list of term variables hypothesized in $\GG$.
  \item The open type equality judgment $\eqtypex{\GG}{B}{B'}$, which presupposes $\cwfctx{\GG}$, is defined
    to hold when for every $\psitd$ and $\ceqtm[\cx[']]{\lst{M}}{\lst{M}'}{\td{\GG}{\psi}}$ we have
    $\ceqtypex[\cx[']]{\subst{\td{B}{\psi}}{\lst{M}}{\Gg}}{\subst{\td{B'}{\psi}}{\lst{M}'}{\Gg}}$ (where $\Gg$
    is the list of term variables hypothesized in $\GG$).
  \item The open element equality judgment $\eqtm{\GG}{N}{N'}{B}$, which presupposes $\cwfctx{\GG}$ and
    $\wftypex{\GG}{B}$, is defined to hold when for every $\psitd$ and
    $\ceqtm[\cx[']]{\lst{M}}{\lst{M}'}{\td{\GG}{\psi}}$ we have
    $\ceqtm[\cx[']]{\subst{\td{N}{\psi}}{\lst{M}}{\Gg}}{\subst{\td{N'}{\psi}}{\lst{M}'}{\Gg}}{\subst{\td{B}{\psi}}{\lst{M}}{\Gg}}$
    (where $\Gg$ is the list of term variables hypothesized in $\GG$).
  \end{enumerate}
\end{definition}

\begin{definition}
  For each of the judgments $\judg{\GG}{\J}$ defined above, we define its restricted form
  $\judg[\cx<\Xi>]{\GG}{\J}$ to hold when $\judg[\cx[']]{\td{\GG}{\psi}}{\td{\J}{\psi}}$ holds for every
  $\psitd$ such that $\models \td{\Xi}{\psi}$.
\end{definition}

\begin{definition}
  Given $\cwfctx{\GG}$ and $\bm{r} \in \Phi \cup \{\bm{0},\bm{1}\}$, we define the context
  $\cwfctx[\cx!\apartcx{\Phi}{\bm{r}}!]{\apartcx{\GG}{\bm{r}}}$ of term variables which cannot refer to
  $\bm{r}$ (if it is a variable) by
  \begin{align*}
    \apartcx{\GG}{\bm{r}} &\eqdef
      \left\{
        \begin{array}{ll}
          \GG_1, &\text{if $\bm{r} \in \Phi$ and $\GG = (\GG_1,\bm{r},\GG_2)$ for some $\GG_1,\GG_2$} \\
          \GG, &\text{otherwise}
        \end{array}
      \right.
  \end{align*}
\end{definition}

In the following sections, we will describe various type constructors as operators on $\D$-relations and show
that, when these are included in a type system, they satisfy appropriate introduction and elimination
rules. These rules all take the following form, where each $\J$ may be a dimension, term, or type equality
judgment.
\begin{mathpar}
  \Infer
  {\judg[\cx!\apartcx{\Phi}{\bm{\rho}_1}\Phi_1!{\Psi\Psi_1}<\apartcx{\Xi}{\bm{\rho}_1}\Xi_1>]{\apartcx{\GG}{\bm{\rho}_1}\Phi_1\GG_1}{\J_1} \\ \cdots \\ \judg[\cx!\apartcx{\Phi}{\bm{\rho}_n}\Phi_n!{\Psi\Psi_n}<\apartcx{\Xi}{\bm{\rho}_n}\Xi_n>]{\apartcx{\GG}{\bm{\rho}_n}\Phi_n\GG_n}{\J_n}}
  {\judg[\cx<\Xi>]{\GG}{\J}}
\end{mathpar}
To prove such a rule, we will first prove its restriction to closed instances holds, in the sense that the
following rule is validated.
\begin{mathpar}
  \Infer
  {\judg[\cx!\apartcx{\Phi}{\bm{\rho}_1}\Phi_1!{\Psi\Psi_1}<\Xi_1>]{\GG_1}{\J_1} \\ \cdots \\ \judg[\cx!\apartcx{\Phi}{\bm{\rho}_n}\Phi_n!{\Psi\Psi_n}<\Xi_n>]{\GG_n}{\J_n}}
  {\cjudg{\J}}
\end{mathpar}
We then observe that the validity of the latter implies the validity of the former. This follows by definition
of the interpretation of open judgments; we apply the closed rule ``pointwise'' to validate the open rule.
For suppose that we know the premises of the open rule, and want to show $\judg[\cx<\Xi>]{\GG}{\J}$. This
means we must show $\cjudg[\cx[']]{\J_\psi[\lst{M},\lst{M}']}$ for every $\psitd$ such that
$\models \td{\Xi}{\psi}$ and $\ceqtm[\cx[']]{\lst{M}}{\lst{M}'}{\td{\GG}{\psi}}$ (introducing some impromptu
notation for instantiating a binary open judgment). For each $i$, we have
$\judg[\cx!\apartcx{\Phi}{\bm{\rho}_i}\Phi_i!{\Psi\Psi_i}<\apartcx{\Xi}{\bm{\rho}_i}\Xi_i>]{\apartcx{\GG}{\bm{\rho}_i}\GG_i}{\J_i}$. We
can instantiate these hypothesis judgments with $\psi$ and the prefixes
$\ceqtm[\cx[']!\apartcx{\Phi'}{\td{\bm{\rho_i}}{\psi}}!]{\lst{M_i}}{\lst{M_i}'}{\td{\apartcx{\GG}{\bm{\rho}_i}}{\psi}}$
of $\lst{M}$ and $\lst{M'}$ corresponding to the prefix $\td{\apartcx{\GG}{\bm{\rho}_i}}{\psi}$ of
$\td{\GG}{\psi}$, which gives us
$\judg[\cx!\apartcx{\Phi'}{\td{\bm{\rho_i}}{\psi}}\Phi_i!{\Psi'\Psi_i}<\Xi_i>]{\td{\GG_i}{\psi}[\lst{M_i}]}{(\J_i)_\psi[\lst{M_i},\lst{M_i}']}$. Once
we have done this for each premise, we are in a position to apply the closed rule, and so
$\cjudg[\cx[']]{\J_\psi[\lst{M},\lst{M}']}$ follows.

%%% Local Variables:
%%% mode: latex
%%% TeX-master: "main"
%%% End:

%%% Local Variables:
%%% mode: latex
%%% TeX-master: "main"
%%% End:

\section{Imports from cubical type theory}
\label{sec:cubical}

Our parametric type theory is built on the substrate of cubical type theory. The addition of bridge dimension
variables does not disrupt the existing constructs, so we are able to import them wholesale.%
\footnote{There is one new condition that must be checked: that the existing types are closed under
  homogeneous compositions whose tubes contain equations on bridge dimensions. However, this requires only
  cosmetic changes to the existing proofs.}  In this section, we briefly recall those results from cubical
type theory (and homotopy type theory) which our development requires. Details on the cubical constructions
can be found in \cite{chtt-iii}. For homotopy type theory, the standard reference is the ``HoTT Book''
\citep{hott-book}, which we will henceforth cite as \citepalias{hott-book}. Cubical translations of many of
the results we use from homotopy type theory can be found in the standard library of the \redtt\ cubical proof
assistant \citep{redtt}.

\begin{figure}
  \begin{mathpar}
    \Infer
    {\ceqtypek[\cx{\Psi,x}]{A}{A'} \\
      \ceqtm{M_0}{M_0'}{\dsubst{A}{0}{x}} \\
      \ceqtm{M_1}{M_1'}{\dsubst{A}{1}{x}}}
    {\ceqtypek{\Path{x.A}{M_0}{M_1}}{\Path{x.A'}{M_0'}{M_1'}}}
    \and
    \Infer
    {\ceqtm[\cx{\Psi,x}]{P}{P'}{A} \\
      \ceqtm{\dsubst{P}{0}{x}}{M_0}{A} \\
      \ceqtm{\dsubst{P}{1}{x}}{M_1}{A}}
    {\ceqtm{\dlam{x}{P}}{\dlam{x}{P'}}{\Path{x.A}{M_0}{M_1}}}
    \and
    \Infer
    {\ceqtm{Q}{Q'}{\Path{x.A}{M_0}{M_1}}}
    {\ceqtm{\dapp{Q}{r}}{\dapp{Q'}{r}}{\dsubst{A}{r}{x}}}
    \and
    \Infer
    {\Ge \in \{0,1\} \\ \coftype{Q}{\Path{x.A}{M_0}{M_1}}}
    {\ceqtm{\dapp{Q}{\Ge}}{M_{\Ge}}{\dsubst{A}{\Ge}{x}}}
    \and
    \Infer
    {\cwftypek[\cx{\Psi,x}]{A} \\
      \coftype[\cx{\Psi,x}]{P}{A}}
    {\ceqtm{\dapp{(\dlam{x}{P})}{r}}{\dsubst{P}{r}{x}}{\dsubst{A}{r}{x}}}
    \and
    \Infer
    {\coftype{Q}{\Path{x.A}{M_0}{M_1}}}
    {\ceqtm{Q}{\dlam{y}{\dapp{Q}{y}}}{\Path{x.A}{M_0}{M_1}}}
  \end{mathpar}
  \caption{Rules satisfied by the $\Path$ type}
  \label{fig:path-rules}
\end{figure}

\begin{recollection}
  Cubical type theory supports dependent pair, dependent function, and universe types as in standard dependent
  type theory. Given $\cwftypek{A}$ and $\wftypek{\oft{a}{A}}{B}$, we write $\cwftypek{\sigmacl{a}{A}{B}}$ and
  $\cwftypek{\picl{a}{A}{B}}$ for their pair and function types respectively. For simplicity's sake, we will
  only make use of one universe, which we write as $\cwftypek{\UKan}$; if $\coftype{A}{\UKan}$ then
  $\cwftypek{A}$.
\end{recollection}

\begin{recollection}
  Given a ``line of types'' $\cwftypek[\cx{\Psi,x}]{A}$ and endpoint elements
  $\coftype{M_0}{\dsubst{A}{0}{x}}$ and $\coftype{M_1}{\dsubst{A}{1}{x}}$, their \emph{path type}
  $\Path{x.A}{M_0}{M_1}$ classifies values of the form $\dlam{x}{P}$ where $\coftype[\cx{\Psi,x}]{P}{A}$
  satisfies $\ceqtm{\dsubst{P}{0}{x}}{M_0}{\dsubst{A}{0}{x}}$ and
  $\ceqtm{\dsubst{P}{1}{x}}{M_1}{\dsubst{A}{1}{x}}$. This type satisfies rules shown in
  \cref{fig:path-rules}. When $A$ does not depend on $x$, we abbreviate $\Path{\_.A}{M_0}{M_1}$ as
  $\Path{A}{M_0}{M_1}$.
\end{recollection}

\begin{recollection}
  A type $\cwftypek{A}$ is \emph{contractible} when it contains an element to which all other elements are
  connected by a path.
  \[
    \isContr{A} \eqdef \sigmacl{a}{A}{\picl{a'}{A}{\Path{A}{a'}{a}}}
  \]
  Given $\coftype{F}{A \to B}$ and $\coftype{N}{B}$, the \emph{(homotopy) fiber of $F$ at $N$} is the type of
  elements $a : A$ with a path from $Fa$ to $N$.
  \[
    \Fiber{A}{B}{F}{N} \eqdef \sigmacl{a}{A}{\Path{B}{F(a)}{N}}
  \]
  A map $\coftype{F}{A \to B}$ is an \emph{equivalence} if its fibers are contractible: if each element of $B$
  is the image of an element of $A$ under $F$ in a unique way.
  \[
    \isEquiv{A}{B}{F} \eqdef \picl{b}{B}{\isContr{\Fiber{A}{B}{F}{b}}}
  \]
  Finally, we set $\Equiv{A}{B} \eqdef \sigmacl{f}{\arr{A}{B}}{\isEquiv{A}{B}{f}}$. We will abbreviate
  $\Equiv{A}{B}$ as $A \simeq B$.
\end{recollection}

\begin{recollection}[Singleton contractibility]
  \label{rec:singleton-contractibility}
  For every $\cwftypek{A}$ and $\coftype{M}{A}$, the type $\sigmacl{a}{A}{\Path{A}{a}{M}}$ is contractible.
\end{recollection}

\begin{recollection}
  We say that $\cwftypek{A}$ is a \emph{proposition} when $\isProp{A} \eqdef \picl{a,a'}{A}{\Path{A}{a}{a'}} $
  is inhabited.
\end{recollection}

\begin{recollection}
  \label{rec:isequiv-is-prop}
  For any $\cwftypek{A,B}$ and $\coftype{F}{A \to B}$, the type $\isEquiv{A}{B}{F}$ is a proposition.
\end{recollection}
\begin{proof}
  \citepalias[Lemma 4.4.4]{hott-book}.
\end{proof}

\begin{recollection}
  \label{rec:qequiv-to-equiv}
  Given $\cwftypek{A,B}$, define the type $\QEquiv{A}{B}$ of \emph{quasi-equivalences} from $A$ to $B$ as
  follows.
  \[
    \QEquiv{A}{B} \eqdef \sigmacl{f}{A \to B}{\sigmacl{g}{B \to A}{(\picl{b}{B}{\Path{B}{f(gb)}{b}}) \times (\picl{a}{A}{\Path{A}{g(fa)}{a}})}}
  \]
  For any $\cwftypek{A,B}$, there are functions $\QEquiv{A}{B} \to \Equiv{A}{B}$ and
  $\Equiv{A}{B} \to \QEquiv{A}{B}$ which preserve the underlying forward map $A \to B$.
\end{recollection}
\begin{proof}
  \citepalias[Theorems 4.4.5 and 4.2.3]{hott-book}.
\end{proof}

\begin{notation}
  Given an equivalence $E$, we write $\fwd{E}$, $\bwd{E}$, $\fwdbwd{E}$, and $\bwdfwd{E}$ for the four
  components of the quasi-equivalence it induces. We write $\ideq{A}$ for the identity equivalence on $A$.
\end{notation}

\begin{recollection}
  Suppose we have a path dimension $\dimj{r}$, a type $\cwftypek[\cx<\Xi,r=0>]{A}$ at its left endpoint, a
  type $\cwftypek{B}$, and an equivalence $\coftype[\cx<r=0>]{E}{A \simeq B}$; pictorially, a V-shape:
  \[
    \begin{tikzcd}[column sep=4em]
      A \ar{d}[sloped,above]{\simeq} \\
      B_0 \ar{r}[below,font=\normalsize]{B} & B_1 \\[-1.8em]
      {r\to}
    \end{tikzcd}
  \]
  Their \emph{$\V$-type} $\V{r}{A}{B}{E}$ is a type which, viewed as a path in direction $r$, connects $A$ to
  $B_1$.
  \begin{mathpar}
    \Infer
    {\cwftypek{A}}
    {\ceqtypek{\V{0}{A}{B}{E}}{A}}
    \and
    \Infer
    {\cwftypek{B}}
    {\ceqtypek{\V{1}{A}{B}{E}}{B}}
  \end{mathpar}
\end{recollection}

This type is used to validate the \emph{univalence axiom}, which asserts that paths in the universe $\UKan$
correspond to equivalences: for each pair of types $\coftype{A,B}{\UKan}$, there is an equivalence
$\Equiv{A}{B} \simeq \Path{\UKan}{A}{B}$. The forward map of this equivalence is given by the function
$\lam{e}{\dlam{x}{\V{x}{A}{B}{e}}}$. The reverse map takes $\coftype{P}{\Path{\UKan}{A}{B}}$ to the
equivalence given by coercion, i.e., that with forward map $\lam{a}{\coe{x.\dapp{P}{x}}{0}{1}{a}}$. We will
see that univalence has an analogue on the parametric, side which identifies bridges in the universe with
binary relations.

Separately, \emph{$\fcom$-types} are used to implement composition in the universe. In parametric cubical type
theory, these must be modified to accommodate compositions with constraints of the form $\bm{r} =
\bm{\Ge}$. However, this does not require any significant modification to the definition of $\fcom$-types or
the implementation of their Kan operations, so the details are omitted.

%%% Local Variables:
%%% mode: latex
%%% TeX-master: "main"
%%% End:

\section[Bridge-types]{$\Bridge$-types}
\label{sec:bridge-types}

We now introduce $\Bridge$-types, the analogue of $\Path$-types for bridge dimension variables. Their
operational semantics is shown in \cref{fig:bridge-opsem}. Below, we define their PER semantics and prove that
this definition satisfies the expected introduction, elimination, and equality rules. We collect the rules in
inference rule format as part of a proof theory in \cref{sec:proof-theory:bridge}.  While we include full
proofs of the various rules for the sake of completeness, these are essentially the same proofs as those given
for $\Path$-types by \citet[\S5.3]{chtt-iii}. The only differences come from substructurality: a term $Q$ of
type $\Bridge{\bm{x}.A}{M_{\bm{0}}}{M_{\bm{1}}}$ may only be applied to a dimension variable which is fresh
for $Q$.

\begin{figure}
  \begin{mdframed}
    \begin{mathpar}
      \Infer
      { }
      {\isval{\Bridge{\bm{x}.A}{M_0}{M_{\bm{1}}}}}
      \and
      \Infer
      { }
      {\isval{\blam{\bm{x}}{P}}}
      \and
      \Infer
      {Q \steps Q'}
      {\bapp{Q}{\bm{r}} \steps \bapp{Q'}{\bm{r}}}
      \and
      \Infer
      { }
      {\bapp{(\blam{\bm{x}}{P})}{\bm{r}} \steps \bsubst{P}{\bm{r}}{\bm{x}}}
      \and
      \Infer
      { }
      {\hcom{\Bridge{\bm{x}.A}{M_0}{M_{\bm{1}}}}{r}{s}{M}{\sys{\xi_i}{y.N_i}} \steps \\\\ \raisebox{-0.8em}{\blam{\bm{x}}{\hcom{A}{r}{s}{\bapp{M}{\bm{x}}}{\sys{\xi_i}{y.\bapp{N_i}{\bm{x}}},\tube{\bm{x}=\bm{0}}{\_.M_0},\tube{\bm{x}=\bm{1}}{\_.M_{\bm{1}}}}}}}
      \and
      \Infer
      { }
      {\coe{y.\Bridge{\bm{x}.A}{M_0}{M_{\bm{1}}}}{r}{s}{Q} \steps \blam{\bm{x}}{\com{y.A}{r}{s}{\bapp{Q}{\bm{x}}}{\tube{\bm{x}=\bm{0}}{y.M_0},\tube{\bm{x}=\bm{1}}{y.M_{\bm{1}}}}}}
    \end{mathpar}
  \end{mdframed}
\caption{Operational semantics of $\Bridge$-types}
\label{fig:bridge-opsem}
\end{figure}

\subsection{Definition}

\begin{definition}
  Fix a candidate type system $\tau$, and let $\relcts*{\tau}{\cwftypek[\cx!\Phi,\bm{x}!]{A}}$,
  $\relcts*{\tau}{\coftype{M_{\bm{0}}}{\bsubst{A}{\bm{0}}{\bm{x}}}}$, and
  $\relcts*{\tau}{\coftype{M_{\bm{1}}}{\bsubst{A}{\bm{1}}{\bm{x}}}}$ be given. We define a value $\cx*$-PER
  $\BridgeR[\tau]{\bm{x}.A}{M_{\bm{0}}}{M_{\bm{1}}}$: for each $\psitd$,
  $\BridgeR[\tau]{\bm{x}.A}{M_{\bm{0}}}{M_{\bm{1}}}_\psi(V,V')$ is defined to hold iff $V = \blam{\bm{x}}{P}$
  and $V' = \blam{\bm{x}}{P'}$ where
  \begin{enumerate}
  \item $\relcts*{\tau}{\ceqtm[\cx[']!\Phi',\bm{x}!]{P}{P'}{\td{A}{\psi}}}$,
  \item $\relcts*{\tau}{\ceqtm[\cx[']]{\bsubst{P}{\bm{0}}{\bm{x}}}{\td{M_{\bm{0}}}{\psi}}{\td{A}{\psi}}}$, and
  \item $\relcts*{\tau}{\ceqtm[\cx[']]{\bsubst{P}{\bm{1}}{\bm{x}}}{\td{M_{\bm{1}}}{\psi}}{\td{A}{\psi}}}$.
  \end{enumerate}
  We will drop the superscript $\tau$ when it is inferable.
\end{definition}

\begin{lemma}
  \label{lem:bridge-coherent}
  $\BridgeR{\bm{x}.A}{M_{\bm{0}}}{M_{\bm{1}}}$ is value-coherent.
\end{lemma}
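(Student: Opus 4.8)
The plan is to prove $\Coh{\Ga}$ for $\Ga \eqdef \BridgeR{\bm{x}.A}{M_{\bm{0}}}{M_{\bm{1}}}$, i.e.\ $\Ga \subseteq \Tm{\Ga}$, directly from the definitions. The whole point is that every pair related by $\Ga$ consists of $\blam$-abstractions, a canonical form that commutes with dimension substitution, so the ``evaluation'' content built into $\Tm{\Ga}$ is vacuous and only a stability argument remains. Concretely, I fix $\psitd$ and values $V, V'$ in $\cx*[']$ with $\Ga_\psi(V,V')$, and must show $\Tm{\Ga}_\psi(V,V')$.

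First I unfold the hypothesis: $\Ga_\psi(V,V')$ forces $V = \blam{\bm{x}}{P}$ and $V' = \blam{\bm{x}}{P'}$, with the three defining clauses holding for $\psi$ (the open equality of $P, P'$ in $\td{A}{\psi}$ over the context extended by $\bm{x}$, and the two endpoint equalities against $\td{M_{\bm{0}}}{\psi}$ and $\td{M_{\bm{1}}}{\psi}$). Then, to check $\Tm{\Ga}_\psi(V,V')$, I take arbitrary further substitutions $\psi_1$ and $\psi_2$. Since $\blam{\bm{x}}{-}$ is a value form and substitution commutes into it (renaming $\bm{x}$ to keep it fresh for the target context), each of $\td{V}{\psi_1}$, $\td{V}{\psi_1\psi_2}$ and their primed analogues is itself a value, so every evaluation demanded by $\Tm{\Ga}_\psi$ can be taken to be zero-step, with $V_1, V_2, V_{12}$ (and primes) the respective substitution instances. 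This collapses the remaining goal to the single relational obligation $\Ga_{\psi\psi_1\psi_2}(\td{V}{\psi_1\psi_2}, \td{V'}{\psi_1\psi_2})$.

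That obligation is just the three defining clauses of $\Ga$ reread at the composite $\psi\psi_1\psi_2$ and applied to $\td{P}{(\psi_1\psi_2,\bm{x})}, \td{P'}{(\psi_1\psi_2,\bm{x})}$, so I discharge it by transporting each clause for $\psi$ along $(\psi_1\psi_2,\bm{x})$ (respectively $\psi_1\psi_2$), using that the closed equality and typing judgments are stable under dimension substitution — immediate from their definition via $\PTy{\tau}$ — together with the fact that substitution commutes with the endpoint substitutions $\bsubst{P}{\bm{0}}{\bm{x}}$ and $\bsubst{P}{\bm{1}}{\bm{x}}$. One can equally well bundle the last two steps by invoking \cref{lem:introduction} on the restriction $\td{\Ga}{\psi}$, whose only nontrivial hypothesis is again $\Ga_{\psi\psi'}(\td{V}{\psi'},\td{V'}{\psi'})$, handled by the same stability fact. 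The only delicate point — and about as close to an obstacle as this gets — is the bookkeeping forced by the substructural context $\Phi$: one must keep the bound bridge variable $\bm{x}$ fresh when pushing substitutions under $\blam$, but as $\Ga$ is stated up to renaming of $\bm{x}$ this is purely cosmetic. This is the same argument \citet[\S5.3]{chtt-iii} give for $\Path$-types, transported to bridges.
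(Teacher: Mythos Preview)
Your proposal is correct and follows essentially the same approach as the paper: show that $\Ga_\psi(V,V')$ implies $\Ga_{\psi\psi'}(\td{V}{\psi'},\td{V'}{\psi'})$ for every $\psi'$ by stability of the typing judgments under dimension substitution, then conclude $\Tm{\Ga}_\psi(V,V')$ via \cref{lem:introduction}. The paper's proof is just the terse version of yours, invoking the introduction lemma immediately rather than first unfolding $\Tm{\Ga}$.
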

\begin{proof}
  Let $\BridgeR{\bm{x}.A}{M_{\bm{0}}}{M_{\bm{1}}}_\psi(V,V')$ be given. By definition of $\Bridge$ and
  stability of the typing judgments under dimension substitution, this implies
  $\BridgeR{\bm{x}.A}{M_{\bm{0}}}{M_{\bm{1}}}_{\psi\psi'}(\td{V}{\psi},\td{V'}{\psi})$ for every $\psi'$. Thus
  $\Tm{\BridgeR{\bm{x}.A}{M_{\bm{0}}}{M_{\bm{1}}}}_\psi(V,V')$ by \cref{lem:introduction}.
\end{proof}

\begin{proposition}
  There exists a type system $\tau$ which, for every
  $\relcts*{\tau}{\ceqtypek[\cx!\Phi,\bm{x}!]{A}{A'}}$,
  $\relcts*{\tau}{\ceqtm{M_{\bm{0}}}{M'_{\bm{0}}}{\bsubst{A}{\bm{0}}{\bm{x}}}}$, and
  $\relcts*{\tau}{\ceqtm{M_{\bm{1}}}{M'_{\bm{1}}}{\bsubst{A}{\bm{1}}{\bm{x}}}}$, has
  \[
    \tau(\cx*,\Bridge{\bm{x}.A}{M_{\bm{0}}}{M_{\bm{1}}},\Bridge{\bm{x}.A'}{M'_{\bm{0}}}{M'_{\bm{1}}},\BridgeR{\bm{x}.A}{M_{\bm{0}}}{M_{\bm{1}}}_\id),
  \]
  in addition to supporting the standard type formers of cubical type theory. Moreover, there exists such a
  type system in which the universe $\UKan$ is also closed under $\Bridge$-types.
\end{proposition}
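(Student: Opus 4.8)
The plan is to construct $\tau$ by a standard monotone fixed-point argument, exactly as in the cubical setting of \citeauthor{chtt-iii}. Specifically, I would exhibit a monotone operator $\Phi$ on candidate path-bridge type systems whose least fixed point (or any sufficiently closed fixed point, e.g.\ obtained by transfinite iteration from the empty relation) contains clauses for all the standard cubical type formers \emph{and} a clause for $\Bridge$-types. The $\Bridge$ clause reads: given a candidate $\tau_0$ (the ``previous stage''), $\Phi(\tau_0)$ relates $\Bridge{\bm{x}.A}{M_{\bm 0}}{M_{\bm 1}}$ and $\Bridge{\bm{x}.A'}{M'_{\bm 0}}{M'_{\bm 1}}$ at $\cx*$ with PER $\BridgeR{\bm{x}.A}{M_{\bm 0}}{M_{\bm 1}}_\id$ whenever $\relcts*{\tau_0}{\ceqtypek[\cx!\Phi,\bm{x}!]{A}{A'}}$ and the two endpoints are equal elements under $\tau_0$. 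Since the premises refer only to $\tau_0$, the operator is monotone, so a fixed point exists; the real content is checking that this fixed point is a \emph{path-bridge type system} in the sense of the four-part definition and that $\Bridge$-types are Kan (and, for the ``moreover,'' live in $\UKan$).

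First I would verify the type-system axioms for the $\Bridge$ fragment. Functionality (clause 1: uniqueness of $\phi$) follows because $\BridgeR{\bm{x}.A}{M_{\bm 0}}{M_{\bm 1}}_\id$ is determined by the syntactic data $\bm{x}.A$, $M_{\bm 0}$, $M_{\bm 1}$, together with an inductive argument that the PER attached to $A$ is itself unique at the lower stage. That the attached relation is a PER (clause 2) is immediate from the PER-ness of the element judgment $\ceqtm{P}{P'}{\td{A}{\psi}}$ together with the $\Tm{-}$ construction; symmetry and transitivity propagate through the three defining conditions. That $\tau(\D,-,-,\phi)$ is a PER (clause 3) uses symmetry/transitivity of the equality-of-types premise $\ceqtypek[\cx!\Phi,\bm{x}!]{A}{A'}$ and of the endpoint equalities, again at the lower stage. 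Clause 4 — that $\tau(\D,A_0,A_0',\phi)$ entails $\PTy{\tau}(\D,A_0,A_0',\Ga)$ for some $\Ga$ — is handled by taking $\Ga = \BridgeR{\bm{x}.A}{M_{\bm 0}}{M_{\bm 1}}$ (the full $\cx*$-relation, not just its $\id$-component) and observing that $\Bridge$-types are values, so the coherent-evaluation conditions in the definition of $\PTy{}$ hold trivially; value-coherence of $\Ga$ is exactly \cref{lem:bridge-coherent}. The same bookkeeping shows $\relcts*{\tau}{\ceqtypep[\cx*]{\Bridge{\bm{x}.A}{M_{\bm 0}}{M_{\bm 1}}}{\Bridge{\bm{x}.A'}{M'_{\bm 0}}{M'_{\bm 1}}}}$, giving the displayed conclusion once we also dispatch Kan-ness.

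For the Kan conditions I would invoke the operational semantics in \cref{fig:bridge-opsem}: $\hcom$ and $\coe$ at a $\Bridge$-type step to a $\blam$ whose body is an $\hcom$, resp.\ $\com$, \emph{in the underlying line of types} $A$, with two extra tube faces $\bm{x}=\bm{0}$, $\bm{x}=\bm{1}$ pinned to the endpoints. Using the Coherent expansion lemma (\cref{lem:expansion}) and the Formation/type-expansion lemmas, one reduces equally-$\hcom$-Kan and equally-$\coe$-Kan for $\Bridge{\bm{x}.A}{\cdots}$ to the corresponding facts for $A$ at the lower stage — these are available because $A$ is already Kan by hypothesis, and because (per the footnote in \cref{sec:cubical}) the existing cubical types tolerate tubes carrying bridge-dimension constraints such as $\bm{x}=\bm{\Ge}$. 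The adjustment-matches-$M$ boundary condition when $r=s$, and the matches-$N_i$ condition under $\models\xi_i$, both descend from the analogous conditions for $\hcom$/$\com$ in $A$; the two added tubes are automatically compatible because $M_{\bm 0}$, $M_{\bm 1}$ are the endpoints of $M$ (resp.\ $\bapp{Q}{\bm x}$) by the typing hypotheses, which is precisely the content of \cref{prop:com} applied fibrewise. This is the step I expect to be the main obstacle: getting the cofibration/tube bookkeeping right so that the adjoined faces $\bm{x}=\bm{0},\bm{x}=\bm{1}$ are consistent with the given $\xi_i$-tubes and with the $\dsubst{}{r}{y}$-cap, and threading this through the substructural restriction that forbids reusing $\bm{x}$. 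Finally, for the ``moreover'': one runs the same fixed-point construction with a universe operator whose code for $\Bridge$-types is included in the generating clause for $\UKan$, so that $\coftype{A}{\UKan}$ over $\cx!\Phi,\bm{x}!$ together with $\UKan$-typed endpoints yields $\coftype{\Bridge{\bm{x}.A}{M_{\bm 0}}{M_{\bm 1}}}{\UKan}$; closure of $\UKan$ under the cubical formers is imported unchanged from \citeauthor{chtt-iii}, and the $\Bridge$ clause is added in parallel with no new interaction.
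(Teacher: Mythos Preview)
Your proposal is correct and follows essentially the same route as the paper: a monotone fixed-point construction adding a $\Bridge$ clause to the cubical operator, then verifying the four type-system axioms (the paper defers these to \cite[Theorem 16]{chtt-iii}) and, for the ``moreover,'' adding a universe clause that includes $\Bridge$ codes. One organizational difference: the paper factors the Kan verification for $\Bridge$-types into separate theorems (\cref{thm:bridge-coe-kan,thm:bridge-hcom-kan}) rather than folding it into the proof of this proposition, so the displayed conclusion here requires only the type-system entry, not Kan-ness.
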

\begin{proof}
  See \cref{app:fixed-point}.
\end{proof}

For the remainder of this section, we assume we are working within such a type system.

\subsection{Rules}

\begin{ruletheorem}[$\Bridge$-F]
  \label{rule:bridge-F}
  Let $\ceqtypek[\cx!\Phi,\bm{x}!]{A}{A'}$, $\ceqtm{M_{\bm{0}}}{M'_{\bm{0}}}{\bsubst{A}{\bm{0}}{\bm{x}}}$, and
  $\ceqtm{M_{\bm{1}}}{M'_{\bm{1}}}{\bsubst{A}{\bm{1}}{\bm{x}}}$ be given. Then
  $\ceqtypep{\Bridge{\bm{x}.A}{M_{\bm{0}}}{M_{\bm{1}}}}{\Bridge{\bm{x}.A'}{M'_{\bm{0}}}{M'_{\bm{1}}}}$.
\end{ruletheorem}
\begin{proof}
  By \cref{lem:formation,lem:bridge-coherent}.
\end{proof}

\begin{ruletheorem}[$\Bridge$-I]
  \label{rule:bridge-I}
  Let $\ceqtm[\cx!\Phi,\bm{x}!]{P}{P'}{A}$. Then
  $\ceqtm{\blam{\bm{x}}{P}}{\blam{\bm{x}}{P'}}{\Bridge{\bm{x}.A}{M_{\bm{0}}}{M_{\bm{1}}}}$.
\end{ruletheorem}
\begin{proof}
  This is exactly \cref{lem:bridge-coherent}.
\end{proof}

\begin{ruletheorem}[$\Bridge$-$\beta$]
  \label{rule:bridge-betaF}
  Let $\cwftypek[\cx!\apartcx{\Phi}{\bm{r}},\bm{x}!]{A}$ and
  $\coftype[\cx!\apartcx{\Phi}{\bm{r}},\bm{x}!]{P}{A}$.  Then
  $\ceqtm{\dapp{(\blam{\bm{x}}{P})}{\bm{r}}}{\bsubst{P}{\bm{r}}{\bm{x}}}{\bsubst{A}{\bm{r}}{\bm{x}}}$.
\end{ruletheorem}
\begin{proof}
  By \cref{lem:expansion}, as
  $\td{(\dapp{(\blam{\bm{x}}{P})}{\bm{r}})}{\psi} \steps \td{\bsubst{P}{\bm{r}}{\bm{x}}}{\psi}$ and
  $\coftype[\cx[']]{\td{\bsubst{P}{\bm{r}}{\bm{x}}}{\psi}}{\td{\bsubst{A}{\bm{r}}{\bm{x}}}{\psi}}$ for all
  $\psi$.
\end{proof}

\begin{ruletheorem}[$\Bridge$-E]
  \label{rule:bridge-E}
  Let $\bridgej{\bm{r}}$, $\cwftypek[\cx!\apartcx{\Phi}{\bm{r}},\bm{x}!]{A}$, and
  $\ceqtm[\cx!\apartcx{\Phi}{\bm{r}}!]{Q}{Q'}{\Bridge{\bm{x}.A}{M_{\bm{0}}}{M_{\bm{1}}}}$. Then
  $\ceqtm{\bapp{Q}{\bm{r}}}{\bapp{Q'}{\bm{r}}}{\bsubst{A}{\bm{r}}{\bm{x}}}$.
\end{ruletheorem}
\begin{proof}
  By \cref{lem:elimination} applied with the expression contexts
  $\evalcx{\emp}{\bapp{\evhole}{\bm{r}},\bapp{\evhole}{\bm{r}},\bsubst{A}{\bm{r}}{\bm{x}}}{\cx*}$ and the
  $\cx*!\apartcx{\Phi}{\bm{r}}!$-PER $\BridgeR{\bm{x}.A}{M_{\bm{0}}}{M_{\bm{1}}}$. We need to show that for
  every $\psitd$ and $\Bridge{\bm{x}.A}{M_{\bm{0}}}{M_{\bm{1}}}_\psi(V,V')$, we have
  $\ceqtm[\cx[']]{\bapp{V}{\td{\bm{r}}{\psi}}}{\bapp{V'}{\td{\bm{r}}}{\psi}}{\td{\bsubst{A}{\bm{r}}{\bm{x}}}{\psi}}$. By
  definition of $\Bridge$, we have $V = \blam{\bm{x}}{P}$ and $V' = \blam{\bm{x}}{P'}$ with
  $\ceqtm[\cx[']!\apartcx{\Phi'}{\td{\bm{r}}{\psi}},\bm{x}!]{P}{P'}{\td{A}{\psi}}$. The latter implies
  $\ceqtm[\cx[']]{\bsubst{P}{\td{\bm{r}}{\psi}}{\bm{x}}}{\bsubst{P'}{\td{\bm{r}}{\psi}}{\bm{x}}}{\td{\bsubst{A}{\bm{r}}{\bm{x}}}{\psi}}$
  by stability of the typing judgments. By \cref{rule:bridge-betaF}, we also have
  $\ceqtm[\cx[']]{\bapp{V}{\td{\bm{r}}{\psi}}}{\bsubst{P}{\td{\bm{r}}{\psi}}{\bm{x}}}{\td{\bsubst{A}{\bm{r}}{\bm{x}}}{\psi}}$
  and
  $\ceqtm[\cx[']]{\bapp{V'}{\td{\bm{r}}{\psi}}}{\bsubst{P'}{\td{\bm{r}}{\psi}}{\bm{x}}}{\td{\bsubst{A}{\bm{r}}{\bm{x}}}{\psi}}$. The
  desired equation follows by transitivity.
\end{proof}

\begin{ruletheorem}[$\Bridge$-$\beta_{\bm{\Ge}}$]
  \label{rule:bridge-betapartial}
  If $\bm{\Ge} \in \{\bm{0},\bm{1}\}$ and $\coftype{Q}{\Bridge{\bm{x}.A}{M_{\bm{0}}}{M_{\bm{1}}}}$, then
  $\ceqtm{\bapp{Q}{\bm{\Ge}}}{M_{\bm{\Ge}}}{\bsubst{A}{\bm{\Ge}}{\bm{x}}}$.
\end{ruletheorem}
\begin{proof}
  By \cref{lem:evaluation}, we have $Q \evals V$ with
  $\ceqtm{Q}{V}{\Bridge{\bm{x}.A}{M_{\bm{0}}}{M_{\bm{1}}}}$. By \cref{rule:bridge-E} it follows that
  $\ceqtm{\bapp{Q}{\bm{\Ge}}}{\bapp{V}{\bm{\Ge}}}{\bsubst{A}{\bm{\Ge}}{\bm{x}}}$. We have
  $V = \blam{\bm{x}}{P}$ for some $\coftype[\cx!\Phi,\bm{x}!]{P}{A}$ with
  $\ceqtm{\bsubst{P}{\bm{\Ge}}{\bm{x}}}{M_{\bm{\Ge}}}{\bsubst{A}{\bm{\Ge}}{\bm{x}}}$, so
  $\ceqtm{\dapp{V}{\bm{\Ge}}}{\bsubst{P}{\bm{\Ge}}{\bm{x}}}{\bsubst{A}{\bm{\Ge}}{\bm{x}}}$ by
  \cref{rule:bridge-betaF}. Thus
  $ \bapp{Q}{\bm{\Ge}} \;\eq\; \bapp{V}{\bm{\Ge}} \;\eq\; \bsubst{P}{\bm{\Ge}}{\bm{x}} \;\eq\; M_{\bm{\Ge}} $
  in $\bsubst{A}{\bm{\Ge}}{\bm{x}}$.
\end{proof}

\begin{ruletheorem}[$\Bridge$-$\eta$]
  \label{rule:bridge-eta}
  If $\coftype{Q}{\Bridge{\bm{x}.A}{M_{\bm{0}}}{M_{\bm{1}}}}$, then
  $\ceqtm{Q}{\blam{\bm{y}}{\bapp{Q}{\bm{y}}}}{\Bridge{\bm{x}.A}{M_{\bm{0}}}{M_{\bm{1}}}}$.
\end{ruletheorem}
\begin{proof}
  By \cref{lem:bridge-coherent,lem:evaluation}, we have $Q \evals V$ with
  $\ceqtm{Q}{V}{\Bridge{\bm{x}.A}{M_{\bm{0}}}{M_{\bm{1}}}}$.  We have $V = \blam{\bm{x}}{P}$ for some
  $\coftype[\cx!\Phi,\bm{x}!]{P}{A}$, so
  $\ceqtm{\blam{\bm{y}}{\bapp{V}{\bm{y}}}}{\blam{\bm{y}}{\bsubst{P}{\bm{y}}{\bm{x}}}}{\Bridge{\bm{x}.A}{M_{\bm{0}}}{M_{\bm{1}}}}$
  by \cref{rule:bridge-betaF,rule:bridge-I}. The right-hand side of this equation is $\Ga$-equal to $V$, so
  $\ceqtm{\blam{\bm{y}}{\bapp{V}{\bm{y}}}}{V}{\Bridge{\bm{x}.A}{M_{\bm{0}}}{M_{\bm{1}}}}$ by transitivity. We
  now obtain the result by replacing $V$ with $Q$ everywhere, using
  \cref{rule:bridge-E,rule:bridge-betaF,rule:bridge-I} to do so on the left-hand side.
\end{proof}

\subsection{Kan conditions}

For this section, fix $\ceqtypek[\cx!\Phi,\bm{x}!]{A}{A'}$,
$\ceqtm{M_{\bm{0}}}{M'_{\bm{0}}}{\bsubst{A}{\bm{0}}{\bm{x}}}$, and
$\ceqtm{M_{\bm{1}}}{M'_{\bm{1}}}{\bsubst{A}{\bm{1}}{\bm{x}}}$.

\begin{theorem}
  \label{thm:bridge-coe-kan}
  $\ceqtypep{\Bridge{\bm{x}.A}{M_{\bm{0}}}{M_{\bm{1}}}}{\Bridge{\bm{x}.A'}{M'_{\bm{0}}}{M'_{\bm{1}}}}$ are
  equally $\coe$-Kan.
\end{theorem}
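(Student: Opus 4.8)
The plan is to run the argument used by \citet[\S5.3]{chtt-iii} for $\coe$-Kan-ness of $\Path$-types, the only genuine novelty being that the composite produced by $\coe$ at a $\Bridge$-type uses constraints on bridge dimensions. Unfolding \cref{def:coe-kan}, fix a substitution $\psi$ introducing the coercion variable $y$, dimensions $r,s$, and $Q,Q'$; to keep notation light I suppress $\psi$, writing $A,M_{\bm{0}},M_{\bm{1}},Q,Q'$ for the instantiated data and $\mathbf{B} \eqdef \Bridge{\bm{x}.A}{M_{\bm{0}}}{M_{\bm{1}}}$ (by stability of all judgments under dimension substitution, $A$ is still Kan and $M_{\bm{\Ge}}$ still has type $\bsubst{A}{\bm{\Ge}}{\bm{x}}$, although $A$ and the $M_{\bm{\Ge}}$ may now depend on $y$), and similarly $\mathbf{B}',M'_{\bm{\Ge}}$ for the primed versions. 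I must establish the two clauses of \cref{def:coe-kan} for $\coe{y.\mathbf{B}}{r}{s}{Q}$. By \cref{fig:bridge-opsem} this term takes one step to
\[
  C \eqdef \blam{\bm{x}}{\com{y.A}{r}{s}{\bapp{Q}{\bm{x}}}{\tube{\bm{x}=\bm{0}}{y.M_{\bm{0}}},\tube{\bm{x}=\bm{1}}{y.M_{\bm{1}}}}},
\]
and, since $\mathbf{B}$ is already a value, this reduction is stable under further substitution: $\td{(\coe{y.\mathbf{B}}{r}{s}{Q})}{\psi'}\steps\td{C}{\psi'}$ for every $\psi'$. Hence by \cref{lem:expansion} the whole problem reduces to showing that $C$ inhabits $\dsubst{\mathbf{B}}{s}{y}$ and that the corresponding $C'$ (built from $A',M'_{\bm{\Ge}},Q'$) is related to $C$ there; transitivity with $\ceqtm{\coe{y.\mathbf{B}}{r}{s}{Q}}{C}{\dsubst{\mathbf{B}}{s}{y}}$ and its primed counterpart --- the two bridge types carrying the same semantics by \cref{rule:bridge-F} applied to the $y=s$ instances of the hypotheses --- then yields the first clause.

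The core of the proof is the verification that $C$ inhabits $\dsubst{\mathbf{B}}{s}{y}$, which comes down to checking the hypotheses of \cref{prop:com} for the inner composite. Adjoining a fresh $\bm{x}$, \cref{rule:bridge-E} makes $\bapp{Q}{\bm{x}}$ an element of $\dsubst{A}{r}{y}$; the tube lines $y.M_{\bm{\Ge}}$ are constant and are typed in $A$ under the constraint $\bm{x}=\bm{\Ge}$ (since there $A$ restricts to $\bsubst{A}{\bm{\Ge}}{\bm{x}}$, which is the type of $M_{\bm{\Ge}}$); the adjacency condition relating the two tubes is vacuous, because $\bm{x}=\bm{0}$ and $\bm{x}=\bm{1}$ are never simultaneously satisfiable; and the cap-coherence condition --- that $\dsubst{M_{\bm{\Ge}}}{r}{y}$ agrees with $\bapp{Q}{\bm{x}}$ under $\bm{x}=\bm{\Ge}$ --- is precisely \cref{rule:bridge-betapartial} applied to $Q$ at its type $\dsubst{\mathbf{B}}{r}{y}$, giving $\bapp{Q}{\bm{\Ge}}\eq\dsubst{M_{\bm{\Ge}}}{r}{y}$. \cref{prop:com} then applies: its first conclusion puts $\com{y.A}{r}{s}{\bapp{Q}{\bm{x}}}{\cdots}$ in $\dsubst{A}{s}{y}$, and --- substituting $\bm{\Ge}$ for $\bm{x}$, which turns the constraint $\bm{x}=\bm{\Ge}$ into the reflexive $\bm{\Ge}=\bm{\Ge}$ --- its second conclusion identifies the $\bm{x}=\bm{\Ge}$ face of the composite with $\dsubst{M_{\bm{\Ge}}}{s}{y}$. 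Feeding these two boundary equations into \cref{rule:bridge-I}, whose boundary side-conditions are exactly them, gives $\coftype{C}{\dsubst{\mathbf{B}}{s}{y}}$. The relational statement $\ceqtm{C}{C'}{\dsubst{\mathbf{B}}{s}{y}}$ drops out of the very same computation, now invoking the binary forms of \cref{rule:bridge-E}, \cref{prop:com}, and \cref{rule:bridge-I} in place of their reflexive instances and using the primed endpoint hypotheses for the tubes.

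The second clause of \cref{def:coe-kan} is then immediate: when $r=s$, the third conclusion of \cref{prop:com} gives $\ceqtm{\com{y.A}{s}{s}{\bapp{Q}{\bm{x}}}{\cdots}}{\bapp{Q}{\bm{x}}}{\dsubst{A}{s}{y}}$, so $\ceqtm{C}{\blam{\bm{x}}{\bapp{Q}{\bm{x}}}}{\dsubst{\mathbf{B}}{s}{y}}$ by \cref{rule:bridge-I} and $\ceqtm{\blam{\bm{x}}{\bapp{Q}{\bm{x}}}}{Q}{\dsubst{\mathbf{B}}{s}{y}}$ by \cref{rule:bridge-eta}; combining these with $\ceqtm{\coe{y.\mathbf{B}}{s}{s}{Q}}{C}{\dsubst{\mathbf{B}}{s}{y}}$ from the first paragraph, transitivity finishes.

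The step I expect to be the real work is the boundary bookkeeping of the second paragraph --- confirming that the cap $\bapp{Q}{\bm{x}}$ and the two tubes $y.M_{\bm{\Ge}}$ agree on every overlap and that the composite carries the intended $\bm{x}=\bm{0}$ and $\bm{x}=\bm{1}$ boundary. It is entirely routine, but it is also the one point in the treatment of $\coe$ at $\Bridge$-types where the enlargement of the Kan conditions in \cref{def:hcom-kan} to admit constraints of the form $\bm{r}=\bm{\Ge}$ is used: without it the composite $C$ could not even be formed.
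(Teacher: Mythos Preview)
Your proof is correct and follows essentially the same route as the paper's: both use \cref{lem:expansion} to reduce to the one-step reduct, verify the hypotheses of \cref{prop:com} via \cref{rule:bridge-E} and \cref{rule:bridge-betapartial}, and then conclude with \cref{rule:bridge-I} (and \cref{rule:bridge-eta} for the $r=s$ clause). Your additional remark pinpointing where the extended Kan conditions with bridge constraints are actually needed is a nice observation not made explicit in the paper.
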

\begin{proof}
  Let $\tds{\cx*[']{\Psi',y}}{\psi}{\cx*}$, $\dimj[{\cx[']}]{r,s}$, and
  $\ceqtm[\cx[']]{Q}{Q'}{\dsubst{\td{\Bridge{\bm{x}.A}{M_{\bm{0}}}{M_{\bm{1}}}}{\psi}}{r}{y}}$ be
  given. Abbreviating $B \eqdef \Bridge{\bm{x}.A}{M_{\bm{0}}}{M_{\bm{1}}}$ and
  $B' \eqdef \Bridge{\bm{x}.A'}{M'_{\bm{0}}}{M'_{\bm{1}}}$, we need to show that
  \begin{enumerate}
  \item $\ceqtm[\cx[']]{\coe{y.\td{B}{\psi}}{r}{s}{Q}}{\coe{y.\td{B'}{\psi}}{r}{s}{Q'}}{\dsubst{\td{B}{\psi}}{s}{y}}$,
  \item $\ceqtm[\cx[']]{\coe{y.\td{B}{\psi}}{r}{s}{Q}}{Q}{\dsubst{\td{B}{\psi}}{s}{y}}$ if $r = s$.

  \end{enumerate}
  We prove these in turn.
  \begin{enumerate}
  \item We apply \cref{lem:expansion} on either side of the equation to reduce our goal to proving
    \begin{gather*}
      \blam{\bm{x}}{\com{y.\td{A}{\psi}}{r}{s}{\bapp{Q}{\bm{x}}}{\tube{\bm{x}=\bm{0}}{y.\td{M_{\bm{0}}}{\psi}},\tube{\bm{x}=\bm{1}}{y.\td{M_{\bm{1}}}{\psi}}}} \\
      \eq \\
      \blam{\bm{x}}{\com{y.\td{A'}{\psi}}{r}{s}{\bapp{Q'}{\bm{x}}}{\tube{\bm{x}=\bm{0}}{y.\td{M'_{\bm{0}}}{\psi}},\tube{\bm{x}=\bm{1}}{y.\td{M'_{\bm{1}}}{\psi}}}}
    \end{gather*}
    in $\dsubst{\td{B}{\psi}}{s}{y}$. We have
    $\ceqtm[\cx[']!\Phi,\bm{x}!]{\bapp{Q}{\bm{x}}}{\bapp{Q'}{\bm{x}}}{\dsubst{\td{A}{\psi}}{r}{y}}$ by
    \cref{rule:bridge-E} and
    $\ceqtm[\cx[']!\Phi,\bm{x}!<\bm{x}=\bm{\Ge}>]{\bapp{Q}{\bm{x}}}{\dsubst{\td{M_{\bm{\Ge}}}{\psi}}{r}{y}}{\dsubst{\td{A}{\psi}}{r}{y}}$
    for each $\bm{\Ge}$ by \cref{rule:bridge-betapartial}. The desired equality thus follows from
    \cref{prop:com} and \cref{rule:bridge-I}.

  \item Suppose $\models r = s$. Again, it suffices by \cref{lem:expansion} to show
    \begin{gather*}
      \ceqtm[\cx[']]{\blam{\bm{x}}{\com{y.\td{A}{\psi}}{r}{s}{\bapp{Q}{\bm{x}}}{\tube{\bm{x}=\bm{0}}{y.\td{M_{\bm{0}}}{\psi}},\tube{\bm{x}=\bm{1}}{y.\td{M_{\bm{1}}}{\psi}}}}}{Q}{\dsubst{\td{B}{\psi}}{s}{y}}.
    \end{gather*}
    This follows from \cref{prop:com} and \cref{rule:bridge-eta}. \qedhere
  \end{enumerate}
\end{proof}

\begin{theorem}
  \label{thm:bridge-hcom-kan}
  $\ceqtypep{\Bridge{\bm{x}.A}{M_{\bm{0}}}{M_{\bm{1}}}}{\Bridge{\bm{x}.A'}{M'_{\bm{0}}}{M'_{\bm{1}}}}$ are
  equally $\hcom$-Kan.
\end{theorem}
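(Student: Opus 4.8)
The plan is to reuse the template of \cref{thm:bridge-coe-kan} essentially verbatim, substituting an appeal to the $\hcom$-Kan structure of $A$ for the appeal to $\com$. First I would fix $\tds{\cx*[']}{\psi}{\cx*}$, $\dimj[{\cx[']}]{r,s}$, and $\constraintsj[{\cx[']}]{\etc{\xi_i}}$ satisfying hypotheses (1)--(3) of \cref{def:hcom-kan} for $B \eqdef \Bridge{\bm{x}.A}{M_{\bm{0}}}{M_{\bm{1}}}$, write $B' \eqdef \Bridge{\bm{x}.A'}{M'_{\bm{0}}}{M'_{\bm{1}}}$, and use coherent expansion (\cref{lem:expansion}) on both sides of each of the three conclusions to reduce to reasoning about the reduct $\blam{\bm{x}}{H}$ displayed in \cref{fig:bridge-opsem}. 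Here $H$ is the homogeneous composite in $\td{A}{\psi}$ with base $\bapp{M}{\bm{x}}$ whose tube system extends $\sys{\xi_i}{y.\bapp{N_i}{\bm{x}}}$ by the two constant tubes $\tube{\bm{x}=\bm{0}}{\_.\td{M_{\bm{0}}}{\psi}}$ and $\tube{\bm{x}=\bm{1}}{\_.\td{M_{\bm{1}}}{\psi}}$, and $H'$ is its primed analogue. This reduction is legitimate because $\Bridge{\bm{x}.\td{A}{\psi}}{\cdots}$ is already a value, so the type-specific step fires immediately under every further substitution.

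For conclusion (1) I would then reduce by \cref{rule:bridge-I} (whose endpoint presuppositions I return to below) to the body equation $\ceqtm[\cx[']!\Phi',\bm{x}!]{H}{H'}{\td{A}{\psi}}$, and derive this from the $\hcom$-Kan conditions for $\ceqtypek[\cx!\Phi,\bm{x}!]{A}{A'}$. Checking the premises of \cref{def:hcom-kan} for $H$ is routine: the base $\bapp{M}{\bm{x}}$ and the caps $\bapp{N_i}{\bm{x}}$ (under $\xi_i$) are typed by \cref{rule:bridge-E}; the constant tubes $\_.\td{M_{\bm{\Ge}}}{\psi}$ are typed from the standing hypotheses, using that $\bsubst{A}{\bm{\Ge}}{\bm{x}}$ and $\td{A}{\psi}$ agree on $\bm{x} = \bm{\Ge}$; adjacency of the caps with the constant tubes on $\bm{x} = \bm{\Ge}$ holds by \cref{rule:bridge-betapartial}, the overlap $\bm{x} = \bm{0}, \bm{x} = \bm{1}$ being vacuous; and agreement of every cap with the base on the $\dsubst{-}{r}{y}$ face follows, after commuting the path substitution past the bridge application, from hypothesis (3) for $B$ together with \cref{rule:bridge-betapartial}. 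Running the $\hcom$-Kan \emph{equality} for $A$, whose ``primed'' hypotheses follow from those for $B$ via \cref{rule:bridge-E} and the standing hypotheses relating $M_{\bm{\Ge}}$ to $M'_{\bm{\Ge}}$, then gives $H \eq H'$. The $\Bridge$-I presuppositions $\bsubst{H}{\bm{\Ge}}{\bm{x}} \eq \td{M_{\bm{\Ge}}}{\psi}$ are obtained the same way: substituting $\bm{\Ge}$ for $\bm{x}$ turns the tube on $\bm{x} = \bm{\Ge}$ into one on a true constraint, so clause (2) of $\hcom$-Kan for $\bsubst{\td{A}{\psi}}{\bm{\Ge}}{\bm{x}}$ collapses the composite to $\td{M_{\bm{\Ge}}}{\psi}$. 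For conclusion (2) I would, when $\models \xi_i$, use clause (2) of $\hcom$-Kan for $A$ to rewrite $H$ to $\bapp{(\dsubst{N_i}{s}{y})}{\bm{x}}$ and then apply $\Bridge$-$\eta$ (\cref{rule:bridge-eta}) to get $\dsubst{N_i}{s}{y}$; for conclusion (3), when $\models r = s$, use clause (3) of $\hcom$-Kan for $A$ to rewrite $H$ to $\bapp{M}{\bm{x}}$ and apply \cref{rule:bridge-eta} again.

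I do not anticipate a genuine obstacle; the argument is bookkeeping, matching \cref{thm:bridge-coe-kan} step for step. The one point that deserves attention---and the only place the modification to the Kan conditions in \cref{def:hcom-kan} actually pays off---is that the inner composite $H$ in $\td{A}{\psi}$ carries tubes indexed by the \emph{bridge} constraints $\bm{x} = \bm{0}$ and $\bm{x} = \bm{1}$, so the proof relies on every cubical type being closed under $\hcom$s whose tubes mention bridge dimensions. This is precisely the supplementary closure condition flagged in the footnote at the start of \cref{sec:cubical}, so it requires no separate verification here.
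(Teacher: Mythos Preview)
Your proposal is correct and follows essentially the same approach as the paper: reduce via coherent expansion to the displayed reduct $\blam{\bm{x}}{H}$, verify the inner $\hcom$ in $\td{A}{\psi}$ is well-formed using \cref{rule:bridge-E} and \cref{rule:bridge-betapartial}, invoke the $\hcom$-Kan structure of $A$ for the body and endpoint equations, and finish with \cref{rule:bridge-I} (for conclusion~1) and \cref{rule:bridge-eta} (for conclusions~2 and~3). Your remark about the bridge-indexed tubes being the one genuinely new ingredient is also on point.
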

\begin{proof}
  Let $\psitd$, $\dimj[{\cx[']}]{r,s}$, $\constraintsj[{\cx[']}]{\etc{\xi_i}}$ be given, and suppose we have
  \begin{enumerate}
  \item $\ceqtm[\cx[']]{M}{M'}{\td{\Bridge{\bm{x}.A}{M_{\bm{0}}}{M_{\bm{1}}}}{\psi}}$,
        
  \item $\ceqtm[\cx[']{\Psi',y}<\xi_i,\xi_j>]{N_i}{N'_j}{\td{\Bridge{\bm{x}.A}{M_{\bm{0}}}{M_{\bm{1}}}}{\psi}}$ for all $i,j$,
  \item
    $\ceqtm[\cx[']<\xi_i>]{\dsubst{N_i}{r}{y}}{M}{\td{\Bridge{\bm{x}.A}{M_{\bm{0}}}{M_{\bm{1}}}}{\psi}}$
    for all $i$,
  \end{enumerate}
  Abbreviating $B \eqdef \Bridge{\bm{x}.A}{M_{\bm{0}}}{M_{\bm{1}}}$ and
  $B' \eqdef \Bridge{\bm{x}.A'}{M'_{\bm{0}}}{M'_{\bm{1}}}$, we need to show
  \begin{enumerate}
  \item
    $\ceqtm[\cx[']]{\hcom{\td{B}{\psi}}{r}{s}{M}{\sys{\xi_i}{y.N_i}}}{\hcom{\td{B'}{\psi}}{r}{s}{M'}{\sys{\xi_i}{y.N'_i}}}{\td{B}{\psi}}$,
  \item
    $\ceqtm[\cx[']]{\hcom{\td{B}{\psi}}{r}{s}{M}{\sys{\xi_i}{y.N_i}}}{\dsubst{N_i}{s}{y}}{\td{B}{\psi}}$
    for all $i$ with $\models \xi_i$,
  \item $\ceqtm[\cx[']]{\hcom{\td{B}{\psi}}{r}{s}{M}{\sys{\xi_i}{y.N_i}}}{M}{\td{B}{\psi}}$ if $r = s$.
  \end{enumerate}
  We prove these in turn.
  \begin{enumerate}
  \item We apply \cref{lem:expansion} on either side of the equation to reduce our goal to proving
    \begin{gather*}
      \blam{\bm{x}}{\hcom{\td{A}{\psi}}{r}{s}{\bapp{M}{\bm{x}}}{\sys{\xi_i}{y.\bapp{N_i}{\bm{x}}},\tube{\bm{x}=\bm{0}}{\_.\td{M_{\bm{0}}}{\psi}},\tube{\bm{x}=\bm{1}}{\_.\td{M_{\bm{1}}}{\psi}}}} \\
      \eq \\
      \blam{\bm{x}}{\hcom{\td{A'}{\psi}}{r}{s}{\bapp{M'}{\bm{x}}}{\sys{\xi_i}{y.\bapp{N'_i}{\bm{x}}},\tube{\bm{x}=\bm{0}}{\_.\td{M'_{\bm{0}}}{\psi}},\tube{\bm{x}=\bm{1}}{\_.\td{M'_{\bm{1}}}{\psi}}}}
    \end{gather*}
    in $\td{B}{\psi}$. By \cref{rule:bridge-E}, we have
    \begin{enumerate}
    \item $\ceqtm[\cx[']!\Phi',\bm{x}!]{\bapp{M}{\bm{x}}}{\bapp{M'}{\bm{x}}}{\td{A}{\psi}}$,
    \item $\ceqtm[\cx[']!\Phi',\bm{x}!{\Psi',y}<\xi_i,\xi_j>]{\bapp{N_i}{\bm{x}}}{\bapp{N_j'}{\bm{x}}}{\td{A}{\psi}}$ for all $i,j$,
    \item $\ceqtm[\cx[']!\Phi',\bm{x}!<\xi_i>]{\dsubst{(\bapp{N_i}{\bm{x}})}{r}{y}}{\bapp{M}{\bm{x}}}{\td{A}{\psi}}$ for all $i$.
    \end{enumerate}
    By \cref{rule:bridge-betapartial}, we also have
    \begin{enumerate}
    \item $\ceqtm[\cx[']!\Phi',\bm{x}!{\Psi',y}<\xi_i,\bm{x}=\bm{\Ge}>]{\bapp{N_i}{\bm{x}}}{\td{M_{\bm{\Ge}}}{\psi}}{\td{A}{\psi}}$ for all $i$ and $\bm{\Ge} \in \bm{0},\bm{1}$,
    \item $\ceqtm[\cx[']!\Phi',\bm{x}!<\bm{x}=\bm{\Ge}>]{\td{M_{\bm{\Ge}}}{\psi}}{\bapp{M}{\bm{x}}}{\td{A}{\psi}}$ for
      $\bm{\Ge} \in \bm{0},\bm{1}$.
    \end{enumerate}
    From the equations, it follows by the $\hcom$-Kan condition on $\ceqtypek{A}{A'}$ that
    \begin{gather*}
      \hcom{\td{A}{\psi}}{r}{s}{\bapp{M}{\bm{x}}}{\sys{\xi_i}{y.\bapp{N_i}{\bm{x}}},\tube{\bm{x}=\bm{0}}{\_.\td{M_{\bm{0}}}{\psi}},\tube{\bm{x}=\bm{1}}{\_.\td{M_{\bm{1}}}{\psi}}} \\
      \eq \\
      \hcom{\td{A'}{\psi}}{r}{s}{\bapp{M'}{\bm{x}}}{\sys{\xi_i}{y.\bapp{N'_i}{\bm{x}}},\tube{\bm{x}=\bm{0}}{\_.\td{M'_{\bm{0}}}{\psi}},\tube{\bm{x}=\bm{1}}{\_.\td{M'_{\bm{1}}}{\psi}}}
    \end{gather*}
    in $\td{A}{\psi}$ at $\cx*[']!\Phi',\bm{x}!$. It also follows that, for each $\bm{\Ge} \in \{\bm{0},\bm{1}\}$,
    we have
    \[
      \ceqtm{\bsubst{\hcom{\td{A}{\psi}}{r}{s}{\bapp{M}{\bm{x}}}{\sys{\xi_i}{y.\bapp{N_i}{\bm{x}}},\tube{\bm{x}=\bm{0}}{\_.\td{M_{\bm{0}}}{\psi}},\tube{\bm{x}=\bm{1}}{\_.\td{M_{\bm{1}}}{\psi}}}}{\bm{\Ge}}{\bm{x}}}{\td{M_{\bm{\Ge}}}{\psi}}{\bsubst{\td{A}{\psi}}{\bm{\Ge}}{\bm{x}}}.
    \]
    Thus we may apply \cref{rule:bridge-I} to obtain the desired equation.
  \item Suppose $\models \xi_i$. By again applying \cref{lem:expansion}, it suffices to show
    \begin{gather*}
      \ceqtm[\cx[']]{\blam{\bm{x}}{\hcom{\td{A}{\psi}}{r}{s}{\bapp{M}{\bm{x}}}{\sys{\xi_i}{y.\bapp{N_i}{\bm{x}}},\tube{\bm{x}=\bm{0}}{\_.\td{M_{\bm{0}}}{\psi}},\tube{\bm{x}=\bm{1}}{\_.\td{M_{\bm{1}}}{\psi}}}}}{\dsubst{N_i}{s}{y}}{\td{B}{\psi}}
    \end{gather*}
    This follows from the $\hcom$-Kan condition for $\ceqtypek{A}{A'}$ and
    \cref{rule:bridge-I,rule:bridge-eta}.
  \item Analogous to 2. \qedhere
  \end{enumerate}
\end{proof}

%%% Local Variables:
%%% mode: latex
%%% TeX-master: "main"
%%% End:

\section{Bridges in compound types}
\label{sec:compound}

We intend to think of a type $\cwftypek[\cx!\Phi,\bm{x}!]{A}$ varying in a dimension variable $\bm{x}$ as a
type-valued binary relation on its endpoints $\bsubst{A}{\bm{0}}{\bm{x}}$ and
$\bsubst{A}{\bm{1}}{\bm{x}}$. This point of view will be validated when we prove relativity in
\cref{sec:relativity}, but we can already give one direction of the correspondence: the relation corresponding
to $A$ is the family of $\Bridge$-types $\Bridge{\bm{x}.A}{-}{-}$. We therefore expect that for compound
types, such as pair, path, and function types, we can show that their bridge types align with their standard
``logical'' relational interpretations. For example, a bridge in a pair type should correspond uniquely to a
pair of bridges in its component types. For pairs and paths, this is indeed straightforward. For function
types, on the other hand, we will need to introduce a new operator we call $\extent$, previously introduced by
\citeauthor{bernardy15}\ under the name $\langle -,_i-\rangle$. This is the first place where the role of
substructurality becomes evident; the second will be in \cref{sec:gel}.

\begin{theorem}
  \label{thm:sigma-bridge}
  Let $\cwftypek[\cx!\Phi,\bm{x}!]{A}$, $\wftypek[\cx!\Phi,\bm{x}!]{\oft{a}{A}}{B}$,
  $\coftype{M_0}{\dsubst{(\sigmacl{a}{A}{B})}{\bm{0}}{\bm{x}}}$, and
  $\coftype{M_1}{\dsubst{(\sigmacl{a}{A}{B})}{\bm{1}}{\bm{x}}}$. Then we have the following equivalence.
  \[
    \Bridge{\bm{x}.\sigmacl{a}{A}{B}}{M_0}{M_1} \simeq \sigmacl{p}{\Bridge{\bm{x}.A}{\fst{M_0}}{\fst{M_1}}}{\Bridge{\bm{x}.\subst{B}{\bapp{p}{\bm{x}}}{a}}{\snd{M_0}}{\snd{M_1}}}
  \]
\end{theorem}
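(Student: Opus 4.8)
The plan is to exhibit a quasi-equivalence between the two sides and invoke \cref{rec:qequiv-to-equiv} to promote it to an equivalence. The two maps are the evident componentwise ones. In the forward direction, send $Q : \Bridge{\bm{x}.\sigmacl{a}{A}{B}}{M_0}{M_1}$ to the pair $\pair{\blam{\bm{x}}{\fst{(\bapp{Q}{\bm{x}})}}}{\blam{\bm{x}}{\snd{(\bapp{Q}{\bm{x}})}}}$. By \cref{rule:bridge-E}, in the context extended by $\bm{x}$ we have $\bapp{Q}{\bm{x}} : \sigmacl{a}{A}{B}$, hence $\fst{(\bapp{Q}{\bm{x}})} : A$ and $\snd{(\bapp{Q}{\bm{x}})} : \subst{B}{\fst{(\bapp{Q}{\bm{x}})}}{a}$; \cref{rule:bridge-I} then forms the two $\blam$-abstractions, and \cref{rule:bridge-betapartial} applied to $Q$ together with the $\Sigma$-projection rules computes their endpoints as $\fst{M_{\bm\Ge}}$ and $\snd{M_{\bm\Ge}}$, as required.

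The one point that needs attention is the type of the second abstraction: writing $p \eqdef \blam{\bm{x}}{\fst{(\bapp{Q}{\bm{x}})}}$, \cref{rule:bridge-betaF} gives $\bapp{p}{\bm{x}} \eq \fst{(\bapp{Q}{\bm{x}})}$, so $\subst{B}{\bapp{p}{\bm{x}}}{a}$ and $\subst{B}{\fst{(\bapp{Q}{\bm{x}})}}{a}$ agree and $\blam{\bm{x}}{\snd{(\bapp{Q}{\bm{x}})}}$ indeed inhabits $\Bridge{\bm{x}.\subst{B}{\bapp{p}{\bm{x}}}{a}}{\snd{M_0}}{\snd{M_1}}$; note that $p$ is fresh for $\bm{x}$, so the substructural side-condition is met. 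In the backward direction, send $\pair{p}{q}$ to $\blam{\bm{x}}{\pair{\bapp{p}{\bm{x}}}{\bapp{q}{\bm{x}}}}$: since $\bapp{q}{\bm{x}} : \subst{B}{\bapp{p}{\bm{x}}}{a}$, the pair lies in $\sigmacl{a}{A}{B}$, and by \cref{rule:bridge-betapartial} we have $\bapp{p}{\bm{\Ge}} \eq \fst{M_{\bm\Ge}}$ and $\bapp{q}{\bm{\Ge}} \eq \snd{M_{\bm\Ge}}$, so the pair is $\Sigma$-$\eta$-equal to $M_{\bm\Ge}$ and \cref{rule:bridge-I} packages it into an element of the left-hand type.

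For the round-trips, applying forward then backward to $Q$ yields (after pushing the outer application through the $\blam$s via \cref{rule:bridge-betaF} and applying $\Sigma$-$\eta$ to $\bapp{Q}{\bm{x}}$) the term $\blam{\bm{x}}{\bapp{Q}{\bm{x}}}$, which is $Q$ by \cref{rule:bridge-eta}; applying backward then forward to $\pair{p}{q}$ yields (by \cref{rule:bridge-betaF} and $\Sigma$-$\beta$) the pair $\pair{\blam{\bm{x}}{\bapp{p}{\bm{x}}}}{\blam{\bm{x}}{\bapp{q}{\bm{x}}}}$, which is $\pair{p}{q}$ by two uses of \cref{rule:bridge-eta}. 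Both composites are thus definitionally the identity, so the quasi-equivalence coherences are witnessed by the constant paths $\dlam{\_}{Q}$ and $\dlam{\_}{\pair{p}{q}}$; \cref{rec:qequiv-to-equiv} then produces the claimed equivalence.

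The main obstacle is, frankly, minimal: this mirrors the standard cubical fact that $\Path$-types commute with $\Sigma$-types, and bridges interact with pairing strictly componentwise, so the whole argument is bookkeeping with the $\beta$- and $\eta$-rules of \cref{sec:bridge-types} and of $\sigmacl{a}{A}{B}$. The only thing one must keep track of is the dependency of the second $\Bridge$-type on the first component; this is handled uniformly by rewriting $\bapp{p}{\bm{x}}$ to $\fst{(\bapp{Q}{\bm{x}})}$ via \cref{rule:bridge-betaF} wherever the two types need to be identified, and all the substructural side-conditions on dimension variables are trivially satisfied since the relevant bridge variable $\bm{x}$ remains bound throughout.
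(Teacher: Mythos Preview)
Your proof is correct and follows exactly the same approach as the paper: the same forward and backward maps, the same appeal to \cref{rec:qequiv-to-equiv}, and the same use of the $\beta$/$\eta$-rules for $\Bridge$ and $\Sigma$. You have simply spelled out the typing and round-trip details that the paper leaves implicit.
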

\begin{proof}
  For the forward map, we send $q$ to
  $\pair{\blam{\bm{x}}{\fst{\bapp{q}{\bm{x}}}}}{\blam{\bm{x}}{\snd{\bapp{q}{\bm{x}}}}}$. For the inverse, we
  send $t$ to $\blam{\bm{x}}{\pair{\bapp{\fst{t}}{\bm{x}}}{\bapp{\snd{t}}{\bm{x}}}}$. It is simple to
  establish via \cref{rec:qequiv-to-equiv} that these maps give rise to an equivalence.
\end{proof}

\begin{theorem}
  \label{thm:path-bridge}
  Let a type $\cwftypek[\cx!\Phi,\bm{x}!{\Psi,y}]{A}$, $\coftype[\cx!\Phi,\bm{x}!]{M_0}{\dsubst{A}{0}{y}}$,
  $\coftype[\cx!\Phi,\bm{x}!]{M_1}{\dsubst{A}{1}{y}}$,
  $\coftype{P_0}{\dsubst{\Path{y.A}{M_0}{M_1}}{\bm{0}}{\bm{x}}}$, and
  $\coftype{P_1}{\dsubst{\Path{y.A}{M_0}{M_1}}{\bm{1}}{\bm{x}}}$ be given. Then we have the following
  equivalence.
  \[
    \Bridge{\bm{x}.\Path{y.A}{M_0}{M_1}}{P_0}{P_1} \simeq \Path{y.\Bridge{\bm{x}.A}{\dapp{P_0}{y}}{\dapp{P_1}{y}}}{\blam{\bm{x}}{M_0}}{\blam{\bm{x}}{M_1}}
  \]
\end{theorem}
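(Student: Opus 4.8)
The plan is to mimic the proof of \cref{thm:sigma-bridge}: exhibit explicit forward and backward maps, observe that they are mutually inverse, and conclude via \cref{rec:qequiv-to-equiv}. For the forward map I would send $q$ to $\dlam{y}{\blam{\bm{x}}{\dapp{(\bapp{q}{\bm{x}})}{y}}}$, and for the backward map send $t$ to $\blam{\bm{x}}{\dlam{y}{\bapp{(\dapp{t}{y})}{\bm{x}}}}$; intuitively these merely transpose the order of the $\bm{x}$- and $y$-abstractions. The first task is to check well-typedness. For the forward map, the endpoint rule for $\Path$ (\cref{fig:path-rules}) gives $\dsubst{(\blam{\bm{x}}{\dapp{(\bapp{q}{\bm{x}})}{y}})}{\Ge}{y} \eq \blam{\bm{x}}{M_\Ge}$, while \cref{rule:bridge-betapartial} gives $\bsubst{(\dapp{(\bapp{q}{\bm{x}})}{y})}{\bm{\Ge}}{\bm{x}} \eq \dapp{P_{\bm{\Ge}}}{y}$, so the term lands in the claimed $\Path$ of $\Bridge$-types; the backward map is checked symmetrically, now using \cref{rule:bridge-betapartial} together with the $\Path$-$\eta$ and $\Path$ endpoint rules. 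Throughout, each use of \cref{rule:bridge-E} must apply the bridge to the freshly-introduced variable $\bm{x}$, for which $q$ (respectively $t$) is apart.

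Next I would verify the round-trips. Unfolding the composite $\mathrm{bwd}\circ\mathrm{fwd}$ applied to $q$, one $\Path$-$\beta$ reduction followed by one $\Bridge$-$\beta$ reduction (\cref{rule:bridge-betaF}) collapses the nested applications and leaves $\blam{\bm{x}}{\dlam{y}{\dapp{(\bapp{q}{\bm{x}})}{y}}}$, after which $\Path$-$\eta$ and then \cref{rule:bridge-eta} return $q$ exactly; the composite $\mathrm{fwd}\circ\mathrm{bwd}$ applied to $t$ reduces to $t$ by the mirror-image computation. Hence the two maps are definitionally mutually inverse, so they assemble into a quasi-equivalence whose two coherence paths may be taken constant, and \cref{rec:qequiv-to-equiv} then yields the desired equivalence.

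The calculations above are all routine $\beta\eta$-reasoning; the only thing demanding care---and the reason this theorem merits separate treatment rather than being folded into \cref{thm:sigma-bridge}---is the substructural bookkeeping highlighted in \cref{sec:compound}. Each appeal to \cref{rule:bridge-E}, \cref{rule:bridge-betaF}, or \cref{rule:bridge-eta} must be made in a context from which the relevant term is apart for the bridge variable being abstracted or applied, and one must track that the path dimension $y$ and the bridge dimension $\bm{x}$ freely commute, so that the transposed terms are well-formed. Once those freshness conditions are discharged, no further work is needed.
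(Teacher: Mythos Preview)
Your proposal is correct and matches the paper's proof essentially verbatim: the paper gives exactly the same forward map $p \mapsto \dlam{y}{\blam{\bm{x}}{\dapp{\bapp{p}{\bm{x}}}{y}}}$ and inverse $q \mapsto \blam{\bm{x}}{\dlam{y}{\bapp{\dapp{q}{y}}{\bm{x}}}}$, then appeals to \cref{rec:qequiv-to-equiv}. Your additional commentary on the $\beta\eta$ round-trips and the substructural freshness bookkeeping simply spells out what the paper leaves as ``simple to see.''
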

\begin{proof}
  For the forward map, we send $p$ to $\dlam{y}{\blam{\bm{x}}{\dapp{\bapp{p}{\bm{x}}}{y}}}$. For the inverse,
  we send $q$ to $\blam{\bm{x}}{\dlam{y}{\bapp{\dapp{q}{y}}{\bm{x}}}}$. It is again simple to see with
  \cref{rec:qequiv-to-equiv} that these give rise to an equivalence.
\end{proof}

\begin{figure}
  \begin{mdframed}
    \begin{mathpar}
      \Infer
      { }
      {\extent{\bm{0}}{M}{a.N}{a'.P}{a.a'.c.Q} \steps \subst{N}{M}{a}}
      \and
      \Infer
      { }
      {\extent{\bm{1}}{M}{a.N}{a'.P}{a.a'.c.Q} \steps \subst{P}{M}{a'}}
      \and
      \Infer
      { }
      {\extent{\bm{x}}{M}{a.N}{a'.P}{a.a'.c.Q} \steps \bapp{\subst{\subst{\subst{Q}{\bsubst{M}{\bm{0}}{\bm{x}}}{a}}{\bsubst{M}{\bm{1}}{\bm{x}}}{a'}}{\blam{\bm{x}}{M}}{c}}{\bm{x}}}
    \end{mathpar}
  \end{mdframed}
\caption{Operational semantics of the $\extent$ operator}
\label{fig:extent-opsem}
\end{figure}

We now come to function types. Our expectation is that a term of type $\Bridge{\bm{x}.A \to B}{F}{F'}$
corresponds to a term of type
$\picl*{a}{\bsubst{A}{\bm{0}}{\bm{x}}}{\picl*{a'}{\bsubst{A}{\bm{1}}{\bm{x}}}{\picl{c}{\Bridge{\bm{x}.A}{a}{a'}}{\Bridge{\bm{x}.\subst{B}{\bapp{c}{\bm{x}}}{a}}{Fa}{F'a'}}}}$,
a function taking each bridge over $A$ to a bridge over $B$ between the images of its endpoints under $F$ and
$F'$. Indeed, it is easy to give the forward direction of this anticipated equivalence: we send $q$ in the
former type to $\lam{a}{\lam{a'}{\lam{c}{\blam{\bm{x}}{(\bapp{q}{\bm{x}})(\bapp{c}{\bm{x}})}}}}$ in the
latter.

It is in the reverse direction that we run into trouble. Suppose we have $g$ in the latter type. Given any
$\bm{x}$ and $a : A$, we need to be able to construct an element of $B$, but $g$ expects a \emph{bridge over}
$A$, not an \emph{element of $A$ varying in $\bm{x}$}. Intuitively, we would like to write
``$g(\bsubst{a}{\bm{0}}{\bm{x}})(\bsubst{a}{\bm{1}}{\bm{x}})(\blam{\bm{x}}{a})$,'' capturing the occurrences
of $\bm{x}$ in $a$. The ability to internally abstract a term over a variable in this way is a characteristic
feature of \emph{nominal sets} \citep{pitts13}. These are equivalent to presheaves on the category of names
and injective substitutions, the subcategory of our category of bridge contexts excluding morphisms which send
variables to $\bm{0}$ or $\bm{1}$. Injectivity, which amounts to the exclusion of diagonal substitutions
$\bsubst{}{\bm{y}}{\bm{x}}$, is essential, as the map $(\bm{x},M) \mapsto \blam{\bm{x}}{M}$ does not commute
with such substitutions. For if $M$ mentions some $\bm{y}$, abstracting $\bm{y}$ after applying
$\bsubst{}{\bm{y}}{\bm{x}}$ will cause the occurrences of $\bm{y}$ in $M$ to be captured; if we abstract
$\bm{x}$ \emph{before} applying $\bsubst{}{\bm{y}}{\bm{x}}$, then these occurrences will not be captured.

We also need to consider the case of substitutions $\bsubst{}{\bm{0}}{\bm{x}}$ and
$\bsubst{}{\bm{1}}{\bm{x}}$, so in the end we will provide a kind of case operator for dimension terms. This
operator takes the form $\extent{\bm{r}}{M}{a.N}{a'.P}{a.a'.c.Q}$, so named because it reveals the full
``extent'' of the term $M$ in the $\bm{r}$ direction. If $\bm{r}$ is $\bm{0}$, then $M$ is supplied to $N$; if
$\bm{r}$ is $\bm{1}$, then $M$ is supplied to $P$. If $\bm{r}$ is some $\bm{x}$, then
$\bsubst{M}{\bm{0}}{\bm{x}}$, $\bsubst{M}{\bm{1}}{\bm{x}}$, and the abstracted $\blam{\bm{x}}{M}$ are supplied
to $Q$, which should be a bridge between $\subst{N}{\bsubst{M}{\bm{0}}{\bm{x}}}{a}$ and
$\subst{P}{\bsubst{M}{\bm{1}}{\bm{x}}}{a'}$. The operational semantics of $\extent$, which do exactly this,
are shown in \cref{fig:extent-opsem}. Below, we prove well-typedness and computation rules for $\extent$,
which are collected as part of the proof theory in \cref{sec:proof-theory:extent}.

\begin{ruletheorem}[$\extent*$-$\beta_{\bm{0}}$]
  \label{rule:extent-beta0}
  If $\cwftypek{A}$, $\wftypek{\oft{d}{A}}{B}$, and $\oftype{\oft{a}{A}}{N}{\subst{B}{a}{d}}$, then
  $\ceqtm{\extent{\bm{0}}{M}{a.N}{a'.P}{a.a'.c.Q}}{\subst{N}{M}{a}}{\subst{B}{M}{d}}$.
\end{ruletheorem}
\begin{proof}
  By \cref{lem:expansion}, as
  $\td{\extent{\bm{0}}{M}{a.N}{a'.P}{a.a'.c.Q}}{\psi} \steps \td{\subst{N}{M}{a}}{\psi}$ for all
  $\psi$.
\end{proof}

\begin{ruletheorem}[$\extent*$-$\beta_{\bm{1}}$]
  \label{rule:extent-beta1}
  If $\cwftypek{A}$, $\wftypek{\oft{d}{A}}{B}$, and $\oftype{\oft{a'}{A}}{P}{\subst{B}{a'}{d}}$, then
  $\ceqtm{\extent{\bm{1}}{M}{a.N}{a'.P}{a.a'.c.Q}}{\subst{P}{M}{a'}}{\subst{B}{M}{d}}$.
\end{ruletheorem}
\begin{proof}
  By \cref{lem:expansion}, as
  $\td{\extent{\bm{1}}{M}{a.N}{a'.P}{a.a'.c.Q}}{\psi} \steps \td{\subst{P}{M}{a'}}{\psi}$ for all
  $\psi$.
\end{proof}

\begin{ruletheorem}[$\extent*$-$\beta$]
  \label{rule:extent-betaF}
  If $\bridgej{\bm{r}}$ and
  \begin{enumerate}
  \item $\cwftypek[\cx!\apartcx{\Phi}{\bm{r}},\bm{x}!]{A}$,
  \item $\wftypek[\cx!\apartcx{\Phi}{\bm{r}},\bm{x}!]{\oft{d}{A}}{B}$,
  \item $\coftype[\cx!\apartcx{\Phi}{\bm{r}},\bm{x}!]{M}{A}$,
  \item $\oftype[\cx!\apartcx{\Phi}{\bm{r}}!]{\oft{a}{\bsubst{A}{\bm{0}}{\bm{x}}}}{N}{\subst{\bsubst{B}{\bm{0}}{\bm{x}}}{a}{d}}$,
  \item $\oftype[\cx!\apartcx{\Phi}{\bm{r}}!]{\oft{a'}{\bsubst{A}{\bm{1}}{\bm{x}}}}{P}{\subst{\bsubst{B}{\bm{1}}{\bm{x}}}{a'}{d}}$,
  \item
    $\oftype[\cx!\apartcx{\Phi}{\bm{r}}!]{\oft{a}{\bsubst{A}{\bm{0}}{\bm{x}}},\oft{a'}{\bsubst{A}{\bm{1}}{\bm{x}}},\oft{c}{\Bridge{\bm{x}.A}{a}{a'}}}{Q}{\Bridge{\bm{x}.\subst{B}{\bapp{c}{\bm{x}}}{d}}{N}{P}}$,
  \end{enumerate}
  then
  $\ceqtm{\extent{\bm{r}}{\bsubst{M}{\bm{r}}{\bm{x}}}{a.N}{a'.P}{a.a'.c.Q}}{\bapp{\subst{\subst{\subst{Q}{\bsubst{M}{\bm{0}}{\bm{x}}}{a}}{\bsubst{M}{\bm{1}}{\bm{x}}}{a'}}{\blam{\bm{x}}{M}}{c}}{\bm{r}}}{\bsubst{\subst{B}{M}{d}}{\bm{r}}{\bm{x}}}$.
\end{ruletheorem}
\begin{proof}
  Via \cref{lem:expansion}. Let $\psitd$ be given. We have three cases:
  \begin{itemize}
  \item $\models \td{\bm{r}}{\psi} = \bm{0}$.

    Then
    $\ceqtm[\cx[']]{\td{\extent{\bm{r}}{\bsubst{M}{\bm{r}}{\bm{x}}}{a.N}{a'.P}{a.a'.c.Q}}{\psi}}{\td{\subst{N}{\bsubst{M}{\bm{r}}{\bm{x}}}{a}}{\psi}}{\td{\bsubst{\subst{B}{M}{d}}{\bm{r}}{\bm{x}}}{\psi}}$
    by \cref{rule:extent-beta0}, and the right-hand side is equal to
    $\td{(\bapp{\subst{\subst{\subst{Q}{\bsubst{M}{\bm{0}}{\bm{x}}}{a}}{\bsubst{M}{\bm{1}}{\bm{x}}}{a'}}{\blam{\bm{x}}{M}}{c}}{\bm{r}})}{\psi}$
    by \cref{rule:bridge-betapartial}.
  \item $\models \td{\bm{r}}{\psi} = \bm{1}$.

    Analogous to the previous case.
    
  \item $\not\models \td{\bm{r}}{\psi} = \bm{\Ge}$ for all $\bm{\Ge} \in \{\bm{0},\bm{1}\}$.
    Then
    \[
      \td{\extent{\bm{r}}{\bsubst{M}{\bm{r}}{\bm{x}}}{a.N}{a'.P}{a.a'.c.Q}}{\psi}
      \steps \td{(\bapp{\subst{\subst{\subst{Q}{\bsubst{M}{\bm{0}}{\bm{x}}}{a}}{\bsubst{M}{\bm{1}}{\bm{x}}}{a'}}{\blam{\bm{x}}{M}}{c}}{\bm{r}})}{\psi}
    \]
    and the reduct is well-typed by \cref{rule:bridge-I,rule:bridge-betapartial,rule:bridge-E}. \qedhere
  \end{itemize}
\end{proof}

\begin{ruletheorem}[$\extent*$]
  \label{rule:extent}
  If $\bridgej{\bm{r}}$ and
  \begin{enumerate}
  \item $\cwftypek[\cx!\apartcx{\Phi}{\bm{r}},\bm{x}!]{A}$,
  \item $\wftypek[\cx!\apartcx{\Phi}{\bm{r}},\bm{x}!]{\oft{d}{A}}{B}$,
  \item $\ceqtm{M}{M'}{\bsubst{A}{\bm{r}}{\bm{x}}}$,
  \item $\eqtm[\cx!\apartcx{\Phi}{\bm{r}}!]{\oft{a}{\bsubst{A}{\bm{0}}{\bm{x}}}}{N}{N'}{\subst{\bsubst{B}{\bm{0}}{\bm{x}}}{a}{d}}$,
  \item $\eqtm[\cx!\apartcx{\Phi}{\bm{r}}!]{\oft{a'}{\bsubst{A}{\bm{1}}{\bm{x}}}}{P}{P'}{\subst{\bsubst{B}{\bm{1}}{\bm{x}}}{a'}{d}}$,
  \item $\eqtm[\cx!\apartcx{\Phi}{\bm{r}}!]{\oft{a}{\bsubst{A}{\bm{0}}{\bm{x}}},\oft{a'}{\bsubst{A}{\bm{1}}{\bm{x}}},\oft{c}{\Bridge{\bm{x}.A}{a}{a'}}}{Q}{Q'}{\Bridge{\bm{x}.\subst{B}{\bapp{c}{\bm{x}}}{d}}{N}{P}}$,
  \end{enumerate}
  then
  $\ceqtm{\extent{\bm{r}}{M}{a.N}{a'.P}{a.a'.c.Q}}{\extent{\bm{r'}}{M'}{a.N'}{a'.P'}{a.a'.c.Q'}}{\subst{\bsubst{B}{\bm{r}}{\bm{x}}}{M}{d}}$.
\end{ruletheorem}
\begin{proof}
  We have two cases: either $\models \bm{r} = \bm{\Ge}$ for some $\bm{\Ge} \in \{\bm{0},\bm{1}\}$ or not. In
  the former case, we reduce either side with \cref{rule:extent-beta0,rule:extent-beta1} and apply the typing
  assumptions to equate the reducts. In the latter case, $\bm{r}$ must be some variable $\bm{y} \in \Phi$. In
  that case, we can reduce either side with \cref{rule:extent-betaF} and then equate the reducts with the
  typing assumptions and \cref{rule:bridge-I,rule:bridge-betapartial,rule:bridge-E}.
\end{proof}

\begin{ruletheorem}[$\extent*$-$\eta$]
  \label{rule:extent-eta}
  If $\bridgej{\bm{r}}$ and
  \begin{enumerate}
  \item $\cwftypek[\cx!\apartcx{\Phi}{\bm{r}},\bm{x}!]{A}$,
  \item $\wftypek[\cx!\apartcx{\Phi}{\bm{r}},\bm{x}!]{\oft{d}{A}}{B}$,
  \item $\coftype{M}{\bsubst{A}{\bm{r}}{\bm{x}}}$,
  \item $\oftype[\cx!\apartcx{\Phi}{\bm{r}},\bm{x}!]{\oft{d}{A}}{N}{B}$,
  \end{enumerate}
  then
  \[
    \subst{\bsubst{N}{\bm{r}}{\bm{x}}}{M}{a}
    \eq
    \extent{\bm{r}}{M}{a.\subst{\bsubst{N}{\bm{0}}{\bm{x}}}{a}{d}}{a'.\subst{\bsubst{N}{\bm{1}}{\bm{x}}}{a'}{d}}{a.a'.c.\subst{\blam{\bm{x}}{N}}{\bapp{c}{\bm{x}}}{d}}
  \]
  in $\subst{\bsubst{B}{\bm{r}}{\bm{x}}}{M}{d}$ at $\cx*$.
\end{ruletheorem}
\begin{proof}
  By case analysis on $\bm{r}$. If $\bm{r} = \bm{0}$, this follows from \cref{rule:extent-beta0}; if
  $\bm{r} = \bm{1}$, it follows from \cref{rule:extent-beta1}. If $\bm{r} = \bm{y}$, then
  $M = \bsubst{\bsubst{M}{\bm{x}}{\bm{y}}}{\bm{r}}{\bm{x}}$, and we have
  \[
    \bsubst{\subst{N}{\bsubst{M}{\bm{x}}{\bm{y}}}{a}}{\bm{r}}{\bm{x}}
    \eq
    \extent{\bm{r}}{\bsubst{\bsubst{M}{\bm{x}}{\bm{y}}}{\bm{r}}{\bm{x}}}{a.\subst{\bsubst{N}{\bm{0}}{\bm{x}}}{a}{d}}{a'.\subst{\bsubst{N}{\bm{1}}{\bm{x}}}{a'}{d}}{a.a'.c.\subst{\blam{\bm{x}}{N}}{\bapp{c}{\bm{x}}}{d}}
  \]
  by \cref{rule:bridge-betaF,rule:extent-betaF}.
\end{proof}

\begin{remark}
  A weak version of the rule \rulename{$\extent*$-$\eta$}, in which one obtains a path rather than an equality,
  is derivable from the other rules for $\extent$ (much as with $\eta$-principles for positive types). Given
  the hypotheses of \cref{rule:extent-eta}, we have
  \[
    \extent{\bm{r}}{M}{a.\dlam{y}{\bsubst{N}{\bm{0}}{\bm{x}}}}{a.\bsubst{N}{\bm{1}}{\bm{x}}}{a.a'.c.\blam{\bm{x}}{\dlam{y}{\subst{N}{\dapp{c}{\bm{x}}}{a}}}}
  \]
  of type
  \[
  \Path{\subst{\bsubst{B}{\bm{r}}{\bm{x}}}{M}{d}}{\subst{\bsubst{N}{\bm{r}}{\bm{x}}}{M}{a}}{\extent{\bm{r}}{M}{a.\subst{\bsubst{N}{\bm{0}}{\bm{x}}}{a}{d}}{a'.\subst{\bsubst{N}{\bm{1}}{\bm{x}}}{a'}{d}}{a.a'.c.\subst{\blam{\bm{x}}{N}}{\bapp{c}{\bm{x}}}{d}}}.
  \]
  The weaker rule suffices for the proof of \cref{thm:function-bridge} below, which is the only place we use
  \rulename{$\extent*$-$\eta$}. Thus, the strict rule may safely be omitted from a proof theory.
\end{remark}

This completes the set of rules we need for $\extent$. Using these rules, we can now characterize the type of
bridges over a function type.

\begin{theorem}
  \label{thm:function-bridge}
  Let $\cwftypek[\cx!\Phi,\bm{x}!]{A}$, $\wftypek[\cx!\Phi,\bm{x}!]{\oft{a}{A}}{B}$,
  $\coftype{F}{\picl{a}{\bsubst{A}{\bm{0}}{\bm{x}}}{\bsubst{B}{\bm{0}}{\bm{x}}}}$, and
  $\coftype{F'}{\picl{a}{\bsubst{A}{\bm{1}}{\bm{x}}}{\bsubst{B}{\bm{1}}{\bm{x}}}}$ be given. Then there is an
  equivalence
  \[
    \Bridge{\bm{x}.\picl{a}{A}{B}}{F}{F'} \simeq \picl*{a}{\bsubst{A}{\bm{0}}{\bm{x}}}{\picl*{a'}{\bsubst{A}{\bm{1}}{\bm{x}}}{\picl{c}{\Bridge{\bm{x}.A}{a}{a'}}{\Bridge{\bm{x}.\subst{B}{\bapp{c}{\bm{x}}}{a}}{Fa}{F'a'}}}}.
  \]
\end{theorem}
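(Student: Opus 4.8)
The plan is to apply \cref{rec:qequiv-to-equiv}, so that it suffices to exhibit a quasi-equivalence; this then upgrades to an equivalence preserving the underlying forward map. The forward map is the one displayed just before the statement, sending $q$ to $\lam{a}{\lam{a'}{\lam{c}{\blam{\bm{x}}{(\bapp{q}{\bm{x}})(\bapp{c}{\bm{x}})}}}}$, whose well-typedness follows from \cref{rule:bridge-E,rule:bridge-I} and the rules for $\Pi$. For the reverse map I would use $\extent$ to case-split on the bridge variable, sending $g$ to
\[
  \blam{\bm{x}}{\lam{a}{\extent{\bm{x}}{a}{a.Fa}{a'.F'a'}{a.a'.c.g\,a\,a'\,c}}}.
\]
The subtlety is that inside $\blam{\bm{x}}{-}$ we have only an \emph{element} $a : A$ varying in $\bm{x}$, whereas $g$ demands a \emph{bridge}; $\extent$ resolves this by handing $g$ the captured abstraction $\blam{\bm{x}}{a}$ in the generic case while deferring to $F$, $F'$ at the endpoints $\bm{x} = \bm{0}, \bm{1}$, where $\bm{x}$ is not in scope. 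Well-typedness of this term is an instance of \cref{rule:extent} (with type family $B$, subject $a$, and branches $Fa$, $F'a'$, $g\,a\,a'\,c$ of the required types), together with \cref{rule:extent-beta0,rule:extent-beta1,rule:bridge-I} to check that the endpoints really are $F$ and $F'$.

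Next I would verify the two round-trips. For forward-after-reverse: unfolding and reducing with \cref{rule:bridge-betaF} and function $\beta$ turns the body into $\extent{\bm{x}}{\bapp{c}{\bm{x}}}{a.Fa}{a'.F'a'}{a.a'.c.g\,a\,a'\,c}$; applying \cref{rule:extent-betaF} with the abstracted term taken to be $\bapp{c}{\bm{x}}$, then \cref{rule:bridge-betapartial} to compute $\bapp{c}{\bm{0}} \eq a$, $\bapp{c}{\bm{1}} \eq a'$ and \cref{rule:bridge-eta} to compute $\blam{\bm{x}}{\bapp{c}{\bm{x}}} \eq c$, reduces it to $\bapp{(g\,a\,a'\,c)}{\bm{x}}$; a further use of \cref{rule:bridge-eta} and three uses of function $\eta$ then give back $g$ definitionally, so the round-trip path may be taken to be reflexivity. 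For reverse-after-forward: after unfolding and one function $\beta$ the inner body is $\extent{\bm{x}}{a}{a.Fa}{a'.F'a'}{a.a'.c.\blam{\bm{x}}{(\bapp{q}{\bm{x}})(\bapp{c}{\bm{x}})}}$, and I would apply the $\eta$-principle for $\extent$ --- specifically the weak, path-valued variant noted in the remark after \cref{rule:extent-eta}, instantiated with $N := (\bapp{q}{\bm{x}})\,d$ --- using \cref{rule:bridge-betapartial} to identify the endpoint branches $(\bapp{q}{\bm{0}})\,a$, $(\bapp{q}{\bm{1}})\,a'$ with $Fa$, $F'a'$. This produces a path from $(\bapp{q}{\bm{x}})\,a$ to the $\extent$ term, natural in $a$ and $\bm{x}$; since its restrictions to $\bm{x} = \bm{0}$ and $\bm{x} = \bm{1}$ are the reflexivity paths at $Fa$ and $F'a'$, wrapping it as $\dlam{y}{\blam{\bm{x}}{\lam{a}{-}}}$ yields a genuine path in $\Bridge{\bm{x}.\picl{a}{A}{B}}{F}{F'}$ from $q$ (rewritten via \cref{rule:bridge-eta} and function $\eta$) to the reverse-after-forward composite. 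Assembling the two maps and these two homotopies into a $\QEquiv{\Bridge{\bm{x}.\picl{a}{A}{B}}{F}{F'}}{-}$ and applying \cref{rec:qequiv-to-equiv} finishes the proof.

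I expect the real work to be bookkeeping rather than mathematics: tracking, throughout the reverse map and the round-trip computations, precisely which bridge dimensions are in scope for each subterm --- in particular that $F$, $F'$, $g$ are formed apart from $\bm{x}$, which is exactly the constraint that the $\apartcx{\Phi}{\bm{r}}$ contexts in \cref{rule:extent,rule:extent-betaF} encode --- and the associated variable-capture care in the last argument of $\extent$, where the $\bm{x}$ bound by the inner $\blam{\bm{x}}{N}$ must be identified with the $\bm{x}$ appearing in $\bapp{c}{\bm{x}}$ when $\extent$ unfolds. A second point needing attention is the boundary behaviour at $\bm{x} = \bm{0}$ and $\bm{x} = \bm{1}$ of the path produced in the reverse-after-forward direction, so that it actually lies in the bridge type with the intended endpoints; this is precisely why one uses the explicit weak-$\eta$ path from the remark (whose endpoint branches are constant in the path variable) rather than an arbitrary path between the two expressions.
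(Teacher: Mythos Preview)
Your proposal is correct and follows essentially the same approach as the paper: the same forward and backward maps, and the same reductions via \cref{rule:extent-betaF}, \cref{rule:bridge-betapartial}, and \cref{rule:bridge-eta} for the forward-after-reverse round-trip. The only difference is that in the reverse-after-forward direction the paper invokes the strict rule \rulename{$\extent*$-$\eta$} to obtain an exact equality, whereas you use the weak path-level version from the remark; both suffice for a quasi-equivalence, and the paper itself notes that the weak rule is enough here.
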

\begin{proof}
  \newcommand{\bfunext}{\mathsf{bfunext}}
  \newcommand{\bfunapp}{\mathsf{bfunapp}}

  By \cref{rec:qequiv-to-equiv}. We define forward and backward functions as follows.
  \begin{align*}
    \bfunapp &\eqdef \lam{q}{\lam{a}{\lam{a'}{\lam{c}{\blam{\bm{x}}{(\bapp{q}{\bm{x}})(\bapp{c}{\bm{x}})}}}}} \\
    \bfunext^{F,F'} &\eqdef \lam{h}{\blam{\bm{x}}{\lam{d}{\extent{\bm{x}}{d}{a.Fa}{a'.F'a'}{a.a'.c.haa'c}}}}
  \end{align*}
  We show that $\bfunapp$ and $\bfunext^{F,F'}$ are mutually inverse, then apply
  \cref{rec:qequiv-to-equiv}. In a context with $q : \Bridge{\bm{x}.\picl{a}{A}{B}}{F}{F'}$, we have
  \begin{align*}
    \bfunext^{F,F'}(\bfunapp(q))
    &\eq \blam{\bm{x}}{\lam{d}{\extent{\bm{x}}{d}{a.Fa}{a'.F'a'}{a.a'.c.\blam{\bm{y}}{(\bapp{q}{\bm{y}})(\bapp{c}{\bm{x}})}}}} \\
    &\eq \blam{\bm{x}}{\lam{d}{\extent{\bm{x}}{d}{a.(\bapp{q}{\bm{0}})a}{a'.(\bapp{q}{\bm{1}})a'}{a.a'.c.\blam{\bm{y}}{(\bapp{q}{\bm{y}})(\bapp{c}{\bm{y}})}}}} \\
    &\eq \blam{\bm{x}}{\lam{d}{(\bapp{q}{\bm{x}})d}} \\
    &\eq q
  \end{align*}
  in $\Bridge{\bm{x}.\picl{a}{A}{B}}{F}{F'}$, where the first equation is $\beta$ for bridges and functions,
  the second is \cref{rule:bridge-betapartial}, the third is \cref{rule:extent-eta}, and the fourth is $\eta$
  for bridges and functions. For the other inverse, in a context with
  \[
    h :
    \picl{a}{\bsubst{A}{\bm{0}}{\bm{x}}}{\picl{a'}{\bsubst{A}{\bm{1}}{\bm{x}}}{\picl{c}{\Bridge{\bm{x}.A}{a}{a'}}{\Bridge{\bm{x}.\subst{B}{\bapp{c}{\bm{x}}}{a}}{Fa}{F'a'}}}},
  \]
  we have
  \begin{align*}
    \bfunapp(\bfunext^{F,F'}(h))
    &\eq \lam{a}{\lam{a'}{\lam{c}{\blam{\bm{x}}{\extent{\bm{x}}{\bapp{c}{\bm{x}}}{a.Fa}{a'.F'a'}{a.a'.k.haa'k}}}}} \\
    &\eq \lam{a}{\lam{a'}{\lam{c}{\bapp{h(\bapp{c}{\bm{0}})(\bapp{c}{\bm{1}})(\blam{\bm{y}}{\bapp{c}{\bm{y}}})}{\bm{x}}}}} \\
    &\eq \lam{a}{\lam{a'}{\lam{c}{\bapp{haa'(\blam{\bm{y}}{\bapp{c}{\bm{y}}})}{\bm{x}}}}} \\
    &\eq h
  \end{align*}
  in the same type, where the first step is $\beta$ for bridges and functions, the second is
  \cref{rule:extent-betaF}, the third is \cref{rule:bridge-betapartial}, and the fourth is $\eta$ for bridges
  and functions.
\end{proof}

One way to conceptualize the difference between structural and substructural dimensions is in terms of the
different ``function extensionality'' principles they provide. For substructural dimensions, we have the
theorem just proven. If bridges were structural, on the other hand, we would instead be able to prove the
following incomparable principle.
\[
  \Bridge{\picl{a}{A}{B}}{F}{F'} \overset{\times}{\simeq} \picl{a}{A}{\Bridge{B}{Fa}{F'a}}
\]
We would define this equivalence by taking $q$ in the former type to
$\lam{a}{\blam{\bm{x}}{(\bapp{q}{\bm{x}})a}}$ in the latter and $h$ in the latter to
$\blam{\bm{x}}{\lam{a}{\bapp{ha}{\bm{x}}}}$ in the former. While the first map is still well-defined
substructurally, the second is not: $\bm{x}$ is not fresh for $a$, so $ha$ cannot be applied at
$\bm{x}$. Conversely, without the $\extent$ operator which substructurality enables, we would not be able to
prove \cref{thm:function-bridge}. Note that on the path side, \emph{both} principles are provable; the
equivalent of \cref{thm:function-bridge} follows from the structural principle using Kan operations not
available on the bridge side. Likewise, the substructural cubical type theory of \cite{bch} enjoys the same
functional extensionality principle as the structural cubical type theories. Without Kan operations, however,
the principles are distinct, and it is the substructural version which matches the standard definition of a
logical relation at a function type.

Finally, we observe that the function extensionality principle induces a corresponding ``equivalence
extensionality'' principle, which we will use in the proof of relativity.

\begin{corollary}
  \label{cor:equiv-bridge}
  Let $\cwftypek[\cx!\apartcx{\Phi}{\bm{r}},\bm{x}!]{A,B}$. Suppose we have
  \begin{enumerate}
  \item $\coftype[\cx!\apartcx{\Phi}{\bm{r}}!]{E_{\bm{0}}}{\bsubst{A}{\bm{0}}{\bm{x}} \simeq \bsubst{B}{\bm{0}}{\bm{x}}}$,
  \item $\coftype[\cx!\apartcx{\Phi}{\bm{r}}!]{E_{\bm{1}}}{\bsubst{A}{\bm{1}}{\bm{x}} \simeq \bsubst{B}{\bm{1}}{\bm{x}}}$,
  \item $\oftype[\cx!\apartcx{\Phi}{\bm{r}}!]{\oft{a_{\bm{0}}}{\bsubst{A}{\bm{0}}{\bm{x}}},\oft{a_{\bm{1}}}{\bsubst{A}{\bm{1}}{\bm{x}}}}{E}{\Bridge{\bm{x}.A}{a_{\bm{0}}}{a_{\bm{1}}} \simeq \Bridge{\bm{x}.B}{\fwd{E_{\bm{0}}}(a_{\bm{0}})}{\fwd{E_{\bm{1}}}(a_{\bm{1}})}}$.
  \end{enumerate}
  Then there is a term
  \begin{align*}
    \coftype{\extentequiv{\bm{r}}{E_{\bm{0}}}{E_{\bm{1}}}{E}}{\bsubst{A}{\bm{r}}{\bm{x}} \simeq \bsubst{B}{\bm{r}}{\bm{x}}}
  \end{align*}
  that satisfies
  $\ceqtm[\cx<\bm{r}=\bm{\Ge}>]{\extentequiv{\bm{r}}{E_{\bm{0}}}{E_{\bm{1}}}{E}}{E_{\bm{\Ge}}}{\bsubst{A}{\bm{\Ge}}{\bm{x}}\simeq\bsubst{B}{\bm{\Ge}}{\bm{x}}}$
  for $\bm{\Ge} \in \{\bm{0},\bm{1}\}$.
\end{corollary}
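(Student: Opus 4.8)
The plan is to obtain $\extentequiv{\bm{r}}{E_{\bm{0}}}{E_{\bm{1}}}{E}$ as $\bapp{H}{\bm{r}}$ for a bridge $H$ of type $\Bridge{\bm{x}.A \simeq B}{E_{\bm{0}}}{E_{\bm{1}}}$ in the context $\cx!\apartcx{\Phi}{\bm{r}}!$ shared by the hypotheses. Substitution commutes with the type formers, so $\bsubst{(A\simeq B)}{\bm{r}}{\bm{x}}$ is $\bsubst{A}{\bm{r}}{\bm{x}} \simeq \bsubst{B}{\bm{r}}{\bm{x}}$; hence $\bapp{H}{\bm{r}}$ has the advertised type by \cref{rule:bridge-E}, and the stated boundary equation is an instance of $\Bridge$-$\beta_{\bm{\Ge}}$ (\cref{rule:bridge-betapartial}), since the endpoints of $H$ are $E_{\bm{0}}$ and $E_{\bm{1}}$. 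So the task is to construct $H$. Writing $A \simeq B$ as $\sigmacl{f}{\arr{A}{B}}{\isEquiv{A}{B}{f}}$ and applying the equivalence of \cref{thm:sigma-bridge}, this reduces to constructing (i) a bridge $G$ of type $\Bridge{\bm{x}.\arr{A}{B}}{\fwd{E_{\bm{0}}}}{\fwd{E_{\bm{1}}}}$ and (ii) a bridge $G'$ of type $\Bridge{\bm{x}.\isEquiv{A}{B}{\bapp{G}{\bm{x}}}}{\snd{E_{\bm{0}}}}{\snd{E_{\bm{1}}}}$; the endpoints of (ii) are well-typed because $\bapp{G}{\bm{\Ge}} \eq \fwd{E_{\bm{\Ge}}}$ will hold by construction of $G$.

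For (i): the family $\lam{a_{\bm{0}}}{\lam{a_{\bm{1}}}{\fwd{E(a_{\bm{0}})(a_{\bm{1}})}}}$ of pointwise forward maps of $E$ has exactly the type of the right-hand side of the equivalence in \cref{thm:function-bridge}, taken with the non-dependent codomain $B$ and endpoint maps $\fwd{E_{\bm{0}}}$, $\fwd{E_{\bm{1}}}$; I take $G$ to be its image under the backward map ($\mathsf{bfunext}$) of that equivalence. The identity $\bapp{G}{\bm{\Ge}} \eq \fwd{E_{\bm{\Ge}}}$ then follows from \cref{rule:extent-beta0,rule:extent-beta1} and function $\eta$, and unfolding the round-trip through \cref{thm:function-bridge} shows that $G$ acts pointwise: for any bridge $c$ over $\bm{x}.A$ the $\bm{x}$-component of the bridge $\fwd{E}$ assigns to $c$ is $(\bapp{G}{\bm{x}})(\bapp{c}{\bm{x}})$.

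The real content is (ii). Because $\isEquiv{A}{B}{\bapp{G}{\bm{x}}}$ is a proposition in every aspect (\cref{rec:isequiv-is-prop}), I would reduce it in two steps. First, a bridge over a family of propositions whose two endpoints are merely path-connected to prescribed elements always exists: start from the degenerate bridge $\blam{\bm{x}}{w}$ on any inhabitant $w$ of the generic type $\isEquiv{A}{B}{\bapp{G}{\bm{x}}}$ and correct each of its endpoints by composing with the connecting path, using that $\hcom$ composes bridges with paths. So (ii) reduces to showing that the generic map $\bapp{G}{\bm{x}}$, read in the context $\cx!\apartcx{\Phi}{\bm{r}},\bm{x}!$, is an equivalence. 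For this I would transport the quasi-equivalence structure of $E$ (extracted via \cref{rec:qequiv-to-equiv}): feeding $\bwd{E}$ through \cref{thm:function-bridge} after correcting the pinned endpoints $\fwd{E_{\bm{0}}}(a_{\bm{0}})$, $\fwd{E_{\bm{1}}}(a_{\bm{1}})$ of its source bridge type along $\fwdbwd{E_{\bm{0}}}$, $\fwdbwd{E_{\bm{1}}}$ gives a bridge over $\bm{x}.\arr{B}{A}$ from $\bwd{E_{\bm{0}}}$ to $\bwd{E_{\bm{1}}}$ whose generic component is a candidate two-sided inverse to $\bapp{G}{\bm{x}}$; the two homotopies are obtained analogously from $\fwdbwd{E}$, $\bwdfwd{E}$ combined with $\fwdbwd{E_{\bm{\Ge}}}$, $\bwdfwd{E_{\bm{\Ge}}}$. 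Reading this quasi-equivalence at the generic point and converting back with \cref{rec:qequiv-to-equiv} yields the required inhabitant of $\isEquiv{A}{B}{\bapp{G}{\bm{x}}}$, and thence $G'$ and $H$.

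I expect the main obstacle to be exactly this transport in (ii): the domain bridge types of $\bwd{E}$, $\fwdbwd{E}$ and $\bwdfwd{E}$ are constrained to endpoints of the form $\fwd{E_{\bm{\Ge}}}(-)$ rather than arbitrary ones, so before any of them can be passed through \cref{thm:function-bridge} it must first be massaged by composing bridges with the endpoint coherences $\fwdbwd{E_{\bm{\Ge}}}$, $\bwdfwd{E_{\bm{\Ge}}}$ of $E_{\bm{0}}$ and $E_{\bm{1}}$ using the Kan operations, and keeping all of these adjustments mutually coherent is the bulk of the verification. An alternative that avoids anticipating \cref{sec:relativity} is to re-express, via \cref{thm:sigma-bridge,thm:path-bridge,thm:function-bridge}, the bridge types over $\Fiber{A}{B}{\bapp{G}{\bm{x}}}{-}$ as fibers of $\fwd{E}$, $\fwd{E_{\bm{0}}}$ and $\fwd{E_{\bm{1}}}$ and read off contractibility, but this only relocates the bookkeeping.
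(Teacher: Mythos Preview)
Your approach is correct and matches the paper's: the paper also builds the forward map with $\extent$ (your $\bapp{G}{\bm{r}}$ is its $\mathsf{in}_{\bm{r}}$ after unfolding $\mathsf{bfunext}$), derives a reverse map by adjusting $\bwd{E}$ along the endpoint coherences of $E_{\bm{0}},E_{\bm{1}}$, proves the pair is a quasi-equivalence by casing on $\bm{r}$ via $\extent$, and then corrects the $\isEquiv$ component using \cref{rec:isequiv-is-prop} together with an $\hcom$ having tube faces at $\bm{r}=\bm{0}$ and $\bm{r}=\bm{1}$. The only difference is organizational: the paper works directly at the dimension $\bm{r}$ throughout rather than first packaging the data as a bridge $H \in \Bridge{\bm{x}.A \simeq B}{E_{\bm{0}}}{E_{\bm{1}}}$ and applying it at $\bm{r}$, so its boundary fix is a direct $\hcom$ in $\bsubst{A}{\bm{r}}{\bm{x}} \simeq \bsubst{B}{\bm{r}}{\bm{x}}$ rather than going through the Kan structure of the $\Bridge$-type.
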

\begin{proof}
  \newcommand{\encode}{\mathsf{in}_{\bm{r}}}
  \newcommand{\decode}{\mathsf{out}_{\bm{r}}}
  
  The proof is lengthy but straightforward; we will give a sketch. We first construct a quasi-equivalence
  between $\bsubst{A}{\bm{r}}{\bm{x}}$ and $\bsubst{B}{\bm{r}}{\bm{x}}$. We define a forward map $\mathsf{in}$
  by
  \begin{align*}
    \encode &\eqdef \lam{a}{\extent{\bm{r}}{a}{a_{\bm{0}}.\fwd{E_{\bm{0}}}(a_{\bm{0}})}{a_{\bm{1}}.\fwd{E_{\bm{1}}}(a_{\bm{1}})}{a_{\bm{0}}.a_{\bm{1}}.p.\fwd{E}(p)}}.
  \end{align*}
  To define the reverse map, we first derive a term
  \begin{gather*}
    \oftype[\cx!\apartcx{\Phi}{\bm{r}}!]{\oft{b_{\bm{0}}}{\bsubst{B}{\bm{0}}{\bm{x}}},\oft{b_{\bm{1}}}{\bsubst{B}{\bm{1}}{\bm{x}}}}{F}{\Bridge{\bm{x}.A}{\bwd{E_{\bm{0}}}(b_{\bm{0}})}{\bwd{E_{\bm{1}}}(b_{\bm{1}})} \simeq \Bridge{\bm{x}.B}{b_{\bm{0}}}{b_{\bm{1}}}}
  \end{gather*}
  from $E$ using the fact that $E_{\bm{0}}$ and $E_{\bm{1}}$ are equivalences. We then set
  \begin{align*}
    \decode &\eqdef \lam{b}{\extent{\bm{r}}{b}{b_{\bm{0}}.\bwd{E_{\bm{0}}}(b_{\bm{0}})}{b_{\bm{1}}.\bwd{E_{\bm{1}}}(b_{\bm{1}})}{b_{\bm{0}}.b_{\bm{1}}.q.\bwd{F}(q)}}.
  \end{align*}
  Proofs that $\encode$ and $\decode$ are mutually inverse can again be constructed by using $\extent$ to case
  on $\bm{r}$. By applying \cref{rec:qequiv-to-equiv}, this shows that $\encode$ is an equivalence.  However,
  although we have ensured that
  $\ceqtm[\cx<\bm{r}=\bm{\Ge}>]{\encode}{\fst{E_{\bm{\Ge}}}}{\bsubst{A}{\bm{\Ge}}{\bm{x}}\to\bsubst{B}{\bm{\Ge}}{\bm{x}}}$
  by construction, we do not know that the proof that $\encode$ is an equivalence has the correct boundary. To
  fix this, we use \cref{rec:isequiv-is-prop}, which implies that the boundary of our equivalence term is
  connected by a pair of paths to the desired boundary. We can then modify it using an $\hcom$ in
  $\bsubst{A}{\bm{r}}{\bm{x}} \simeq \bsubst{B}{\bm{r}}{\bm{x}}$ (with tube faces at $\bm{r} = \bm{0}$ and
  $\bm{r} = \bm{1}$) to construct an equivalence which has the correct boundary up to exact equality.
\end{proof}

%%% Local Variables:
%%% mode: latex
%%% TeX-master: "main"
%%% End:

\section[Gel-types]{$\Gel$-types}
\label{sec:gel}

The final constructor we need to complete the type theory is $\Gel$, which takes a relation between two types
and produces a bridge between them. This gives the inverse to the operation
$C \mapsto \Bridge{\bm{x}.\bapp{C}{\bm{x}}}{-}{-}$ mentioned at the beginning of \cref{sec:compound}, making
it possible to prove relativity (\cref{sec:relativity}). The $\Gel$ operator is for the bridge side what
$\V$ is for the path side, but there are important differences which again derive from and motivate the
substructurality of bridge dimensions.

Recall from \cref{sec:cubical} that the type $\V{x}{A}{B}{E}$ takes a type $A$ at $x = 0$, a type line $B$ in
$x$, and an equivalence $E$ between $A$ and $\dsubst{B}{0}{x}$ at $x = 0$, and composes them to create a new
type line in $x$ from $A$ to $\dsubst{B}{1}{x}$. By taking $B$ to be a constant path, which corresponds to an
identity equivalence, we can use $\V$ to convert any equivalence into a path. However, it appears necessary to
take this indirect route via composition with a line: we cannot restrict the typing rule of $\V$ to only allow
types $B$ which are constant in $x$, as such an apartness criterion is incompatible with diagonal
substitutions. (This is far from a complete justification of the shape of $\V$, but we hope it gives the
reader a sense of the situation.) This is a problem if we want to translate $\V$-types to the bridge
side. Unlike paths and equivalences, the constant bridge $B$ does \emph{not} necessarily correspond to the
identity relation on $B$, which is to say $\Path{B}{-}{-}$; rather, it corresponds to the bridge relation
$\Bridge{B}{-}{-}$. This means we could only use a ``bridge $\V$'' to construct bridges to $B$ corresponding
to relations that factor through $\Bridge{B}$.

Instead of resembling $\V$-types, a $\Gel$-type thus takes the form $\Gel{\bm{x}}{A}{B}{a.b.R}$ where both
types $A,B$ as well as the relation $R$ on $A$ and $B$ must be apart from $\bm{x}$. Besides being the binary
analogue of \citeauthor{bernardy15}'s $A \ni_i a$ types, $\Gel$-types are essentially the $\mathsf{G}$-types
of \cite{bezem17} adapted to the r\textbf{el}ational case, hence the name. The definitions of the Kan
operations for $\Gel$ are, however, much simpler than for $\V$ or $\mathsf{G}$: the principal direction of a
$\coe$ or $\hcom$ is always a path dimension, so can never coincide with the direction $\bm{x}$ of the
$\Gel$-type.

The operational semantics of $\Gel$-types are shown in \cref{fig:gel-opsem}. We define PER the semantics of
$\Gel$ and prove typing rules in this section, the latter of which are collected in inference rule format in
\cref{sec:proof-theory:gel}. The values of type $\Gel{\bm{x}}{A}{B}{a.b.R}$ are triples
$\gel{\bm{x}}{M}{N}{P}$ where $M$ is in $A$, $N$ is in $B$, and $P$ is a proof that the two are related by
$R$. The eliminator $\ungel$ takes a bridge over $\bm{x}.\Gel{\bm{x}}{A}{B}{a.b.R}$ and produces a proof that
its left and right endpoints (in $A$ and $B$ respectively) are related by $R$. This is a second point of
departure from $\V$ or $\mathsf{G}$ types. For equivalence-to-path types, the eliminator is a projection
function that takes (in the case of $\V$, for example) an element of $\V{r}{A}{B}{E}$ and extracts an element
of $B$. This is not possible with $\Gel{\bm{r}}{A}{B}{a.b.R}$, as there is no way to produce such an element
of $B$ when $\bm{r}$ is $\bm{0}$. As such, $\ungel$ takes not an element but a bridge over the $\Gel$-type. In
order to make use of $\ungel$, the implementations of $\hcom$ and $\coe$ for $\Gel{\bm{x}}$ capture
occurrences of $\bm{x}$ in their arguments, just as $\extent{\bm{x}}$ does.

\begin{figure}
  \begin{mdframed}
    \begin{mathpar}
      \Infer
      { }
      {\Gel{\bm{0}}{A}{B}{a.b.R} \steps A}
      \and
      \Infer
      { }
      {\Gel{\bm{1}}{A}{B}{a.b.R} \steps B}
      \and
      \Infer
      { }
      {\isval{\Gel{\bm{x}}{A}{B}{a.b.R}}}
      \and
      \Infer
      { }
      {\gel{\bm{0}}{M}{N}{P} \steps M}
      \and
      \Infer
      { }
      {\gel{\bm{1}}{M}{N}{P} \steps N}
      \and
      \Infer
      { }
      {\isval{\gel{\bm{x}}{M}{N}{P}}}
      \and
      \Infer
      {M \steps M'}
      {\ungel{\bm{x}.M} \steps \ungel{\bm{x}.M'}}
      \and
      \Infer
      { }
      {\ungel{\bm{x}.\gel{\bm{x}}{M}{N}{P}} \steps \bsubst{P}{\bm{0}}{\bm{x}}\footnote{In well-typed code, the substitution $\bsubst{}{\bm{0}}{\bm{x}}$ is always a no-op. We include it here only to ensure that the evaluation system is context-preserving.}}
      \and
      \Infer
      {M^y \eqdef \hcom{A}{r}{y}{\bsubst{O}{\bm{0}}{\bm{x}}}{\sys{\bsubst{\xi_i}{\bm{0}}{\bm{x}}}{y.\bsubst{Q_i}{\bm{0}}{\bm{x}}}} \\
        N^y \eqdef \hcom{B}{r}{y}{\bsubst{O}{\bm{1}}{\bm{x}}}{\sys{\bsubst{\xi_i}{\bm{1}}{\bm{x}}}{y.\bsubst{Q_i}{\bm{1}}{\bm{x}}}} \\
        P \eqdef \com{y.\subst{\subst{R}{M^y}{a}}{N^y}{b}}{r}{s}{\ungel{\bm{x}.O}}{(\sys{\xi_i}{y.\ungel{\bm{x}.Q_i}})_{\bm{x} \not\in \xi_i}}}
      {\hcom{\Gel{\bm{x}}{A}{B}{a.b.R}}{r}{s}{O}{\sys{\xi_i}{y.Q_i}} \steps
        \gel{\bm{x}}{M^s}{N^s}{P}
      }
      \and
      \Infer
      {M^y \eqdef \coe{y.A}{r}{y}{\bsubst{O}{\bm{0}}{\bm{x}}} \\
        N^y \eqdef \coe{y.B}{r}{y}{\bsubst{O}{\bm{1}}{\bm{x}}} \\
        P \eqdef \coe{y.\subst{\subst{R}{M^y}{a}}{N^y}{b}}{r}{s}{\ungel{\bm{x}.O}}}
      {\coe{y.\Gel{\bm{x}}{A}{B}{a.b.R}}{r}{s}{O} \steps \gel{\bm{x}}{M^s}{N^s}{P}}
    \end{mathpar}
  \end{mdframed}
  \caption{Operational semantics of $\Gel$-types}
  \label{fig:gel-opsem}
\end{figure}

\subsection{Definition}

\begin{definition}
  \label{def:gel}
  Let $\tau$ be a candidate type system. Let $\bridgej{\bm{r}}$ and
  \begin{enumerate}
  \item $\relcts*{\tau}{\cwftypek[\cx!\apartcx{\Phi}{\bm{r}}!]{A}}$,
  \item $\relcts*{\tau}{\cwftypek[\cx!\apartcx{\Phi}{\bm{r}}!]{B}}$,
  \item $\relcts*{\tau}{\wftypek[\cx!\apartcx{\Phi}{\bm{r}}!]{\oft{a}{A},\oft{b}{B}}{R}}$
  \end{enumerate}
  be given. We define a value $\cx*$-PER $\GelR[\tau]{\bm{r}}{A}{B}{a.b.R}$ by saying that, for each $\psitd$,
  $\GelR[\tau]{\bm{r}}{A}{B}{a.b.R}_\psi(V,V')$ holds iff one of the following holds:
  \begin{itemize}
  \item $\td{\bm{r}}{\psi} = \bm{0}$ and $\vper[\tau]{A}_\psi(V,V')$,
  \item $\td{\bm{r}}{\psi} = \bm{1}$ and $\vper[\tau]{B}_\psi(V,V')$,
  \item $\td{\bm{r}}{\psi} = \bm{y}$ and
    $V = \gel{\bm{y}}{M}{N}{P}$, $V' = \gel{\bm{y}}{M'}{N'}{P'}$ with
    \begin{enumerate}
    \item $\relcts*{\tau}{\ceqtm[\cx[']!\apartcx{\Phi'}{\bm{y}}!]{M}{M'}{\td{A}{\psi}}}$,
    \item $\relcts*{\tau}{\ceqtm[\cx[']!\apartcx{\Phi'}{\bm{y}}!]{N}{N'}{\td{B}{\psi}}}$,
    \item $\relcts*{\tau}{\ceqtm[\cx[']!\apartcx{\Phi'}{\bm{y}}!]{P}{P'}{\subst{\subst{\td{R}{\psi}}{M}{a}}{N}{b}}}$.
    \end{enumerate}
  \end{itemize}
  We will drop the superscript $\tau$ when it is inferable.
\end{definition}

\begin{lemma}
  \label{lem:gel-I0}
  If $\td{\bm{r}}{\psi} = \bm{0}$ and
  $\coftype[\cx[']]{M}{\td{A}{\psi}}$, then
  $\Tm{\GelR{\bm{r}}{A}{B}{a.b.R}}_\psi(\gel{\td{\bm{r}}{\psi}}{M}{N}{P},M)$.
\end{lemma}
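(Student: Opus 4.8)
The plan is to invoke coherent expansion, \cref{lem:expansion}. Write $G \eqdef \GelR{\bm{r}}{A}{B}{a.b.R}$ and note that, since $\td{\bm{r}}{\psi} = \bm{0}$, the term under consideration is $\gel{\bm{0}}{M}{N}{P}$. As $G$ is a value $\cx*$-PER, so is the restriction $\td{G}{\psi}$, and since $\Tm{G}_\psi(-,-)$ unfolds to $\Tm{\td{G}{\psi}}(-,-)$ it suffices to prove $\Tm{\td{G}{\psi}}(\gel{\bm{0}}{M}{N}{P},M)$. Both $\gel{\bm{0}}{M}{N}{P}$ and $M$ are well-formed terms over $\cx*[']$ (for $M$ this is immediate from $\coftype[\cx[']]{M}{\td{A}{\psi}}$; for the triple it uses that $N$ and $P$ are well-formed there), so \cref{lem:expansion} reduces the goal to producing, for each $\psitd[']$, a term $M''$ with $\td{(\gel{\bm{0}}{M}{N}{P})}{\psi'} \msteps M''$ and $\Tm{\td{G}{\psi}}_{\psi'}(M'',\td{M}{\psi'})$.

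For this, $\bm{0}$ is a constant, so $\td{(\gel{\bm{0}}{M}{N}{P})}{\psi'}$ is $\gel{\bm{0}}{\td{M}{\psi'}}{\td{N}{\psi'}}{\td{P}{\psi'}}$, which steps to $\td{M}{\psi'}$ by the operational semantics of $\Gel$ in \cref{fig:gel-opsem}. Taking $M'' \eqdef \td{M}{\psi'}$, the remaining obligation is $\Tm{\td{G}{\psi}}_{\psi'}(\td{M}{\psi'},\td{M}{\psi'})$, that is, $\Tm{\td{G}{\psi\psi'}}(\td{M}{\psi'},\td{M}{\psi'})$.

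The key point is that $\td{\bm{r}}{\psi} = \bm{0}$ is preserved by every subsequent substitution (the image of a dimension constant is always that constant), so $\td{\bm{r}}{\psi\psi'\psi''} = \bm{0}$ for every $\psi''$, whence by the first clause of \cref{def:gel} we get $G_{\psi\psi'\psi''} = \vper{A}_{\psi\psi'\psi''}$. Therefore $\td{G}{\psi\psi'}$ coincides with $\td{\vper{A}}{\psi\psi'}$, the value PER of the type $\td{A}{\psi\psi'}$, and the obligation becomes precisely $\coftype[\cx['']]{\td{M}{\psi'}}{\td{A}{\psi\psi'}}$. This follows from the hypothesis $\coftype[\cx[']]{M}{\td{A}{\psi}}$ by stability of the closed judgments under dimension substitution, completing the argument.

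The proof is routine; there is no real obstacle, only two bookkeeping points to watch: that $\td{\bm{r}}{\psi} = \bm{0}$ stays true under all further substitutions, so \cref{def:gel} always lands in its $\bm{0}$-clause, and the well-formedness presupposition of \cref{lem:expansion}, which is why one needs $N$ and $P$ to be well-formed over $\cx*[']$.
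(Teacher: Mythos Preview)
Your proof is correct and follows the same approach as the paper's, which is simply a one-line appeal to \cref{lem:expansion} using the step $\td{\gel{\bm{0}}{M}{N}{P}}{\psi'} \steps \td{M}{\psi'}$. You have spelled out in full the details the paper leaves implicit (preservation of $\bm{0}$ under further substitution, identification of $\td{G}{\psi\psi'}$ with $\td{\vper{A}}{\psi\psi'}$, and the well-formedness presupposition on $N$ and $P$), but the argument is the same.
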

\begin{proof}
  By \cref{lem:expansion}, as
  $\td{\gel{\bm{0}}{M}{N}{P}}{\psi'} \steps \td{M}{\psi'}$ for all $\psi'$.
\end{proof}

\begin{lemma}
  \label{lem:gel-I1}
  If $\td{\bm{r}}{\psi} = \bm{1}$ and
  $\coftype[\cx[']]{N}{\td{B}{\psi}}$, then
  $\Tm{\GelR{\bm{r}}{A}{B}{a.b.R}}_\psi(\gel{\td{\bm{r}}{\psi}}{M}{N}{P},N)$.
\end{lemma}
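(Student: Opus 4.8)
The proof will be the exact mirror of that of \cref{lem:gel-I0}, with the left endpoint replaced by the right one. Since $\td{\bm{r}}{\psi} = \bm{1}$ by hypothesis, the term $\gel{\td{\bm{r}}{\psi}}{M}{N}{P}$ is $\gel{\bm{1}}{M}{N}{P}$, and by \cref{fig:gel-opsem} we have $\gel{\bm{1}}{M}{N}{P} \steps N$. This reduction is stable under dimension substitution: constants are preserved by substitution, so for every $\psitd[']$ we have $\td{(\gel{\bm{1}}{M}{N}{P})}{\psi'} = \gel{\bm{1}}{\td{M}{\psi'}}{\td{N}{\psi'}}{\td{P}{\psi'}} \steps \td{N}{\psi'}$.

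First I would invoke \cref{lem:expansion} with the value $\cx*[']$-PER $\td{\GelR{\bm{r}}{A}{B}{a.b.R}}{\psi}$ and the terms $\gel{\bm{1}}{M}{N}{P}$ and $N$. This reduces the goal to exhibiting, for each $\psitd[']$, a term that $\td{(\gel{\bm{1}}{M}{N}{P})}{\psi'}$ multi-steps to and that is coherently related to $\td{N}{\psi'}$; by the observation above I take this term to be $\td{N}{\psi'}$ itself, so it remains only to verify $\Tm{\GelR{\bm{r}}{A}{B}{a.b.R}}_{\psi\psi'}(\td{N}{\psi'},\td{N}{\psi'})$. For this I would appeal to \cref{def:gel}: because $\td{\bm{r}}{\psi\psi'} = \bm{1}$, and likewise after any further substitution, the PER $\GelR{\bm{r}}{A}{B}{a.b.R}_{\psi\psi'}$ agrees at every aspect with $\vper{B}_{\psi\psi'}$, so the claim reduces to $\Tm{\vper{B}}_{\psi\psi'}(\td{N}{\psi'},\td{N}{\psi'})$. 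The latter follows from the hypothesis $\coftype[\cx[']]{N}{\td{B}{\psi}}$ together with value-coherence of $\vper{B}$ and stability of the typing judgments under dimension substitution.

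There is essentially no obstacle here; the argument is a verbatim adaptation of \cref{lem:gel-I0}. The only point needing the same mild care as in that lemma is that $B$ is formed apart from $\bm{r}$, so that $\td{B}{\psi}$ and its value-PER $\vper{B}$ are computed via the restriction of $\psi$ away from $\bm{r}$; as $\td{\bm{r}}{\psi} = \bm{1}$ this is unproblematic and is handled exactly as before.
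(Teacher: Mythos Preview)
Your proof is correct and follows the same approach as the paper: apply \cref{lem:expansion} using the reduction $\td{\gel{\bm{1}}{M}{N}{P}}{\psi'} \steps \td{N}{\psi'}$ for all $\psi'$, with well-typedness of the reduct supplied by the hypothesis on $N$ and the definition of $\GelR$ at $\bm{r} = \bm{1}$. The paper's own proof is a one-liner citing exactly this, so your more detailed unfolding is just an elaboration of the same argument.
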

\begin{proof}
  By \cref{lem:expansion}, as
  $\td{\gel{\bm{1}}{M}{N}{P}}{\psi'} \steps \td{N}{\psi'}$ for all $\psi'$.
\end{proof}

\begin{lemma}
  \label{lem:gel-coherent}
  $\GelR{\bm{r}}{A}{B}{a.b.R}$ is value-coherent.
\end{lemma}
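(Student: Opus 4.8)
The plan is to prove $\Coh{\GelR{\bm{r}}{A}{B}{a.b.R}}$, i.e. $\GelR{\bm{r}}{A}{B}{a.b.R} \subseteq \Tm{\GelR{\bm{r}}{A}{B}{a.b.R}}$, by following the three-way split in \cref{def:gel}. Abbreviate $G \eqdef \GelR{\bm{r}}{A}{B}{a.b.R}$; fix $\tds{\D'}{\psi}{\D}$ with $G_\psi(V,V')$ and aim to deduce $\Tm{G}_\psi(V,V')$, reasoning by cases on $\td{\bm{r}}{\psi}$. If $\td{\bm{r}}{\psi} = \bm{0}$, then every further substitution keeps $\bm{r}$ at $\bm{0}$, so by \cref{def:gel} the $\D'$-relation $\td{G}{\psi}$ coincides with $\vper{\td{A}{\psi}}$; hence $\Tm{G}_\psi$ coincides with $\Tm{\vper{\td{A}{\psi}}}$ by stability of $\Tm{}$, and since $\td{A}{\psi}$ is a pretype its PER is value-coherent, so the hypothesis $G_\psi(V,V')$ — which unfolds here to $\vper{\td{A}{\psi}}(V,V')$ — gives $\Tm{G}_\psi(V,V')$ as required. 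The case $\td{\bm{r}}{\psi} = \bm{1}$ is identical with $B$ in place of $A$.

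The substantive case is $\td{\bm{r}}{\psi} = \bm{y}$, where $V = \gel{\bm{y}}{M}{N}{P}$ and $V' = \gel{\bm{y}}{M'}{N'}{P'}$ with the three typing premises of \cref{def:gel}. Here I would apply \cref{lem:introduction} to the $\D'$-relation $\td{G}{\psi}$: it then suffices to show, for every $\tds{\D''}{\psi'}{\D'}$, that either $G_{\psi\psi'}(\td{V}{\psi'},\td{V'}{\psi'})$ or $\Tm{G}_{\psi\psi'}(\td{V}{\psi'},\td{V'}{\psi'})$, which I settle by cases on $\td{\bm{y}}{\psi'}$. If $\td{\bm{y}}{\psi'}$ is a variable, then $\td{V}{\psi'}$ and $\td{V'}{\psi'}$ are again $\gel$-values of the third form, and $G_{\psi\psi'}(\td{V}{\psi'},\td{V'}{\psi'})$ holds because each typing premise is stable under dimension substitution. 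If $\td{\bm{y}}{\psi'} = \bm{0}$, then $\td{V}{\psi'} = \gel{\bm{0}}{\td{M}{\psi'}}{\td{N}{\psi'}}{\td{P}{\psi'}}$ is no longer a value but steps to $\td{M}{\psi'}$; \cref{lem:gel-I0}, applied at $\psi\psi'$ with $\coftype[\D'']{\td{M}{\psi'}}{\td{A}{\psi\psi'}}$ drawn from premise (1), gives $\Tm{G}_{\psi\psi'}(\td{V}{\psi'},\td{M}{\psi'})$, and likewise $\Tm{G}_{\psi\psi'}(\td{V'}{\psi'},\td{M'}{\psi'})$; premise (1) also yields $\ceqtm[\D'']{\td{M}{\psi'}}{\td{M'}{\psi'}}{\td{A}{\psi\psi'}}$, which — since $\td{\bm{r}}{\psi\psi'} = \bm{0}$, so $\td{G}{\psi\psi'}$ again coincides with $\vper{\td{A}{\psi\psi'}}$ — is exactly $\Tm{G}_{\psi\psi'}(\td{M}{\psi'},\td{M'}{\psi'})$. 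Chaining these three facts, using that $\Tm{G}$ is a PER (as $G$ is), gives $\Tm{G}_{\psi\psi'}(\td{V}{\psi'},\td{V'}{\psi'})$; the case $\td{\bm{y}}{\psi'} = \bm{1}$ is symmetric, using \cref{lem:gel-I1} and premise (2). An appeal to \cref{lem:introduction} then yields $\Tm{G}_\psi(V,V')$, finishing this case.

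The one genuinely delicate point — and the reason \cref{lem:gel-I0,lem:gel-I1} are stated as they are — is that substituting a dimension constant for the bound bridge variable of a $\gel$-value destroys its canonicity, so membership in $G$ at such a substitution cannot be read off directly and must instead be recovered in $\Tm{G}$ by composing the $\bm{0}$- and $\bm{1}$-reductions (\cref{lem:gel-I0,lem:gel-I1}) with the relatedness of the left and right components in $\vper{A}$, resp. $\vper{B}$. Once one recognizes that this forces the appeal to \cref{lem:introduction} rather than the one-line stability argument used for $\Bridge$ in \cref{lem:bridge-coherent}, the remaining steps are routine; throughout, one relies on $\Tm{}$ being stable under substitution and inheriting the PER property from its argument, as recorded after the definition of $\Tm{}$.
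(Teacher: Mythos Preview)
Your proof is correct and follows essentially the same approach as the paper: case-split on $\td{\bm{r}}{\psi}$, handle $\bm{0}/\bm{1}$ via value-coherence of $\vper{A}$ and $\vper{B}$, and in the variable case invoke \cref{lem:introduction} with a further case-split on $\td{\bm{y}}{\psi'}$, using \cref{lem:gel-I0,lem:gel-I1} together with transitivity of $\Tm{G}$ for the endpoint sub-cases and stability of the typing premises for the variable sub-case. Your third paragraph correctly identifies why the argument here is more involved than in \cref{lem:bridge-coherent}.
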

\begin{proof}
  Let $\GelR{\bm{r}}{A}{B}{a.b.R}_\psi(V,V')$ be given. If $\td{\bm{r}}{\psi} = \bm{0}$, then
  $\td{\GelR{\bm{r}}{A}{B}{a.b.R}}{\psi} = \td{\vper{A}}{\psi}$, so by value-coherence of $A$ we have
  $\Tm{\GelR{\bm{r}}{A}{B}{a.b.R}}_\psi(V,V')$. Likewise, if $\td{\bm{r}}{\psi} = \bm{1}$, then the same
  follows from value-coherence of $\vper{B}$. Now suppose $\td{\bm{r}}{\psi} = \bm{y}$. Then we have
  $V = \gel{\bm{y}}{M}{N}{P}$ and $V' = \gel{\bm{y}}{M'}{N'}{P'}$ satisfying the conditions in
  \cref{def:gel}. We go by \cref{lem:introduction}. For any $\psitd[']$, we are in one of three cases.
  \begin{itemize}
  \item $\td{\bm{r}}{\psi\psi'} = \bm{0}$.

    Then $\Tm{\GelR{\bm{r}}{A}{B}{a.b.R}}_{\psi\psi'}(\td{V}{\psi'},\td{V'}{\psi'})$ by \cref{lem:gel-I0},
    $\ceqtm[\cx[']!\apartcx{\Phi'}{\bm{y}}!]{M}{M'}{\td{A}{\psi}}$, and transitivity of
    $\Tm{\GelR{\bm{r}}{A}{B}{a.b.R}}$.
  \item $\td{\bm{r}}{\psi\psi'} = \bm{1}$.

    Then we apply \cref{lem:gel-I1} analogously to the previous case.
  \item $\td{\bm{r}}{\psi\psi'} \not\in \{\bm{0},\bm{1}\}$.

    Then $\GelR{\bm{r}}{A}{B}{a.b.R}_{\psi\psi'}(\td{V}{\psi'},\td{V'}{\psi'})$ by definition of
    $\GelR$. \qedhere
  \end{itemize}
\end{proof}

\begin{proposition}
  There exists a type system $\tau$ which, for all $\relcts*{\tau}{\ceqtypek{A}{A'}}$,
  $\relcts*{\tau}{\ceqtypek{B}{B'}}$, $\relcts*{\tau}{\eqtypek{\oft{a}{A},\oft{b}{B}}{R}{R'}}$, and
  $\bm{x} \not\in \Phi$, has
  \[
    \tau(\cx*!\Phi,\bm{x}!,\Gel{\bm{x}}{A}{B}{a.b.R},\Gel{\bm{x}}{A'}{B'}{a.b.R'},\GelR{\bm{x}}{A}{B}{a.b.R}_\id)
  \]
  in addition to $\Bridge$-types and the standard constructs of cubical type theory. Moreover, there exists
  such a type system in which the universe $\UKan$ is also closed under $\Bridge$- and $\Gel$-types.
\end{proposition}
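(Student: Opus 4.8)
The plan is to construct $\tau$ as a fixed point of a monotone operator on candidate path-bridge type systems, following the recipe of \cite{chtt-iii} and exactly paralleling the construction behind the analogous proposition for $\Bridge$-types (see \cref{app:fixed-point}). Concretely, one defines an operator $\Phi$ that closes a given candidate system under a single application of each type former --- the standard cubical connectives, $\V$- and $\fcom$-types, $\Bridge$-types, and now $\Gel$-types --- and takes its least fixed point; the universe $\UKan$ is then accommodated by the usual stratification, building $\tau_{\alpha+1}$ from $\tau_\alpha$ with an added layer of codes and passing to a sufficiently large ordinal. The clause of $\Phi$ governing $\Gel$ declares
\[
  \tau\bigl(\D,\, \Gel{\bm{x}}{A}{B}{a.b.R},\, \Gel{\bm{x}}{A'}{B'}{a.b.R'},\, \GelR{\bm{x}}{A}{B}{a.b.R}_\id\bigr)
\]
whenever $\bm{x}$ is fresh and the input system already recognizes $A,A'$ and $B,B'$ as equal types and $R,R'$ as equal type families over $\oft{a}{A},\oft{b}{B}$, together with the attendant value-coherence. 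Since these premises are monotone in the input and $\GelR$ is computed from the interpretations they supply, $\Phi$ is monotone, so the fixed point exists; by construction the resulting $\tau$ recognizes $\Bridge$- and $\Gel$-types alongside the standard cubical constructs.

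It then remains to verify that $\tau$ is a path-bridge type system, checking the four defining conditions clause by clause. Every clause except the $\Gel$ one is handled exactly as in \cref{app:fixed-point}, so I would focus on $\Gel$. For uniqueness of the assigned relation, I would use that $\Gel{\bm{x}}{A}{B}{a.b.R}$ carries $\bm{x},A,B,R$ as immediate subterms and that, by the operational semantics of \cref{fig:gel-opsem} --- where $\Gel{\bm{x}}{A}{B}{a.b.R}$ steps to $A$ at $\bm{x}=\bm{0}$, to $B$ at $\bm{x}=\bm{1}$, and is a value when $\bm{x}$ is a variable --- no $\Gel$-value collides with a value of another type former; the interpretations of $A$, $B$, and $R$ are then pinned down by uniqueness for the strictly smaller types, so $\GelR$ is determined. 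That this relation is a PER, and that $\tau(\D,-,-,\phi)$ is a PER, follow because each branch of \cref{def:gel} is a PER (using that $\vper{A}$, $\vper{B}$, and $\Tm{-}$ of the interpretation of $R$ are PERs) and because the underlying type equalities are symmetric and transitive. That $\PTy{\tau}$ holds of the $\Gel$-type paired with $\GelR$ follows from \cref{lem:formation}, using \cref{lem:type-expansion} to treat the faces where $\bm{r}$ collapses to some $\bm{\Ge}$ and the $\Gel$-type reduces to $A$ or $B$: there $\GelR$ likewise restricts to $\vper{A}$ or $\vper{B}$ by \cref{def:gel}, matching the $\PTy$-data already established for $A$ and $B$, while everywhere else the $\Gel$-type is a value carrying exactly $\GelR$. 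Value-coherence of $\GelR$ is \cref{lem:gel-coherent} (which in turn leans on \cref{lem:gel-I0} and \cref{lem:gel-I1}).

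The step I expect to be the main obstacle is the familiar tension in these fixed-point arguments between the monotonicity of $\Phi$ --- which forces the $\Gel$-clause to quantify \emph{existentially} over candidate interpretations of $A$, $B$, $R$ --- and the need, once the fixed point has been reached, to pin down a \emph{unique} value relation for each $\Gel$-type. This is resolved, as it is for the $\V$- and $\Bridge$-clauses, by the stratified, well-founded shape of the construction, under which the type-system conditions already hold for the strictly smaller constituents at the moment a $\Gel$-type is formed. The only genuinely new bookkeeping relative to the $\Bridge$ case is the apartness side condition $\bm{x} \notin \Phi$ and the $\apartcx{\D}{\bm{r}}$-restrictions appearing in \cref{def:gel}; these disturb nothing, because the fixed-point construction commutes with dimension substitution, so stability of the $\Gel$-clause under a substitution $\psi$ reduces to the cases $\td{\bm{r}}{\psi} \in \{\bm{0},\bm{1}\}$ --- handled by the boundary reductions --- and $\td{\bm{r}}{\psi}$ a variable, handled directly by the definition. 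The full development, common to $\Bridge$- and $\Gel$-types, is carried out in \cref{app:fixed-point}.
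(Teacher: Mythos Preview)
Your proposal is correct and follows essentially the same approach as the paper: a least-fixed-point construction with a monotone operator whose clauses cover the cubical connectives together with $\Bridge$- and $\Gel$-types, followed by a second fixed point adding a universe code (the paper's $\tau_0$ and $\tau_1$ in \cref{app:fixed-point}). The paper defers the verification that the result is a path-bridge type system to \cite[Theorem 16]{chtt-iii}, whereas you sketch the clause-by-clause check for $\Gel$ explicitly; but the structure is the same.
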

\begin{proof}
  See \cref{app:fixed-point}.
\end{proof}

For the remainder of this section, we assume we are working in such a type system.

\subsection{Rules}

\begin{ruletheorem}[$\Gel$-F$_0$]
  \label{rule:gel-F0}
  If $\cwftypek{A}$ then $\ceqtypep{\Gel{\bm{0}}{A}{B}{a.b.R}}{A}$.
\end{ruletheorem}
\begin{proof}
  By \cref{lem:type-expansion}, as $\td{\Gel{\bm{0}}{A}{B}{a.b.R}}{\psi} \steps \td{A}{\psi}$ for all $\psi$.
\end{proof}

\begin{ruletheorem}[$\Gel$-F$_1$]
  \label{rule:gel-F1}
  If $\cwftypek{B}$, then $\ceqtypep{\Gel{\bm{1}}{A}{B}{a.b.R}}{B}$.
\end{ruletheorem}
\begin{proof}
  By \cref{lem:type-expansion}, as $\td{\Gel{\bm{1}}{A}{B}{a.b.R}}{\psi} \steps \td{B}{\psi}$ for all $\psi$.
\end{proof}

\begin{ruletheorem}[$\Gel$-F]
  \label{rule:gel-F}
  If
  \begin{enumerate}
  \item $\ceqtypek[\cx!\apartcx{\Phi}{\bm{r}}!]{A}{A'}$,
  \item $\ceqtypek[\cx!\apartcx{\Phi}{\bm{r}}!]{B}{B'}$,
  \item $\eqtypek[\cx!\apartcx{\Phi}{\bm{r}}!]{\oft{a}{A},\oft{b}{B}}{R}{R'}$,
  \end{enumerate}
  then $\ceqtypep{\Gel{\bm{r}}{A}{B}{a.b.R}}{\Gel{\bm{r}}{A'}{B'}{a.b.R'}}$.
\end{ruletheorem}
\begin{proof}
  By \cref{lem:formation}. Let $\psitd$ be given; we have three cases. If $\td{\bm{r}}{\psi} = \bm{0}$, then
  we have
  \[
    \PTy{\tau}(\cx*['],\td{\Gel{\bm{r}}{A}{B}{a.b.R}}{\psi},\td{\Gel{\bm{r}}{A'}{B'}{a.b.R'}}{\psi},\td{\GelR{\bm{r}}{A}{B}{a.b.R}}{\psi})
  \]
  by \cref{rule:gel-F0}, $\ceqtypek[\cx[']]{\td{A}{\psi}}{\td{A'}{\psi}}$, and transitivity of
  $\PTy{\tau}$. If $\td{\bm{r}}{\psi} = \bm{1}$, then the same follows by \cref{rule:gel-F1},
  $\ceqtypek[\cx[']]{\td{B}{\psi}}{\td{B'}{\psi}}$, and transitivity of $\PTy{\tau}$. Finally, if
  $\td{\bm{r}}{\psi} = \bm{y}$, then we have
  $\tau(\cx*['],\td{\Gel{\bm{r}}{A}{B}{a.b.R}}{\psi},\Gel{\bm{r}}{A'}{B'}{a.b.R'},\GelR{\bm{r}}{A}{B}{a.b.R}_\psi)$ by
  our assumption on $\tau$.
\end{proof}

The following three rules are simply restatements of \cref{lem:gel-I0,lem:gel-I1,lem:gel-coherent}.

\begin{ruletheorem}[$\Gel$-I$_0$]
  \label{rule:gel-I0}
  If $\coftype{M}{A}$, then $\ceqtm{\gel{\bm{0}}{M}{N}{P}}{M}{A}$.
\end{ruletheorem}

\begin{ruletheorem}[$\Gel$-I$_1$]
  \label{rule:gel-I1}
  If $\coftype{N}{B}$, then $\ceqtm{\gel{\bm{1}}{M}{N}{P}}{N}{B}$.
\end{ruletheorem}

\begin{ruletheorem}[$\Gel$-I]
  \label{rule:gel-I}
  If
  \begin{enumerate}
  \item $\ceqtm[\cx!\apartcx{\Phi}{\bm{r}}!]{M}{M'}{A}$,
  \item $\ceqtm[\cx!\apartcx{\Phi}{\bm{r}}!]{N}{N'}{B}$,
  \item $\ceqtm[\cx!\apartcx{\Phi}{\bm{r}}!]{P}{P'}{\subst{\subst{R}{M}{a}}{N}{b}}$,
  \end{enumerate}
  then $\ceqtm{\gel{\bm{r}}{M}{N}{P}}{\gel{\bm{r}}{M'}{N'}{P'}}{\Gel{\bm{r}}{A}{B}{a.b.R}}$.
\end{ruletheorem}

\begin{ruletheorem}[$\Gel$-$\beta$]
  \label{rule:gel-beta}
  If
  \begin{enumerate}
  \item $\coftype{M}{A}$,
  \item $\coftype{N}{B}$,
  \item $\coftype{P}{\subst{\subst{R}{M}{a}}{N}{b}}$,
  \end{enumerate}
  then $\ceqtm{\ungel{\bm{x}.\gel{\bm{x}}{M}{N}{P}}}{P}{\subst{\subst{R}{M}{a}}{N}{b}}$.
\end{ruletheorem}
\begin{proof}
  By \cref{lem:expansion}, as
  $\td{\ungel{\bm{x}.\gel{\bm{x}}{M}{N}{P}}}{\psi} \steps \td{\bsubst{P}{\bm{0}}{\bm{x}}}{\psi}$ for all
  $\psi$ and $\bm{x}$ does not occur in $P$.
\end{proof}

\begin{ruletheorem}[$\Gel$-E]
  \label{rule:gel-E}
  If $\cwftypek{A,B}$, $\wftypek{\oft{a}{A},\oft{b}{B}}{R}$, and
  \begin{enumerate}
  \item $\ceqtm[\cx!\Phi,\bm{x}!]{Q}{Q'}{\Gel{\bm{x}}{A}{B}{a.b.R}}$,
  \end{enumerate}
  then
  $\ceqtm{\ungel{\bm{x}.Q}}{\ungel{\bm{x}.Q'}}{\subst{\subst{R}{\bsubst{Q}{\bm{0}}{\bm{x}}}{a}}{\bsubst{Q}{\bm{1}}{\bm{x}}}{b}}$.
\end{ruletheorem}
\begin{proof}
  By \cref{lem:elimination} with
  $\evalcx{\cx*!\bm{x}!{\emp}}{\ungel{\bm{x}.\evhole},\ungel{\bm{x}.\evhole},\subst{\subst{R}{\bapp{(\blam{\bm{x}}{\evhole})}{\bm{0}}}{a}}{\bapp{(\blam{\bm{x}}{\evhole})}{\bm{1}}}{b}}{\cx*}$.  We need to check that
  the equations hold when $Q$ and $Q'$ are values in $\vper{\Gel{\bm{x}}{A}{B}{a.b.R}}$. In that case, each
  follows by reducing with \cref{rule:gel-beta} on either side and then applying the assumptions given by
  $\vper{\Gel{\bm{x}}{A}{B}{a.b.R}}(Q,Q')$.
\end{proof}

\begin{ruletheorem}[$\Gel$-$\eta$]
  \label{rule:gel-eta}
  If
  \begin{enumerate}
  \item $\coftype[\cx!\Phi,\bm{x}!]{Q}{\Gel{\bm{x}}{A}{B}{a.b.R}}$,
  \end{enumerate}
  then
  $\ceqtm{\bsubst{Q}{\bm{r}}{\bm{x}}}{\gel{\bm{r}}{\bsubst{Q}{\bm{0}}{\bm{x}}}{\bsubst{Q}{\bm{1}}{\bm{x}}}{\ungel{\bm{x}.Q}}}{\Gel{\bm{r}}{A}{B}{a.b.R}}$.
\end{ruletheorem}
\begin{proof}
  By \cref{lem:evaluation}, we have $Q \evals V$ with
  $\ceqtm[\cx!\Phi,\bm{x}!]{Q}{V}{\Gel{\bm{x}}{A}{B}{a.b.R}}$. By \cref{rule:gel-I,rule:gel-E}, it thus suffices
  to show
  \[
    \ceqtm{\bsubst{V}{\bm{r}}{\bm{x}}}{\gel{\bm{r}}{\bsubst{V}{\bm{0}}{\bm{x}}}{\bsubst{V}{\bm{1}}{\bm{x}}}{\ungel{\bm{x}.V}}}{\Gel{\bm{r}}{A}{B}{a.b.R}}.
  \]
  By definition of $\GelR$, we have $V = \gel{\bm{x}}{M}{N}{P}$ for some $\coftype{M}{A}$, $\coftype{N}{B}$,
  and $\coftype{P}{\subst{\subst{R}{M}{a}}{N}{b}}$.  We have
  \begin{enumerate}
  \item $\ceqtm{M}{\bsubst{V}{\bm{0}}{\bm{x}}}{A}$ by \cref{rule:gel-I0},
  \item $\ceqtm{N}{\bsubst{V}{\bm{1}}{\bm{x}}}{B}$ by \cref{rule:gel-I1},
  \item $\ceqtm{P}{\ungel{\bm{x}.V}}{\subst{\subst{R}{M}{a}}{N}{b}}$ by \cref{rule:gel-beta},
  \end{enumerate}
  so the equation follows by \cref{rule:gel-I}.
\end{proof}

\subsection{Kan conditions}

The definitions of the Kan operations for $\Gel{\bm{x}}{A}{B}{a.b.R}$ are actually quite simple: we extract
the endpoints and relation data from each argument, apply the corresponding Kan conditions at $A$,$B$, and $R$
respectively, and then recombine them with $\gel$. The endpoint data is extracted with the substitutions
$\bsubst{}{\bm{0}}{\bm{x}}$ and $\bsubst{}{\bm{1}}{\bm{x}}$, while the relation data is extracted with
$\ungel{\bm{x}.-}$. Note that in using the latter, we capture occurrences of $\bm{x}$ in the argument(s).

\begin{lemma}[$\coe{\Gel}$-$\beta_{\bm{\Ge}}$]
  \label{lem:coe-gel-betapartial}
  Let $\cwftypek[\cx!\apartcx{\Phi}{\bm{r}}!{\Psi,y}]{A,B}$ and
  $\wftypek[\cx!\apartcx{\Phi}{\bm{r}}!{\Psi,y}]{\oft{a}{A},\oft{b}{B}}{R}$. Suppose we have $\dimj{r,s}$ and
  $\coftype{O}{\dsubst{\Gel{\bm{r}}{A}{B}{a.b.R}}{r}{y}}$. Then
  \begin{enumerate}
  \item if $\bm{r} = \bm{0}$, then
    $\ceqtm{\coe{y.\Gel{\bm{r}}{A}{B}{a.b.R}}{r}{s}{O}}{\coe{y.A}{r}{s}{O}}{\dsubst{A}{s}{y}}$, and
  \item if $\bm{r} = \bm{1}$, then
    $\ceqtm{\coe{y.\Gel{\bm{r}}{A}{B}{a.b.R}}{r}{s}{O}}{\coe{y.B}{r}{s}{O}}{\dsubst{B}{s}{y}}$.
  \end{enumerate}
\end{lemma}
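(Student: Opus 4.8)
The plan is to prove both parts by a single application of coherent expansion (\cref{lem:expansion}), treating part~1 explicitly and obtaining part~2 by the mirror-image argument with $B$, $\bm{1}$, and \cref{rule:gel-F1} in place of $A$, $\bm{0}$, and \cref{rule:gel-F0}. The crucial observation is purely operational: since $\bm{r} = \bm{0}$ is a \emph{constant}, we have $\td{\bm{r}}{\psi} = \bm{0}$ for every $\psitd$, and \cref{fig:gel-opsem} tells us $\Gel{\bm{0}}{A}{B}{a.b.R} \steps A$ unconditionally. By the generic stepping rule for $\coe$ in \cref{fig:kan-opsem} this propagates, so that
\[
  \td{(\coe{y.\Gel{\bm{r}}{A}{B}{a.b.R}}{r}{s}{O})}{\psi} \steps \td{(\coe{y.A}{r}{s}{O})}{\psi}
\]
for every $\psi$.

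The one preparatory step I would carry out is to check that $\coftype{\coe{y.A}{r}{s}{O}}{\dsubst{A}{s}{y}}$ (and hence the same after any dimension substitution). For this, note first that with $\bm{r} = \bm{0}$ we have $\apartcx{\Phi}{\bm{r}} = \Phi$, so the hypothesis reads $\cwftypek[\cx{\Psi,y}]{A}$. Second, $\dsubst{\Gel{\bm{0}}{A}{B}{a.b.R}}{r}{y} = \Gel{\bm{0}}{\dsubst{A}{r}{y}}{\dsubst{B}{r}{y}}{a.b.\dsubst{R}{r}{y}}$, which is definitionally equal to $\dsubst{A}{r}{y}$ by \cref{rule:gel-F0}, so the hypothesis $\coftype{O}{\dsubst{\Gel{\bm{r}}{A}{B}{a.b.R}}{r}{y}}$ gives $\coftype{O}{\dsubst{A}{r}{y}}$. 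As $A$ is $\coe$-Kan (\cref{def:coe-kan}), this yields $\coftype{\coe{y.A}{r}{s}{O}}{\dsubst{A}{s}{y}}$, a judgment stable under dimension substitution. I would then invoke \cref{lem:expansion} with $\Ga \eqdef \vper{\dsubst{A}{s}{y}}$, using at each $\psi$ the witness $M'' \eqdef \td{(\coe{y.A}{r}{s}{O})}{\psi}$; the reduction displayed above and the typing just established supply exactly the two facts it requires, so it delivers $\ceqtm{\coe{y.\Gel{\bm{r}}{A}{B}{a.b.R}}{r}{s}{O}}{\coe{y.A}{r}{s}{O}}{\dsubst{A}{s}{y}}$.

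There is no real obstacle; the lemma is a degeneracy statement recording that $\Gel{\bm{0}}{A}{B}{a.b.R}$ both equals and evaluates to $A$, so that $\coe$ into the $\Gel$-type collapses to $\coe$ into its $\bm{0}$-endpoint. The only point I would be careful about is \emph{which} stepping rule for $\coe$ fires: the $\Gel$-specific $\coe$ rule of \cref{fig:gel-opsem} pattern-matches on $\Gel{\bm{x}}{\cdots}$ with $\bm{x}$ a \emph{variable}, so for the constant direction $\bm{0}$ determinism of the operational semantics forces the $\Gel$-type itself to reduce first and the generic rule of \cref{fig:kan-opsem} to take over — which is precisely what the argument relies on.
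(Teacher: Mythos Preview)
Your proposal is correct and follows exactly the paper's approach: the paper's proof is the one-liner ``By \cref{lem:expansion}, using the fact that $A$ and $B$ are $\coe$-Kan to type the reducts,'' and you have simply unpacked this in detail. Your remark about which $\coe$ rule fires (the generic type-evaluation rule from \cref{fig:kan-opsem}, since $\Gel{\bm{0}}{\cdots}$ is not a value) is a correct and useful clarification of a point the paper leaves implicit.
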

\begin{proof}
  By \cref{lem:expansion}, using the fact that $A$ and $B$ are $\coe$-Kan to type the reducts.
\end{proof}

\begin{lemma}[$\coe{\Gel}$-$\beta$]
  \label{lem:coe-gel-beta}
  Let $\cwftypek[\cx{\Psi,y}]{A,B}$ and $\wftypek[\cx{\Psi,y}]{\oft{a}{A},\oft{b}{B}}{R}$. Suppose we have
  $\dimj[\cx!\Phi,\bm{x}!]{r,s}$ and $\coftype[\cx!\Phi,\bm{x}!]{O}{\dsubst{\Gel{\bm{x}}{A}{B}{a.b.R}}{r}{y}}$.
  Then
  \[
    \ceqtm[\cx!\Phi,\bm{x}!]{\coe{y.\Gel{\bm{x}}{A}{B}{a.b.R}}{r}{s}{O}}{\gel{\bm{x}}{M^s}{N^s}{P}}{\dsubst{\Gel{\bm{x}}{A}{B}{a.b.R}}{s}{y}}
  \]
  where we define
  \begin{align*}
    M^y &\eqdef \coe{y.A}{r}{y}{\bsubst{O}{\bm{0}}{\bm{x}}} &
    N^y &\eqdef \coe{y.B}{r}{y}{\bsubst{O}{\bm{1}}{\bm{x}}} &
    P &\eqdef \coe{y.\subst{\subst{R}{M^y}{a}}{N^y}{b}}{r}{s}{\ungel{\bm{x}.O}}.
  \end{align*}
\end{lemma}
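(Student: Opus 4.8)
The plan is to prove the displayed equality directly from the operational semantics of \cref{fig:gel-opsem} by coherent expansion (\cref{lem:expansion}), splitting on how $\bm{x}$ is instantiated. Unfolding $\ceqtm{-}{-}{-}$, what we must show is $\Tm{\Ga}(M,M')$ for $M \eqdef \coe{y.\Gel{\bm{x}}{A}{B}{a.b.R}}{r}{s}{O}$, $M' \eqdef \gel{\bm{x}}{M^s}{N^s}{P}$, and $\Ga \eqdef \vper{\dsubst{\Gel{\bm{x}}{A}{B}{a.b.R}}{s}{y}}$; this $\Ga$ exists and is value-coherent since $\dsubst{\Gel{\bm{x}}{A}{B}{a.b.R}}{s}{y}$ is a pretype by \cref{rule:gel-F} together with stability of the type judgments under $\dsubst{\cdot}{s}{y}$. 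By \cref{lem:expansion} it then suffices to produce, for every dimension substitution $\psi$, a reduct of $\td{M}{\psi}$ that is $\Tm{\Ga}_\psi$-related to $\td{M'}{\psi}$.

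The one substantive preliminary is that $\gel{\bm{x}}{M^s}{N^s}{P}$ is a well-typed element of $\dsubst{\Gel{\bm{x}}{A}{B}{a.b.R}}{s}{y}$; I would prove this directly, after which it is stable under each $\psi$. By \cref{rule:gel-I} this reduces to typing the three components (all apart from $\bm{x}$). The endpoints $M^s$ and $N^s$ are typed by the $\coe$-Kan conditions for $A$ and $B$ (\cref{def:coe-kan}), observing via \cref{rule:gel-F0,rule:gel-F1} that $\bsubst{O}{\bm{0}}{\bm{x}}$ inhabits $\dsubst{A}{r}{y}$ and $\bsubst{O}{\bm{1}}{\bm{x}}$ inhabits $\dsubst{B}{r}{y}$. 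The relational component $P$ is typed by the $\coe$-Kan condition for the Kan line $y.\subst{\subst{R}{M^y}{a}}{N^y}{b}$ ($R$ composed with the generic coercions $M^y,N^y$): the required fact is that $\ungel{\bm{x}.O}$ inhabits its $\dsubst{\cdot}{r}{y}$-face, and \cref{rule:gel-E} places $\ungel{\bm{x}.O}$ in $\subst{\subst{\dsubst{R}{r}{y}}{\bsubst{O}{\bm{0}}{\bm{x}}}{a}}{\bsubst{O}{\bm{1}}{\bm{x}}}{b}$, which agrees with that face because the regularity clause of \cref{def:coe-kan} gives $\dsubst{M^y}{r}{y} \eq \bsubst{O}{\bm{0}}{\bm{x}}$ and $\dsubst{N^y}{r}{y} \eq \bsubst{O}{\bm{1}}{\bm{x}}$. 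I expect this boundary bookkeeping---tracking how $\dsubst{\cdot}{r}{y}$ and $\dsubst{\cdot}{s}{y}$ commute past $\bsubst{\cdot}{\bm{0}}{\bm{x}}$ and $\bsubst{\cdot}{\bm{1}}{\bm{x}}$, and invoking $\coe$-regularity to line up faces---to be the only real work in the proof.

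Granting the preliminary, I would finish by case analysis on $\td{\bm{x}}{\psi}$. If $\td{\bm{x}}{\psi}$ is a dimension variable, then $\td{\Gel{\bm{x}}{A}{B}{a.b.R}}{\psi}$ is already a value, so by \cref{fig:kan-opsem} no type-evaluation step precedes the $\Gel$-coercion clause of \cref{fig:gel-opsem}, and $\td{M}{\psi}$ steps in one step to $\td{M'}{\psi}$ (the substitution instances matching up because bridge substitutions are injective); since the components of $\td{M'}{\psi}$ are well-typed, \cref{def:gel} gives $\Ga_\psi(\td{M'}{\psi},\td{M'}{\psi})$, hence $\Tm{\Ga}_\psi(\td{M'}{\psi},\td{M'}{\psi})$ by value-coherence. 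If $\td{\bm{x}}{\psi} = \bm{0}$, then $\td{\Gel{\bm{x}}{A}{B}{a.b.R}}{\psi}$ steps to $\td{A}{\psi}$, so $\td{M}{\psi}$ steps to $\coe{y.\td{A}{\psi}}{\td{r}{\psi}}{\td{s}{\psi}}{\td{O}{\psi}}$ (in effect \cref{lem:coe-gel-betapartial}); this reduct is literally $\td{M^s}{\psi}$ (the substitution $\bsubst{\cdot}{\bm{0}}{\bm{x}}$ inside $M^y$ being inert once $\bm{x}$ is sent to $\bm{0}$), and \cref{rule:gel-I0} shows $\td{M'}{\psi} = \gel{\bm{0}}{\td{M^s}{\psi}}{\td{N^s}{\psi}}{\td{P}{\psi}}$ is judgmentally equal to $\td{M^s}{\psi}$ in $\dsubst{\td{A}{\psi}}{\td{s}{\psi}}{y}$; since $\Ga_\psi = \vper{\dsubst{\td{A}{\psi}}{\td{s}{\psi}}{y}}$ by \cref{rule:gel-F0}, this is exactly the desired $\Tm{\Ga}_\psi$-relation. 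The case $\td{\bm{x}}{\psi} = \bm{1}$ is symmetric, via \cref{rule:gel-F1,rule:gel-I1} and $B$. This covers all aspects, so \cref{lem:expansion} yields the claim.
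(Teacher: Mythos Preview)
Your proposal is correct and follows essentially the same approach as the paper's proof: both first establish that the reduct $\gel{\bm{x}}{M^s}{N^s}{P}$ is well-typed (via the $\coe$-Kan conditions for $A$, $B$, and $R$ together with \cref{rule:gel-E}), then invoke coherent expansion (\cref{lem:expansion}) and split into the three cases $\td{\bm{x}}{\psi} \in \{\bm{0},\bm{1}\}$ versus $\td{\bm{x}}{\psi}$ a variable, handling the endpoint cases with \cref{rule:gel-I0,rule:gel-I1} and the variable case by the direct operational step. Your write-up is in fact more explicit than the paper's about the boundary bookkeeping for $P$ and about why no type-evaluation step intervenes in the variable case.
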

\begin{proof}
  Observe that the reduct is well-typed by the $\coe$-Kan conditions for $A$ and $B$, \cref{rule:gel-E}, and
  the $\coe$-Kan condition for $R$. We go by \cref{lem:expansion}. Let
  $\tds{\cx*[']}{\psi}{\cx*!\Phi,\bm{x}!}$ be given; we have three cases.
  \begin{itemize}
  \item $\td{\bm{x}}{\psi} = \bm{0}$.

    Then $\td{\coe{y.\Gel{\bm{x}}{A}{B}{a.b.R}}{r}{s}{O}}{\psi} \steps \td{\coe{y.A}{r}{s}{O}}{\psi}$. We have
    $\coftype[\cx[']]{\td{M^s}{\psi}}{\dsubst{\td{A}{\psi}}{s}{y}}$ by the
    $\coe$-Kan condition for $A$ and
    $\ceqtm[\cx[']]{\td{M^s}{\psi}}{\td{\gel{\bm{x}}{M^s}{N^s}{P}}{\psi}}{\td{A}{\psi}}$ by
    \cref{rule:gel-I0}.

  \item $\td{\bm{x}}{\psi} = \bm{1}$.

    Analogous to the $\td{\bm{x}}{\psi} = \bm{0}$ case.

  \item $\td{\bm{x}}{\psi} \not\in \{\bm{0},\bm{1}\}$.

    Then
    $\td{\coe{y.\Gel{\bm{x}}{A}{B}{a.b.R}}{r}{s}{O}}{\psi} \steps \td{\gel{\bm{x}}{M^s}{N^s}{P^s}}{\psi}$, and
    we have already shown that the reduct is well-typed. \qedhere
  \end{itemize}
\end{proof}

\begin{theorem}
  Let $\ceqtypek[\cx!\apartcx{\Phi}{\bm{r}}!]{A}{A'}$, $\ceqtypek[\cx!\apartcx{\Phi}{\bm{r}}!]{B}{B'}$, and
  $\eqtypek[\cx!\apartcx{\Phi}{\bm{r}}!]{\oft{a}{A},\oft{b}{B}}{R}{R'}$ be given. Then
  $\ceqtypep{\Gel{\bm{r}}{A}{B}{a.b.R}}{\Gel{\bm{r}}{A'}{B'}{a.b.R'}}$ are equally $\coe$-Kan.
\end{theorem}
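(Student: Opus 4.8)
The plan is to unfold \cref{def:coe-kan} and verify its two clauses for the family $\GelR{\bm{r}}{A}{B}{a.b.R}$; that $\ceqtypep{\Gel{\bm{r}}{A}{B}{a.b.R}}{\Gel{\bm{r}}{A'}{B'}{a.b.R'}}$ holds is \cref{rule:gel-F}. So I fix $\tds{\cx*[']{\Psi',y}}{\psi}{\cx*}$, $\dimj[{\cx[']}]{r,s}$, and $\ceqtm[\cx[']]{O}{O'}{\dsubst{\td{\Gel{\bm{r}}{A}{B}{a.b.R}}{\psi}}{r}{y}}$. Since $\coe$ first evaluates its type argument, and $\Gel{\bm{0}}{A}{B}{a.b.R} \steps A$, $\Gel{\bm{1}}{A}{B}{a.b.R} \steps B$ while $\Gel{\bm{z}}{A}{B}{a.b.R}$ is a value for a variable $\bm{z}$, the argument splits by cases on $\td{\bm{r}}{\psi}$. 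In every case the essential point is that $A$, $B$, $R$ are apart from $\bm{r}$, so when $\td{\bm{r}}{\psi}$ is a variable $\bm{z}$ it occurs neither in $\td{A}{\psi}$, $\td{B}{\psi}$, $\td{R}{\psi}$ nor in the extracted data $\bsubst{O}{\bm{0}}{\bm{z}}$, $\bsubst{O}{\bm{1}}{\bm{z}}$, $\ungel{\bm{z}.O}$ that appears in the operational semantics of $\coe$ at a $\Gel$-type.

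For the constant cases: if $\td{\bm{r}}{\psi} = \bm{0}$, then by \cref{rule:gel-F0} the $\dsubst{}{r}{y}$ and $\dsubst{}{s}{y}$ faces of $\td{\Gel{\bm{r}}{A}{B}{a.b.R}}{\psi}$ (resp.\ of the primed type) are $\ceqtypep$-equal to the corresponding faces of $\td{A}{\psi}$ (resp.\ $\td{A'}{\psi}$), and \cref{lem:coe-gel-betapartial} gives $\coe{y.\td{\Gel{\bm{r}}{A}{B}{a.b.R}}{\psi}}{r}{s}{O} \eq \coe{y.\td{A}{\psi}}{r}{s}{O}$ and similarly on the primed side. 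Both clauses of \cref{def:coe-kan} then follow directly because $A$ and $A'$ are equally $\coe$-Kan, which is part of $\ceqtypek[\cx!\apartcx{\Phi}{\bm{r}}!]{A}{A'}$. The case $\td{\bm{r}}{\psi} = \bm{1}$ is symmetric, via \cref{rule:gel-F1} and $\ceqtypek[\cx!\apartcx{\Phi}{\bm{r}}!]{B}{B'}$.

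The substantive case is $\td{\bm{r}}{\psi} = \bm{z}$ a variable. Here \cref{lem:coe-gel-beta} rewrites $\coe{y.\td{\Gel{\bm{r}}{A}{B}{a.b.R}}{\psi}}{r}{s}{O} \eq \gel{\bm{z}}{M^s}{N^s}{P}$ and $\coe{y.\td{\Gel{\bm{r}}{A'}{B'}{a.b.R'}}{\psi}}{r}{s}{O'} \eq \gel{\bm{z}}{{M'}^{s}}{{N'}^{s}}{P'}$, with the $\gel$-components $M^y$, $N^y$, $P$ (and their primed versions) as given in that lemma --- namely the $\coe$s in $\td{A}{\psi}$, $\td{B}{\psi}$, and the line $y.\subst{\subst{\td{R}{\psi}}{M^y}{a}}{N^y}{b}$ applied to $\bsubst{O}{\bm{0}}{\bm{z}}$, $\bsubst{O}{\bm{1}}{\bm{z}}$, $\ungel{\bm{z}.O}$ respectively --- and the lemma already certifies that these reducts are well-typed. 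By \cref{rule:gel-I} it then suffices to equate the three components. Applying $\bsubst{}{\bm{0}}{\bm{z}}$ and $\bsubst{}{\bm{1}}{\bm{z}}$ to $O,O'$ and using \cref{rule:gel-F0,rule:gel-F1} gives $\ceqtm{\bsubst{O}{\bm{0}}{\bm{z}}}{\bsubst{O'}{\bm{0}}{\bm{z}}}{\dsubst{\td{A}{\psi}}{r}{y}}$ and $\ceqtm{\bsubst{O}{\bm{1}}{\bm{z}}}{\bsubst{O'}{\bm{1}}{\bm{z}}}{\dsubst{\td{B}{\psi}}{r}{y}}$, so the $\coe$-Kan conditions on $A,A'$ and $B,B'$ give $M^s \eq {M'}^{s}$, $N^s \eq {N'}^{s}$ and, taking the target dimension to be $y$ itself, the line equalities $M^y \eq {M'}^{y}$, $N^y \eq {N'}^{y}$. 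Substituting these equal lines for $a,b$ into the equal type families $\td{R}{\psi} \eq \td{R'}{\psi}$ (from $\eqtypek[\cx!\apartcx{\Phi}{\bm{r}}!]{\oft{a}{A},\oft{b}{B}}{R}{R'}$) shows $y.\subst{\subst{\td{R}{\psi}}{M^y}{a}}{N^y}{b}$ and its primed counterpart are equally $\coe$-Kan; since \cref{rule:gel-E} gives $\ungel{\bm{z}.O} \eq \ungel{\bm{z}.O'}$ in the $\dsubst{}{r}{y}$ face of that $R$-line, the $\coe$-Kan condition on the line yields $P \eq P'$, completing clause (1). For the reflexivity clause ($r = s$), $\coe$-reflexivity at $A$, $B$, and the $R$-line collapses $M^r \eq \bsubst{O}{\bm{0}}{\bm{z}}$, $N^r \eq \bsubst{O}{\bm{1}}{\bm{z}}$, $P \eq \ungel{\bm{z}.O}$, so $\gel{\bm{z}}{M^r}{N^r}{P} \eq O$ by \cref{rule:gel-eta}.

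The only real delicacy is the context bookkeeping in the last case: keeping straight which aspects and subterms may mention $\bm{z}$ (the types $\td{A}{\psi}$, $\td{B}{\psi}$, $\td{R}{\psi}$ and the components $M^y$, $N^y$, $P$ do not, since they arise after $\bsubst{}{\bm{0}}{\bm{z}}$, $\bsubst{}{\bm{1}}{\bm{z}}$, or $\ungel{\bm{z}.-}$), and phrasing the $R$-dependency as a genuine Kan family over $\oft{a}{\td{A}{\psi}},\oft{b}{\td{B}{\psi}}$ so that its $\coe$-Kan condition can be invoked with the coerced endpoint lines substituted for $a,b$. Mathematically there is no obstacle: once \cref{lem:coe-gel-beta,lem:coe-gel-betapartial} have absorbed the operational unfolding, the statement reduces to the $\coe$-Kan hypotheses on $A$, $B$, $R$ together with \cref{rule:gel-F0,rule:gel-F1,rule:gel-I,rule:gel-E,rule:gel-eta}.
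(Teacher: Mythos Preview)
Your proposal is correct and follows essentially the same approach as the paper: case-split on $\td{\bm{r}}{\psi}$, dispatch the constant cases via \cref{lem:coe-gel-betapartial} and the $\coe$-Kan hypotheses on $A,B$, and in the variable case use \cref{lem:coe-gel-beta} to expose the $\gel$ components, then verify the two clauses using the $\coe$-Kan hypotheses on $A$, $B$, $R$ together with \cref{rule:gel-I,rule:gel-E,rule:gel-eta}. Where the paper simply says ``a simple binary generalization of the well-typedness argument,'' you spell out the component-wise equalities explicitly, but the structure is identical.
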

\begin{proof}
  Let $\tds{\cx*[']{\Psi',y}}{\psi}{\cx*}$, $\dimj[{\cx[']}]{r,s}$, and
  $\ceqtm[\cx[']]{O}{O'}{\dsubst{\td{\Gel{\bm{r}}{A}{B}{a.b.R}}{\psi}}{r}{y}}$ be given. We need to show
  \begin{enumerate}
  \item
    $\ceqtm[\cx[']]{\coe{y.\td{\Gel{\bm{r}}{A}{B}{a.b.R}}{\psi}}{r}{s}{O}}{\coe{y.\td{\Gel{\bm{r}}{A'}{B'}{a.b.R'}}{\psi'}}{r}{s}{O'}}{\dsubst{\td{\Gel{\bm{r}}{A}{B}{a.b.R}}{\psi}}{s}{y}}$, and
  \item if $r=s$, then 
    $\ceqtm[\cx[']]{\coe{y.\td{\Gel{\bm{r}}{A}{B}{a.b.R}}{\psi}}{r}{s}{O}}{O}{\dsubst{\td{\Gel{\bm{r}}{A}{B}{a.b.R}}{\psi}}{s}{y}}$.
  \end{enumerate}
  We have three cases, depending on the status of $\td{\bm{r}}{\psi}$; we prove the two equations for each
  case in turn.
  \begin{itemize}
  \item $\td{\bm{r}}{\psi} = \bm{0}$.

    Then the equations follow from the assumption that $\ceqtypek[\cx[']]{\td{A}{\psi}}{\td{A'}{\psi}}$ by
    rewriting each $\coe$ term with \cref{lem:coe-gel-betapartial}.

  \item $\td{\bm{r}}{\psi} = \bm{1}$.

    Analogous to the $\td{\bm{r}}{\psi} = \bm{0}$ case.

  \item $\td{\bm{r}}{\psi} = \bm{x}$.

    We apply \cref{lem:coe-gel-beta}, which gives
    \[
      \ceqtm[\cx[']]{\coe{y.\td{\Gel{\bm{r}}{A}{B}{R}}{\psi}}{r}{s}{O}}{\gel{\bm{x}}{M^s}{N^s}{P}}{\dsubst{\td{\Gel{\bm{x}}{A}{B}{a.b.R}}{\psi}}{s}{y}}
    \]
    \[
      \ceqtm[\cx[']]{\coe{y.\td{\Gel{\bm{r}}{A'}{B'}{R'}}{\psi}}{r}{s}{O'}}{\gel{\bm{x}}{{M'}^s}{{N'}^s}{P'}}{\dsubst{\td{\Gel{\bm{x}}{A}{B}{a.b.R}}{\psi}}{s}{y}}
    \]
    where the reduct subterms are as defined there. We conclude that the first equation holds by a simple binary
    generalization of the well-typedness argument in the proof of that lemma.

    For the second equation, suppose that $r = s$. Then we have
    \[
      \ceqtm[\cx[']]{\gel{\bm{x}}{M^s}{N^s}{P}}{\gel{\bm{x}}{\bsubst{O}{\bm{0}}{\bm{x}}}{\bsubst{O}{\bm{1}}{\bm{x}}}{\ungel{\bm{x}.O}}}{\dsubst{\td{\Gel{\bm{x}}{A}{B}{a.b.R}}{\psi}}{s}{y}}
    \]
    by the $\coe$-Kan condition for $A$, $B$, and $R$, and the right-hand side is equal to $O$ by
    \cref{rule:gel-eta}. \qedhere
  \end{itemize}
\end{proof}

\begin{lemma}[$\hcom{\Gel}$-$\beta_{\bm{\Ge}}$]
  \label{lem:hcom-gel-betapartial}
  Let $\cwftypek[\cx!\apartcx{\Phi}{\bm{r}}!]{A,B}$ and
  $\wftypek[\cx!\apartcx{\Phi}{\bm{r}}!]{\oft{a}{A},\oft{b}{B}}{R}$. Suppose we have $\dimj{r,s}$,
  $\constraintsj{\etc{\xi_i}}$, and
  \begin{enumerate}
  \item $\coftype{O}{\Gel{\bm{r}}{A}{B}{a.b.R}}$,
  \item $\ceqtm[\cx{\Psi,y}<\xi_i,\xi_j>]{Q_i}{Q_j}{\Gel{\bm{r}}{A}{B}{a.b.R}}$ for all $i,j$,
  \item $\ceqtm[\cx<\xi_i>]{\dsubst{Q_i}{r}{y}}{O}{\Gel{\bm{r}}{A}{B}{a.b.R}}$ for all $i$.
  \end{enumerate}
  Then
  \begin{enumerate}
  \item if $\bm{r} = \bm{0}$, then
    $\ceqtm{\hcom{\Gel{\bm{r}}{A}{B}{a.b.R}}{r}{s}{O}{\sys{\xi_i}{Q_i}}}{\hcom{A}{r}{s}{O}{\sys{\xi_i}{Q_i}}}{A}$, and
  \item if $\bm{r} = \bm{1}$, then
    $\ceqtm{\hcom{\Gel{\bm{r}}{A}{B}{a.b.R}}{r}{s}{O}{\sys{\xi_i}{Q_i}}}{\hcom{B}{r}{s}{O}{\sys{\xi_i}{Q_i}}}{B}$.
  \end{enumerate}
\end{lemma}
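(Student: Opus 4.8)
The plan is to imitate the (one-line) proof of \cref{lem:coe-gel-betapartial}: in each case I would apply the coherent-expansion principle \cref{lem:expansion} with $\Ga$ taken to be $\vper{A}$ (when $\bm{r} = \bm{0}$) or $\vper{B}$ (when $\bm{r} = \bm{1}$), which reduces the goal to two checks --- that the left-hand side head-reduces to the right-hand side under every dimension substitution, and that the right-hand side is a well-typed element of the claimed type. I would write out only the $\bm{r} = \bm{0}$ case, the other being entirely symmetric.

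For the reduction step I would fix $\psitd$. Since $\bm{0}$ is a constant, $\td{\bm{r}}{\psi} = \bm{0}$, so by the operational semantics of $\Gel$ (\cref{fig:gel-opsem}) we have $\td{\Gel{\bm{0}}{A}{B}{a.b.R}}{\psi} \steps \td{A}{\psi}$; feeding this into the congruence rule for $\hcom$ in its type argument (\cref{fig:kan-opsem}) gives $\td{(\hcom{\Gel{\bm{0}}{A}{B}{a.b.R}}{r}{s}{O}{\sys{\xi_i}{Q_i}})}{\psi} \steps \td{(\hcom{A}{r}{s}{O}{\sys{\xi_i}{Q_i}})}{\psi}$. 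For well-typedness of the reduct I would invoke \cref{rule:gel-F0} to get $\ceqtypep{\Gel{\bm{0}}{A}{B}{a.b.R}}{A}$, which transports the hypothesis $\coftype{O}{\Gel{\bm{0}}{A}{B}{a.b.R}}$ and the two hypotheses governing the $Q_i$ into the corresponding hypotheses phrased over $A$; since $\cwftypek{A}$ is in particular $\hcom$-Kan, \cref{def:hcom-kan} then yields $\coftype{\hcom{A}{r}{s}{O}{\sys{\xi_i}{Q_i}}}{A}$, so $\Tm{\vper{A}}_\psi$ relates this reduct to itself and the hypothesis of \cref{lem:expansion} is satisfied. Applying \cref{lem:expansion} completes the $\bm{r} = \bm{0}$ case at type $A$, and the $\bm{r} = \bm{1}$ case is identical using \cref{rule:gel-F1} and the $\hcom$-Kan condition for $B$.

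I do not expect a genuine obstacle here: this is exactly the $\hcom$-analogue of \cref{lem:coe-gel-betapartial}. The only subtlety worth flagging is that, because $\bm{0}$ and $\bm{1}$ are dimension \emph{constants} rather than variables, the head reduction $\Gel{\bm{0}}{A}{B}{a.b.R} \steps A$ (resp.\ $\Gel{\bm{1}}{A}{B}{a.b.R} \steps B$) is available uniformly for every substitution $\psi$, so the pointwise hypothesis of coherent expansion is discharged with no case analysis on $\td{\bm{r}}{\psi}$ --- in contrast to, e.g., the proof of \cref{lem:coe-gel-beta}, where $\bm{r}$ is a variable and three cases arise.
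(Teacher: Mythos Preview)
Your proposal is correct and is exactly the paper's approach: the paper's proof is the one-liner ``By \cref{lem:expansion}, using the fact that $A$ and $B$ are $\hcom$-Kan to type the reducts,'' and you have simply unpacked that sentence. The additional remarks about dimension constants and the contrast with \cref{lem:coe-gel-beta} are accurate but not needed.
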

\begin{proof}
  By \cref{lem:expansion}, using the fact that $A$ and $B$ are $\hcom$-Kan to type the reducts.
\end{proof}

\begin{lemma}[$\hcom{\Gel}$-$\beta$]
  \label{lem:hcom-gel-beta}
  Let $\cwftypek{A,B}$ and $\wftypek{\oft{a}{A},\oft{b}{B}}{R}$. Suppose we have
  $\dimj[\cx!\Phi,\bm{x}!]{r,s}$, $\constraintsj[\cx!\Phi,\bm{x}!]{\etc{\xi_i}}$,
  and
  \begin{enumerate}
  \item $\coftype[\cx!\Phi,\bm{x}!]{O}{\Gel{\bm{x}}{A}{B}{a.b.R}}$,
  \item $\ceqtm[\cx!\Phi,\bm{x}!{\Psi,y}<\xi_i,\xi_j>]{Q_i}{Q_j}{\Gel{\bm{x}}{A}{B}{a.b.R}}$ for all $i,j$,
  \item
    $\ceqtm[\cx!\Phi,\bm{x}!<\xi_i>]{\dsubst{Q_i}{r}{y}}{O}{\Gel{\bm{x}}{A}{B}{a.b.R}}$
    for all $i$.
  \end{enumerate}
  Then
  \[
    \ceqtm[\cx!\Phi,\bm{x}!]{\hcom{\Gel{\bm{x}}{A}{B}{a.b.R}}{r}{s}{O}{\sys{\xi_i}{Q_i}}}{\gel{\bm{x}}{M^s}{N^s}{P}}{\Gel{\bm{x}}{A}{B}{a.b.R}}
  \]
  where we define
  \[
    M^y \eqdef \hcom{A}{r}{y}{\bsubst{O}{\bm{0}}{\bm{x}}}{\sys{\bsubst{\xi_i}{\bm{0}}{\bm{x}}}{y.\bsubst{Q_i}{\bm{0}}{\bm{x}}}} \qquad
    N^y \eqdef \hcom{B}{r}{y}{\bsubst{O}{\bm{1}}{\bm{x}}}{\sys{\bsubst{\xi_i}{\bm{1}}{\bm{x}}}{y.\bsubst{Q_i}{\bm{1}}{\bm{x}}}}
  \]\[
    P \eqdef \com{y.\subst{\subst{R}{M^y}{a}}{N^y}{b}}{r}{s}{\ungel{\bm{x}.O}}{(\sys{\xi_i}{y.\ungel{\bm{x}.Q_i}})_{\bm{x} \not\in \xi_i}}.
  \]
\end{lemma}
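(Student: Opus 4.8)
The plan is to mirror the proof of \cref{lem:coe-gel-beta}, replacing the use of $\coe$-Kan with $\hcom$-Kan and using heterogeneous composition (\cref{prop:com}) in place of $\coe$ in $R$. Concretely: first show that the putative reduct $\gel{\bm{x}}{M^s}{N^s}{P}$ is well-typed in $\Gel{\bm{x}}{A}{B}{a.b.R}$, and then conclude via \cref{lem:expansion} applied pointwise, with a three-way case split on the image of $\bm{x}$ under the substitution.

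For well-typedness I would argue as follows. Restricting hypotheses (1)--(3) along $\bsubst{}{\bm{0}}{\bm{x}}$ and using \cref{rule:gel-F0,rule:gel-I0} to reidentify $\bsubst{\Gel{\bm{x}}{A}{B}{a.b.R}}{\bm{0}}{\bm{x}}$ with $A$ and the $\bsubst{}{\bm{0}}{\bm{x}}$-faces of $O$ and of the $Q_i$ with elements of $A$, one sees that $M^y$ is a well-formed $\hcom$ in the Kan type $A$; it does not mention $\bm{x}$, so $\coftype[\cx!\apartcx{\Phi}{\bm{x}}!]{M^s}{A}$, and moreover $\dsubst{M^y}{r}{y} \eq \bsubst{O}{\bm{0}}{\bm{x}}$ by the $r = s$ clause of $\hcom$-Kan and $\dsubst{M^y}{r}{y} \eq \bsubst{Q_i}{\bm{0}}{\bm{x}}$ under each $\xi_i$ with $\bm{x} \notin \xi_i$ by its face clause. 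The statements for $N^y, N^s$ with $B$ and the $\bsubst{}{\bm{1}}{\bm{x}}$-faces are symmetric. For $P$: the line of types $\subst{\subst{R}{M^y}{a}}{N^y}{b}$ is Kan since $R$ is Kan and type-family judgments are stable under substitution; by \cref{rule:gel-E} the base $\ungel{\bm{x}.O}$ has type $\subst{\subst{R}{\bsubst{O}{\bm{0}}{\bm{x}}}{a}}{\bsubst{O}{\bm{1}}{\bm{x}}}{b}$, which equals the $\dsubst{}{r}{y}$-face of that line by the $r = s$ identifications just noted; similarly, for $\bm{x} \notin \xi_i$, the face $\ungel{\bm{x}.Q_i}$ has the correct type and these faces are mutually compatible and agree with the base at $y = r$, all obtained by pushing $\ungel{\bm{x}.-}$ through the corresponding properties of the $Q_i$ and $O$ (\cref{rule:gel-E}) and using the face clause of $\hcom$-Kan to rewrite $M^y, N^y$ as $\bsubst{Q_i}{\bm{0}}{\bm{x}}, \bsubst{Q_i}{\bm{1}}{\bm{x}}$ under $\xi_i$. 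Then \cref{prop:com} gives $\coftype[\cx!\apartcx{\Phi}{\bm{x}}!]{P}{\subst{\subst{R}{M^s}{a}}{N^s}{b}}$, so \cref{rule:gel-I} yields $\coftype[\cx!\Phi,\bm{x}!]{\gel{\bm{x}}{M^s}{N^s}{P}}{\Gel{\bm{x}}{A}{B}{a.b.R}}$.

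It then remains to apply \cref{lem:expansion} to the two sides of the claimed equation. Let $\tds{\cx*[']}{\psi}{\cx*!\Phi,\bm{x}!}$ be given. If $\td{\bm{x}}{\psi} = \bm{0}$, the left-hand side steps (after $\Gel{\bm{0}}{\cdots}$ reduces to $A$) to $\td{M^s}{\psi}$, the right-hand side is $\gel{\bm{0}}{\td{M^s}{\psi}}{\cdots}{\cdots}$, and these are equal in $\td{A}{\psi}$ by \cref{rule:gel-I0} --- which suffices because $\td{\bm{x}}{\psi} = \bm{0}$ makes the corresponding fibre of $\GelR{\bm{x}}{A}{B}{a.b.R}$ equal to $\vper{A}$. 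The case $\td{\bm{x}}{\psi} = \bm{1}$ is symmetric, using $B$ and \cref{rule:gel-I1}. If $\td{\bm{x}}{\psi}$ is a variable, both sides step to the same term $\gel{\td{\bm{x}}{\psi}}{\td{M^s}{\psi}}{\td{N^s}{\psi}}{\td{P}{\psi}}$, which is well-typed by the pointwise instance of the argument above.

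I expect the main obstacle to be the boundary bookkeeping for $P$: one must track carefully how $\ungel{\bm{x}.-}$ interacts with the $\bsubst{}{\bm{\Ge}}{\bm{x}}$-restricted $\hcom$s $M^y, N^y$ and with the constraint list, in particular noting that the faces $\xi_i$ with $\bm{x} \in \xi_i$ (necessarily $\bm{x} = \bm{0}$ or $\bm{x} = \bm{1}$) contribute to $M^y$ and $N^y$ but are deliberately omitted from $P$, and checking that this omission is consistent. Everything else is routine and follows \cref{lem:coe-gel-beta} essentially line for line.
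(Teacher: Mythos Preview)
Your proposal is correct and follows essentially the same approach as the paper: establish well-typedness of the reduct via the $\hcom$-Kan conditions for $A,B$, \cref{rule:gel-E}, and \cref{prop:com}, then apply \cref{lem:expansion} with a three-way case split on $\td{\bm{x}}{\psi}$. One minor imprecision: in the $\td{\bm{x}}{\psi}=\bm{0}$ case, the left-hand side steps only to $\td{\hcom{A}{r}{s}{O}{\sys{\xi_i}{Q_i}}}{\psi}$, which is then (syntactically, since $\psi$ sends $\bm{x}$ to $\bm{0}$) equal to $\td{M^s}{\psi}$ and hence to $\td{\gel{\bm{x}}{M^s}{N^s}{P}}{\psi}$ by \cref{rule:gel-I0}; it does not step all the way to $\td{M^s}{\psi}$ in general.
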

\begin{proof}
  We first argue that the reduct is well-typed. First, we have $\coftype[\cx{\Psi,y}]{M^y}{A}$ by
  \cref{rule:gel-F0} and the $\hcom$-Kan condition for $A$, likewise $\coftype[\cx{\Psi,y}]{N^y}{B}$ by
  \cref{rule:gel-F1} and the $\hcom$-Kan condition for $B$. Second, \cref{rule:gel-E} gives
  \begin{enumerate}
  \item
    $\coftype{\ungel{\bm{x}.O}}{\subst{\subst{R}{\bsubst{O}{\bm{0}}{\bm{x}}}{a}}{\bsubst{O}{\bm{1}}{\bm{x}}}{b}}$,
  \item
    $\ceqtm[\cx{\Psi,y}<\xi_i,\xi_j>]{\ungel{\bm{x}.Q_i}}{\ungel{\bm{x}.Q_j}}{\subst{\subst{R}{\bsubst{Q_i}{\bm{0}}{\bm{x}}}{a}}{\bsubst{Q_i}{\bm{1}}{\bm{x}}}{b}}$
    for all $i,j$ with $\bm{x} \not\in \xi_i,\xi_j$,
  \item
    $\ceqtm[\cx<\Xi,\xi_i>]{\ungel{\bm{x}.\dsubst{Q_i}{r}{y}}}{\ungel{\bm{x}.O}}{\subst{\subst{R}{\bsubst{O}{\bm{0}}{\bm{x}}}{a}}{\bsubst{O}{\bm{1}}{\bm{x}}}{b}}$
    for all $i$ with $\bm{x} \not\in \xi_i$.
  \end{enumerate}
  Thus $\coftype{P}{\subst{\subst{R}{M^s}{a}}{N^s}{b}}$ by \cref{prop:com}, and so
  $\coftype[\cx!\Phi,\bm{x}!]{\gel{\bm{x}}{M^s}{N^s}{P}}{\Gel{\bm{x}}{A}{B}{a.b.R}}$ by \cref{rule:gel-I}.

  Now we prove the desired equation using \cref{lem:expansion}. Let $\psitd$ be given; we have three cases.
  \begin{itemize}
  \item $\td{\bm{x}}{\psi} = \bm{0}$.

    Then
    $\td{\hcom{\Gel{\bm{x}}{A}{B}{a.b.R}}{r}{s}{O}{\sys{\xi_i}{Q_i}}}{\psi} \steps
    \td{\hcom{A}{r}{s}{O}{\sys{\xi_i}{Q_i}}}{\psi}$. By the $\hcom$-Kan condition for $A$ we have
    $\ceqtm[\cx[']]{\td{\hcom{A}{r}{s}{O}{\sys{\xi_i}{Q_i}}}{\psi}}{\td{M^s}{\psi}}{\td{A}{\psi}}$, and then
    $\ceqtm[\cx[']]{\td{M^s}{\psi}}{\td{\gel{\bm{x}}{M^s}{N^s}{P}}{\psi}}{\td{A}{\psi}}$ by
    \cref{rule:gel-I0}.

  \item $\td{\bm{x}}{\psi} = \bm{1}$.

    Analogous to the $\Xi' \models \td{\bm{x}}{\psi} = \bm{0}$ case.

  \item $\td{\bm{x}}{\psi} \not\in \{\bm{0},\bm{1}\}$.

    Then
    $\td{\hcom{\Gel{\bm{x}}{A}{B}{a.b.R}}{r}{s}{O}{\sys{\xi_i}{Q_i}}}{\psi} \steps
    \td{\gel{\bm{x}}{M^s}{N^s}{P^s}}{\psi}$, which is well-typed by the argument above. \qedhere
  \end{itemize}
\end{proof}

\begin{theorem}
  Let $\ceqtypek[\cx!\apartcx{\Phi}{\bm{r}}!]{A}{A'}$, $\ceqtypek[\cx!\apartcx{\Phi}{\bm{r}}!]{B}{B'}$, and
  $\eqtypek[\cx!\apartcx{\Phi}{\bm{r}}!]{\oft{a}{A},\oft{b}{B}}{R}{R'}$ be given. Then
  $\ceqtypep{\Gel{\bm{r}}{A}{B}{a.b.R}}{\Gel{\bm{r}}{A'}{B'}{a.b.R'}}$ are equally $\hcom$-Kan.
\end{theorem}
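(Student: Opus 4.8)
The plan is to follow the proof that $\Gel$-types are equally $\coe$-Kan almost verbatim, substituting \cref{lem:hcom-gel-betapartial,lem:hcom-gel-beta} for \cref{lem:coe-gel-betapartial,lem:coe-gel-beta}. Fix $\tds{\cx*[']}{\psi}{\cx*}$, $\dimj[{\cx[']}]{r,s}$, $\constraintsj[{\cx[']}]{\etc{\xi_i}}$, elements $O,O'$ related over $\td{\Gel{\bm{r}}{A}{B}{a.b.R}}{\psi}$, tube faces $Q_i$ pairwise related on the $\xi_i$, and $\dsubst{Q_i}{r}{y} \eq O$ on $\xi_i$; we must establish the three clauses of \cref{def:hcom-kan} for $\Gel{\bm{r}}{A}{B}{a.b.R}$ against $\Gel{\bm{r}}{A'}{B'}{a.b.R'}$. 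As in the $\coe$-Kan case, each clause is proven by \cref{lem:expansion} applied aspectwise, so after passing to an arbitrary further substitution it suffices to split on the status of $\td{\bm{r}}{\psi}$.

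When $\td{\bm{r}}{\psi} = \bm{0}$, \cref{rule:gel-F0} equates $\td{\Gel{\bm{r}}{A}{B}{a.b.R}}{\psi}$ with $\td{A}{\psi}$ as types and \cref{lem:hcom-gel-betapartial} rewrites each $\hcom$ into the corresponding $\hcom$ in $\td{A}{\psi}$ (resp.\ $\td{A'}{\psi}$); all three required equations are then immediate from the assumption that $\ceqtypek{A}{A'}$ is $\hcom$-Kan. The case $\td{\bm{r}}{\psi} = \bm{1}$ is the same with $B$ for $A$.

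The heart of the argument is $\td{\bm{r}}{\psi} = \bm{x}$ for a variable $\bm{x} \in \Phi'$. Here \cref{lem:hcom-gel-beta}, applied to each of $O$ and $O'$, rewrites the two $\hcom$ terms as $\gel{\bm{x}}{M^s}{N^s}{P}$ and $\gel{\bm{x}}{{M'}^s}{{N'}^s}{P'}$, with $M^y$, $N^y$, $P$ exactly as in that lemma. For clause (1), I would re-run the well-typedness computation from the proof of \cref{lem:hcom-gel-beta} in the doubled setting: using stability under substitution, \cref{rule:gel-E}, the $\hcom$-Kan conditions on $\ceqtypek{A}{A'}$ and $\ceqtypek{B}{B'}$, and \cref{prop:com} for the heterogeneous line $y.\subst{\subst{R}{M^y}{a}}{N^y}{b}$, one obtains that $M^s$ is related to ${M'}^s$, $N^s$ to ${N'}^s$, and $P$ to $P'$; \cref{rule:gel-I} then assembles the desired equation. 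For clause (2), fix $i$ with $\models \xi_i$: a reflexive constraint cannot mention the variable $\bm{x}$, so $\bm{x} \not\in \xi_i$ and the face $y.\ungel{\bm{x}.Q_i}$ genuinely appears in $P$; the adjacency clauses of the $\hcom$-Kan conditions for $A$ and $B$ and of \cref{prop:com} for $R$ give $M^s \eq \bsubst{\dsubst{Q_i}{s}{y}}{\bm{0}}{\bm{x}}$, $N^s \eq \bsubst{\dsubst{Q_i}{s}{y}}{\bm{1}}{\bm{x}}$, $P \eq \ungel{\bm{x}.\dsubst{Q_i}{s}{y}}$, whence $\gel{\bm{x}}{M^s}{N^s}{P} \eq \dsubst{Q_i}{s}{y}$ by \cref{rule:gel-eta}. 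Clause (3) is the same using the $r = s$ clauses of the Kan conditions and \cref{rule:gel-eta}, which collapse $\gel{\bm{x}}{M^s}{N^s}{P}$ to $O$.

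I expect the only real work to be the binary generalization of the typing argument from \cref{lem:hcom-gel-beta} needed for clause (1) in the variable case: one must verify that the three premises of \cref{prop:com} hold for the line $y.\subst{\subst{R}{M^y}{a}}{N^y}{b}$ in the doubled form, which comes from chaining \cref{rule:gel-E} over the boundary and adjacency hypotheses on $O,O',Q_i$ together with the relatedness of $M^y$ and $N^y$ already extracted from the Kan conditions for $A$ and $B$. Everything else is routine bookkeeping; no new ingredient beyond those in the $\coe$-Kan theorem and \cref{lem:hcom-gel-beta} is required.
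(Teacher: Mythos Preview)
Your proposal is correct and follows the paper's proof essentially verbatim: case on $\td{\bm{r}}{\psi}$, dispatch the endpoint cases via \cref{lem:hcom-gel-betapartial} and the $\hcom$-Kan conditions on $A,B$, and in the variable case reduce both sides with \cref{lem:hcom-gel-beta}, then use the binary version of its well-typedness argument plus \cref{rule:gel-eta}.

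One small difference worth pointing out: in clause~(2) of the variable case, the paper separately treats the possibilities $\xi_i = (\bm{x} = \bm{0})$ and $\xi_i = (\bm{x} = \bm{1})$ before the $\bm{x} \not\in \xi_i$ case. Your observation that a constraint satisfying $\models \xi_i$ cannot mention the bridge variable $\bm{x}$ (since the only bridge constraints are $\bm{r} = \bm{\Ge}$, reflexive only when $\bm{r}$ is the constant $\bm{\Ge}$) is correct, so those two subcases are in fact vacuous and your streamlined treatment suffices. The paper's extra cases are harmless but unnecessary under the stated constraint grammar.
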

\begin{proof}
  Let $\psitd$, $\dimj[{\cx[']}]{r,s}$, $\constraintsj[{\cx[']}]{\etc{\xi_i}}$ be given, and suppose we have
  \begin{enumerate}
  \item $\ceqtm[\cx[']]{O}{O'}{\td{\Gel{\bm{r}}{A}{B}{a.b.R}}{\psi}}$,

  \item $\ceqtm[\cx[']{\Psi',y}<\xi_i,\xi_j>]{Q_i}{Q'_j}{\td{\Gel{\bm{r}}{A}{B}{a.b.R}}{\psi}}$ for all $i,j$,
  \item
    $\ceqtm[\cx[']<\xi_i>]{\dsubst{Q_i}{r}{y}}{O}{\td{\Gel{\bm{r}}{A}{B}{a.b.R}}{\psi}}$
    for all $i$.
  \end{enumerate}
  Abbreviating $C \eqdef \Gel{\bm{r}}{A}{B}{a.b.R}$ and $C' \eqdef \Gel{\bm{r}}{A'}{B'}{a.b.R'}$, we need
  to show
  \begin{enumerate}
  \item
    $\ceqtm[\cx[']]{\hcom{\td{C}{\psi}}{r}{s}{O}{\sys{\xi_i}{y.Q_i}}}{\hcom{\td{C'}{\psi}}{r}{s}{O'}{\sys{\xi_i}{y.Q'_i}}}{\td{C}{\psi}}$,
  \item
    $\ceqtm[\cx[']]{\hcom{\td{C}{\psi}}{r}{s}{O}{\sys{\xi_i}{y.Q_i}}}{\dsubst{Q_i}{s}{y}}{\td{C}{\psi}}$
    for all $i$ with $\models \xi_i$,
  \item $\ceqtm[\cx[']]{\hcom{\td{C}{\psi}}{r}{s}{O}{\sys{\xi_i}{y.Q_i}}}{O}{\td{C}{\psi}}$ if $r=s$.
  \end{enumerate}

  We have three cases, depending on the status of $\td{\bm{r}}{\psi}$; we prove the three equations for each
  case.
  \begin{itemize}
  \item $\td{\bm{r}}{\psi} = \bm{0}$.

    Then the equations follow from the assumption that $\ceqtypep[\cx[']]{\td{A}{\psi}}{\td{A'}{\psi}}$ are
    equally $\hcom$-Kan by rewriting each $\hcom$ term with \cref{lem:hcom-gel-betapartial}.

  \item $\td{\bm{r}}{\psi} = \bm{1}$.

    Analogous to the $\td{\bm{r}}{\psi} = \bm{0}$ case.

  \item $\td{\bm{r}}{\psi} = \bm{x}$.

    We apply \cref{lem:hcom-gel-beta}, which gives
    \[
      \ceqtm[\cx[']]{\hcom{\td{C}{\psi}}{r}{s}{O}{\sys{\xi_i}{Q_i}}}{\gel{\bm{x}}{M^s}{N^s}{P}}{\td{\Gel{\bm{x}}{A}{B}{a.b.R}}{\psi}}
    \]
    \[
      \ceqtm[\cx[']]{\hcom{\td{C'}{\psi}}{r}{s}{O'}{\sys{\xi_i}{Q'_i}}}{\gel{\bm{x}}{{M'}^s}{{N'}^s}{P'}}{\td{\Gel{\bm{x}}{A}{B}{a.b.R}}{\psi}}
    \]
    where the reduct subterms are as defined there. We conclude that the first equation holds by a simple
    binary generalization of the well-typedness argument in the proof of that lemma.

    For the second equation, let $i$ be given with $\models \xi_i$. If $\xi_i = (\bm{x} = \bm{0})$, then we
    have $\ceqtm[\cx[']]{\gel{\bm{x}}{M^s}{N^s}{P}}{M^s}{\td{\Gel{\bm{x}}{A}{B}{a.b.R}}{\psi}}$ by
    \cref{rule:gel-I0}, and
    $\ceqtm[\cx[']<\xi_i>]{M^s}{\bsubst{Q_i}{\bm{0}}{\bm{x}} \eq Q_i}{\td{\Gel{\bm{x}}{A}{B}{a.b.R}}{\psi}}$
    by the $\hcom$-Kan condition for $A$. The $\xi_i = (\bm{x} = \bm{1})$ case is similar. Finally, if
    $\bm{x} \not\in \xi_i$, we have
    \[
      \ceqtm[\cx[']<\xi_i>]{\gel{\bm{x}}{M^s}{N^s}{P}}{\gel{\bm{x}}{\bsubst{Q_i}{\bm{0}}{\bm{x}}}{\bsubst{Q_i}{\bm{1}}{\bm{x}}}{\ungel{\bm{x}.Q_i}}}{\td{\Gel{\bm{x}}{A}{B}{a.b.R}}{\psi}}
    \]
    by the $\hcom$-Kan condition for $A,B$ and \cref{prop:com} for $R$. The right-hand side is then equal to
    $Q_i$ by \cref{rule:gel-eta}.

    For the third equation, suppose $r = s$. As above, we have
    \[
      \ceqtm[\cx[']]{\gel{\bm{x}}{M^s}{N^s}{P}}{\gel{\bm{x}}{\bsubst{O}{\bm{0}}{\bm{x}}}{\bsubst{O}{\bm{1}}{\bm{x}}}{\ungel{\bm{x}.O}}}{\td{\Gel{\bm{x}}{A}{B}{a.b.R}}{\psi}}
    \]
    by the $\hcom$-Kan condition for $A,B$ and \cref{prop:com} for $R$, and the right-hand side is equal to
    $O$ by \cref{rule:gel-eta}. \qedhere
  \end{itemize}
\end{proof}

%%% Local Variables:
%%% mode: latex
%%% TeX-master: "main"
%%% End:

\section{Proof theory}
\label{sec:proof-theory}

Having introduced all the operators we will need ($\Bridge$, $\extent$, and $\Gel$), we now collect the rules
we have proven into a makeshift proof theory for parametric cubical type theory. This is not meant to be a
complete or definitive set of rules: we omit rules for constructs already introduced by \cite{chtt-iii}, do
not include the (notationally burdensome) rules for $\hcom$ and $\coe$ at specific types, and only list those
structural rules which concern the treatment of bridge dimensions. Rather, it is intended to be a sufficient
basis for the remainder of the paper, in which we prove various results within the proof theory without
reference to the operational semantics.

\subsection{Structural}
\label{sec:proof-theory:structural}

\begin{mathpar}
  \Infer
  {\judg[\cx<\Xi>]{\GG\GG'}{\J} \\ \bridgej{\bm{r}}}
  {\judg[\cx<\Xi>]{\GG,\bm{r},\GG'}{\J}}
  \and
  \Infer
  {\judg[\cx<\Xi>]{\GG,\bm{\Ge},\GG'}{\J} \\ \bm{\Ge} \in \{\bm{0},\bm{1}\}}
  {\judg[\cx<\Xi>]{\GG\GG'}{\J}}
\end{mathpar}

\subsection{Kan operations}
\label{sec:proof-theory:kan}

\begin{mathpar}
  \Infer
  {\ceqtypek[\cx{\Psi,z}<\Xi>]{A}{A'} \\
    \ceqtm[\cx<\Xi>]{M}{M'}{\dsubst{A}{r}{z}}}
  {\ceqtm[\cx<\Xi>]{\coe{z.A}{r}{s}{M}}{\coe{z.A'}{r}{s}{M'}}{\dsubst{A}{s}{z}}}
  \and
  \Infer
  {\cwftypek[\cx{\Psi,z}<\Xi>]{A} \\
    \coftype[\cx<\Xi>]{M}{\dsubst{A}{r}{z}}}
  {\ceqtm[\cx<\Xi>]{\coe{z.A}{r}{r}{M}}{M}{\dsubst{A}{r}{z}}}
  \and
  \Infer
  {\ceqtypek[\cx<\Xi>]{A}{A'} \\
    \ceqtm[\cx<\Xi>]{M}{M'}{A} \\\\
    (\forall i,j)\;\ceqtm[\cx{\Psi,y}<\Xi,\xi_i,\xi_j>]{N_i}{N_j}{A} \\
    (\forall i)\;\ceqtm[\cx<\Xi,\xi_i>]{\dsubst{N_i}{r}{y}}{M}{A}}
  {\ceqtm[\cx<\Xi>]{\hcom{A}{r}{s}{M}{\sys{\xi_i}{y.N_i}}}{\hcom{A'}{r}{s}{M'}{\sys{\xi_i}{y.N_i'}}}{A}}
  \and
  \Infer
  {\cwftypek[\cx<\Xi>]{A} \\
    \coftype[\cx<\Xi>]{M}{A} \\\\
    (\forall i,j)\;\ceqtm[\cx{\Psi,y}<\Xi,\xi_i,\xi_j>]{N_i}{N'_j}{A} \\
    (\forall i)\;\ceqtm[\cx<\Xi,\xi_i>]{\dsubst{N_i}{r}{y}}{M}{A} \\ \models \xi_i}
  {\ceqtm[\cx<\Xi>]{\hcom{A}{r}{s}{M}{\sys{\xi_i}{y.N_i}}}{\dsubst{N_i}{s}{y}}{A}}
  \and
  \Infer
  {\cwftypek[\cx<\Xi>]{A} \\
    \coftype[\cx<\Xi>]{M}{A} \\\\
    (\forall i,j)\;\ceqtm[\cx{\Psi,y}<\Xi,\xi_i,\xi_j>]{N_i}{N'_j}{A} \\
    (\forall i)\;\ceqtm[\cx<\Xi,\xi_i>]{\dsubst{N_i}{r}{y}}{M}{A}}
  {\ceqtm[\cx<\Xi>]{\hcom{A}{r}{r}{M}{\sys{\xi_i}{y.N_i}}}{M}{A}}
\end{mathpar}

\subsection[Bridge-types]{$\Bridge$-types}
\label{sec:proof-theory:bridge}

\begin{mathpar}
  \Infer[\hyperref[rule:bridge-F]{($\Bridge$-F)}]
  {\eqtypek[\cx!\Phi,\bm{x}!<\Xi>]{\GG,\bm{x}}{A}{A'} \\
    \eqtm[\cx<\Xi>]{\GG}{M_{\bm{0}}}{M_{\bm{0}}'}{\bsubst{A}{\bm{0}}{\bm{x}}} \\
    \eqtm[\cx<\Xi>]{\GG}{M_{\bm{1}}}{M_{\bm{1}}'}{\bsubst{A}{\bm{1}}{\bm{x}}}}
  {\eqtypek[\cx<\Xi>]{\GG}{\Bridge{\bm{x}.A}{M_{\bm{0}}}{M_{\bm{1}}}}{\Bridge{\bm{x}.A'}{M_{\bm{0}}'}{M_{\bm{1}}'}}}
  \and
  \Infer[\hyperref[rule:bridge-I]{($\Bridge$-I)}]
  {\eqtm[\cx!\Phi,\bm{x}!<\Xi>]{\GG,\bm{x}}{P}{P'}{A} \\
    \eqtm[\cx<\Xi>]{\GG}{\bsubst{P}{\bm{0}}{\bm{x}}}{M_{\bm{0}}}{A} \\
    \eqtm[\cx<\Xi>]{\GG}{\bsubst{P}{\bm{1}}{\bm{x}}}{M_{\bm{1}}}{A}}
  {\eqtm[\cx<\Xi>]{\GG}{\blam{\bm{x}}{P}}{\blam{\bm{x}}{P'}}{\Bridge{\bm{x}.A}{M_{\bm{0}}}{M_{\bm{1}}}}}
  \and
  \Infer[\hyperref[rule:bridge-E]{($\Bridge$-E)}]
  {\eqtm[\cx!\apartcx{\Phi}{\bm{r}}!<\apartcx{\Xi}{\bm{r}}>]{\apartcx{\GG}{\bm{r}}}{Q}{Q'}{\Bridge{\bm{x}.A}{M_{\bm{0}}}{M_{\bm{1}}}}}
  {\eqtm[\cx<\Xi>]{\GG}{\bapp{Q}{\bm{r}}}{\bapp{Q'}{\bm{r}}}{\bsubst{A}{\bm{r}}{\bm{x}}}}
  \and
  \Infer[\hyperref[rule:bridge-betaF]{($\Bridge$-$\beta$)}]
  {\oftype[\cx!\apartcx{\Phi}{\bm{r}},\bm{x}!<\apartcx{\Xi}{\bm{r}}>]{\apartcx{\GG}{\bm{r}},\bm{x}}{P}{A}}
  {\eqtm[\cx<\Xi>]{\GG}{\bapp{(\blam{\bm{x}}{P})}{\bm{r}}}{\bsubst{P}{\bm{r}}{\bm{x}}}{\bsubst{A}{\bm{r}}{\bm{x}}}}
  \and
  \Infer[\hyperref[rule:bridge-betapartial]{($\Bridge$-$\beta_{\bm{\Ge}}$)}]
  {\oftype[\cx<\Xi>]{\GG}{Q}{\Bridge{\bm{x}.A}{M_{\bm{0}}}{M_{\bm{1}}}}}
  {\eqtm[\cx<\Xi>]{\GG}{\bapp{Q}{\bm{\Ge}}}{M_{\bm{\Ge}}}{\bsubst{A}{\bm{\Ge}}{\bm{x}}}}
  \and
  \Infer[\hyperref[rule:bridge-eta]{($\Bridge$-$\eta$)}]
  {\oftype[\cx<\Xi>]{\GG}{Q}{\Bridge{\bm{x}.A}{M_{\bm{0}}}{M_{\bm{1}}}}}
  {\eqtm[\cx<\Xi>]{\GG}{Q}{\blam{\bm{y}}{\bapp{Q}{\bm{y}}}}{\Bridge{\bm{x}.A}{M_{\bm{0}}}{M_{\bm{1}}}}}
\end{mathpar}

\subsection{Extent}
\label{sec:proof-theory:extent}

\begin{mathpar}
  \Infer[\hyperref[rule:extent]{($\extent*$)}]
  {\wftypek[\cx!\apartcx{\Phi}{\bm{r}},\bm{x}!<\apartcx{\Xi}{\bm{r}}>]{\apartcx{\GG}{\bm{r}},\bm{x}}{A} \\
    \wftypek[\cx!\apartcx{\Phi}{\bm{r}},\bm{x}!<\apartcx{\Xi}{\bm{r}}>]{\apartcx{\GG}{\bm{r}},\bm{x},\oft{d}{A}}{B} \\
    \eqtm[\cx<\Xi>]{\GG}{M}{M'}{\bsubst{A}{\bm{r}}{\bm{x}}} \\
    \eqtm[\cx!\apartcx{\Phi}{\bm{r}}!<\apartcx{\Xi}{\bm{r}}>]{\apartcx{\GG}{\bm{r}},\oft{a}{\bsubst{A}{\bm{0}}{\bm{x}}}}{N}{N'}{\subst{\bsubst{B}{\bm{0}}{\bm{x}}}{a}{d}} \\
    \eqtm[\cx!\apartcx{\Phi}{\bm{r}}!<\apartcx{\Xi}{\bm{r}}>]{\apartcx{\GG}{\bm{r}},\oft{a'}{\bsubst{A}{\bm{1}}{\bm{x}}}}{P}{P'}{\subst{\bsubst{B}{\bm{1}}{\bm{x}}}{a'}{d}} \\
    \eqtm[\cx!\apartcx{\Phi}{\bm{r}}!<\apartcx{\Xi}{\bm{r}}>]{\apartcx{\GG}{\bm{r}},\oft{a}{\bsubst{A}{\bm{0}}{\bm{x}}},\oft{a'}{\bsubst{A}{\bm{1}}{\bm{x}}},\oft{c}{\Bridge{\bm{x}.A}{a}{a'}}}{Q}{Q'}{\Bridge{\bm{x}.\subst{B}{\bapp{c}{\bm{x}}}{d}}{N}{P}}
  }
  {\eqtm[\cx<\Xi>]{\GG}{\extent{\bm{r}}{M}{a.N}{a'.P}{a.a'.c.Q}}{\extent{\bm{r}}{M'}{a.N'}{a'.P'}{a.a'.c.Q'}}{\subst{\bsubst{B}{\bm{r}}{\bm{x}}}{M}{d}}}
  \and
  \Infer[\hyperref[rule:extent-beta0]{($\extent*$-$\beta_{\bm{0}}$)}]
  {\wftypek[\cx<\Xi>]{\GG}{A} \\
    \wftypek[\cx<\Xi>]{\GG,\oft{d}{A}}{B} \\\\
    \oftype[\cx<\Xi>]{\GG}{M}{A} \\
    \oftype[\cx<\Xi>]{\GG,\oft{a}{A}}{N}{\subst{B}{a}{d}}
  }
  {\eqtm[\cx<\Xi>]{\GG}{\extent{\bm{0}}{M}{a.N}{a'.P}{a.a'.c.Q}}{\subst{N}{M}{a}}{\subst{B}{M}{d}}}
  \and
  \Infer[\hyperref[rule:extent-beta1]{($\extent*$-$\beta_{\bm{1}}$)}]
  {\wftypek[\cx<\Xi>]{\GG}{A} \\
    \wftypek[\cx<\Xi>]{\GG,\oft{d}{A}}{B} \\\\
    \oftype[\cx<\Xi>]{\GG}{M}{A} \\
    \oftype[\cx<\Xi>]{\GG,\oft{a'}{A}}{P}{\subst{B}{a'}{d}}
  }
  {\eqtm[\cx<\Xi>]{\GG}{\extent{\bm{1}}{M}{a.N}{b.P}{a.a'.c.Q}}{\subst{P}{M}{a'}}{\subst{B}{M}{d}}}
  \and
  \Infer[\hyperref[rule:extent-betaF]{($\extent*$-$\beta$)}]
  {\wftypek[\cx!\apartcx{\Phi}{\bm{r}},\bm{x}!<\apartcx{\Xi}{\bm{r}}>]{\apartcx{\GG}{\bm{r}},\bm{x}}{A} \\
    \wftypek[\cx!\apartcx{\Phi}{\bm{r}},\bm{x}!<\apartcx{\Xi}{\bm{r}}>]{\apartcx{\GG}{\bm{r}},\bm{x},\oft{d}{A}}{B} \\
    \oftype[\cx!\apartcx{\Phi}{\bm{r}},\bm{x}!<\apartcx{\Xi}{\bm{r}}>]{\apartcx{\GG}{\bm{r}},\bm{x}}{M}{A} \\
    \oftype[\cx!\apartcx{\Phi}{\bm{r}}!<\apartcx{\Xi}{\bm{r}}>]{\apartcx{\GG}{\bm{r}},\oft{a}{\bsubst{A}{\bm{0}}{\bm{x}}}}{N}{\subst{\bsubst{B}{\bm{0}}{\bm{x}}}{a}{d}} \\
    \oftype[\cx!\apartcx{\Phi}{\bm{r}}!<\apartcx{\Xi}{\bm{r}}>]{\apartcx{\GG}{\bm{r}},\oft{a'}{\bsubst{A}{\bm{1}}{\bm{x}}}}{P}{\subst{\bsubst{B}{\bm{1}}{\bm{x}}}{a'}{d}} \\
    \oftype[\cx!\apartcx{\Phi}{\bm{r}}!<\apartcx{\Xi}{\bm{r}}>]{\apartcx{\GG}{\bm{r}},\oft{a}{\bsubst{A}{\bm{0}}{\bm{x}}},\oft{a'}{\bsubst{A}{\bm{1}}{\bm{x}}},\oft{c}{\Bridge{\bm{x}.A}{a}{a'}}}{Q}{\Bridge{\bm{x}.\subst{B}{\bapp{c}{\bm{x}}}{d}}{N}{P}}    
  }
  {\eqtm[\cx<\Xi>]{\GG}{\extent{\bm{r}}{\bsubst{M}{\bm{r}}{\bm{x}}}{a.N}{a'.P}{a.a'.c.Q}}{T}{\bsubst{\subst{B}{M}{d}}{\bm{r}}{\bm{x}}} \\\\
  \text{where } T \eqdef \bapp{\subst{\subst{\subst{Q}{\bsubst{M}{\bm{0}}{\bm{x}}}{a}}{\bsubst{M}{\bm{1}}{\bm{x}}}{a'}}{\blam{\bm{x}}{M}}{c}}{\bm{r}}}
  \and
  \Infer[\hyperref[rule:extent-eta]{($\extent*$-$\eta$)}]
  {\wftypek[\cx!\apartcx{\Phi}{\bm{r}},\bm{x}!<\apartcx{\Xi}{\bm{r}}>]{\apartcx{\GG}{\bm{r}}}{A} \\
    \wftypek[\cx!\apartcx{\Phi}{\bm{r}},\bm{x}!<\apartcx{\Xi}{\bm{r}}>]{\apartcx{\GG}{\bm{r}},\bm{x},\oft{d}{A}}{B} \\
    \oftype[\cx<\Xi>]{\GG}{M}{\bsubst{A}{\bm{r}}{\bm{x}}} \\
    \oftype[\cx!\apartcx{\Phi}{\bm{r}},\bm{x}!<\apartcx{\Xi}{\bm{r}}>]{\apartcx{\GG}{\bm{r}},\bm{x},\oft{d}{A}}{N}{B}
  }
  {\eqtm[\cx<\Xi>]{\GG}{\subst{\bsubst{N}{\bm{r}}{\bm{x}}}{M}{a}}{E}{\subst{\bsubst{B}{\bm{r}}{\bm{x}}}{M}{d}} \\\\
  \text{where }E \eqdef \extent{\bm{r}}{M}{a.\subst{\bsubst{N}{\bm{0}}{\bm{x}}}{a}{d}}{a'.\subst{\bsubst{N}{\bm{1}}{\bm{x}}}{a'}{d}}{a.a'.c.\subst{\blam{\bm{x}}{N}}{\bapp{c}{\bm{x}}}{d}}}
\end{mathpar}

\subsection[Gel-types]{$\Gel$-types}
\label{sec:proof-theory:gel}

\paragraph{Type}
\begin{mathpar}
  \Infer[\hyperref[rule:gel-F]{($\Gel$-F)}]
  {\eqtypek[\cx!\apartcx{\Phi}{\bm{r}}!<\apartcx{\Xi}{\bm{r}}>]{\apartcx{\GG}{\bm{r}}}{A}{A'} \\
    \eqtypek[\cx!\apartcx{\Phi}{\bm{r}}!<\apartcx{\Xi}{\bm{r}}>]{\apartcx{\GG}{\bm{r}}}{B}{B'} \\
    \eqtypek[\cx!\apartcx{\Phi}{\bm{r}}!<\apartcx{\Xi}{\bm{r}}>]{\apartcx{\GG}{\bm{r}},\oft{a}{A},\oft{b}{B}}{R}{R'}}
  {\eqtypek[\cx<\Xi>]{\GG}{\Gel{\bm{r}}{A}{B}{a.b.R}}{\Gel{\bm{r}}{A'}{B'}{a.b.R'}}}
  \and
  \Infer[\hyperref[rule:gel-F0]{($\Gel$-F$_0$)}]
  {\wftypek[\cx<\Xi>]{\GG}{A}}
  {\eqtypek[\cx<\Xi>]{\GG}{\Gel{\bm{0}}{A}{B}{a.b.R}}{A}}
  \and
  \Infer[\hyperref[rule:gel-F1]{($\Gel$-F$_1$)}]
  {\wftypek[\cx<\Xi>]{\GG}{B}}
  {\eqtypek[\cx<\Xi>]{\GG}{\Gel{\bm{1}}{A}{B}{a.b.R}}{B}}
\end{mathpar}
\paragraph{Introduction}
\begin{mathpar}
  \Infer[\hyperref[rule:gel-I]{($\Gel$-I)}]
  {\eqtm[\cx!\apartcx{\Phi}{\bm{r}}!<\apartcx{\Xi}{\bm{r}}>]{\apartcx{\GG}{\bm{r}}}{M}{M'}{A} \\
    \eqtm[\cx!\apartcx{\Phi}{\bm{r}}!<\apartcx{\Xi}{\bm{r}}>]{\apartcx{\GG}{\bm{r}}}{N}{N'}{B} \\
    \eqtm[\cx!\apartcx{\Phi}{\bm{r}}!<\apartcx{\Xi}{\bm{r}}>]{\apartcx{\GG}{\bm{r}}}{P}{P'}{\subst{\subst{R}{M}{a}}{N}{b}}}
  {\eqtm[\cx<\Xi>]{\GG}{\gel{\bm{r}}{M}{N}{P}}{\gel{\bm{r}}{M'}{N'}{P'}}{\Gel{\bm{r}}{A}{B}{a.b.R}}}
  \and
  \Infer[\hyperref[rule:gel-I0]{($\Gel$-I$_0$)}]
  {\oftype[\cx<\Xi>]{\GG}{M}{A}}
  {\eqtm[\cx<\Xi>]{\GG}{\gel{\bm{0}}{M}{N}{P}}{M}{A}}
  \and
  \Infer[\hyperref[rule:gel-I1]{($\Gel$-I$_1$)}]
  {\oftype[\cx<\Xi>]{\GG}{N}{B}}
  {\eqtm[\cx<\Xi>]{\GG}{\gel{\bm{1}}{M}{N}{P}}{N}{B}}
\end{mathpar}

\paragraph{Elimination}

\begin{mathpar}
  \Infer[\hyperref[rule:gel-E]{($\Gel$-E)}]
  {\wftypek[\cx<\Xi>]{\GG}{A,B} \\
    \wftypek[\cx!\apartcx{\Phi}{\bm{r}}!<\apartcx{\Xi}{\bm{r}}>]{\apartcx{\GG}{\bm{r}},\oft{a}{A},\oft{b}{B}}{R} \\\\
    \eqtm[\cx!\Phi,\bm{x}!<\Xi>]{\GG,\bm{x}}{Q}{Q'}{\Gel{\bm{x}}{A}{B}{a.b.R}}}
  {\eqtm[\cx<\Xi>]{\GG}{\ungel{\bm{x}.Q}}{\ungel{\bm{x}.Q'}}{\subst{\subst{R}{\bsubst{Q}{\bm{0}}{\bm{x}}}{a}}{\bsubst{Q}{\bm{1}}{\bm{x}}}{b}}}
  \and
  \Infer[\hyperref[rule:gel-beta]{($\Gel$-$\beta$)}]
  {\oftype[\cx<\Xi>]{\GG}{M}{A} \\
    \oftype[\cx<\Xi>]{\GG}{N}{B} \\
    \oftype[\cx<\Xi>]{\GG}{P}{\subst{\subst{R}{M}{a}}{N}{b}}}
  {\eqtm[\cx<\Xi>]{\GG}{\ungel{\bm{x}.\gel{\bm{x}}{M}{N}{P}}}{P}{\subst{\subst{R}{M}{a}}{N}{b}}}
  \and
  \Infer[\rulename{$\Gel$-$\eta$}]
  {\oftype[\cx!\Phi,\bm{x}!<\Xi>]{\GG,\bm{x}}{Q}{\Gel{\bm{x}}{A}{B}{a.b.R}}}
  {\eqtm[\cx<\Xi>]{\GG}{\bsubst{Q}{\bm{r}}{\bm{x}}}{\gel{\bm{r}}{\bsubst{Q}{\bm{0}}{\bm{x}}}{\bsubst{Q}{\bm{1}}{\bm{x}}}{\ungel{\bm{x}.Q}}}{\Gel{\bm{r}}{A}{B}{a.b.R}}}
\end{mathpar}

%%% Local Variables:
%%% mode: latex
%%% TeX-master: "main"
%%% End:

\section{Relativity}
\label{sec:relativity}

With all the pieces in place, we can prove the long-awaited correspondence between bridges in the universe
$\UKan$ and type-valued relations. For the remainder of this section, we fix $\coftype{A,B}{\UKan}$.

\begin{notation}
  For $\coftype{R}{\prd{A}{B} \to \UKan}$, we abbreviate $\Gel{\bm{r}}{A}{B}{a.b.R\pair{a}{b}}$ as
  $\Gel{\bm{r}}{A}{B}{R}$. We will also avail ourselves of pattern-matching notation for products:
  $\lam{\pair{a}{b}}{M}$ is short for $\lam{t}{\subst{\subst{M}{\fst{t}}{a}}{\snd{t}}{b}}$.
\end{notation}

\begin{lemma}
  \label{lem:link-unlink}
  For any $\coftype{R}{\prd{A}{B} \to \UKan}$, $\coftype{M}{A}$, and $\coftype{N}{B}$, the types
  $R\pair{M}{N}$ and $\Bridge{\bm{x}.\Gel{\bm{x}}{A}{B}{R}}{M}{N}$ are equivalent.
\end{lemma}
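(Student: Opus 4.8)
The plan is to show the two types are in fact \emph{strictly} isomorphic and then invoke \cref{rec:qequiv-to-equiv}. Define a forward map $R\pair{M}{N} \to \Bridge{\bm{x}.\Gel{\bm{x}}{A}{B}{R}}{M}{N}$ sending $\oft{p}{R\pair{M}{N}}$ to $\blam{\bm{z}}{\gel{\bm{z}}{M}{N}{p}}$, and a backward map sending $q : \Bridge{\bm{x}.\Gel{\bm{x}}{A}{B}{R}}{M}{N}$ to $\ungel{\bm{x}.\bapp{q}{\bm{x}}}$, where in each case the bridge variable is chosen fresh. For the forward map: by \cref{rule:gel-I} the body $\gel{\bm{z}}{M}{N}{p}$ lies in $\Gel{\bm{z}}{A}{B}{R}$, and by \cref{rule:gel-I0,rule:gel-I1} its $\bsubst{}{\bm{0}}{\bm{z}}$- and $\bsubst{}{\bm{1}}{\bm{z}}$-faces are $M$ and $N$, so \cref{rule:bridge-I} produces a term of the desired bridge type. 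For the backward map: since $\bm{x}$ is fresh for $q$, \cref{rule:bridge-E} gives that $\bapp{q}{\bm{x}}$ has type $\Gel{\bm{x}}{A}{B}{R}$ (in the context extended by $\bm{x}$), whence \cref{rule:gel-E} places $\ungel{\bm{x}.\bapp{q}{\bm{x}}}$ in $R\pair{\bapp{q}{\bm{0}}}{\bapp{q}{\bm{1}}}$, which equals $R\pair{M}{N}$ by \cref{rule:bridge-betapartial}.

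Next I would check that both composites are the identity up to judgmental equality. One way: for $\oft{p}{R\pair{M}{N}}$, \cref{rule:bridge-betaF} gives $\bapp{(\blam{\bm{z}}{\gel{\bm{z}}{M}{N}{p}})}{\bm{x}} \eq \gel{\bm{x}}{M}{N}{p}$, so by \cref{rule:gel-E} (used as a congruence) followed by \cref{rule:gel-beta} we get $\ungel{\bm{x}.\bapp{(\blam{\bm{z}}{\gel{\bm{z}}{M}{N}{p}})}{\bm{x}}} \eq \ungel{\bm{x}.\gel{\bm{x}}{M}{N}{p}} \eq p$. The other way: for $q$ as above, instantiate \cref{rule:gel-eta} with the term $\bapp{q}{\bm{x}}$ (typed at $\Gel{\bm{x}}{A}{B}{R}$ in the context extended by $\bm{x}$) and with its ``$\bm{r}$'' taken to be $\bm{x}$ itself, obtaining $\bapp{q}{\bm{x}} \eq \gel{\bm{x}}{\bapp{q}{\bm{0}}}{\bapp{q}{\bm{1}}}{\ungel{\bm{x}.\bapp{q}{\bm{x}}}}$ in $\Gel{\bm{x}}{A}{B}{R}$. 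Rewriting $\bapp{q}{\bm{0}},\bapp{q}{\bm{1}}$ to $M,N$ via \cref{rule:bridge-betapartial}, applying \cref{rule:bridge-I}, and then using \cref{rule:bridge-eta} yields $q \eq \blam{\bm{x}}{\bapp{q}{\bm{x}}} \eq \blam{\bm{x}}{\gel{\bm{x}}{M}{N}{\ungel{\bm{x}.\bapp{q}{\bm{x}}}}}$, which is precisely the image of $\ungel{\bm{x}.\bapp{q}{\bm{x}}}$ under the forward map.

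Since both composites are strictly the identity, the two maps assemble into a quasi-equivalence whose two homotopies may be taken to be the constant paths (each round-trip face being judgmentally equal to its target endpoint), so \cref{rec:qequiv-to-equiv} delivers the required equivalence $R\pair{M}{N} \simeq \Bridge{\bm{x}.\Gel{\bm{x}}{A}{B}{R}}{M}{N}$. I expect the only delicate point to be the context bookkeeping in the second composite: because bridge dimensions are substructural, one must introduce the binder $\bm{x}$ strictly after $q$ in the term context and check that the use of \cref{rule:gel-eta} genuinely has $\bm{x}$ available in the role of its bridge-dimension argument — the same freshness phenomenon that forced the introduction of $\extent$ in \cref{sec:compound}. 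Everything else is a routine chain of the $\Gel$- and $\Bridge$-rules.
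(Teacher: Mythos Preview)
Your proposal is correct and follows essentially the same route as the paper: the same forward map $p \mapsto \blam{\bm{x}}{\gel{\bm{x}}{M}{N}{p}}$ and backward map $q \mapsto \ungel{\bm{x}.\bapp{q}{\bm{x}}}$, shown to be mutually inverse up to exact equality via the $\beta$/$\eta$ rules for $\Gel$ (together with the $\Bridge$ rules), then upgraded to an equivalence by \cref{rec:qequiv-to-equiv}. Your write-up simply unpacks the rule invocations that the paper leaves implicit; the only cosmetic wrinkle is the variable naming in your use of \cref{rule:gel-eta} (the rule's bound variable must be taken fresh, with $\bm{r}$ set to the ambient $\bm{x}$), which you evidently intend.
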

\begin{proof}
  By \cref{rec:qequiv-to-equiv}. In the forward direction we send $p$ to
  $\blam{\bm{x}}{\gel{\bm{x}}{M}{N}{p}}$, while in the reverse direction we send $q$ to
  $\ungel{\bm{x}.\bapp{q}{\bm{x}}}$. These are mutual inverses (up to exact equality, in fact) by the $\beta$
  and $\eta$ rules for $\Gel$-types (\cref{rule:gel-beta,rule:gel-eta}).
\end{proof}

\begin{theorem}[Relativity]
  \label{thm:relativity}
  The map
  \[
    \coftype{\bridgetorel \eqdef \lam{C}{\lam{\pair{a}{b}}{\Bridge{\bm{x}.\bapp{C}{\bm{x}}}{a}{b}}}}{\Bridge{\UKan}{A}{B} \to (\prd{A}{B} \to \UKan)}
  \]
  is an equivalence.
\end{theorem}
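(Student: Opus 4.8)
The plan is to apply the quasi-equivalence criterion of \cref{rec:qequiv-to-equiv}: exhibit a backward map and check both round-trips up to a path. The obvious candidate inverse sends a relation $\coftype{R}{\prd{A}{B} \to \UKan}$ to the bridge $\blam{\bm{x}}{\Gel{\bm{x}}{A}{B}{R}}$, which is a well-formed element of $\Bridge{\UKan}{A}{B}$ since $\Gel{\bm{0}}{A}{B}{R} \eq A$ and $\Gel{\bm{1}}{A}{B}{R} \eq B$ by \cref{rule:gel-F0,rule:gel-F1}; write $\mathsf{rb}$ for this map. It then remains to supply, for each $R$, a path from $\bridgetorel(\mathsf{rb}(R))$ to $R$, and for each $\coftype{C}{\Bridge{\UKan}{A}{B}}$, a path from $\mathsf{rb}(\bridgetorel(C))$ to $C$; \cref{rec:qequiv-to-equiv} then produces the equivalence and guarantees that its forward map is $\bridgetorel$, as required.

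The first round-trip is routine. Unfolding definitions and using $\Bridge$-$\beta$, the composite $\bridgetorel(\mathsf{rb}(R))$ is $\lam{\pair{a}{b}}{\Bridge{\bm{x}.\Gel{\bm{x}}{A}{B}{R}}{a}{b}}$. By \cref{lem:link-unlink}, for each $a,b$ the type $\Bridge{\bm{x}.\Gel{\bm{x}}{A}{B}{R}}{a}{b}$ is equivalent to $R\pair{a}{b}$; postcomposing with the univalence map $\lam{e}{\dlam{x}{\V{x}{-}{-}{e}}}$ turns this into a family of paths indexed by $\prd{A}{B}$, and cubical function extensionality (pushing the dimension binder past the $\lambda$) assembles it into the desired path in $\prd{A}{B} \to \UKan$.

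The second round-trip is the crux. By $\Bridge$-$\eta$ it suffices to build, for a fresh bridge variable $\bm{x}$, a path in $\Path{\UKan}{\Gel{\bm{x}}{A}{B}{\bridgetorel(C)}}{\bapp{C}{\bm{x}}}$ that is reflexivity on the faces $\bm{x} = \bm{0}$ and $\bm{x} = \bm{1}$ (where both sides reduce to $A$, resp. $B$), and then transpose this bridge-of-paths into a path-of-bridges exactly as in \cref{thm:path-bridge}, obtaining a path from $\blam{\bm{x}}{\Gel{\bm{x}}{A}{B}{\bridgetorel(C)}} = \mathsf{rb}(\bridgetorel(C))$ to $\blam{\bm{x}}{\bapp{C}{\bm{x}}} = C$. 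To produce the path I would first construct, for generic $\bm{x}$, an equivalence $\Gel{\bm{x}}{A}{B}{\bridgetorel(C)} \simeq \bapp{C}{\bm{x}}$ using \cref{cor:equiv-bridge}: on the faces $\bm{x} = \bm{\Ge}$ take $\ideq{}$, and for the required equivalence of bridge types $\Bridge{\bm{x}.\Gel{\bm{x}}{A}{B}{\bridgetorel(C)}}{a_{\bm{0}}}{a_{\bm{1}}} \simeq \Bridge{\bm{x}.\bapp{C}{\bm{x}}}{a_{\bm{0}}}{a_{\bm{1}}}$ invoke \cref{lem:link-unlink} again — its left-hand side $\bridgetorel(C)\pair{a_{\bm{0}}}{a_{\bm{1}}}$ being, by definition of $\bridgetorel$, exactly $\Bridge{\bm{x}.\bapp{C}{\bm{x}}}{a_{\bm{0}}}{a_{\bm{1}}}$. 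Applying univalence to the resulting equivalence gives a path; because \cref{cor:equiv-bridge} only matches $\ideq{}$ up to exact equality on the faces while the univalence map does not send $\ideq{}$ to reflexivity on the nose, one last correction by an $\hcom$ in $\UKan$ with tube faces at $\bm{x}=\bm{0}$ and $\bm{x}=\bm{1}$ — precisely the maneuver used to close the proof of \cref{cor:equiv-bridge}, with the correcting paths coming from the standard fact that $\mathsf{ua}$ of an identity equivalence is path-equal to reflexivity — forces the endpoints to be reflexivity.

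I expect this last boundary bookkeeping in the second round-trip to be the main obstacle: the two composite halves and all the local constructions are direct uses of lemmas already in hand (chiefly \cref{lem:link-unlink} and \cref{cor:equiv-bridge}), but making the endpoints of the transposed path land on $A$ and $B$ strictly, rather than up to homotopy, is where the work lies.
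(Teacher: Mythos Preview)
Your proposal is correct and follows essentially the same route as the paper: the same backward map $R \mapsto \blam{\bm{x}}{\Gel{\bm{x}}{A}{B}{R}}$, the same use of \cref{lem:link-unlink} for the first round-trip, and the same reduction of the second round-trip via \cref{thm:path-bridge}, \cref{cor:equiv-bridge}, and \cref{lem:link-unlink}. The only organizational difference is that the paper phrases the second round-trip as a chain of type equivalences---transposing to a bridge of paths, then using univalence (plus the fact that identity equivalences correspond to constant paths) to pass to a bridge of equivalences, then invoking \cref{cor:equiv-bridge}---whereas you build the $\bm{x}$-indexed equivalence first and then apply univalence and an explicit $\hcom$ correction; these are two packagings of the same boundary adjustment.
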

\begin{proof}
  By \cref{rec:qequiv-to-equiv}. For our candidate inverse, we take $\reltobridge{A}{B}{-}$ defined by
  \[
    \coftype{\reltobridge{A}{B}{R} \eqdef \blam{\bm{x}}{\Gel{\bm{x}}{A}{B}{R}}}{\Bridge{\UKan}{A}{B}}.
  \]
  For the first inverse condition, we have $R : A \times B \to \UKan$ and want a path from
  $\bridgetorel{\reltobridge{A}{B}{R}}$ to $R$ in $A \times B \to \UKan$. The former is equal to
  $\lam{\pair{a}{b}}{\Bridge{\bm{x}.\Gel{\bm{x}}{A}{B}{R}}{a}{b}}$ by $\beta$-reduction for
  $\Bridge$-types. For all $a : A$ and $b : B$, the type $\Bridge{\bm{x}.\Gel{\bm{x}}{A}{B}{R}}{a}{b}$ is
  equivalent to $R\pair{a}{b}$ by \cref{lem:link-unlink}. By univalence and function extensionality for paths,
  we thus have a path between $\lam{\pair{a}{b}}{\Bridge{\bm{x}.\Gel{\bm{x}}{A}{B}{R}}{a}{b}}$ and
  $\lam{\pair{a}{b}}{R\pair{a}{b}}$, the latter of which is equal to $R$ by the $\eta$-rules for product and
  function types.

  For the second inverse condition, we have $C : \Bridge{\UKan}{A}{B}$ and want a path from
  $\reltobridge{A}{B}{\bridgetorel{C}}$ to $C$ in $\Bridge{\UKan}{A}{B}$. The former is equal to
  $\blam{\bm{x}}{\Gel{\bm{x}}{A}{B}{a.b.\Bridge{\bm{x}.\bapp{C}{\bm{x}}}{a}{b}}}$, so we want to construct a
  term of type
  \begin{gather*}
    \Path{\Bridge{\UKan}{A}{B}}{\blam{\bm{x}}{\Gel{\bm{x}}{A}{B}{a.b.\Bridge{\bm{x}.\bapp{C}{\bm{x}}}{a}{b}}}}{\blam{\bm{x}}{\bapp{C}{\bm{x}}}}.
  \end{gather*}
  By swapping the bridge and path binders (\cref{thm:path-bridge}), this is the same as constructing a term of
  type
  \begin{gather*}
    \Bridge{\bm{x}.\Path{\UKan}{\Gel{\bm{x}}{A}{B}{a.b.\Bridge{\bm{x}.\bapp{C}{\bm{x}}}{a}{b}}}{\bapp{C}{\bm{x}}}}{\dlam{\_}{A}}{\dlam{\_}{B}}.
  \end{gather*}
  By the univalence axiom and the fact that identity equivalences correspond to constant paths under
  univalence (see \citepalias[\S2.10]{hott-book}), this is equivalent to constructing a path of type
  \begin{align*}
    \Bridge{\bm{x}.\Equiv{\Gel{\bm{x}}{A}{B}{a.b.\Bridge{\bm{x}.\bapp{C}{\bm{x}}}{a}{b}}}{\bapp{C}{\bm{x}}}}{\ideq{A}}{\ideq{B}}.
  \end{align*}
  By applying \cref{cor:equiv-bridge}, we can further reduce this to constructing a term of the following type.
  \begin{align*}
    \picl{a}{A}{\picl{b}{B}{(\Bridge{\bm{x}.\Gel{\bm{x}}{A}{B}{a.b.\Bridge{\bm{x}.\bapp{C}{\bm{x}}}{a}{b}}}{a}{b} \simeq \Bridge{\bm{x}.\bapp{C}{\bm{x}}}{a}{b})}}.
  \end{align*}
  Finally, we have such a term by \cref{lem:link-unlink}.
\end{proof}

\begin{remark}
  Beyond its use in the proof of relativity (via \cref{cor:equiv-bridge}), we observe that $\extent$ is
  also necessary to derive its higher-dimensional instances. For example, consider a two-dimensional bridge in
  the universe as shown below.
  \[
    \begin{tikzpicture}
      \draw (0 , 2) [thick,->] to node [above] {\small $\bm{x}$} (0.5 , 2) ;
      \draw (0 , 2) [thick,->] to node [left] {\small $\bm{y}$} (0 , 1.5) ;
      % alignment hack
      \node () at (7, 2) { };
      % nodes
      \node (tl) at (1.5 , 2) {$A_{00}$} ;
      \node (tr) at (5.5 , 2) {$A_{10}$} ;
      \node (bl) at (1.5 , 0) {$A_{01}$} ;
      \node (br) at (5.5 , 0) {$A_{11}$} ;
      % edges
      \draw (tl) [thick,->] to node [above] {$P_0$} (tr) ;
      \draw (tl) [thick,->] to node [left] {$Q_0$} (bl) ;
      \draw (tr) [thick,->] to node [right] {$Q_1$} (br) ;
      \draw (bl) [thick,->] to node [below] {$P_1$} (br) ;
    \end{tikzpicture}
  \]
  By relativity, the type $\Bridge{\bm{y}.\Bridge{\UKan}{Q_0}{Q_1}}{\blam{\bm{x}}{P_0}}{\blam{\bm{x}}{P_1}}$
  of fillers for this square is equivalent to the following type.
  \[
    \Bridge{\bm{y}.Q_0 \times Q_1 \to \UKan}{\lam{q_0}{\lam{q_1}{\Bridge{\bm{x}.P_0}{q_0}{q_1}}}}{\lam{q_0}{\lam{q_1}{\Bridge{\bm{x}.P_1}{q_0}{q_1}}}}
  \]
  To simplify further, we need a characterization of bridges in a function type. With
  \cref{thm:function-bridge} and a second application of relativity, we see that the type is indeed equivalent
  to the following type of two-dimensional relations on the boundary of the square.
  \[
    \begin{array}{l}
      \picl*{a_{00}}{A_{00}}{\picl*{a_{01}}{A_{01}}{\picl*{a_{10}}{A_{10}}{\picl*{a_{11}}{A_{11}}{}}}} \\
      \quad \to \Bridge{\bm{x}.P_0}{a_{00}}{a_{10}} \to \Bridge{\bm{x}.P_1}{a_{01}}{a_{11}} \\
      \quad \to \Bridge{\bm{y}.Q_1}{a_{00}}{a_{01}} \to \Bridge{\bm{y}.Q_1}{a_{10}}{a_{11}} \\
      \quad \to \UKan
    \end{array}
  \]
  We will make use of such a two-dimensional relation (in a slightly massaged form) in our proof that the
  booleans are bridge-discrete (\cref{sec:examples:bool}).
\end{remark}

%%% Local Variables:
%%% mode: latex
%%% TeX-master: "main"
%%% End:

\section{Bridge-discrete types}
\label{sec:bridge-discrete}

Part of the standard parametricity toolkit is the \emph{identity extension lemma} \citep{reynolds83}, which
implies in particular that the relational interpretation of a closed type is the identity relation. In our
setting, the analogous result would be to have $\Bridge{\bm{x}.A}{M}{M'} \simeq \Path{\bm{x}.A}{M}{M'}$
whenever $\bm{x}$ does not occur in $A$. However, we follow \citeauthor{bernardy15}\ in imposing no such
condition on types. The condition is of course violated by the universe $\UKan$, where bridges are relations
but paths are equivalences; as the theory stands, we could consistently impose it on types in $\UKan$ (i.e.,
small types), but it is debatable whether this is desirable. For example, \cite{nuyts17} productively use a
type $\mathsf{Size}$ which has discrete $\Path$ but codiscrete $\Bridge$ structure. Moreover, we would need to
give a computational interpretation of such an axiom in any case. We will therefore take a conservative
approach: we internally define a sub-universe $\UBDisc$ of \emph{bridge-discrete types}, show it is closed
under various type formers, and use it in place of $\UKan$ when the condition is necessary for a proof.

There is a canonical family of maps taking paths in $A$ to bridges in $A$. We will say that $A$ is
bridge-discrete when this map is an equivalence. This choice of definition ensures that bridge-discreteness is
a \emph{proposition}, that is, that any two proofs that $A$ is bridge-discrete are equal up to a
path. However, we observe below that it suffices to construct \emph{any} family of equivalences between
$\Bridge{A}{-}{-}$ and $\Path{A}{-}{-}$; indeed, it is enough to show that the former is a retract of the
latter. (This is a consequence of a standard lemma used to characterize path spaces in homotopy type theory.)

\begin{definition}
  Given $\cwftypek{A}$, $\coftype{M,N}{A}$, and $\coftype{P}{\Path{A}{M}{N}}$, we define a term
  $\coftype{\loosen{A}{P}}{\Bridge{A}{M}{N}}$ by
  $\loosen{A}{P} \eqdef \coe{z.\Bridge{A}{\dapp{P}{0}}{\dapp{P}{z}}}{0}{1}{\blam{\_}{\dapp{P}{0}}}$.
\end{definition}

\begin{lemma}
  \label{lem:loosen-refl}
  For any $\cwftypek{A}$ and $\coftype{M}{A}$, we have a term
  \[
    \coftype{\loosenrefl{A}{M}}{\Path{\Bridge{A}{M}{M}}{\loosen{A}{\dlam{\_}{M}}}{\blam{\_}{M}}}.
  \]
\end{lemma}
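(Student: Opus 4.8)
The plan is to observe that, after $\beta$-reduction, $\loosen{A}{\dlam{\_}{M}}$ is a coercion along a line of types that is constant in the coercion direction, and that such coercions are connected to the identity by the standard ``coerce from $w$ to $1$'' maneuver.

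Concretely, I would first unfold the definition. By the $\Path$-$\beta$ rule we have $\dapp{(\dlam{\_}{M})}{0} \eq M$ and $\dapp{(\dlam{\_}{M})}{z} \eq M$, so the line inside $\loosen$ collapses and
\[
  \loosen{A}{\dlam{\_}{M}} \;\eq\; \coe{z.\Bridge{A}{M}{M}}{0}{1}{\blam{\_}{M}},
\]
where $z$ does not occur in the line $z.\Bridge{A}{M}{M}$. Here $\Bridge{A}{M}{M}$ is a Kan type by \cref{thm:bridge-coe-kan,thm:bridge-hcom-kan}, so this coercion is legitimate and obeys the conditions of \cref{def:coe-kan}. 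I would then define
\[
  \loosenrefl{A}{M} \;\eqdef\; \dlam{w}{\coe{z.\Bridge{A}{M}{M}}{w}{1}{\blam{\_}{M}}}.
\]
Because the coerced line does not mention $z$, for every path dimension $w$ the body has type $\Bridge{A}{M}{M}$, so this term indeed defines a path in $\Bridge{A}{M}{M}$.

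It remains to identify the two faces. At $w = 0$ the body is $\coe{z.\Bridge{A}{M}{M}}{0}{1}{\blam{\_}{M}}$, which is $\loosen{A}{\dlam{\_}{M}}$ by the displayed equation. At $w = 1$ the body is $\coe{z.\Bridge{A}{M}{M}}{1}{1}{\blam{\_}{M}}$, which equals $\blam{\_}{M}$ by the reflexivity clause of the $\coe$-Kan condition for $\Bridge{A}{M}{M}$. Hence $\loosenrefl{A}{M}$ has the asserted type $\Path{\Bridge{A}{M}{M}}{\loosen{A}{\dlam{\_}{M}}}{\blam{\_}{M}}$. There is no serious obstacle here: the only care needed is the definitional bookkeeping — checking that $\Path$-$\beta$ collapses the line inside $\loosen$ and that the $r = s$ coercion rule fires at $w = 1$ — and the argument is just the familiar fact that coercion along a degenerate line is homotopic to the identity, instantiated at $\Bridge{A}{M}{M}$.
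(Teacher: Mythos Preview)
Your proposal is correct and essentially identical to the paper's proof: the paper defines $\loosenrefl{A}{M} \eqdef \dlam{z}{\coe{\_.\Bridge{A}{M}{M}}{z}{1}{\blam{\_}{M}}}$, which is the same term as yours up to the choice of variable names. The endpoint checks you spell out are exactly the ones needed.
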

\begin{proof}
  Take $\loosenrefl{A}{M} \eqdef \dlam{z}{\coe{\_.\Bridge{A}{M}{M}}{z}{1}{\blam{\_}{M}}}$.
\end{proof}

\begin{definition}
  We say that $\cwftypek{A}$ is \emph{bridge-discrete} when the type
  \[
    \isBDisc{A} \eqdef \picl{a,a'}{A}{\isEquiv{\Path{A}{a}{a'}}{\Bridge{A}{a}{a'}}{\lam{p}{\loosen{A}{p}}}}
  \]
  is inhabited.
\end{definition}

\begin{lemma}
  \label{lem:isbdisc-prop}
  For any $\cwftypek{A}$, $\isBDisc{A}$ is a proposition.
\end{lemma}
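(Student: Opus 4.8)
The plan is to read off $\isBDisc{A}$ as a nested dependent function type whose ultimate fibers are instances of $\isEquiv{\_}{\_}{\_}$, and then combine two facts: that $\isEquiv{\_}{\_}{\_}$ is always a proposition (\cref{rec:isequiv-is-prop}), and that dependent function types valued in a family of propositions are again propositions. Unfolding the definition, $\isBDisc{A}$ is $\picl{a}{A}{\picl{a'}{A}{\isEquiv{\Path{A}{a}{a'}}{\Bridge{A}{a}{a'}}{\lam{p}{\loosen{A}{p}}}}}$. For each $\coftype{a,a'}{A}$ the innermost type is a proposition by \cref{rec:isequiv-is-prop}; hence the inner $\picl{a'}{A}{\cdots}$ is a proposition, and hence so is the outer $\picl{a}{A}{\cdots}$, which is exactly $\isBDisc{A}$.

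The only ingredient not already recorded in the excerpt is that propositions are closed under dependent function types; this is a standard consequence of function extensionality for paths (see \citepalias[Example 3.6.2]{hott-book}), which cubical type theory supplies. For completeness I would include the short direct argument: given $\cwftypek{A}$, a family $B$ over $A$, and $\coftype{h}{\picl{a}{A}{\isProp{B}}}$, for any $\coftype{f,g}{\picl{a}{A}{B}}$ the term
\[
  \dlam{x}{\lam{a}{\dapp{(h\,a\,(f\,a)\,(g\,a))}{x}}}
\]
inhabits $\Path{\picl{a}{A}{B}}{f}{g}$, the endpoint equations following from the path $\beta_{\bm{\Ge}}$ rule and the $\eta$-law for function types. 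Applying this twice to the two nested $\Pi$-layers of $\isBDisc{A}$, with $h$ taken to be (pointwise) the propositionality witness from \cref{rec:isequiv-is-prop}, produces a path between any two elements of $\isBDisc{A}$.

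There is no real obstacle here: every step is routine, and the only point requiring a little care is the endpoint bookkeeping in the function-extensionality computation, which is immediate from $\eta$. I would therefore present the lemma as a direct corollary of \cref{rec:isequiv-is-prop} and closure of propositions under $\Pi$, with the explicit path term spelled out only as much as is needed to make the endpoints visible.
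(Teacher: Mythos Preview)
Your proposal is correct and matches the paper's own proof essentially verbatim: the paper simply invokes \cref{rec:isequiv-is-prop} together with the fact that a dependent function type into a family of propositions is a proposition \citepalias[Example 3.6.2]{hott-book}. Your additional explicit function-extensionality term is a harmless elaboration of the latter fact.
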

\begin{proof}
  By \cref{rec:isequiv-is-prop} and the fact that a function type is a proposition when its codomain is a
  proposition \citepalias[Example 3.6.2]{hott-book}.
\end{proof}

\begin{definition}
  Given $\cwftypek{A,B}$ and $\coftype{G}{B \to A}$ we define the type
  \[
    \Retract{A}{B} \eqdef \sigmacl{f}{A \to B}{\sigmacl{g}{B \to A}{\picl{a}{A}{\Path{A}{g(f(a))}{a}}}}.
  \]
  When $\Retract{A}{B}$ is inhabited, we say that \emph{$A$ is a retract of $B$}.
\end{definition}

\begin{lemma}
  \label{lem:path-retract-to-equiv}
  Let $\coftype{A}{\UKan}$ and $\wftypek{\oft{a,a'}{A}}{R}$ be given. If there exists a family of retracts
  $\coftype{S}{\picl{a,a'}{A}{\Retract{Raa'}{\Path{A}{a}{a'}}}}$, then $\fst{Saa'}$ is an equivalence for all
  $a,a' : A$.
\end{lemma}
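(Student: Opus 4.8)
The plan is to run the standard ``fundamental theorem of identity types'' strategy: a fiberwise map between type families over $A$ is a fiberwise equivalence exactly when the map it induces on total spaces is an equivalence, so it suffices to exhibit both total spaces as contractible and then invoke \citepalias[Theorem 4.7.7]{hott-book}.

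Concretely, I would first fix $a : A$ and, for each $a' : A$, abbreviate $f_{a'} \eqdef \fst{S\,a\,a'} : R\,a\,a' \to \Path{A}{a}{a'}$ together with its section $g_{a'} \eqdef \fst{\snd{S\,a\,a'}}$ and the retraction homotopy $\snd{\snd{S\,a\,a'}}$, so that $R\,a\,a'$ is a retract of $\Path{A}{a}{a'}$ uniformly in $a'$. Next I would check that $\sigmacl{a'}{A}{\Path{A}{a}{a'}}$ is contractible: instantiating \cref{rec:singleton-contractibility} at the type $A$ and the point $a$ gives contractibility of $\sigmacl{b}{A}{\Path{A}{b}{a}}$, and applying path inversion fiberwise---an equivalence---produces a (hence total) equivalence with $\sigmacl{a'}{A}{\Path{A}{a}{a'}}$, across which contractibility transports. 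Then I would observe that $\sigmacl{a'}{A}{R\,a\,a'}$ is a retract of $\sigmacl{a'}{A}{\Path{A}{a}{a'}}$---the section $\lam{t}{\pair{\fst{t}}{f_{\fst{t}}(\snd{t})}}$, retraction $\lam{t}{\pair{\fst{t}}{g_{\fst{t}}(\snd{t})}}$, and the homotopy all act only on the second coordinate---so by \citepalias[Lemma 3.11.7]{hott-book} (a retract of a contractible type is contractible) the type $\sigmacl{a'}{A}{R\,a\,a'}$ is contractible as well.

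Finally, the fiberwise map $\lam{a'}{f_{a'}}$ induces on total spaces a map $\sigmacl{a'}{A}{R\,a\,a'} \to \sigmacl{a'}{A}{\Path{A}{a}{a'}}$ between contractible types, which is an equivalence (any map between contractible types is one, as both are equivalent to $\unit$); by \citepalias[Theorem 4.7.7]{hott-book} the transformation $\lam{a'}{f_{a'}}$ is then a fiberwise equivalence, i.e., $\fst{S\,a\,a'}$ is an equivalence for every $a' : A$, and since $a$ was arbitrary this holds for all $a, a' : A$. The argument is essentially off-the-shelf homotopy type theory; the only points that need any care are the orientation mismatch between the based path space supplied by \cref{rec:singleton-contractibility} and the one needed here (which is why path inversion appears) and the routine fact that fiberwise retracts are closed under $\Sigma$, so I do not expect a real obstacle.
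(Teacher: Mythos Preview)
Your proposal is correct and follows essentially the same approach as the paper: lift the fiberwise retract to a retract of total spaces, use singleton contractibility and \citepalias[Lemma 3.11.7]{hott-book} to conclude both total spaces are contractible, then apply \citepalias[Theorem 4.7.7]{hott-book}. You are in fact slightly more careful than the paper in two places---you write the argument for general $R$ (the paper's sketch inadvertently specializes to $\Bridge$ and $\loosen$), and you note the orientation mismatch with \cref{rec:singleton-contractibility} that the paper glosses over.
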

\begin{proof}
  This is a known result in homotopy type theory,\footnote{See \url{https://github.com/HoTT/book/issues/718}.}
  but we sketch the proof for lack of a convenient reference. The term $S$ straightforwardly gives rise to a
  family of retracts of total spaces, a term of the following type.
  \[
    \picl{a}{A}{\Retract{\sigmacl{a'}{A}{\Bridge{A}{a}{a'}}}{\sigmacl{a'}{A}{\Path{A}{a}{a'}}}}
  \]
  For each $a:A$, the type $\sigmacl{a'}{A}{\Path{A}{a}{a'}}$ is contractible by
  \cref{rec:singleton-contractibility}; any retract of a contractible type is also contractible
  \citepalias[Lemma 3.11.7]{hott-book}, so $\sigmacl{a'}{A}{\Bridge{A}{a}{a'}}$ is as well. Any function
  between contractible types is an equivalence, so in particular the function
  \[
    \oftype{\oft{a}{A}}{\lam{\pair{a'}{p}}{\pair{a'}{\loosen{A}{p}}}}{(\sigmacl{a'}{A}{\Path{A}{a}{a'}}) \to (\sigmacl{a'}{A}{\Bridge{A}{a}{a'}})}
  \]
  is an equivalence. This equivalence on total spaces implies a fiberwise equivalence: for every $a':A$, the
  function $\lam{p}{\loosen{A}{p}}$ is an equivalence \citepalias[Theorem 4.7.7]{hott-book}.
\end{proof}

\begin{corollary}
  \label{cor:retract-to-discrete}
  Let $\coftype{A}{\UKan}$. If $\Bridge{A}{a}{a'}$ is a retract of $\Path{A}{a}{a'}$ for all $a,a' : A$, then
  $A$ is bridge-discrete. In particular, if $\Bridge{A}{a}{a'}$ and $\Path{A}{a}{a'}$ are equivalent for all
  $a,a' : A$, then $A$ is bridge-discrete.
\end{corollary}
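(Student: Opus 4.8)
The plan is to derive this directly from \cref{lem:path-retract-to-equiv}. Unfolding definitions, to prove $A$ bridge-discrete I must inhabit $\isBDisc{A} = \picl{a,a'}{A}{\isEquiv{\Path{A}{a}{a'}}{\Bridge{A}{a}{a'}}{\lam{p}{\loosen{A}{p}}}}$, i.e.\ show that $\lam{p}{\loosen{A}{p}}$ is an equivalence for every $a, a' : A$. I would apply \cref{lem:path-retract-to-equiv} with the family given by $Raa' \eqdef \Bridge{A}{a}{a'}$: the hypothesis of the corollary is exactly a term of type $\picl{a,a'}{A}{\Retract{\Bridge{A}{a}{a'}}{\Path{A}{a}{a'}}}$, which is the datum that lemma requires. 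Recall that the proof of \cref{lem:path-retract-to-equiv} proceeds by noting that $\sigmacl{a'}{A}{\Path{A}{a}{a'}}$ is contractible (\cref{rec:singleton-contractibility}), that any retract of a contractible type is contractible so $\sigmacl{a'}{A}{\Bridge{A}{a}{a'}}$ is contractible as well, and that $\lam{p}{\loosen{A}{p}}$ lifted to these total spaces is then a map between contractible types, hence an equivalence, hence a fiberwise equivalence; thus the lemma delivers precisely that $\lam{p}{\loosen{A}{p}}$ is an equivalence for all $a, a'$. Abstracting over $a$ and $a'$ gives the desired inhabitant of $\isBDisc{A}$.

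For the ``in particular'' clause, I would just remark that an equivalence is a retract: given $E : \Bridge{A}{a}{a'} \simeq \Path{A}{a}{a'}$, passing to the induced quasi-equivalence via \cref{rec:qequiv-to-equiv} gives maps $\fwd{E}$, $\bwd{E}$ and a homotopy $\bwdfwd{E}$ from $\bwd{E} \circ \fwd{E}$ to the identity, so the triple $\pair{\fwd{E}}{\pair{\bwd{E}}{\bwdfwd{E}}}$ inhabits $\Retract{\Bridge{A}{a}{a'}}{\Path{A}{a}{a'}}$. Carrying this out for all $a, a'$ produces the hypothesis of the first part, and bridge-discreteness of $A$ follows.

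There is essentially no obstacle. The one point needing a moment's attention is making sure the equivalence we obtain is the \emph{particular} map $\lam{p}{\loosen{A}{p}}$ named in the definition of $\isBDisc{-}$, rather than merely some equivalence between $\Path{A}{a}{a'}$ and $\Bridge{A}{a}{a'}$; this is automatic because the argument of \cref{lem:path-retract-to-equiv} is carried out throughout in terms of $\loosen$. Equivalently, one can argue directly: once $\sigmacl{a'}{A}{\Bridge{A}{a}{a'}}$ is known contractible, the total-space map induced by $\lam{p}{\loosen{A}{p}}$ is a map between contractible types and hence an equivalence, so $\lam{p}{\loosen{A}{p}}$ is a fiberwise equivalence by the fiberwise-equivalence theorem \citepalias[Theorem 4.7.7]{hott-book}.
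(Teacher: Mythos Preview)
Your proposal is correct and matches the paper's intended argument: the corollary is stated without proof precisely because it follows immediately from \cref{lem:path-retract-to-equiv} applied with $R \eqdef \lam{a}{\lam{a'}{\Bridge{A}{a}{a'}}}$, and you have spelled this out accurately. Your attention to the fact that the lemma's proof concludes with $\loosen{A}$ specifically (rather than an arbitrary equivalence) is well-placed, as that is indeed what the definition of $\isBDisc$ requires.
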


\begin{definition}
  We define the \emph{sub-universe of bridge-discrete types} by
  $\UBDisc \eqdef \sigmacl{X}{\UKan}{\isBDisc{X}}$.
\end{definition}

\begin{lemma}
  \label{lem:compound-bridge-discrete}
  $\UBDisc$ is closed under pair, function, $\Path$-, and $\Bridge$-types, in the sense that each compound
  type is bridge-discrete when its component types are bridge-discrete.
\end{lemma}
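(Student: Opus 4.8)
The plan is to reduce everything to a single, uniform sufficient condition: \emph{a type $T$ is bridge-discrete whenever, for every $t : T$, the ``bridge-based singleton'' $\sigmacl{t'}{T}{\Bridge{T}{t}{t'}}$ is contractible}. This is not literally \cref{cor:retract-to-discrete}, but it follows from the total-space trick used in the proof of \cref{lem:path-retract-to-equiv}: since $\sigmacl{t'}{T}{\Path{T}{t}{t'}}$ is always contractible (\cref{rec:singleton-contractibility}), the map $\pair{t'}{p} \mapsto \pair{t'}{\loosen{T}{p}}$ is a function between contractible types, hence an equivalence, and an equivalence of total spaces induces a fibrewise equivalence (\citepalias[Theorem 4.7.7]{hott-book}), which is exactly inhabitation of $\isBDisc{T}$. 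So for each of the four type formers I will compute $\sigmacl{t'}{T}{\Bridge{T}{t}{t'}}$, rewrite the $\Bridge$-type using the relevant commutation theorem of \cref{sec:compound} (and the standard $\Path$-analogue), reassociate, and collapse contractible pieces coming from bridge-discreteness of the components.

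For a dependent pair $T = \sigmacl{a}{A}{B}$: rewrite $\Bridge{\sigmacl{a}{A}{B}}{w}{w'}$ by \cref{thm:sigma-bridge} and reassociate the nested $\Sigma$'s so that the outermost two components of $\sigmacl{w'}{T}{\Bridge{T}{w}{w'}}$ form the bridge-based singleton $\sigmacl{a'}{A}{\Bridge{A}{\fst{w}}{a'}}$ of $A$ at $\fst{w}$. This is contractible because $A$ is bridge-discrete; its centre is $\pair{\fst{w}}{\loosen{A}{\dlam{\_}{\fst{w}}}}$, which by \cref{lem:loosen-refl} is joined by a path to $\pair{\fst{w}}{\blam{\_}{\fst{w}}}$. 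Contracting the base to this centre (\citepalias[Lemma 3.11.9]{hott-book}) collapses the line $\bm{x}.\subst{B}{\bapp{p}{\bm{x}}}{a}$ occurring in the residual fibre to the constant line $\subst{B}{\fst{w}}{a}$, leaving precisely $\sigmacl{b'}{\subst{B}{\fst{w}}{a}}{\Bridge{\subst{B}{\fst{w}}{a}}{\snd{w}}{b'}}$, the bridge-based singleton of $B$ at $\fst{w}$, contractible since $B$ is bridge-discrete there. The function case $T = \picl{a}{A}{B}$ runs identically, using \cref{thm:function-bridge} together with the $\Sigma$-over-$\Pi$ distributivity law in place of \cref{thm:sigma-bridge}: one contracts a bridge-based singleton of $A$ at each argument (again using \cref{lem:loosen-refl} to replace the resulting constant bridge by a literal $\blam{\_}{-}$), and is left with the bridge-based singleton of $B$, contractible by its bridge-discreteness.

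The $\Path$- and $\Bridge$-type formers follow the same recipe, but the residual contractible piece is now indexed by a dimension variable. For $T = \Path{y.A}{M_0}{M_1}$, \cref{thm:path-bridge} turns $\sigmacl{Q}{T}{\Bridge{T}{P}{Q}}$ into a $\Path{y.-}$ over the line $y.(\sigmacl{a'}{A}{\Bridge{A}{\dapp{P}{y}}{a'}})$ of bridge-based singletons of $A$; each fibre is contractible, uniformly in $y$, and a $\Path{y.-}$ over a uniformly contractible line is contractible --- here one may just reduce the heterogeneous path type to a homogeneous one by coercion along $y$ and invoke contractibility of a single fibre. For $T = \Bridge{\bm{y}.A}{M_0}{M_1}$ one first exchanges the two bridge binders (bridge dimensions admit exchange, so this swap is essentially definitional) to reach the shape of \cref{thm:sigma-bridge}, landing on a $\Bridge{\bm{y}.-}$ over the line $\bm{y}.(\sigmacl{a'}{A}{\Bridge{A}{\bapp{P}{\bm{y}}}{a'}})$. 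This is the delicate point: since there is no coercion along bridge dimensions, one \emph{cannot} reduce a heterogeneous bridge type to a homogeneous one, so ``fibrewise contractible implies contractible'' is not automatic. The fix is that the contractibility witness supplied by $\isBDisc{A}$ is uniform in $\bm{y}$, so it assembles into a bridge of $\mathsf{isContr}$-proofs over the line; projecting this bridge and unfolding it through \cref{thm:sigma-bridge,thm:function-bridge} shows the $\Bridge{\bm{y}.-}$ type is itself contractible. I expect this last case --- and in particular the auxiliary fact ``a bridge over a line of uniformly-contractible types is contractible'' on which it rests --- to be the main obstacle; the remainder is reassociation of $\Sigma$- and $\Pi$-types, the routine $\Path$-analogues of the \cref{sec:compound} commutation theorems, and bookkeeping.
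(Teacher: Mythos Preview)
Your proof is correct, and since the paper's own argument is a single sentence (``straightforward corollaries of \cref{thm:sigma-bridge,thm:path-bridge,thm:function-bridge}''), your sketch is really a way of filling in details the paper omits rather than a competing strategy. Your reduction to contractibility of the bridge-based singleton is sound---it is exactly the total-space step inside the proof of \cref{lem:path-retract-to-equiv}---and for the dependent formers ($\Sigma$ and $\Pi$) it is arguably the cleanest route, because contracting the $A$-singleton collapses the dependent line $\bm{x}.\subst{B}{\bapp{c}{\bm{x}}}{a}$ to a constant line before you ever need to invoke bridge-discreteness of $B$.

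The one place where you take a longer road than necessary is the $\Bridge$-former. Rather than reducing to the auxiliary fact ``a bridge over a uniformly contractible line is contractible'' (which you rightly flag as needing its own argument, essentially a version of \cref{lem:prop-bridge} proved later), you can argue directly via \cref{cor:retract-to-discrete} and \cref{thm:path-bridge}:
\[
\Bridge{T}{P}{Q}
\;\simeq\;
\Bridge{\bm{y}.\Bridge{A_{\bm{y}}}{\bapp{P}{\bm{y}}}{\bapp{Q}{\bm{y}}}}{\blam{\_}{M_0}}{\blam{\_}{M_1}}
\;\simeq\;
\Bridge{\bm{y}.\Path{A_{\bm{y}}}{\bapp{P}{\bm{y}}}{\bapp{Q}{\bm{y}}}}{\dlam{\_}{M_0}}{\dlam{\_}{M_1}}
\;\simeq\;
\Path{T}{P}{Q},
\]
where the first step is exchange of bridge binders, the middle step applies the uniform equivalence supplied by $\isBDisc{A}$ pointwise (and corrects the endpoints with \cref{lem:loosen-refl} via an $\hcom$), and the last step is \cref{thm:path-bridge}. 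The ``apply a uniform equivalence pointwise'' lemma is of the same flavour as your obstacle but slightly more direct, and this packaging explains why the paper's citation list suffices: \cref{thm:path-bridge} does double duty for both the $\Path$- and $\Bridge$-formers. Either way the content is the same; your version is just organised around singletons rather than around \cref{cor:retract-to-discrete}.
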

\begin{proof}
  These are straightforward corollaries of \cref{thm:sigma-bridge,thm:path-bridge,thm:function-bridge}.
\end{proof}

With a little more work, we can show that $\UBDisc$ is also relativistic, in the sense that bridges in
$\UBDisc$ correspond to $\UBDisc$-valued relations on the first components of their endpoints.

\begin{lemma}
  \label{lem:prop-bridge}
  Let $\cwftypek[\cx!\Phi,\bm{x}!]{A}$ and $\coftype[\cx!\Phi,\bm{x}!]{P}{\isProp{A}}$ be given. For any
  $\coftype{M_0}{\bsubst{A}{\bm{0}}{\bm{x}}}$ and $\coftype{M_1}{\bsubst{A}{\bm{1}}{\bm{x}}}$, the type
  $\Bridge{\bm{x}.A}{M_0}{M_1}$ is a proposition.
\end{lemma}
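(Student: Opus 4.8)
The plan is to construct, for any two $q, q' : \Bridge{\bm{x}.A}{M_0}{M_1}$, a path between them; a function doing so uniformly in $q, q'$ is exactly an inhabitant of $\isProp{\Bridge{\bm{x}.A}{M_0}{M_1}}$. Using $\Bridge$-$\eta$ (\cref{rule:bridge-eta}) to present $q$ and $q'$ as $\blam{\bm{x}}{\bapp{q}{\bm{x}}}$ and $\blam{\bm{x}}{\bapp{q'}{\bm{x}}}$, it suffices to produce a term $H$ of type $A$, depending on $\bm{x}$ and a fresh \emph{path} dimension $y$, with $\bsubst{H}{\bm{\Ge}}{\bm{x}} \eq M_{\bm{\Ge}}$ for $\bm{\Ge} \in \{\bm{0},\bm{1}\}$, $\dsubst{H}{0}{y} \eq \bapp{q}{\bm{x}}$, and $\dsubst{H}{1}{y} \eq \bapp{q'}{\bm{x}}$; the desired path is then $\dlam{y}{\blam{\bm{x}}{H}}$, which is well-typed by $\Bridge$-I (\cref{rule:bridge-I}) and $\Bridge$-$\eta$. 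The obvious first guess $H \eqdef \dapp{(P(\bapp{q}{\bm{x}})(\bapp{q'}{\bm{x}}))}{y}$ uses $P$ to connect the two endpoints fiberwise over $\bm{x}$; it has the correct $y$-faces, but at $\bm{x} = \bm{\Ge}$ it restricts, via $\Bridge$-$\beta_{\bm{\Ge}}$ (\cref{rule:bridge-betapartial}), to $\dapp{(\bsubst{P}{\bm{\Ge}}{\bm{x}}(M_{\bm{\Ge}})(M_{\bm{\Ge}}))}{y}$, a loop at $M_{\bm{\Ge}}$ that need not be the constant path, spoiling the bridge endpoints.

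To repair the boundary I would invoke the standard fact that a proposition is a set (\citepalias[\S3.3]{hott-book}): from $\bsubst{P}{\bm{\Ge}}{\bm{x}} : \isProp{\bsubst{A}{\bm{\Ge}}{\bm{x}}}$ it follows that $\Path{\bsubst{A}{\bm{\Ge}}{\bm{x}}}{M_{\bm{\Ge}}}{M_{\bm{\Ge}}}$ is itself a proposition, hence there is a path $\alpha_{\bm{\Ge}}$ from the loop $\bsubst{P}{\bm{\Ge}}{\bm{x}}(M_{\bm{\Ge}})(M_{\bm{\Ge}})$ to $\dlam{\_}{M_{\bm{\Ge}}}$. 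I then take $H$ to be an $\hcom$ in $A$ along a fresh path dimension $z$ whose cap is the naive candidate above and whose tube has four faces: the faces $y = 0$ and $y = 1$ are held constant in $z$ at $\bapp{q}{\bm{x}}$ and $\bapp{q'}{\bm{x}}$ respectively, pinning the path endpoints, while the face $\bm{x} = \bm{\Ge}$ runs along $z \mapsto \dapp{(\dapp{\alpha_{\bm{\Ge}}}{z})}{y}$, deforming the offending loop into the constant path $M_{\bm{\Ge}}$ as $z$ travels from $0$ to $1$. That an $\hcom$ may carry a tube constraint of the form $\bm{x} = \bm{\Ge}$ is exactly the point of the modification to the Kan conditions made in \cref{def:hcom-kan}. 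It then remains to check that the four faces agree with the cap at $z = 0$ and pairwise on overlaps — each nonempty overlap is forced to one of $M_{\bm{0}}, M_{\bm{1}}$, and the $\bm{x}=\bm{0}$/$\bm{x}=\bm{1}$ and $y=0$/$y=1$ overlaps are empty — so $H$ is well-formed; the $\hcom$ computation rules then give $\bsubst{H}{\bm{\Ge}}{\bm{x}} \eq \dapp{(\dapp{\alpha_{\bm{\Ge}}}{1})}{y} \eq M_{\bm{\Ge}}$, $\dsubst{H}{0}{y} \eq \bapp{q}{\bm{x}}$, and $\dsubst{H}{1}{y} \eq \bapp{q'}{\bm{x}}$, as required. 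Hence $\dlam{y}{\blam{\bm{x}}{H}}$ inhabits $\Path{\Bridge{\bm{x}.A}{M_0}{M_1}}{q}{q'}$, and abstracting over $q, q'$ gives the sought term of $\isProp{\Bridge{\bm{x}.A}{M_0}{M_1}}$.

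The only step beyond routine boundary bookkeeping is this repair: the fiberwise-$P$ path fails to be constant at the bridge endpoints, and the remedy is to use sethood of $A$ to contract the resulting loops and splice those contractions into an $\hcom$ whose tube legitimately mentions bridge-dimension equations. Everything else follows mechanically from the $\Bridge$-, $\Path$-, and $\hcom$-rules already established.
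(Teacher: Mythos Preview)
Your proof is correct, but it takes a slightly longer route than the paper's. Both arguments construct the desired square by an $\hcom$ in $A$ with tube constraints at $\bm{x}=\bm{0}$, $\bm{x}=\bm{1}$, $y=0$, and $y=1$, but the paper avoids the detour through the ``propositions are sets'' lemma. Rather than starting with the fiberwise path $\dapp{(P(\bapp{q}{\bm{x}})(\bapp{q'}{\bm{x}}))}{y}$ as cap and then repairing the endpoint loops via a separately constructed $\alpha_{\bm{\Ge}}$, the paper takes $\bapp{q_0}{\bm{x}}$ itself as the cap and uses $P$ for \emph{all four} tube faces: the $y=0$ face is $z.\dapp{P(\bapp{q_0}{\bm{x}})(\bapp{q_0}{\bm{x}})}{z}$, the $y=1$ face is $z.\dapp{P(\bapp{q_0}{\bm{x}})(\bapp{q_1}{\bm{x}})}{z}$, and the $\bm{x}=\bm{\Ge}$ faces are $z.\dapp{PM_{\bm{\Ge}}M_{\bm{\Ge}}}{z}$. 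The overlaps match automatically because $\bapp{q_i}{\bm{\Ge}} \eq M_{\bm{\Ge}}$, and at $z=1$ each face lands on the desired boundary. Your approach is sound and arguably more explanatory of \emph{why} the naive attempt fails; the paper's single-$\hcom$ trick buys a shorter proof with no auxiliary lemma.
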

\begin{proof}
  Let $q_0,q_1 : \Bridge{\bm{x}.A}{M_0}{M_1}$. Then we have
  \[
    \dlam{y}{\blam{\bm{x}}{\bighcom{A}{0}{1}{\bapp{q_0}{\bm{x}}}{%
          \begin{array}{lcl}
            \arraytube{\bm{x}=\bm{0}}{z.\dapp{PM_0M_0}{z}} \\
            \arraytube{\bm{x}=\bm{1}}{z.\dapp{PM_1M_1}{z}} \\
            \arraytube{y=0}{z.\dapp{P(\bapp{q_0}{\bm{x}})(\bapp{q_0}{\bm{x}})}{z}} \\
            \arraytube{y=1}{z.\dapp{P(\bapp{q_0}{\bm{x}})(\bapp{q_1}{\bm{x}})}{z}}
          \end{array}
        }}}
  \]
  of type $\Path{\Bridge{\bm{x}.A}{M_0}{M_1}}{q_0}{q_1}$.
\end{proof}

\begin{lemma}
  \label{lem:isbdisc-bridge}
  For any $\cwftypek[\cx!\Phi,\bm{x}!]{A}$, $\coftype{D_0}{\isBDisc{\bsubst{A}{\bm{0}}{\bm{x}}}}$,
  $\coftype{D_1}{\isBDisc{\bsubst{A}{\bm{1}}{\bm{x}}}}$, there is an equivalence
  \[
    \Bridge{\bm{x}.\isBDisc{A}}{D_0}{D_1} \simeq \picl*{a_0}{\bsubst{A}{\bm{0}}{\bm{x}}}{\picl{a_1}{\bsubst{A}{\bm{1}}{\bm{x}}}{\isBDisc{\Bridge{\bm{x}.A}{a_0}{a_1}}}}.
  \]
\end{lemma}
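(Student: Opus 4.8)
The plan is to exploit that both sides are propositions, so that an equivalence reduces to a logical equivalence and the mutual inverses required by \cref{rec:qequiv-to-equiv} come for free. First I would note: $\isBDisc{A}$ is a proposition for every pretype (\cref{lem:isbdisc-prop}), hence by \cref{lem:prop-bridge} the type $\Bridge{\bm{x}.\isBDisc{A}}{D_0}{D_1}$ is a proposition; and the right-hand type is a $\Pi$ of the propositions $\isBDisc{\Bridge{\bm{x}.A}{a_0}{a_1}}$, hence also a proposition (\citepalias[Example 3.6.2]{hott-book}). So it suffices to produce maps in both directions.

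Second, I would reformulate the left-hand side. Since $\isBDisc{A}$ is a proposition in context $\cx!\Phi,\bm{x}!$ and $\Bridge$-types are Kan (\cref{thm:bridge-hcom-kan,thm:bridge-coe-kan}), the type $\Bridge{\bm{x}.\isBDisc{A}}{D_0}{D_1}$ is inhabited iff $\isBDisc{A}$ is inhabited in $\cx!\Phi,\bm{x}!$: forwards, take the body of a bridge via \cref{rule:bridge-eta}; backwards, given $w \in \isBDisc{A}$ in $\cx!\Phi,\bm{x}!$, the bridge $\blam{\bm{x}}{w}$ has endpoints propositionally equal to $D_0,D_1$, and a $\com$ in $\Bridge{\bm{x}.\isBDisc{A}}{-}{-}$ along these path corrections yields an element with the correct endpoints. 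Moreover every $a \in A$ in $\cx!\Phi,\bm{x}!$ equals $\bapp{c}{\bm{x}}$ for $c \eqdef \blam{\bm{x}}{a} \in \Bridge{\bm{x}.A}{\bsubst{a}{\bm{0}}{\bm{x}}}{\bsubst{a}{\bm{1}}{\bm{x}}}$, so $\isBDisc{A}$ holding in $\cx!\Phi,\bm{x}!$ amounts to: for all bridges $c \in \Bridge{\bm{x}.A}{a_0}{a_1}$ and $c' \in \Bridge{\bm{x}.A}{a'_0}{a'_1}$, the map $\loosen{A}{-} : \Path{A}{\bapp{c}{\bm{x}}}{\bapp{c'}{\bm{x}}} \to \Bridge{A}{\bapp{c}{\bm{x}}}{\bapp{c'}{\bm{x}}}$ is an equivalence over $\bm{x}$.

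Third, I would connect this to the right-hand side through the compound-type lemmas. By \cref{thm:path-bridge}, $\Path{\Bridge{\bm{x}.A}{a_0}{a_1}}{c}{c'} \simeq \Bridge{\bm{x}.\Path{A}{\bapp{c}{\bm{x}}}{\bapp{c'}{\bm{x}}}}{\dlam{\_}{a_0}}{\dlam{\_}{a_1}}$, and by the analogous commutation of two bridge binders (provable exactly as \cref{thm:path-bridge}, since swapping $\blam{\bm{w}}{\blam{\bm{x}}{-}}$ into $\blam{\bm{x}}{\blam{\bm{w}}{-}}$ never applies a bridge to a non-fresh dimension), $\Bridge{\Bridge{\bm{x}.A}{a_0}{a_1}}{c}{c'} \simeq \Bridge{\bm{x}.\Bridge{A}{\bapp{c}{\bm{x}}}{\bapp{c'}{\bm{x}}}}{\blam{\_}{a_0}}{\blam{\_}{a_1}}$; under these, $\loosen{\Bridge{\bm{x}.A}{a_0}{a_1}}{-}$ corresponds to applying $\loosen{A}{-}$ fibrewise over $\bm{x}$ (using \cref{lem:loosen-refl} to match the reflexivity/constant endpoints). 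Since forming bridge types preserves fibrewise equivalences, $\loosen{A}{-}$ being an equivalence over $\bm{x}$ makes $\Bridge{\Bridge{\bm{x}.A}{a_0}{a_1}}{c}{c'}$ equivalent to $\Path{\Bridge{\bm{x}.A}{a_0}{a_1}}{c}{c'}$, whence $\Bridge{\bm{x}.A}{a_0}{a_1}$ is bridge-discrete by \cref{cor:retract-to-discrete}; this gives the forward map. For the backward map I would run the same chain in reverse, using \cref{lem:compound-bridge-discrete} (closure of $\UBDisc$ under pair and $\Bridge$-types) to reassemble the individual bridge-discreteness hypotheses into one about the total bridge space $\sigmacl{a_0}{\bsubst{A}{\bm{0}}{\bm{x}}}{\sigmacl{a_1}{\bsubst{A}{\bm{1}}{\bm{x}}}{\Bridge{\bm{x}.A}{a_0}{a_1}}}$, and \cref{lem:path-retract-to-equiv} to recover that $\loosen{A}{-}$ itself is the equivalence.

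The main obstacle I anticipate is twofold. One is establishing the commutation of two bridge binders, which is not among the lemmas of \cref{sec:compound} (though it is strictly simpler than \cref{thm:path-bridge}, as no Kan operations are involved). The other is the backward direction: one must deduce bridge-discreteness of $A$ at an \emph{arbitrary} pair of $\bm{x}$-varying points from bridge-discreteness of the types $\Bridge{\bm{x}.A}{a_0}{a_1}$, whose bridge-discreteness, passed through the above chain, directly concerns only pairs of bridges with matching endpoints; closing this gap is precisely where the total-bridge-space reformulation and careful endpoint bookkeeping (aided by propositionality and Kan adjustments) will be needed. All remaining steps are routine manipulations with the proof theory of \cref{sec:proof-theory}.
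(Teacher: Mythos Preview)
Your overall strategy (both sides are propositions, so construct any pair of maps) and your forward direction match the paper's: the paper also applies $\bapp{t}{\bm{x}}$ in context $\bm{x}$, then uses the same chain of binder-swaps (including the bridge-bridge commutation you identify) together with \cref{lem:loosen-refl} and \cref{cor:retract-to-discrete}.

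The backward direction is where you and the paper diverge. The paper is more explicit about the machinery your second-paragraph reformulation hides: it reduces (via \cref{cor:retract-to-discrete}) to a bridge in $\picl{a,a'}{A}{(\Path{A}{a}{a'} \simeq \Bridge{A}{a}{a'})}$ and then \emph{explicitly} applies \cref{thm:function-bridge} twice and \cref{cor:equiv-bridge}. Your claim that ``every $a\in A$ in context $\bm{x}$ equals $\bapp{c}{\bm{x}}$ for $c\eqdef\blam{\bm{x}}{a}$'' is only literally available for closed terms; when $a$ is the open variable in the $\Pi$-type, forming $\blam{\bm{x}}{a}$ is exactly what $\extent$ (packaged as \cref{thm:function-bridge}) provides, and you should cite it. Once unpacked, the goal is an equivalence parameterized by bridges $\overline{a},\overline{a}'$ with \emph{possibly different} endpoints together with paths $p_0,p_1$ between those endpoints. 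The paper closes the different-endpoint gap in one step: contract the singletons $(a'_0,p_0)$ and $(a'_1,p_1)$ via \cref{rec:singleton-contractibility} (and adjust with \cref{lem:loosen-refl}), reducing to same-endpoint $\overline{a},\overline{a}'$ where $u\,a_0\,a_1$ applies directly after a binder swap.

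Your total-bridge-space route is a plausible alternative, but not obviously simpler. Knowing that $T=\sigmacl{a_0}{A_0}{\sigmacl{a_1}{A_1}{\Bridge{\bm{x}.A}{a_0}{a_1}}}$ is bridge-discrete gives an equivalence $\Path{T}{t}{t'}\simeq\Bridge{T}{t}{t'}$ on total spaces; to extract the fibrewise equivalence you need, you must verify it lies over $(\loosen{A_0},\loosen{A_1})$ on the first two components and then reindex along those equivalences using $D_0,D_1$. That verification is essentially the same endpoint bookkeeping as the paper's singleton-contractibility step, just organized differently. Your invocation of \cref{lem:path-retract-to-equiv} here is unclear: that lemma concerns a family on a fixed base type, not a transfer between $T$ in the base context and $A$ in context $\bm{x}$. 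The singleton-contractibility move is the direct tool for the obstacle you correctly anticipate.
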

\begin{proof}
  The left-hand type is a proposition by \cref{lem:isbdisc-prop} and \cref{lem:prop-bridge}, while the
  right-hand type is a proposition by \cref{lem:isbdisc-prop} and the fact that a function type with
  propositional codomain is propositional \citepalias[Example 3.6.2]{hott-book}. Using
  \cref{rec:qequiv-to-equiv}, it therefore suffices to construct any pair of functions between the two.

  In the forward direction, we are given $t : \Bridge{\bm{x}.\isBDisc{A}}{D_0}{D_1}$,
  $a_0 : \bsubst{A}{\bm{0}}{\bm{x}}$, $a_1 : \bsubst{A}{\bm{1}}{\bm{x}}$, and we need to show
  $\Bridge{\bm{x}.A}{a_0}{a_1}$ is bridge-discrete. By \cref{cor:retract-to-discrete}, it suffices to show
  that for any $q,q' : \Bridge{\bm{x}.A}{a_0}{a_1}$ we have an equivalence
  $\Bridge{\Bridge{\bm{x}.A}{a_0}{a_1}}{q}{q'} \simeq \Path{\Bridge{\bm{x}.A}{a_0}{a_1}}{q}{q'}$. This follows
  from the chain of equivalences
  \begin{align*}
    \Bridge{\Bridge{\bm{x}.A}{a_0}{a_1}}{q}{q'}
    &\simeq \Bridge{\bm{x}.\Bridge{A}{\bapp{q}{\bm{x}}}{\bapp{q'}{\bm{x}}}}{\blam{\_}{a_0}}{\blam{\_}{a_1}} \\
    &\simeq \Bridge{\bm{x}.\Path{A}{\bapp{q}{\bm{x}}}{\bapp{q'}{\bm{x}}}}{\dlam{\_}{a_0}}{\dlam{\_}{a_1}} \\
    &\simeq \Path{\Bridge{\bm{x}.A}{a_0}{a_1}}{q}{q'}
  \end{align*}
  where the center equivalence is given by $\bapp{t}{\bm{x}}$ and \cref{lem:loosen-refl}, while the first and
  third are simply rearranging binders.

  In the reverse direction, we are given
  $u :
  \picl*{a_0}{\bsubst{A}{\bm{0}}{\bm{x}}}{\picl{a_1}{\bsubst{A}{\bm{1}}{\bm{x}}}{\isBDisc{\Bridge{\bm{x}.A}{a_0}{a_1}}}}$.
  We can construct a term of the desired type $\Bridge{\bm{x}.\isBDisc{A}}{D_0}{D_1}$ from a term of
  type
  \[
    \Bridge{\bm{x}.\picl{a,a'}{A}{(\Path{A}{a}{a'} \simeq \Bridge{A}{a}{a'})}}{\lam{a}{\lam{a'}{\pair{\loosen{A}}{D_0aa'}}}}{\pair{\loosen{A}}{D_1aa'}}
  \]
  by \cref{cor:retract-to-discrete} (using \cref{rec:isequiv-is-prop} to adjust endpoints if necessary). By
  \cref{thm:function-bridge} applied twice and \cref{cor:equiv-bridge}, we can in turn construct such a term
  from a term of the following type.
  \[
    \begin{array}{l}
      \picl*{a_0}{\bsubst{A}{\bm{0}}{\bm{x}}}{\picl*{a_1}{\bsubst{A}{\bm{1}}{\bm{x}}}{\picl*{\overline{a}}{\Bridge{\bm{x}.A}{a_0}{a_1}}{}}} \\
      \picl*{a'_0}{\bsubst{A}{\bm{0}}{\bm{x}}}{\picl*{a'_1}{\bsubst{A}{\bm{1}}{\bm{x}}}{\picl*{\overline{a}'}{\Bridge{\bm{x}.A}{a'_0}{a'_1}}{}}} \\
      \picl*{p_0}{\Path{A_0}{a_0}{a_0'}}{\picl*{p_1}{\Path{A_1}{a_1}{a_1'}}{}} \\
      \to (\Bridge{\bm{x}.\Path{A}{\bapp{\overline{a}}{\bm{x}}}{\bapp{\overline{a}'}{\bm{x}}}}{p_0}{p_1} \simeq \Bridge{\bm{x}.\Bridge{A}{\bapp{\overline{a}}{\bm{x}}}{\bapp{\overline{a}'}{\bm{x}}}}{\loosen{A_0}{p_0}}{\loosen{A_1}{p_1}})
    \end{array}
  \]
  By \cref{rec:singleton-contractibility,lem:loosen-refl}, this is equivalent to the following
  type.
  \[
    \begin{array}{l}
      \picl*{a_0}{\bsubst{A}{\bm{0}}{\bm{x}}}{\picl*{a_1}{\bsubst{A}{\bm{1}}{\bm{x}}}{\picl*{\overline{a},\overline{a}'}{\Bridge{\bm{x}.A}{a_0}{a_1}}{}}} \\
      \to (\Bridge{\bm{x}.\Path{A}{\bapp{\overline{a}}{\bm{x}}}{\bapp{\overline{a}'}{\bm{x}}}}{\dlam{\_}{a_0}}{\dlam{\_}{a_1}} \simeq \Bridge{\bm{x}.\Bridge{A}{\bapp{\overline{a}}{\bm{x}}}{\bapp{\overline{a}'}{\bm{x}}}}{\blam{\_}{a_0}}{\blam{\_}{a_1}})
    \end{array}
  \]
  Finally, by rearranging the binders on either side of the equivalence, the codomain of this type is
  equivalent to the type
  $\Path{\Bridge{\bm{x}.A}{a_0}{a_1}}{\overline{a}}{\overline{a}'} \simeq
  \Bridge{\Bridge{\bm{x}.A}{a_0}{a_1}}{\overline{a}}{\overline{a}'}$, which is inhabited by $ua_0a_1$.
\end{proof}

\begin{theorem}
  \label{thm:bdisc-relativity}
  For any $\coftype{\tilde A, \tilde B}{\UBDisc}$, we have
  $\Bridge{\UBDisc}{\tilde A}{\tilde B} \simeq \fst{\tilde A} \times \fst{\tilde B} \to \UBDisc$.
\end{theorem}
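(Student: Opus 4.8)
The plan is to unfold $\UBDisc = \sigmacl{X}{\UKan}{\isBDisc{X}}$ as a $\Sigma$-type, split a bridge into its two components via \cref{thm:sigma-bridge}, handle the first component with relativity (\cref{thm:relativity}) and the second with \cref{lem:isbdisc-bridge}, and then reassemble the result using the standard homotopy-theoretic equivalences for $\Sigma$- and $\Pi$-types.

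First I would apply \cref{thm:sigma-bridge} to get
\[
  \Bridge{\UBDisc}{\tilde A}{\tilde B} \simeq \sigmacl{C}{\Bridge{\UKan}{\fst{\tilde A}}{\fst{\tilde B}}}{\Bridge{\bm{x}.\isBDisc{\bapp{C}{\bm{x}}}}{\snd{\tilde A}}{\snd{\tilde B}}},
\]
where the endpoints $\snd{\tilde A} : \isBDisc{\fst{\tilde A}}$ and $\snd{\tilde B} : \isBDisc{\fst{\tilde B}}$ typecheck because $\bapp{C}{\bm{0}} \eq \fst{\tilde A}$ and $\bapp{C}{\bm{1}} \eq \fst{\tilde B}$ by \cref{rule:bridge-betapartial}. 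For each fixed $C$, \cref{lem:isbdisc-bridge} applied with $A \eqdef \bapp{C}{\bm{x}}$ gives an equivalence between the fibre over $C$ and $\picl{a_0}{\fst{\tilde A}}{\picl{a_1}{\fst{\tilde B}}{\isBDisc{\Bridge{\bm{x}.\bapp{C}{\bm{x}}}{a_0}{a_1}}}}$, and the subterm $\Bridge{\bm{x}.\bapp{C}{\bm{x}}}{a_0}{a_1}$ is exactly $\bridgetorel{C}\pair{a_0}{a_1}$ by the definition of $\bridgetorel$. Since a fibrewise equivalence induces an equivalence of total spaces (\citepalias[Theorem 4.7.7]{hott-book}), this yields
\[
  \Bridge{\UBDisc}{\tilde A}{\tilde B} \simeq \sigmacl{C}{\Bridge{\UKan}{\fst{\tilde A}}{\fst{\tilde B}}}{\picl{a_0}{\fst{\tilde A}}{\picl{a_1}{\fst{\tilde B}}{\isBDisc{\bridgetorel{C}\pair{a_0}{a_1}}}}}.
\]

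Then I would invoke relativity: $\bridgetorel$ is an equivalence $\Bridge{\UKan}{\fst{\tilde A}}{\fst{\tilde B}} \simeq (\fst{\tilde A} \times \fst{\tilde B} \to \UKan)$, so reindexing the $\Sigma$-type along this equivalence on its base (again \citepalias[Theorem 4.7.7]{hott-book}) gives
\[
  \Bridge{\UBDisc}{\tilde A}{\tilde B} \simeq \sigmacl{R}{\fst{\tilde A} \times \fst{\tilde B} \to \UKan}{\picl{a_0}{\fst{\tilde A}}{\picl{a_1}{\fst{\tilde B}}{\isBDisc{R\pair{a_0}{a_1}}}}}.
\]
Currying the double $\Pi$ into $\picl{t}{\fst{\tilde A} \times \fst{\tilde B}}{\isBDisc{Rt}}$ and then distributing $\Pi$ over $\Sigma$ (\citepalias[Theorem 2.15.7]{hott-book}) converts the right-hand side into $\picl{t}{\fst{\tilde A} \times \fst{\tilde B}}{\sigmacl{X}{\UKan}{\isBDisc{X}}}$, which is $\fst{\tilde A} \times \fst{\tilde B} \to \UBDisc$ by the definition of $\UBDisc$. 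Composing the chain of equivalences completes the proof.

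The work here is essentially bookkeeping: \cref{thm:sigma-bridge}, \cref{thm:relativity}, and \cref{lem:isbdisc-bridge} have already done the substructural heavy lifting, and the remaining steps are the familiar manipulations of $\Sigma$- and $\Pi$-types in homotopy type theory. The one point demanding care is keeping the endpoint data coherent at each stage---i.e.\ that $\bapp{C}{\bm{\Ge}}$ reduces to $\fst{\tilde A}$ or $\fst{\tilde B}$ so the intermediate types are well-formed---and applying the cited HoTT lemmas to exactly the right families; no genuinely new difficulty arises.
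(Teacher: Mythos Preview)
Your proposal is correct and follows exactly the same four-step chain of equivalences as the paper: split the $\Sigma$-type with \cref{thm:sigma-bridge}, replace the fibre using \cref{lem:isbdisc-bridge}, reindex along relativity (\cref{thm:relativity}), and finish with the $\Pi$/$\Sigma$ distributivity (type-theoretic choice). The paper presents these steps more tersely as a display of aligned equivalences, but the content is identical.
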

\begin{proof}
  Abbreviating $A \eqdef \fst{\tilde A}$, $B \eqdef \fst{\tilde B}$, $D_A \eqdef \snd{\tilde A}$, and
  $D_B \eqdef \snd{\tilde B}$, we have the following chain of equivalences.
  \begin{align*}
    \Bridge{\UBDisc}{\tilde A}{\tilde B}
    &\simeq \sigmacl{Y}{\Bridge{\UKan}{A}{B}}{\Bridge{\bm{x}.\isBDisc{\bapp{Y}{\bm{x}}}}{D_A}{D_B}} &\text{(\cref{thm:sigma-bridge})} \\
    &\simeq \sigmacl{Y}{\Bridge{\UKan}{A}{B}}{\picl*{a}{A}{\picl{b}{B}{\isBDisc{\Bridge{\bm{x}.\bapp{Y}{\bm{x}}}{a}{b}}}}} &\text{(\cref{lem:isbdisc-bridge})} \\
    &\simeq \sigmacl{R}{A \times B \to \UKan}{\picl*{a}{A}{\picl{b}{B}{\isBDisc{R\pair{a}{b}}}}} &\text{(\cref{thm:relativity})} \\
    &\simeq A \times B \to \UBDisc
      &&\qedhere
  \end{align*}
  
\end{proof}

\begin{remark}
  Although we will not give a proof here, one can show by a similar argument that the sub-universe
  $\UProp \eqdef \sigmacl{X}{\UKan}{\isProp{X}}$ of propositions is relativistic (has bridges corresponding to
  $\UProp$-valued relations), and more generally that the sub-universe of $n$-types is relativistic for all
  $n \ge -2$. It is not clear to us whether there is a general theorem which has these results and
  \cref{thm:bdisc-relativity} as corollaries.
\end{remark}

%%% Local Variables:
%%% mode: latex
%%% TeX-master: "main"
%%% End:

\section{Examples}
\label{sec:examples}

In this final section, we give five examples of results which can be proven within parametric cubical type
theory. As a warm-up, we prove in \cref{sec:examples:id} that all terms of type $\picl{X}{\UKan}{\arr{X}{X}}$
are path-equal to the polymorphic identity function $\lam{X}{\lam{a}{a}}$. The proof is essentially that given
by \citet[Example 3.3]{bernardy15} in their system, although we obtain a slightly stronger result thanks to
function extensionality for paths. In \cref{sec:examples:leibniz}, we show that Leibniz equality is equivalent
to path equality for bridge-discrete types, a more involved result of a similar kind. In
\cref{sec:examples:bool}, we show that the type of booleans with its standard typing rules is
bridge-discrete. This proof includes a use of so-called ``iterated parametricity,'' i.e., nested
$\Gel$-types. As a corollary, we see that there is no bridge between $\true$ and $\false$ in $\bool$, which
allows us to refute the law of the excluded middle for propositions in \cref{sec:examples:lem}. Finally, in
\cref{sec:examples:hits}, we give an example of a parametricity result for functions between higher inductive
types. Apart from uses of function extensionality at the edges, we expect that the proofs in
\cref{sec:examples:leibniz,sec:examples:bool,sec:examples:lem} could be repeated in a binary version of the
system of \cite{bernardy15}; \cref{sec:examples:hits} of course requires a system with higher inductive types.

\subsection{Polymorphic identity function}
\label{sec:examples:id}

In this section, we show that any term inhabiting the type $\picl{X}{\UKan}{\arr{X}{X}}$ is equal to the
polymorphic identity function up to a path. As mentioned above, this proof is not novel, but it will serve to
introduce the methodology of internal parametricity proofs. We assume the existence of a $\unit$ type with a
single element $\triv$.

\begin{theorem}
  \label{thm:polymorphic-identity}
  Let $\coftype{F}{\picl{X}{\UKan}{\arr{X}{X}}}$ be given. Then there is a path
  from $F$ to $\lam{\_}{\lam{a}{a}}$.
\end{theorem}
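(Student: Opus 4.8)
The plan is to run the classic internal-parametricity argument for the polymorphic identity: use a single $\Gel$-type to turn ``reflexivity at $a$'' into a bridge in the universe, transport it along $F$, and then read off a path by $\ungel$; finally upgrade the resulting pointwise paths to a single path between functions by function extensionality for paths.

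First I would reduce the statement. Given a family $H : \picl{X}{\UKan}{\picl{a}{X}{\Path{X}{F\,X\,a}{a}}}$, the term $\dlam{z}{\lam{X}{\lam{a}{\dapp{(H\,X\,a)}{z}}}}$ is a path from $F$ to $\lam{\_}{\lam{a}{a}}$; this is function extensionality for paths, which is immediate in cubical type theory. So it suffices, for each $X : \UKan$ and $a : X$, to construct a path of type $\Path{X}{F\,X\,a}{a}$.

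Next, fix $X : \UKan$ and $a : X$, introduce a fresh bridge variable $\bm{x}$, and take the relation $R : \prd{X}{\unit} \to \UKan$ given by $R\pair{x}{u} \eqdef \Path{X}{x}{a}$ (independent of $u$). Since $X$, $\unit$, $R$, $a$ are all apart from $\bm{x}$, the type $\Gel{\bm{x}}{X}{\unit}{R}$ is a well-formed small type — it lies in $\UKan$ by closure of $\UKan$ under $\Gel$-types and \cref{rule:gel-F} — and $\gel{\bm{x}}{a}{\triv}{\dlam{\_}{a}} : \Gel{\bm{x}}{X}{\unit}{R}$ by \cref{rule:gel-I}, using $\dlam{\_}{a} : \Path{X}{a}{a} \eq R\pair{a}{\triv}$. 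Applying $F$ and abstracting over $\bm{x}$ gives a bridge
\[
  \blam{\bm{x}}{F(\Gel{\bm{x}}{X}{\unit}{R})\,(\gel{\bm{x}}{a}{\triv}{\dlam{\_}{a}})} : \Bridge{\bm{x}.\Gel{\bm{x}}{X}{\unit}{R}}{F\,X\,a}{F\,\unit\,\triv},
\]
whose endpoints compute as displayed using \cref{rule:gel-F0,rule:gel-F1} (so $\Gel{\bm{0}}{X}{\unit}{R} \eq X$ and $\Gel{\bm{1}}{X}{\unit}{R} \eq \unit$) together with \cref{rule:gel-I0,rule:gel-I1}. Finally, applying $\ungel$ — equivalently, the reverse map of \cref{lem:link-unlink} — yields $\ungel{\bm{x}.F(\Gel{\bm{x}}{X}{\unit}{R})\,(\gel{\bm{x}}{a}{\triv}{\dlam{\_}{a}})}$ of type $R\pair{F\,X\,a}{F\,\unit\,\triv} \eq \Path{X}{F\,X\,a}{a}$ by \cref{rule:gel-E}, which is the required path.

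There is no serious obstacle; the argument is essentially definitional bookkeeping. The only points needing care are the well-typedness of the $\Gel$- and $\gel$-terms (the apartness conditions hold because $\bm{x}$ is fresh for $F$, $X$, and $a$) and the endpoint and $\ungel$ computations, which are discharged by the $\beta$- and $\eta$-rules for $\Gel$-types. Note that this proof uses only $\Gel$-types and function extensionality for paths — not relativity or \cref{thm:function-bridge} — and that it is precisely the use of function extensionality which strengthens the conclusion to a genuine path between functions, rather than the merely pointwise statement obtained by \citeauthor{bernardy15}.
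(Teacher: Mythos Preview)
Your proposal is correct and follows essentially the same approach as the paper: define the relation $R\pair{a'}{\_} \eqdef \Path{X}{a'}{a}$, apply $F$ at the $\Gel$-type $\Gel{\bm{x}}{X}{\unit}{R}$ and the element $\gel{\bm{x}}{a}{\triv}{\dlam{\_}{a}}$, extract a path with $\ungel$, and finish by function extensionality. The only cosmetic difference is that you explicitly package the term as a bridge before applying $\ungel$, whereas the paper applies $\ungel{\bm{x}.{-}}$ directly; both amount to the same computation.
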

\begin{proof}
  Set $\GG \eqdef (\oft{X}{\UKan},\oft{a}{X})$. Define the relation
  $\oftype{\GG}{R \eqdef \lam{\pair{a'}{\_}}{\Path{X}{a'}{a}}}{\arr{\prd{X}{\unit}}{\UKan}}$.  Abstracting
  over a bridge dimension variable $\bm{x}$, we can apply $F$ first at the $\Gel$-type given by this relation,
  then at its inhabitant $\gel{\bm{x}}{a}{\triv}{\dlam{\_}{a}}$ expressing that $R$ relates $a$ and $\triv$.
  \[
    F(\Gel{\bm{x}}{X}{\unit}{R})(\gel{\bm{x}}{a}{\triv}{\dlam{\_}{a}}) \in \Gel{\bm{x}}{X}{\unit}{R}
  \]
  This is a bridge over $\bm{x}.\Gel{\bm{x}}{X}{\unit}{R}$ whose endpoints reduce to $FXa$ (at
  $\bm{x} = \bm{0}$) and $F(\unit)(\triv)$ (at $\bm{x} = \bm{1}$). By applying $\ungel$, we turn this into a
  proof that the two endpoints are related.
  \[
    \ungel{\bm{x}.F(\Gel{\bm{x}}{X}{\unit}{R})(\gel{\bm{x}}{a}{\triv}{\dlam{\_}{a}})} \in R\pair{FXa}{F(\unit)(\triv)}
  \]
  By definition of $R$, this means we have a term of type $\Path{X}{FXa}{a}$ in context $\GG$. Function
  extensionality for paths now gives the desired result.
\end{proof}

We note that, despite the theorem above, not every term of type $\picl{X}{\UKan}{\arr{X}{X}}$ is
\emph{exactly} equal to $\lam{\_}{\lam{a}{a}}$; indeed, there are functions such as
$\lam{X}{\lam{a}{\coe{\_.X}{0}{1}{a}}}$ which are not exactly equal to the identity function. The theorem
above shows that such functions are the same as $\lam{\_}{\lam{a}{a}}$ \emph{up to a path}. The example of
$\lam{X}{\lam{a}{\coe{\_.X}{0}{1}{a}}}$ exposes a notable feature of parametric cubical type theory: we can
prove uniformity theorems up to a path even in the presence of Kan operations which evaluate by case analysis
on their type arguments.

As a simple application, we can use \cref{thm:polymorphic-identity} to prove some contractibility results for
polymorphic operators on paths.

\begin{corollary}
  The types
  \begin{enumerate}
  \item $\picl*{X}{\UKan}{\picl{a}{X}{\Path{X}{a}{a}}}$,
  \item $\picl*{X}{\UKan}{\picl{a,b}{X}{\arr{\Path{X}{a}{b}}{\Path{X}{b}{a}}}}$,
  \item $\picl*{X}{\UKan}{\picl{a,b,c}{X}{\arr{\Path{X}{a}{b}}{\arr{\Path{X}{b}{c}}{\Path{X}{a}{c}}}}}$,
  \end{enumerate}
  are all contractible.
\end{corollary}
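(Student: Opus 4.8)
The plan is to reduce all three statements to the contractibility of $\picl{X}{\UKan}{\arr{X}{X}}$, which is immediate from \cref{thm:polymorphic-identity}: that theorem provides, for every $F$, a path from $F$ to $\lam{X}{\lam{a}{a}}$, so the pair consisting of the point $\lam{X}{\lam{a}{a}}$ together with this family of paths is literally a proof of $\isContr{\picl{X}{\UKan}{\arr{X}{X}}}$. Everything else is bookkeeping in homotopy type theory.

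For statement (1), write $I \eqdef \lam{X}{\lam{a}{a}}$. Applying function extensionality for paths to the $\UKan$-indexed $\Pi$ and then to the $X$-indexed $\Pi$, followed by $\beta$-reduction, yields an equivalence
\[
  \Path{\picl{X}{\UKan}{\arr{X}{X}}}{I}{I} \;\simeq\; \picl*{X}{\UKan}{\picl{a}{X}{\Path{X}{a}{a}}}.
\]
Since $\picl{X}{\UKan}{\arr{X}{X}}$ is contractible, all of its path types are contractible (a standard fact), so the left-hand side is contractible; as contractibility is preserved by equivalence, the right-hand side—i.e.\ the type in (1)—is contractible as well.

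For statements (2) and (3) I would build equivalences to the type in (1) out of singleton contractibility (\cref{rec:singleton-contractibility}, used in either orientation). Fixing $X$, currying the bound variable $b$ together with the argument of type $\Path{X}{a}{b}$ gives
\[
  \picl{a,b}{X}{\arr{\Path{X}{a}{b}}{\Path{X}{b}{a}}} \;\simeq\; \picl*{a}{X}{\picl{t}{\sigmacl{b}{X}{\Path{X}{a}{b}}}{\Path{X}{\fst{t}}{a}}},
\]
and since $\sigmacl{b}{X}{\Path{X}{a}{b}}$ is contractible with centre $\pair{a}{\dlam{\_}{a}}$, the inner $\Pi$ is equivalent to its value at that centre, namely $\Path{X}{a}{a}$. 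These equivalences are uniform in $X$, hence lift through $\picl{X}{\UKan}{-}$ (a $\Pi$-type over a family of equivalences is an equivalence), giving type (2) $\simeq$ type (1), which is contractible. For (3) the same contraction is performed twice: contracting the pair $(b,p)$ against $\sigmacl{b}{X}{\Path{X}{a}{b}}$ (centre $\pair{a}{\dlam{\_}{a}}$) turns $\picl{a,b,c}{X}{\arr{\Path{X}{a}{b}}{\arr{\Path{X}{b}{c}}{\Path{X}{a}{c}}}}$ into $\picl{a,c}{X}{\arr{\Path{X}{a}{c}}{\Path{X}{a}{c}}}$, and then contracting $(c,q)$ against $\sigmacl{c}{X}{\Path{X}{a}{c}}$ turns this into $\picl{a}{X}{\Path{X}{a}{a}}$; again the chain is natural in $X$, so it lifts through $\picl{X}{\UKan}{-}$ and we conclude.

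The genuine content is entirely in \cref{thm:polymorphic-identity}, i.e.\ in the parametricity machinery behind it; the rest is routine. The step most likely to demand care is tracking the codomain dependency through the $\Pi$/$\Sigma$ reassociations and the ``evaluate at the centre of a contractible domain'' equivalences—in particular, for (3), checking that after the first contraction the two remaining path arguments really have the displayed endpoints—and verifying that each equivalence is defined uniformly enough in $X$ to commute with the outer $\picl{X}{\UKan}{-}$.
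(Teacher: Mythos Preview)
Your proposal is correct and follows essentially the same approach as the paper: the paper likewise first notes that the three types are pairwise equivalent via singleton contractibility, then proves (1) by rewriting it via function extensionality as the loop space of $\picl{X}{\UKan}{X \to X}$ at the identity and invoking \cref{thm:polymorphic-identity}. Your write-up is slightly more explicit about the currying/contraction steps for (2) and (3), but the argument is the same.
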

\begin{proof}
  These three types are pairwise equivalent by way of \cref{rec:singleton-contractibility}, so it suffices to
  prove that the first is contractible. By function extensionality for paths, we have the following
  equivalence.
  \[
    (\picl{X}{\UKan}{\picl{a}{X}{\Path{X}{a}{a}}})
    \simeq \Path{\picl{X}{\UKan}{X \to X}}{\lam{\_}{\lam{a}{a}}}{\lam{\_}{\lam{a}{a}}}
  \]
  The path type of a contractible type is contractible \citepalias[Lemmas 3.11.3 and 3.11.10]{hott-book}, so
  it follows from \cref{thm:polymorphic-identity} that the right-hand type is contractible, whence the
  left-hand type is as well.
\end{proof}

\subsection{Leibniz equality}
\label{sec:examples:leibniz}

As a second example of characterizing a polymorphic function type, we show that the type $\Path{A}{a}{a'}$ of
paths between elements $a,a'$ of a bridge-discrete type $A$ is equivalent to the type
$\picl{X}{A \to \UKan}{Xa \to Xa'}$ of proofs that they are Leibniz equal. This example highlights the role of
bridge-discreteness, an assumption that will generally appear in parametricity theorems involving a fixed type
(here $A$). In short, bridge-discreteness of $A$ ensures that the relational interpretation
$\Bridge{\bm{x}.A \to \UKan}{P}{Q}$ is equivalent to the type $\picl{a}{A}{Pa \times Qa \to \UKan}$ of
pointwise relations on $P$ and $Q$, not only to the type
$\picl{a,a'}{A}{\Bridge{A}{a}{a'} \to Pa \times Qa' \to \UKan}$ as it is in the general case.

\begin{theorem}
  Let $\coftype{A}{\UKan}$ be a bridge-discrete type. Then we have a family of equivalences
  $\picl{a,a'}{A}{(\picl{X}{A \to \UKan}{Xa \to Xa'} \simeq \Path{A}{a}{a'})}$.
\end{theorem}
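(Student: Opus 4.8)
The proof is a chain of equivalences that peels $A \to \UKan$ apart using relativity and function extensionality, with bridge-discreteness of $A$ invoked precisely to collapse the ``extra'' quantifier over bridges in $A$ that relativity introduces. Fix $a, a' : A$. I would reason as follows. First apply \cref{rec:singleton-contractibility}: the type $\Path{A}{a}{a'}$ is a retract/is equivalent in a standard way, so by the Leibniz-style argument it suffices to show that the map sending $p : \Path{A}{a}{a'}$ to $\lam{X}{\lam{q}{\coe{z.X(\dapp{p}{z})}{0}{1}{q}}} : \picl{X}{A\to\UKan}{Xa\to Xa'}$ is an equivalence; but in fact it is cleaner to build \emph{some} equivalence and then invoke \cref{rec:singleton-contractibility} (or rather the transport-is-equivalence lemma it packages) to identify it. So I will aim for an unbiased chain of equivalences and not worry about which specific map it is.

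\textbf{The chain.} Here is the sequence I would write, for each $a, a' : A$:
\begin{align*}
  \picl{X}{A \to \UKan}{Xa \to Xa'}
  &\simeq \picl{X}{A \to \UKan}{\picl{b}{A}{\picl{p}{\Path{A}{a}{b}}{\picl{b'}{A}{\picl{p'}{\Path{A}{a'}{b'}}{Xb \to Xb'}}}}}
\end{align*}
using that $\sigmacl{b}{A}{\Path{A}{a}{b}}$ and $\sigmacl{b'}{A}{\Path{A}{a'}{b'}}$ are contractible (\cref{rec:singleton-contractibility}), so quantifying over them is harmless; then, reorganizing, I want to recognize the type $\picl{X}{A\to\UKan}{Xa \to Xa'}$ as ``the fiber of relativity at the identity-type relation.'' Concretely, by \cref{thm:relativity} applied to the pair $A, A$ — observing that a bridge $C : \Bridge{\UKan}{A}{A}$ corresponds under $\bridgetorel$ to a relation $A \times A \to \UKan$ — together with \cref{thm:function-bridge} to unfold bridges in $A \to \UKan$, and \cref{lem:link-unlink} to recognize $\Gel$-relations, one gets that for any $P, Q : A \to \UKan$ the type $\Bridge{\bm{x}.A\to\UKan}{P}{Q}$ is equivalent to $\picl{b}{A}{\picl{b'}{A}{\Bridge{\bm{x}.A}{b}{b'} \to Pb \times Qb' \to \UKan}}$. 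Bridge-discreteness of $A$ now replaces $\Bridge{\bm{x}.A}{b}{b'}$ by $\Path{A}{b}{b'}$ (via $\loosen{A}{-}$, using $\isBDisc{A}$), so this further simplifies to $\picl{b}{A}{\picl{b'}{A}{\Path{A}{b}{b'} \to Pb \times Qb' \to \UKan}}$, and then contracting the $(b', \text{path})$ singleton gives $\picl{b}{A}{Pb \times Qb \to \UKan}$, the type of pointwise relations. That is the key specialization announced in the prose before the theorem. Putting this together: the type of ``Leibniz equalities'' $\picl{X}{A\to\UKan}{Xa \to Xa'}$ is the type of families $X : A \to \UKan$ equipped with a transport from $Xa$ to $Xa'$; by Yoneda-style reasoning (or directly: $\sigmacl{X}{A\to\UKan}{Xa}$ is equivalent to $A$ with $a$ marked out — $X \mapsto (X, \ldots)$, and the ``free'' $X$ with an element at $a$ is $\lam{b}{\Path{A}{a}{b}}$ by \cref{rec:singleton-contractibility}) one identifies this with $\Path{A}{a}{a'}$. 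More honestly, I would run the argument as: the map $\Path{A}{a}{a'} \to \picl{X}{A\to\UKan}{Xa \to Xa'}$ given by transport is an equivalence because its action on total spaces $\sigmacl{a'}{A}{\Path{A}{a}{a'}} \to \sigmacl{a'}{A}{\picl{X}{A\to\UKan}{Xa\to Xa'}}$ is a map between contractible types: the domain is contractible by \cref{rec:singleton-contractibility}, and the codomain is contractible by the parametricity argument above (a polymorphic transport is determined up to path by where it sends the identity, exactly as in \cref{thm:polymorphic-identity}). Then \citepalias[Theorem 4.7.7]{hott-book} promotes this to the fiberwise equivalence we want.

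\textbf{Main obstacle.} The delicate part is the bookkeeping in the parametricity step: obtaining the equivalence $\Bridge{\bm{x}.A\to\UKan}{P}{Q} \simeq \picl{b}{A}{Pb\times Qb \to \UKan}$ requires chaining \cref{thm:function-bridge}, \cref{thm:relativity} (applied at $\UKan$ with endpoints $A, A$), and \cref{lem:link-unlink}, \emph{and} then using $\isBDisc{A}$ to contract the bridge-over-$A$ argument — and one must check that the composite is compatible with the endpoints $P, Q$ and with the maps actually occurring (i.e.\ that the ``polymorphic transport'' built from a path in $A$ corresponds under these equivalences to the canonical element). Verifying that the element $\ungel{\bm{x}.F(\cdots)(\cdots)}$-style witness extracted from a given $F : \picl{X}{A\to\UKan}{Xa\to Xa'}$ really lands in $\Path{A}{a}{a'}$, exactly as in the proof of \cref{thm:polymorphic-identity}, is where I expect the real work; the contractibility-of-total-space packaging is what lets me avoid computing the two round-trip homotopies explicitly. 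Bridge-discreteness is used in exactly one place and cannot be dropped: without it the parametric interpretation of $A\to\UKan$ would only give heterogeneous relations indexed by $\Bridge{A}{b}{b'}$, not pointwise ones, and the chain would terminate at something strictly weaker than $\Path{A}{a}{a'}$.
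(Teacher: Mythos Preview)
Your packaging via total-space contractibility is the same device the paper invokes through \cref{lem:path-retract-to-equiv}: it reduces to showing $E_A(a,a')\eqdef\picl{X}{A\to\UKan}{Xa\to Xa'}$ is a retract of $\Path{A}{a}{a'}$, with forward map the Yoneda evaluation $\mathsf{in}_A(f)\eqdef f(\lam{b}{\Path{A}{a}{b}})(\dlam{\_}{a})$ and backward map transport $\mathsf{out}_A$. Your detour through characterizing $\Bridge{\bm{x}.A\to\UKan}{P}{Q}$ as pointwise relations is correct and is precisely the intuition the paper flags in the prose preceding the theorem, but the paper's proof does \emph{not} go through that equivalence; it is a direct $\Gel$/$\extent$/$\ungel$ construction, and the pointwise-relation picture is only the conceptual explanation of why it succeeds.

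What you leave unconstructed is the retract witness, and it is not ``exactly as in \cref{thm:polymorphic-identity}.'' In context $(X:A\to\UKan,\,u:Xa)$ the paper defines a \emph{bridge-indexed} relation on $\Path{A}{a}{b}\times Xb'$,
\[
  R(b,b')(q:\Bridge{A}{b}{b'})(p,v)\;\eqdef\;\Path{x.X(\dapp{Tbb'q}{x})}{\mathsf{out}_A\,aa'p\,X\,u}{v},
\]
where $T$ is the inverse of $\loosen{A}$ supplied by bridge-discreteness. One applies $f$, via $\extent$, at the family $b.b'.q.\,\Gel{\bm{x}}{\Path{A}{a}{b}}{Xb'}{Rbb'q}$ and feeds in a witness of $R(a,a,\blam{\_}{a})(\dlam{\_}{a},u)$ built using \cref{lem:loosen-refl}; $\ungel$ then yields a term in $\Path{Xa'}{\mathsf{out}_A(\mathsf{in}_A f)Xu}{fXu}$, which after function extensionality is the retract. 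The $\extent$ typing forces the relation to be indexed by $\Bridge{A}{b}{b'}$---you cannot first collapse it to a pointwise relation and then hand it to $f$---so bridge-discreteness enters not by simplifying the ambient type of relations but concretely inside $R$, via $T$, ensuring the extracted witness lives in a path type rather than a bridge type.
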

\begin{proof}
  \newcommand{\encode}{\mathsf{in}_A}
  \newcommand{\decode}{\mathsf{out}_A}

  As $A$ is bridge-discrete, we have a family of functions
  $T \in \picl{a,a'}{A}{\Bridge{A}{a}{a'} \to \Path{A}{a}{a'}}$ which are form an inverse to $\loosen{A}$.
  Abbreviate $E_A(a,a') \eqdef \picl{X}{A \to \UKan}{Xa \to Xa'}$. By \cref{lem:path-retract-to-equiv}, it
  suffices to show that $E_A(a,a')$ is a retract of $\Path{A}{a}{a'}$ for every $a,a' : A$. In the forward
  direction, we have the following family of functions.
  \[
    \encode \eqdef \lam{a}{\lam{a'}{\lam{f}{f(\lam{b}{\Path{A}{a}{b}})(\dlam{\_}{a})}}}
    \in \picl{a,a'}{A}{E_A(a,a') \to \Path{A}{a}{a'}}
  \]
  Conversely, the following gives us a family of candidate inverses.
  \[
    \decode \eqdef \lam{a}{\lam{a'}{\lam{q}{\lam{X}{\lam{u}{\coe{z.X(\dapp{q}{z})}{0}{1}{u}}}}}} \in \picl{a,a'}{A}{\Path{A}{a}{a'} \to E_A(a,a')}
  \]
  To show that these form a retract, we use a parametricity argument. Set
  $\GG \eqdef (\oft{a,a'}{A}, \oft{X}{A \to \UKan}, \oft{u}{Xa})$.  We define a family of relations
  $R \in \picl{b,b'}{A}{\Bridge{A}{b}{b'} \to (\Path{A}{a}{b} \times Xb') \to \UKan}$ in this context as
  follows.
  \[
    R \eqdef \lam{b}{\lam{b'}{\lam{q}{\lam{\pair{p}{v}}{\Path{x.X(\dapp{Tbb'q}{x})}{\decode aa'pXu}{v}}}}}
  \]
  We have some $L \in Raa(\blam{\_}{a})\pair{\dlam{\_}{a}}{u}$, as we have a path
  $\dlam{y}{\coe{\_.Xa}{y}{1}{u}} \in \Path{Xa}{\decode aa' (\dlam{\_}{a})Xu}{u}$ and know that $Xa$ is
  path-equal to $X(\dapp{Taa(\blam{\_}{a})}{x})$ by \cref{lem:loosen-refl}.
  
  Let $f : E_A(a,a')$, and choose $\bm{x}$ fresh. We apply $f$ to the relation $R$ and its inhabitant $L$ in
  the following way, using $\extent$ to create a bridge in the function type $A \to \UKan$.
  \[
    f
    (\lam{t}{\extent{\bm{x}}{t}{b.\Path{A}{a}{b}}{b'.Xb'}{b.b'.q.\blam{\bm{x}}{\Gel{\bm{x}}{\Path{A}{a}{b'}}{Xb}{Rbb'q}}}})
    (\gel{\bm{x}}{\dlam{\_}{a}}{u}{L})
  \]
  This is a bridge over $\bm{x}.\Gel{\bm{x}}{\Path{A}{a}{a'}}{Xa'}{Ra'a'(\blam{\_}{a'})}$ between
  $\encode aa' f$ and $fXu$. Applying $\ungel$ thus gives a term of type
  $\Path{x.X(\dapp{Ta'a'(\blam{\_}{a'})}{x})}{\decode aa'(\encode aa'f)Xu}{fXu}$. We recall that
  $\dapp{Ta'a'(\blam{\_}{a'})}{x}$ is path-equal to $a'$ by \cref{lem:loosen-refl}, so this gives a term of
  type $\Path{Xa'}{\decode aa'(\encode aa'f)Xu}{fXu}$. Finally, we obtain a term of type
  $\Path{E_A(a,a')}{\decode aa'(\encode aa'f)}{f}$ by function extensionality for paths.
\end{proof}

\subsection{Bridges of booleans}
\label{sec:examples:bool}

For this example, we assume the existence of a boolean type $\bool$ with elements $\true$,$\false$ and an
eliminator $\ifb$. For convenience, we assume an exact $\eta$-principle
$\eqtm{\oft{b}{\bool}}{b}{\ifb{\_.\bool}{b}{\true}{\false}}{\bool}$. This principle is derivable up to a path
in any case, but having an exact equality simplifies the proofs. (It holds for the boolean type defined in
\citep{chtt-iii}, but not for the ``weak boolean'' type also defined there.)

A standard argument characterizes the type $\Path{\bool}{b}{b'}$ for all $b,b'$: it is contractible when
$b \eq b' \eq \true$ or $b \eq b' \eq \false$, and it is empty when $b \eq \true$ and $b' \eq \false$ or vice
versa. In this section, we will show that $\bool$ is bridge-discrete, meaning that $\Bridge{\bool}{b}{b'}$
satisfies the same characterization. Where the calculation of $\Path{\bool}$ uses a universe, our calculation
of $\Bridge{\bool}$ will also use the fact that the universe is relativistic, i.e., that it is closed under
$\Gel$-types. This situation has an interesting parallel in the case of higher inductive types, where
univalence is used to calculate path types.

Before proceeding with the proof, we can give an intuitive argument why relativity generally enables the
characterization of bridges in inductive types. As \cref{sec:examples:id} effectively does for the $\unit$
type, we can use relativity to prove that an inductive type is equivalent to its Church encoding; in the case
of $\bool$, this is $\picl{X}{\UKan}{X \to X \to X}$. When the Church encoding is composed of types whose
bridges we already understand (such as $\UKan$ and $\to$), we can calculate \emph{its} bridge type; for
$\bool$, we see that $\Bridge{\picl{X}{\UKan}{X \to X \to X}}{G_0}{G_1}$ is equivalent to the following type.
\[
  \picl*{X_0,X_1}{\UKan}{\picl*{\overline{X}}{X_0 \times X_1 \to \UKan}{\picl*{t_0,f_0}{X_0}{\picl*{t_1,f_1}{X_1}{\picl*{\overline{t}}{\overline{X}\pair{t_0}{t_1}}{\picl{\overline{f}}{\overline{X}\pair{f_0}{f_1}}{\overline{X}\pair{G_0X_0t_0f_0}{G_1X_1t_1f_1}}}}}}}
\]
With a second parametricity argument, we can then show that \emph{this} type is an encoding for (the
appropriate index of) $\Path{\bool}$. While our proof of \cref{thm:bool-bridge} is somewhat more direct (we do
not explicitly use the Church encoding), its conceptual structure follows this outline.

\begin{definition}
  For any $\coftype[\cx!\apartcx{\Phi}{\bm{r}}!]{A}{\UKan}$, define $\PathGel{\bm{r}}{A} \eqdef \Gel{\bm{r}}{A}{A}{a.a'.\Path{A}{a}{a'}}$.
\end{definition}

\begin{theorem}
  \label{thm:bool-bridge}
  $\bool$ is bridge-discrete.
\end{theorem}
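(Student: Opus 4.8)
The plan is to compute the bridge types of $\bool$ by exactly the codes that describe its path types in ordinary cubical type theory, substituting relativity for transport at the one step where the path calculation appeals to the universe. Recall the family $\coftype{\mathsf{code}}{\bool \to \bool \to \UKan}$ (defined by iterated $\ifb$) with $\mathsf{code}(\true,\true) \eq \mathsf{code}(\false,\false) \eq \unit$ and $\mathsf{code}(\true,\false) \eq \mathsf{code}(\false,\true) \eq \void$, together with the standard characterization $\Path{\bool}{b}{b'} \simeq \mathsf{code}(b,b')$ recalled above. I will establish a companion equivalence $\Bridge{\bool}{b}{b'} \simeq \mathsf{code}(b,b')$ for all $b,b'$; composing the two gives $\Bridge{\bool}{b}{b'} \simeq \Path{\bool}{b}{b'}$, and then $\bool$ is bridge-discrete by \cref{cor:retract-to-discrete}. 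The decode map $\mathsf{code}(b,b') \to \Bridge{\bool}{b}{b'}$ is routine: case on $b$ and $b'$, send the element of $\unit$ to the constant bridge $\blam{\_}{b}$ on the diagonal, and eliminate the empty type off it. Since each $\mathsf{code}(b,b')$ is a proposition, by \cref{rec:qequiv-to-equiv} it suffices to supply an encode map and to check the single composite $\mathsf{decode} \circ \mathsf{encode} \sim \id$ on $\Bridge{\bool}{b}{b'}$ — the other composite, landing in a proposition, is automatic.

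For the encode map, given $\coftype{q}{\Bridge{\bool}{b}{b'}}$ I pick a fresh bridge variable $\bm{x}$ and run the boolean eliminator on $\bapp{q}{\bm{x}}$ with the constant motive $\Gel{\bm{x}}{\bool}{\bool}{a.a'.\mathsf{code}(a,a')}$ (well-formed because $\bool$ and $\mathsf{code}$ do not mention $\bm{x}$), sending $\true$ to $\gel{\bm{x}}{\true}{\true}{\triv}$ and $\false$ to $\gel{\bm{x}}{\false}{\false}{\triv}$. Using the exact $\bool$-$\eta$ rule, the resulting term $r_q$ of this $\Gel$-type has $\bm{x}=\bm{0}$ and $\bm{x}=\bm{1}$ faces $b$ and $b'$ (via \cref{rule:bridge-betapartial} and $\Gel$-$\mathrm{I}_{\bm{\Ge}}$), so $\blam{\bm{x}}{r_q}$ is a bridge over $\bm{x}.\Gel{\bm{x}}{\bool}{\bool}{a.a'.\mathsf{code}(a,a')}$ from $b$ to $b'$; applying $\ungel$ via \cref{rule:gel-E} (equivalently \cref{lem:link-unlink}) yields $\ungel{\bm{x}.r_q} : \mathsf{code}(b,b')$. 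This is the bridge-theoretic analogue of defining $\mathsf{encode}$ for paths by transporting the canonical element of $\mathsf{code}(b,b)$: where the path version uses $\coe$, this version uses $\Gel$ and its eliminator, and the dependence on $q$ survives because $r_q$ is a stuck $\ifb$ on $\bapp{q}{\bm{x}}$.

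What remains — and is the crux — is to show that for every $\coftype{q}{\Bridge{\bool}{b}{b'}}$, $\mathsf{decode}(\mathsf{encode}(q))$ is path-equal to $q$. Casing on $b$ and $b'$, the off-diagonal cases are vacuous, since there $\mathsf{encode}(q)$ inhabits the empty type; this already yields the corollary that there is no bridge between $\true$ and $\false$. On the diagonal, say $b \eq b' \eq \true$, $\mathsf{encode}(q)$ lies in $\mathsf{code}(\true,\true) \eq \unit$, so it is path-equal to the canonical element and hence $\mathsf{decode}(\mathsf{encode}(q))$ is path-equal to $\blam{\_}{\true}$; the obligation therefore reduces to showing $\Bridge{\bool}{\true}{\true}$ is contractible with centre $\blam{\_}{\true}$ (and symmetrically for $\false$). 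I expect this to be the hard step. The naive approach — casing on $\bapp{q}{\bm{x}}$ directly into the path type $\Path{\bool}{c}{\true}$ — is blocked because the $\false$-branch demands an element of the empty type $\Path{\bool}{\false}{\true}$; one must instead use a $\Gel$-valued motive carrying the path relation, so the unreachable branch is discharged relationally, and then extract the desired path through the two-dimensional form of relativity described in the remark following \cref{thm:relativity}, which composes \cref{thm:function-bridge} with a second application of relativity and so brings in a second, nested use of $\Gel$-types (the substructural bookkeeping of keeping the inner $\Gel$'s data apart from the outer bridge dimension being the real technical content). Conceptually this mirrors \cref{thm:polymorphic-identity} one dimension up: it is the parametricity argument showing $\bool$ coincides with its Church encoding $\picl{X}{\UKan}{X \to X \to X}$, applied to bridges. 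Once the diagonal cases are settled, \cref{rec:qequiv-to-equiv} gives $\Bridge{\bool}{b}{b'} \simeq \mathsf{code}(b,b') \simeq \Path{\bool}{b}{b'}$ and \cref{cor:retract-to-discrete} concludes.
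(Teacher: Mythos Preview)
Your strategy is correct and closely parallels the paper's. The main difference is a choice of relation in the one-dimensional $\Gel$-motive: you use $\mathsf{code}$ and then case on the endpoints, whereas the paper uses $\Path$ directly---setting $\PathGel{\bm{x}}{\bool} \eqdef \Gel{\bm{x}}{\bool}{\bool}{a.a'.\Path{\bool}{a}{a'}}$---and proves the retract uniformly in $b,b'$. Its map $\mathsf{tighten}$ is exactly your $\mathsf{encode}$ with $\Path$ in place of $\mathsf{code}$, so that $\ungel$ already lands in $\Path{\bool}{b}{b'}$ and no separate $\Path \simeq \mathsf{code}$ step is needed. Your case split buys the off-diagonal corollary immediately; the paper's uniform route avoids duplicating the diagonal argument for $\true$ and $\false$. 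You correctly identify the two-dimensional nested-$\Gel$ step as the crux, and your conceptual account (iterated parametricity, the Church-encoding argument one dimension up) matches the paper's own motivating discussion. The paper carries this step out explicitly: it defines a path-of-bridges relation $R$, turns it into a square $R_{\bm{x},\bm{y}}$ via $\extent$ and an inner $\Gel$ wrapped in an outer $\Gel$, eliminates $\bapp{Q}{\bm{x}}$ into $R_{\bm{x},\bm{y}}$ with concrete branch terms $T_{\bm{x},\bm{y}}$ and $F_{\bm{x},\bm{y}}$, and then applies $\ungel$ twice. One small imprecision in your sketch: the $\false$ branch of that $\ifb$ is not ``discharged'' or vacuous---even when specialized to the $\true$--$\true$ diagonal it must be filled with a genuine witness $F_{\bm{x},\bm{y}}$ to type the eliminator, and it is the two-dimensional relational structure, not an emptiness argument, that makes the square commute.
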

\begin{proof}
  \NewDocumentCommand{\tighten}{o g}{\mathsf{tighten}\IfValueT{#1}{_{#1}}\IfValueT{#2}{(#2)}}
  \NewDocumentCommand{\tightenrefl}{o g}{\mathsf{tighten{\mh}refl}\IfValueT{#1}{_{#1}}\IfValueT{#2}{(#2)}}

  By \cref{cor:retract-to-discrete}, it suffices to show that $\Bridge{\bool}{M_0}{M_1}$ is a retract of
  $\Path{\bool}{M_0}{M_1}$ for every $M_0,M_1$. For any $\coftype{Q}{\Bridge{\bool}{M_0}{M_1}}$, we first
  define $\coftype[\cx!\Phi,\bm{x}!]{\tighten[\bm{x}]{Q}}{\PathGel{\bm{x}}{\bool}}$ by
  \[
    \tighten[\bm{x}]{Q} \eqdef \ifb{\_.\PathGel{\bm{x}}{\bool}}{\bapp{Q}{\bm{x}}}{\gel{\bm{x}}{\true}{\true}{\dlam{\_}{\true}}}{\gel{\bm{x}}{\false}{\false}{\dlam{\_}{\false}}}.
  \]
  Observe that $\tighten[\bm{x}]{Q}$ has endpoints
  $\tighten[\bm{\Ge}]{Q} \eq \ifb{\_.\bool}{M_\Ge}{\true}{\false} \eq M_{\Ge}$ for $\Ge \in \{0,1\}$. Thus we
  may set $\coftype{\tighten{Q} \eqdef \ungel{\bm{x}.\tighten[\bm{x}]{Q}}}{\Path{\bool}{M_0}{M_1}}$.  For
  the map in the other direction, we take the previously-defined $\loosen{\bool}$. It remains to construct a
  path from $\loosen{\bool}{\tighten{Q}}$ to $Q$ for every $\coftype{Q}{\Bridge{\bool}{M_0}{M_1}}$.

  As a preliminary, we observe that $\tighten$ takes reflexive bridges to reflexive paths. For
  $\coftype{M}{\bool}$ we have a term
  $\coftype{\tightenrefl{M}}{\Path{\Path{\bool}{M}{M}}{\tighten{\blam{\_}{M}}}{\dlam{\_}{M}}}$ defined by case
  analysis:
  \[
    \tightenrefl{M} \eqdef
    \ifb{b.\Path{\Path{\bool}{b}{b}}{\tighten{\blam{\_}{b}}}{\dlam{\_}{b}}}{M}{\dlam{\_}{\dlam{\_}{\true}}}{\dlam{\_}{\dlam{\_}{\false}}}.
  \]
  Next, we define a two-dimensional relation $R$ of type
  \[
    \picl{b_{00},b_{01},b_{10},b_{11}}{\bool}{\Path{\bool}{b_{00}}{b_{10}} \times \Bridge{\bool}{b_{01}}{b_{11}} \to \Path{\bool}{b_{00}}{b_{01}} \times \Path{\bool}{b_{10}}{b_{11}} \to \UKan}
  \]
  by
  \[
    R \eqdef \lam{b_{00}}{\lam{b_{01}}{\lam{b_{10}}{\lam{b_{11}}{\lam{\pair{p}{q}}{\lam{\pair{p_0}{p_1}}{\Path{z.\Bridge{\bool}{\dapp{p_0}{z}}{\dapp{p_1}{z}}}{\loosen{\bool}{p}}{q}}}}}}}.
  \]
  Pictorially, an inhabitant of $R\etc{b_{ij}}\pair{p}{q}\pair{p_0}{p_1}$ is a filler for the
  following square.
  \[
    \begin{tikzpicture}
      \draw (0 , 2) [thick,->] to node [above] {\small $\bm{x}$} (0.5 , 2) ;
      \draw (0 , 2) [->] to node [left] {\small $z$} (0 , 1.5) ;
      % alignment hack
      \node () at (7, 2) { };
      % nodes
      \node (tl) at (1.5 , 2) {$b_{00}$} ;
      \node (tr) at (5.5 , 2) {$b_{10}$} ;
      \node (bl) at (1.5 , 0) {$b_{01}$} ;
      \node (br) at (5.5 , 0) {$b_{11}$} ;
      % edges
      \draw (tl) [thick,->] to node [above] {$\bapp{\loosen{\bool}{p}}{\bm{x}}$} (tr) ;
      \draw (tl) [->] to node [left] {$\dapp{p_0}{z}$} (bl) ;
      \draw (tr) [->] to node [right] {$\dapp{p_1}{z}$} (br) ;
      \draw (bl) [thick,->] to node [below] {$\bapp{q}{\bm{x}}$} (br) ;
    \end{tikzpicture}
  \]
  We now convert this relation into a two-dimensional bridge in $\UKan$ using iterated $\Gel$-types. As a
  first step, we define a one-dimensional bridge of relations
  $\coftype[\cx!\Phi,\bm{x}!]{R_{\bm{x}}}{\PathGel{\bm{x}}{\bool} \times \bool \to \UKan}$ as the following
  term.
  \[
    R_{\bm{x}} \eqdef \lam{t}{\bigextent{\bm{x}}{t}{\pair{b_{00}}{b_{01}}.\Path{\bool}{b_{00}}{b_{01}}}{\pair{b_{10}}{b_{11}}.\Path{\bool}{b_{10}}{b_{11}}}{\pair{b_{00}}{b_{01}}.\pair{b_{10}}{b_{11}}.u.\\\quad \Gel{\bm{x}}{\Path{\bool}{b_{00}}{b_{01}}}{\Path{\bool}{b_{10}}{b_{11}}}{R\etc{b_{ij}}\pair{\ungel{\bm{x}.\fst{\bapp{u}{\bm{x}}}}}{\blam{\bm{x}}{\snd{\bapp{u}{\bm{x}}}}}}}}
  \]
  Now we apply a second $\Gel$, defining
  $\coftype[\cx!\Phi,\bm{x},\bm{y}!]{R_{\bm{x},\bm{y}} \eqdef
    \Gel{\bm{y}}{\PathGel{\bm{x}}{\bool}}{\bool}{R_{\bm{x}}}}{\UKan}$. Observe that this type square satisfies
  the following boundary conditions.
  \[
    R_{\bm{0},\bm{y}} \eq \PathGel{\bm{y}}{\bool} \quad
    R_{\bm{1},\bm{y}} \eq \PathGel{\bm{y}}{\bool} \quad
    R_{\bm{x},\bm{0}} \eq \PathGel{\bm{x}}{\bool} \quad
    R_{\bm{x},\bm{1}} \eq \bool
  \]
  We next define two terms
  $\coftype[\cx!\Phi,\bm{x},\bm{y}!]{T_{\bm{x},\bm{y}},F_{\bm{x},\bm{y}}}{R_{\bm{x},\bm{y}}}$ as follows.
  \begin{align*}
    T_{\bm{x},\bm{y}} &\eqdef \gel{\bm{y}}{\gel{\bm{x}}{\true}{\true}{\dlam{\_}{\true}}}{\true}{\gel{\bm{x}}{\dlam{\_}{\true}}{\dlam{\_}{\true}}{\loosenrefl{\bool}{\true}}} \\
    F_{\bm{x},\bm{y}} &\eqdef \gel{\bm{y}}{\gel{\bm{x}}{\false}{\false}{\dlam{\_}{\false}}}{\false}{\gel{\bm{x}}{\dlam{\_}{\false}}{\dlam{\_}{\false}}{\loosenrefl{\bool}{\false}}}
  \end{align*}
  These terms serve to witness the truth of
  $R(\true)(\true)(\true)(\true)\pair{\dlam{\_}{\true}}{\blam{\_}{\true}}\pair{\dlam{\_}{\true}}{\dlam{\_}{\true}}$
  and
  $R(\false)(\false)(\false)(\false)\pair{\dlam{\_}{\false}}{\blam{\_}{\false}}\pair{\dlam{\_}{\false}}{\dlam{\_}{\false}}$ respectively.

  Now, given $\coftype{Q}{\Bridge{\bool}{M_0}{M_1}}$, we define
  $\coftype[\cx!\Phi,\bm{x},\bm{y}!]{I_{\bm{x},\bm{y}} \eqdef
    \ifb{\_.R_{\bm{x},\bm{y}}}{\bapp{Q}{\bm{x}}}{T_{\bm{x},\bm{y}}}{F_{\bm{x},\bm{y}}}}{R_{\bm{x},\bm{y}}}$. By
  inspection of the definition of $\tighten[\bm{x}]$, we see that this term has the following boundary.
  \[
    \begin{tikzpicture}
      \draw (-3 , 2) [thick,->] to node [above] {\small $\bm{x}$} (-2.5 , 2) ;
      \draw (-3 , 2) [thick,->] to node [left] {\small $\bm{y}$} (-3 , 1.5) ;
      % alignment hack
      \node () at (10, 2) { };
      % nodes
      \node (tl) at (1.5 , 2) {$M_0$} ;
      \node (tr) at (6.5 , 2) {$M_0$} ;
      \node (bl) at (1.5 , 0) {$M_1$} ;
      \node (br) at (6.5 , 0) {$M_1$} ;
      % edges
      \draw (tl) [thick,->] to node [above] {$\tighten[\bm{x}]{Q}$} (tr) ;
      \draw (tl) [thick,->] to node [left] {$\tighten[\bm{y}]{\blam{\_}{M_0}}$} (bl) ;
      \draw (tr) [thick,->] to node [right] {$\tighten[\bm{y}]{\blam{\_}{M_1}}$} (br) ;
      \draw (bl) [thick,->] to node [below] {$\bapp{Q}{\bm{x}}$} (br) ;
      % filler
      \node at (4, 1) {$\ifb{\_.R_{\bm{x},\bm{y}}}{\bapp{Q}{\bm{x}}}{T_{\bm{x},\bm{y}}}{F_{\bm{x},\bm{y}}}$} ;
    \end{tikzpicture}
  \]
  Thus we have
  $\coftype[\cx!\Phi,\bm{x}!]{\ungel{\bm{y}.I_{\bm{x},\bm{y}}}}{R_{\bm{x}}\pair{\tighten[\bm{x}]{Q}}{\bapp{Q}{\bm{x}}}}$
  with $\ungel{\bm{y}.I_{\bm{\Ge},\bm{y}}} \eq \tighten{\blam{\_}{M_\Ge}}$ for $\Ge \in \{0,1\}$. In turn, we
  have
  \[
    \coftype{\ungel{\bm{x}.\ungel{\bm{y}.I_{\bm{x},\bm{y}}}}}{\Path{z.\Bridge{\bool}{\dapp{\tighten{\blam{\_}{M_0}}}{z}}{\dapp{\tighten{\blam{\_}{M_1}}}{z}}}{\loosen{\bool}{\tighten{Q}}}{Q}}.
  \]
  Finally, we can transform this term into a term of type
  $\Path{z.\Bridge{\bool}{M_0}{M_1}}{\loosen{\bool}{\tighten{Q}}}{Q}$ by rewriting along $\tightenrefl{M_0}$
  and $\tightenrefl{M_1}$ with $\coe$.
\end{proof}

Assuming the existence of an empty type $\void$, we have the following corollary.

\begin{corollary}
  There is a term of type $\Bridge{\bool}{\true}{\false} \to \void$.
\end{corollary}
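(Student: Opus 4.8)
The plan is to combine the bridge-discreteness of $\bool$ established in \cref{thm:bool-bridge} with the standard cubical refutation of $\Path{\bool}{\true}{\false}$. Since $\bool$ is bridge-discrete, the canonical comparison map $\lam{p}{\loosen{\bool}{p}} : \Path{\bool}{\true}{\false} \to \Bridge{\bool}{\true}{\false}$ is an equivalence; in particular it has an inverse $\Bridge{\bool}{\true}{\false} \to \Path{\bool}{\true}{\false}$, and composing this inverse with a map $\Path{\bool}{\true}{\false} \to \void$ produces the term we need.

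First I would extract from \cref{thm:bool-bridge} a witness $D : \isBDisc{\bool}$ and instantiate it at the endpoints $\true,\false$, obtaining an inhabitant of $\isEquiv{\Path{\bool}{\true}{\false}}{\Bridge{\bool}{\true}{\false}}{\lam{p}{\loosen{\bool}{p}}}$. Packaging the underlying map together with this proof yields an equivalence $E : \Path{\bool}{\true}{\false} \simeq \Bridge{\bool}{\true}{\false}$, and the notation introduced after \cref{rec:qequiv-to-equiv} supplies its backward component $\bwd{E} : \Bridge{\bool}{\true}{\false} \to \Path{\bool}{\true}{\false}$.

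Next I would refute $\Path{\bool}{\true}{\false}$ by the usual universe argument, relying on an ambient $\unit$ type (as in \cref{sec:examples:id}) and the $\void$ type of the corollary, both assumed to live in $\UKan$. Using the eliminator $\ifb$, define $C : \bool \to \UKan$ with $C\,\true \eq \unit$ and $C\,\false \eq \void$; these equalities hold on the nose by the $\bool$-$\beta$ rules. Then for any $p : \Path{\bool}{\true}{\false}$ the line of types $x.C(\dapp{p}{x})$ has left endpoint $\unit$ and right endpoint $\void$, so $\coe{x.C(\dapp{p}{x})}{0}{1}{\triv} : \void$. Assembling the pieces, the required inhabitant of $\Bridge{\bool}{\true}{\false} \to \void$ is $\lam{q}{\coe{x.C(\dapp{\bwd{E}(q)}{x})}{0}{1}{\triv}}$.

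There is no real obstacle: once \cref{thm:bool-bridge} is in hand the argument is entirely routine, and the only points needing attention are the bookkeeping of extracting $\bwd{E}$ via \cref{rec:qequiv-to-equiv} and the harmless assumption that $\unit$ and $\void$ are small, exactly as in the corresponding statement for $\Path{\bool}{\true}{\false}$.
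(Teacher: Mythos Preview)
Your argument is correct and is exactly the intended one: the paper states the corollary without proof, relying on the preceding observation that the standard characterization of $\Path{\bool}{b}{b'}$ transfers along the equivalence from \cref{thm:bool-bridge}. Your explicit construction of the refutation via $\coe$ along $\ifb{\_.\UKan}{-}{\unit}{\void}$ and the backward map of the bridge-discreteness equivalence is precisely the routine unpacking of that remark.
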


\subsection{Excluding an excluded middle}
\label{sec:examples:lem}

Exploiting the lack of a bridge between $\true$ and $\false$, we can refute the law of excluded middle for
propositions as formulated in \citepalias[\S3.4]{hott-book}. For other results which follow from the
refutation of this principle, see \citep{booij17}.

\begin{lemma}
  \label{lem:u-to-discrete-constant}
  Let $\cwftypek{A}$ be bridge-discrete. Every function $f : \UKan \to A$ is constant, in the sense that there
  exists some $M \in A$ such that $\picl{X}{\UKan}{\Path{A}{fX}{M}}$ is inhabited.
\end{lemma}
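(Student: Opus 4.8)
The plan is to run a one‑dimensional parametricity argument of exactly the kind used in \cref{thm:polymorphic-identity}: feed $f$ a $\Gel$-type interpolating between an arbitrary $X$ and a fixed type, obtain from $f$ a bridge in $A$ between $fX$ and $f$ of the fixed type, and then invoke bridge‑discreteness of $A$ to turn that bridge into a path.

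Concretely, fix $\coftype{f}{\UKan \to A}$ and set $M \eqdef f\unit$, using the $\unit$ type available from \cref{sec:examples:id} (any fixed closed type would do). Given $\coftype{X}{\UKan}$ and a fresh bridge dimension $\bm{x}$, the type $\Gel{\bm{x}}{X}{\unit}{\lam{\_}{\unit}}$ lies in $\UKan$ — by the closure of the universe under $\Gel$-types established in \cref{sec:gel} — has each of its components apart from $\bm{x}$, and has $\bsubst{}{\bm{0}}{\bm{x}}$-face $X$ and $\bsubst{}{\bm{1}}{\bm{x}}$-face $\unit$ by \cref{rule:gel-F0,rule:gel-F1}. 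Applying $f$ and abstracting over $\bm{x}$ therefore yields
\[
  \coftype{\blam{\bm{x}}{f(\Gel{\bm{x}}{X}{\unit}{\lam{\_}{\unit}})}}{\Bridge{A}{fX}{M}}.
\]
Since $A$ is bridge‑discrete, the map $\loosen{A} : \Path{A}{fX}{M} \to \Bridge{A}{fX}{M}$ is an equivalence; passing to the induced quasi‑equivalence (via \cref{rec:qequiv-to-equiv}) and applying its backward component to the bridge above produces an element of $\Path{A}{fX}{M}$. Abstracting over $X$ gives the required inhabitant of $\picl{X}{\UKan}{\Path{A}{fX}{M}}$, so $f$ is constant with witness $M$.

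I do not expect any real obstacle here; the only conceptual point is recognizing that bridge‑discreteness is precisely what is needed, since without it the $\Gel$-trick only shows that $f$ is constant ``up to a bridge,'' which is vacuous (e.g.\ $f \eqdef \lam{X}{X} : \UKan \to \UKan$ is not constant, and $\UKan$ is not bridge‑discrete). The routine checks are that the $\Gel$-type is well‑formed in $\UKan$ — guaranteed by the closure proposition of \cref{sec:gel} — and that $\bm{x}$ is genuinely fresh for $X$, $\unit$, and $\lam{\_}{\unit}$, which holds because all three inhabit the ambient context into which $\bm{x}$ is freshly introduced.
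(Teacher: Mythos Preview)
Your proof is correct and follows essentially the same approach as the paper: build a bridge in $A$ by applying $f$ to a $\Gel$-type connecting an arbitrary $X$ to a fixed anchor type, then use bridge-discreteness to convert the bridge to a path. The only cosmetic difference is that the paper uses $A$ itself as the anchor (noting the choice is immaterial) with relation $\_.\_.A$, whereas you use $\unit$ with relation $\lam{\_}{\unit}$.
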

\begin{proof}
  Let $f : \UKan \to A$. Set $M \eqdef fA$ (the choice of $A$ here is immaterial). For any $X : \UKan$, we
  have a bridge $\coftype{\blam{\bm{x}}f(\Gel{\bm{x}}{A}{X}{\_.\_.A})}{\Bridge{A}{fA}{fX}}$, which gives rise
  to a path of type $\Path{A}{M}{fX}$ by the assumption that $A$ is bridge-discrete.
\end{proof}

\begin{theorem}
  Define the \emph{weak law of the excluded middle} $\cwftypek{\LEMW}$ by
  \[
    \LEMW \eqdef \picl{X}{\UKan}{\sigmacl{b}{\bool}{\ifb{\UKan}{b}{\negate{X}}{\negate{\negate{X}}}}}.
  \]
  There is a term of type $\LEMW \to \void$.
\end{theorem}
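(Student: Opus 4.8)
The plan is to reduce the statement to \cref{lem:u-to-discrete-constant} together with bridge-discreteness of $\bool$ (\cref{thm:bool-bridge}). Suppose we are given $h : \LEMW$. Taking first projections yields a function $f \eqdef \lam{X}{\fst{hX}}$ of type $\arr{\UKan}{\bool}$, and since $\bool$ is bridge-discrete, \cref{lem:u-to-discrete-constant} supplies a fixed element $M : \bool$ together with a family $c : \picl{X}{\UKan}{\Path{\bool}{fX}{M}}$. In other words, the boolean $\fst{hX}$ returned by $h$ at $X$ is path-equal to $M$ uniformly in $X$.

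I would then eliminate $M$ into the constant motive $\void$. For any type $X$, the second projection gives $\snd{hX} : \ifb{\UKan}{\fst{hX}}{\negate{X}}{\negate{\negate{X}}}$; transporting this along the path $cX : \Path{\bool}{\fst{hX}}{M}$ (via a $\coe$ in the line $z.\ifb{\UKan}{\dapp{(cX)}{z}}{\negate{X}}{\negate{\negate{X}}}$) and reducing the $\ifb$ according to whether $M$ is $\true$ or $\false$ produces an element of $\negate{X}$ in the former case and of $\negate{\negate{X}}$ in the latter. In the case $M \eq \true$ I would take $X \eqdef \unit$, so that the resulting element of $\negate{\unit} = \arr{\unit}{\void}$, applied to $\triv$, yields an element of $\void$. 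In the case $M \eq \false$ I would take $X \eqdef \void$, so that the resulting element of $\negate{\negate{\void}} = \arr{(\arr{\void}{\void})}{\void}$, applied to $\lam{v}{v}$, yields an element of $\void$. Collecting the two branches gives the required term of type $\arr{\LEMW}{\void}$.

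The real content is entirely inherited: the only nontrivial input is \cref{thm:bool-bridge}, which was the substance of \cref{sec:examples:bool}, and through it \cref{lem:u-to-discrete-constant} becomes applicable; the rest is bookkeeping (assembling $f$, invoking constancy, transporting, and unfolding negations). The single point that demands attention is matching the test type to the branch: one must use $\unit$ against $M \eq \true$, so that $\negate{\unit}$ is uninhabited, and $\void$ against $M \eq \false$, so that $\negate{\negate{\void}}$ is uninhabited. The opposite assignments land in provable propositions and yield nothing, so this case split cannot be made uniform.
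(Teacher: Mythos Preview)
Your proof is correct and follows essentially the same approach as the paper: both extract the boolean-valued function $\lam{X}{\fst{hX}}$, invoke \cref{lem:u-to-discrete-constant} via \cref{thm:bool-bridge} to obtain constancy, and derive a contradiction from the behavior at $\unit$ and $\void$. The only organizational difference is that the paper argues that $\fst{h(\void)}$ must be $\true$ and $\fst{h(\unit)}$ must be $\false$ independently and then uses constancy to conclude $\true$ is path-equal to $\false$, whereas you case-split on the constant value $M$ first and pick the appropriate test type in each branch; these are two phrasings of the same argument.
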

\begin{proof}
  Suppose we have $f : \LEMW$. Then $\coftype{\lam{X}{\fst{\LEMW(X)}}}{\UKan \to \bool}$. By
  \cref{lem:u-to-discrete-constant,thm:bool-bridge} this function must be constant. On the other hand, it is
  easy to see that $\fst{f(\void)}$ must be $\true$ and $\fst{f(\unit)}$ must be $\false$. Thus we have a
  contradiction.
\end{proof}

\begin{corollary}
  Define the \emph{law of the excluded middle} $\cwftypek{\LEM}$ by
  \[
    \LEM \eqdef \picl{X}{\UKan}{\isProp{X} \to \sigmacl{b}{\bool}{\ifb{\UKan}{b}{X}{\negate{X}}}}.
  \]
  Then there is a term of type $\LEM \to \void$.
\end{corollary}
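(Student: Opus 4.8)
The plan is to reduce $\LEM$ to the weak law of the excluded middle $\LEMW$ and then invoke the preceding theorem. First I would observe that for every $\coftype{X}{\UKan}$, the type $\negate{X} \eqdef X \to \void$ is a proposition: $\void$ is a proposition (vacuously, via $\voidelim$), and a function type whose codomain is a proposition is itself a proposition \citepalias[Example 3.6.2]{hott-book}. Hence there is a term $\coftype{p_X}{\isProp{\negate{X}}}$, constructed uniformly in $X$.

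Next I would note that, since $\negate{X}$ is a proposition, we may instantiate $\LEM$ at $\negate{X}$ together with $p_X$, obtaining an element of $\sigmacl{b}{\bool}{\ifb{\UKan}{b}{\negate{X}}{\negate{\negate{X}}}}$. But this is exactly the fibre of $\LEMW$ over $X$. Therefore $\lam{f}{\lam{X}{f(\negate{X})(p_X)}}$ defines a function $\LEM \to \LEMW$. Composing this with the term of type $\LEMW \to \void$ established in the previous theorem (which itself rests on \cref{lem:u-to-discrete-constant} and \cref{thm:bool-bridge}) yields the desired term of type $\LEM \to \void$.

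There is essentially no obstacle here; this is just the standard observation that the excluded middle entails its weak form, transported to the present $\bool$-and-$\Sigma$ formulation. The only points requiring (entirely routine) care are that $\void$, and hence $\negate{X}$, are propositions so that $\LEM$ can legitimately be applied to $\negate{X}$, and that the $\bool$-indexed $\Sigma$-type produced by that application is literally the one demanded in the definition of $\LEMW$ at $X$.
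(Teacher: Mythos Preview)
Your proposal is correct and matches the paper's own proof: the paper simply observes that $\LEM$ implies $\LEMW$ because any type of the form $\negate{A}$ is a proposition, and then invokes the preceding theorem. You have spelled out the details of that implication (that $\void$ is a proposition and hence so is $\negate{X}$), but the strategy is identical.
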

\begin{proof}
  $\LEM$ implies $\LEMW$, as any type of the form $\negate{A}$ is a proposition.
\end{proof}

We note that the stronger principle
$\LEM_\infty \eqdef \picl{X}{\UKan}{\sigmacl{b}{\bool}{\ifb{\UKan}{b}{X}{\negate{X}}}}$ is already refutable
in homotopy type theory using univalence \citepalias[Corollary 3.2.7]{hott-book}. As with $\LEM_\infty$ in
homotopy type theory, the refutation of $\LEM$ need not be taken as a sign that parametric cubical type theory
is philosophically anti-classical, merely as a sign that propositionality is not a sufficiently restrictive
notion of irrelevance in this setting. In a system with a propositional truncation $\trunc{-}$ which
identifies all terms up to path equality, $\LEM$ is equivalent to the principle
$\picl{X}{\UKan}{\trunc{\sigmacl{b}{\bool}{\ifb{\UKan}{b}{X}{\negate{X}}}}}$. While this principle is
refutable, replacing $\trunc{-}$ with an operator which erases computational content, such as Nuprl's
\emph{squash type} \cite[\S10.3]{nuprl}, gives a principle which is perfectly consistent with parametric
cubical type theory.

\subsection{Polymorphic functions on higher inductive types}
\label{sec:examples:hits}

\newcommand{\suspmap}[3]{\ensuremath{\mathsf{susp{\mh}map}_{#1,#2}(#3)}}
\newcommand{\suspeta}[2]{\ensuremath{\mathsf{susp{\mh}\eta}_{#1}(#2)}}
\NewDocumentCommand\get{g g g}{\ensuremath{\mathsf{get}\IfValueT{#1}{_{#1,#2,#3}}}}

As a final example, we return to the problem of characterizing polymorphic functions, this time between higher
inductive types. As our test case, we take the \emph{suspension} type constructor. Given $\cwftypek{A}$, its
suspension type $\susp{A}$ is generated by three constructors: $\north \in \susp{A}$, $\south \in \susp{A}$,
and for every $a : A$ a path $\merid{y}{a} \in \susp{A}$ with $\merid{0}{a} \eq \north$ and
$\merid{1}{a} \eq \south$ \citepalias[\S6.5]{hott-book}. We write $\suspelim$ for its eliminator. We refer to
\cite{cavallo19} and \cite{coquand18} for more complete accounts of higher inductive types in cubical type
theory.

We aim to characterize the type $\picl{X}{\UKan}{\susp{X} \to \susp{X}}$ of polymorphic endofunctions on
suspensions. Intuitively, such a function is completely determined by where it sends the poles $\north$ and
$\south$. If it sends both to $\north$ or both to $\south$, then it must be a constant function. If it sends
$\north$ to $\north$ and $\south$ to $\south$, then it must be the identity function. Finally, if it sends
$\north$ to $\south$ and $\south$ to $\north$, then it must send each meridian $\dlam{y}{\merid{y}{a}}$ to its
inverse path (defined using $\hcom$).

Before we prove the main theorem, we first prove a general lemma which pushes suspension past $\Gel$ types
given by functional relations. This is a result of a kind with the main theorem of \cref{sec:examples:bool},
though in this case we need only one direction of the equivalence.

\begin{definition}
  \label{def:gr-types}
  Given $\coftype[\cx!\apartcx{\Phi}{\bm{r}}!]{F}{A \to B}$, define $\Gr{\bm{r}}{A}{B}{F} \eqdef \Gel{\bm{r}}{A}{B}{a.b.\Path{B}{Fa}{b}}$.
\end{definition}

\begin{definition}
  Given $\cwftypek{A,B}$ and $\coftype{F}{A \to B}$, define
  \[
    \coftype{\suspmap{A}{B}{F} \eqdef \lam{t}{\suspelim{\_.\susp{B}}{t}{\north}{\south}{x.a.\merid{x}{Fa}}}}{\susp{A} \to \susp{B}}.
  \]
\end{definition}

\begin{lemma}
  Given $\cwftypek{A,B}$ and $\coftype{T}{\susp{A}}$, there is a term
  \[
    \coftype{\suspeta{A}{T}}{\Path{\susp{A}}{T}{\suspmap{A}{A}{\lam{a}{a}}(T)}}.
  \]
\end{lemma}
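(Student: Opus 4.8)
The plan is to prove this by suspension induction, i.e., by applying the eliminator $\suspelim$ with motive
\[
t.\Path{\susp{A}}{t}{\suspmap{A}{A}{\lam{a}{a}}(t)}.
\]
Since $\susp{A}$ is Kan and the $\Path$-type of a Kan type is Kan, this is a family of Kan types, so $\suspelim$ applies; it then remains to provide the north, south, and meridian cases and check that their boundaries agree.

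The key observation is that $\suspmap{A}{A}{\lam{a}{a}}$ acts as a definitional identity on each generator. Unfolding the definition of $\suspmap$, we have $\suspmap{A}{A}{\lam{a}{a}}(\north) \eq \north$ and $\suspmap{A}{A}{\lam{a}{a}}(\south) \eq \south$, and, using $\beta$ for functions, $\suspmap{A}{A}{\lam{a}{a}}(\merid{x}{a}) \eq \merid{x}{a}$. Hence the motive computes to $\Path{\susp{A}}{\north}{\north}$ at $\north$, to $\Path{\susp{A}}{\south}{\south}$ at $\south$, and to $\Path{\susp{A}}{\merid{x}{a}}{\merid{x}{a}}$ at $\merid{x}{a}$. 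Every case can therefore be filled by a constant path, and I would set
\[
\suspeta{A}{T} \eqdef \suspelim{t.\Path{\susp{A}}{t}{\suspmap{A}{A}{\lam{a}{a}}(t)}}{T}{\dlam{\_}{\north}}{\dlam{\_}{\south}}{x.a.\dlam{\_}{\merid{x}{a}}}.
\]
The north and south cases are the reflexive paths $\dlam{\_}{\north}$ and $\dlam{\_}{\south}$. For the meridian case, $\dlam{\_}{\merid{x}{a}}$ has the required type $\Path{\susp{A}}{\merid{x}{a}}{\merid{x}{a}}$, and its boundary is correct: $\dsubst{(\dlam{\_}{\merid{x}{a}})}{0}{x} \eq \dlam{\_}{\north}$ (using $\merid{0}{a} \eq \north$) matches the north case, and symmetrically $\dsubst{(\dlam{\_}{\merid{x}{a}})}{1}{x} \eq \dlam{\_}{\south}$ matches the south case.

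I expect no substantial obstacle here: this is a routine suspension induction that uses no $\hcom$ or nontrivial filling, only reflexivity, precisely because $\suspmap$ of the identity is definitionally the identity on each constructor. The only point that requires any care is verifying the boundary coherence of the meridian case, which is immediate from the computation rules for $\susp$ and for $\dlam$ (path application at a constant).
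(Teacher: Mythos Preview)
Your proposal is correct and matches the paper's proof exactly: the paper defines $\suspeta{A}{T}$ by precisely the same $\suspelim$ term with the constant paths $\dlam{\_}{\north}$, $\dlam{\_}{\south}$, and $x.a.\dlam{\_}{\merid{x}{a}}$ in each case.
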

\begin{proof}
  Set $\suspeta{A}{T} \eqdef \suspelim{t.\Path{\susp{A}}{t}{\suspmap{A}{A}{\lam{a}{a}}(t)}}{T}{\dlam{\_}{\north}}{\dlam{\_}{\south}}{x.a.\dlam{\_}{\merid{x}{a}}}$.
\end{proof}

\begin{lemma}
  Given $\coftype{A,B}{\UKan}$ and $\coftype{F}{A \to B}$, there is a term
  \[
    \coftype{\get{A}{B}{F}}{\Bridge{\bm{x}.\susp{\Gr{\bm{x}}{A}{B}{F}} \to \Gr{\bm{x}}{\susp{A}}{\susp{B}}{\suspmap{A}{B}{F}}}{\lam{t}{t}}{\lam{u}{u}}}.
  \]
\end{lemma}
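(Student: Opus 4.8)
The plan is to obtain $\get{A}{B}{F}$ through the characterization of bridges in function types. By \cref{thm:function-bridge}, and since the codomain line $\bm{x}.\Gr{\bm{x}}{\susp A}{\susp B}{\suspmap{A}{B}{F}}$ does not depend on the argument and restricts to $\susp A$ and $\susp B$ at $\bm{x}=\bm{0},\bm{1}$ by \cref{rule:gel-F0,rule:gel-F1}, it suffices to produce, for every $t_0 : \susp A$, $t_1 : \susp B$, and $c : \Bridge{\bm{x}.\susp{\Gr{\bm{x}}{A}{B}{F}}}{t_0}{t_1}$, a bridge of type $\Bridge{\bm{x}.\Gr{\bm{x}}{\susp A}{\susp B}{\suspmap{A}{B}{F}}}{t_0}{t_1}$; the endpoints $\lam{t}{t}$ and $\lam{u}{u}$ are then forced by the backward map $\mathsf{bfunext}$ of \cref{thm:function-bridge}. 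Recalling \cref{def:gr-types}, $\Gr{\bm{x}}{\susp A}{\susp B}{\suspmap{A}{B}{F}}$ is the $\Gel$-type of the functional relation $\lam{\pair{a}{b}}{\Path{\susp B}{\suspmap{A}{B}{F}(a)}{b}}$, so by \cref{lem:link-unlink} such a bridge is interchangeable with a path $\Path{\susp B}{\suspmap{A}{B}{F}(t_0)}{t_1}$, which I will construct from $t_0,t_1,c$.

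First I would transform $c$ pointwise under a bridge binder $\bm{x}$: I define a term $\psi(\bapp{c}{\bm{x}}) : \Gr{\bm{x}}{\susp A}{\susp B}{\suspmap{A}{B}{F}}$ by $\suspelim$ on $\bapp{c}{\bm{x}} : \susp{\Gr{\bm{x}}{A}{B}{F}}$ with the constant motive $\Gr{\bm{x}}{\susp A}{\susp B}{\suspmap{A}{B}{F}}$. The poles $\north,\south$ go to $\gel{\bm{x}}{\north}{\north}{\dlam{\_}{\north}}$ and $\gel{\bm{x}}{\south}{\south}{\dlam{\_}{\south}}$ (using that $\suspmap{A}{B}{F}$ fixes the poles strictly), while a meridian $\merid{z}{g}$ with $g : \Gr{\bm{x}}{A}{B}{F}$ goes to a $z$-line between these built by applying $\extent{\bm{x}}$ to $g$: the degenerate branches produce $\merid{z}{a_0}$ in $\susp A$ resp.\ $\merid{z}{a_1}$ in $\susp B$, and the generic branch produces $\blam{\bm{u}}{\gel{\bm{u}}{\merid{z}{a_0}}{\merid{z}{a_1}}{\dlam{w}{\merid{z}{\dapp{(\ungel{\bm{v}.\bapp{k}{\bm{v}}})}{w}}}}}$, where $a_0 : A$, $a_1 : B$ are the $\extent$-branch variables, $k : \Bridge{\bm{x}.\Gr{\bm{x}}{A}{B}{F}}{a_0}{a_1}$ is the exposed bridge, $\ungel{\bm{v}.\bapp{k}{\bm{v}}} : \Path{B}{Fa_0}{a_1}$ by \cref{rule:gel-E}, and we use that $\suspmap{A}{B}{F}$ sends $\merid{z}{a}$ to $\merid{z}{Fa}$. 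The $z$-faces of the meridian branch collapse to the pole cases by \cref{rule:extent-eta,rule:gel-F0,rule:gel-F1}. Forming $\blam{\bm{x}}{\psi(\bapp{c}{\bm{x}})}$ gives a bridge over $\bm{x}.\Gr{\bm{x}}{\susp A}{\susp B}{\suspmap{A}{B}{F}}$; using $\bapp{c}{\bm{\Ge}} = t_{\bm{\Ge}}$ and the computation rules of $\suspmap$, its $\bm{0}$- and $\bm{1}$-faces come out as $\suspmap{A}{A}{\lam{a}{a}}(t_0)$ and $\suspmap{B}{B}{\lam{b}{b}}(t_1)$ rather than as $t_0$ and $t_1$ on the nose.

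To finish, I would apply the reverse map $\ungel{\bm{x}.-}$ of \cref{lem:link-unlink}, which by \cref{rule:gel-E} yields a path $\Path{\susp B}{\suspmap{A}{B}{F}(\suspmap{A}{A}{\lam{a}{a}}(t_0))}{\suspmap{B}{B}{\lam{b}{b}}(t_1)}$, and then repair the endpoints: the $\suspeta$ lemma above gives paths from $t_0$ and $t_1$ to $\suspmap{A}{A}{\lam{a}{a}}(t_0)$ and $\suspmap{B}{B}{\lam{b}{b}}(t_1)$, so composing $\mathsf{ap}$ of $\suspmap{A}{B}{F}$ along $\suspeta{A}{t_0}$, the path just obtained, and the inverse of $\suspeta{B}{t_1}$ (path composition, inversion and $\mathsf{ap}$ all being standard cubical constructions) produces the required $\Path{\susp B}{\suspmap{A}{B}{F}(t_0)}{t_1}$. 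Feeding this back through \cref{lem:link-unlink,thm:function-bridge} yields $\get{A}{B}{F}$.

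The hard part will be the meridian case of the suspension recursion. There one confronts a bare variable $g$ of a $\Gel$-type sitting in a context where the bridge dimension $\bm{x}$ is still in scope, which is precisely the configuration that cannot be analyzed by $\ungel$ directly; $\extent$ is the tool that makes it tractable, letting us case on $g$ and recover, in the generic fibre, an honest bridge whose $\ungel$ makes sense---so this is yet another instance where substructurality is essential. A secondary nuisance is the boundary bookkeeping: the construction naturally lands at $\suspmap{A}{A}{\lam{a}{a}}(t_{\bm{\Ge}})$ instead of at $t_{\bm{\Ge}}$, which is exactly why the $\suspeta$ lemma (and a small composition of paths) is needed at the end.
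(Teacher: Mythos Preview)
Your argument is correct, and the core construction---suspension elimination into the $\Gel$-type with $\extent$ in the meridian case, using $\suspeta$ to repair the boundary---is exactly what the paper does. The difference is in packaging: the paper builds the bridge in function type directly as $\blam{\bm{x}}{\lam{t}{\ldots}}$ and fixes the endpoints by a single $\hcom$ in $\Gr{\bm{x}}{\susp A}{\susp B}{\suspmap{A}{B}{F}}$ with tube faces at $\bm{x}=\bm{0}$ and $\bm{x}=\bm{1}$ given by $\suspeta$, whereas you thread the construction through the equivalences of \cref{thm:function-bridge} and \cref{lem:link-unlink}, land in a path type, and fix endpoints there by ordinary path composition. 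The paper's route is shorter and yields an explicit closed term; your route is slightly more indirect but makes the ``content'' of the map (a path in $\susp B$) more visible and avoids appealing to $\hcom$ with bridge-dimension constraints.
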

\begin{proof}
  Define
  \[
    \get{A}{B}{F} \eqdef \blam{\bm{x}}{\lam{t}{\bighcom{C}{1}{0}{\suspelim{\_.C}{t}{N}{S}{y.g.M}}{%
          \begin{array}{l}
            \tube{\bm{x}=\bm{0}}{w.\dapp{\suspeta{A}{t}}{w}} \\
            \tube{\bm{x}=\bm{1}}{w.\dapp{\suspeta{B}{t}}{w}}
          \end{array}}}}
  \]
  where
  \begin{align*}
    C &\eqdef \Gr{\bm{x}}{\susp{A}}{\susp{B}}{\suspmap{A}{B}{F}} \\
    N &\eqdef \gel{\bm{x}}{\north}{\north}{\dlam{\_}{\north}} \\
    S &\eqdef \gel{\bm{x}}{\south}{\south}{\dlam{\_}{\south}} \\
    M &\eqdef \extent{\bm{x}}{g}{a.\merid{y}{a}}{b.\merid{y}{b}}{a.b.u.\gel{\bm{x}}{\merid{y}{a}}{\merid{y}{b}}{\dlam{z}{\merid{y}{\dapp{\ungel{\bm{x}.\bapp{u}{\bm{x}}}}{z}}}}}.
        \qedhere
  \end{align*}
\end{proof}

We use the following lemma to see that for any $f : \picl{X}{\UKan}{\susp{X} \to \susp{X}}$, the image of a
meridian of $\susp{X}$ by $f$ is uniquely determined.

\begin{lemma}
  \label{lem:merid-point-contr}
  The type $\picl{X}{\UKan}{X \to \susp{X}}$ is contractible.
\end{lemma}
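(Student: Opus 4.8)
The plan is to take the constant map $\lam{X}{\lam{\_}{\north}}$ as the contraction centre and show that an arbitrary term $f$ of type $\picl{X}{\UKan}{X \to \susp{X}}$ is path-equal to it; since $\isContr{-}$ carries no coherence data, this is all that is required. (The choice of pole is immaterial: $\lam{X}{\lam{x}{\dlam{y}{\merid{y}{x}}}}$ inhabits $\picl{X}{\UKan}{\picl{x}{X}{\Path{\susp{X}}{\north}{\south}}}$, so by function extensionality the constant-$\north$ and constant-$\south$ maps are already path-equal.)

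The heart of the argument is a parametricity computation in the style of \cref{thm:polymorphic-identity}, using the bridge $\get{A}{B}{F}$ of the immediately preceding lemma together with the $\Gr$-types of \cref{def:gr-types}. I would work in the context $\GG \eqdef (\oft{X}{\UKan}, \oft{x}{X})$ and set $G \eqdef \lam{\_}{x} \in \unit \to X$. Since the universe is closed under $\Gel$-types, $\Gr{\bm{x}}{\unit}{X}{G}$ is a type, with $\Gr{\bm{0}}{\unit}{X}{G} \eq \unit$ and $\Gr{\bm{1}}{\unit}{X}{G} \eq X$; it is inhabited by $\gel{\bm{x}}{\triv}{x}{\dlam{\_}{x}}$ (the relation at $(\triv,x)$ being $\Path{X}{x}{x}$), whose endpoints are $\triv$ and $x$. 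Applying $f$ at the type $\Gr{\bm{x}}{\unit}{X}{G}$, then the bridge $\get{\unit}{X}{G}$, and abstracting the (fresh) dimension $\bm{x}$ produces the term
\[
  \blam{\bm{x}}{(\bapp{\get{\unit}{X}{G}}{\bm{x}})\,(f(\Gr{\bm{x}}{\unit}{X}{G})(\gel{\bm{x}}{\triv}{x}{\dlam{\_}{x}}))},
\]
which inhabits the $\Bridge$-type over $\bm{x}.\Gr{\bm{x}}{\susp{\unit}}{\susp{X}}{\suspmap{\unit}{X}{G}}$ with endpoints $f(\unit)(\triv)$ and $f\,X\,x$, because at each endpoint $\get{\unit}{X}{G}$ degenerates to the identity, $\Gr{\bm{\Ge}}$ reduces to $\susp{\unit}$ or $\susp{X}$, and $\gel{\bm{\Ge}}{\cdots}$ reduces to its $\bm{\Ge}$-endpoint. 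Since $\Gr{\bm{x}}{\susp{\unit}}{\susp{X}}{\suspmap{\unit}{X}{G}}$ is by definition $\Gel{\bm{x}}{\susp{\unit}}{\susp{X}}{a.b.\Path{\susp{X}}{\suspmap{\unit}{X}{G}(a)}{b}}$, applying $\ungel$ (\cref{rule:gel-E}) — equivalently, the inverse of \cref{lem:link-unlink} — extracts a path in $\Path{\susp{X}}{\suspmap{\unit}{X}{G}(f(\unit)(\triv))}{f\,X\,x}$.

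To finish I would use that $\susp{\unit}$ is contractible (a standard fact, provable by $\suspelim$), so that $f(\unit)(\triv)$ is path-equal to $\north \in \susp{\unit}$; mapping $\suspmap{\unit}{X}{G}$ over this path and using $\suspmap{\unit}{X}{G}(\north) \eq \north$ gives a path from $\suspmap{\unit}{X}{G}(f(\unit)(\triv))$ to $\north$ in $\susp{X}$. Composing this (via $\hcom$) with the path from the previous paragraph yields, in context $\GG$, a path of type $\Path{\susp{X}}{f\,X\,x}{\north}$. Two applications of function extensionality for paths — first over $\oft{x}{X}$, then over $\oft{X}{\UKan}$ — assemble these into a path from $f$ to $\lam{X}{\lam{\_}{\north}}$ in $\picl{X}{\UKan}{X \to \susp{X}}$, which is exactly the data needed to establish $\isContr{\picl{X}{\UKan}{X \to \susp{X}}}$.

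The main obstacle is purely the boundary bookkeeping: one must check that the bridge constructed above really has endpoints $f(\unit)(\triv)$ and $f\,X\,x$ (both copies of $\get$ collapsing to identities, and each $\gel{\bm{\Ge}}{\cdots}$ to its $\bm{\Ge}$-endpoint), and that $\ungel$ delivers a path of precisely the stated type. Beyond relativity, the only substantive ingredient is the contractibility of $\susp{\unit}$, which is elementary.
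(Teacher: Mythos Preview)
Your proposal is correct and follows essentially the same route as the paper. Both arguments hinge on the identical parametricity computation: applying $f$ (the paper's $m$) at $\Gr{\bm{x}}{\unit}{X}{\lam{\_}{x}}$ and the canonical $\gel$ element, pushing through $\get{\unit}{X}{\lam{\_}{x}}$, and extracting with $\ungel$ a path $\suspmap{\unit}{X}{\lam{\_}{x}}(f(\unit)(\triv)) \doteq fXx$ in $\susp{X}$.

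The only difference is packaging. The paper wraps this computation as a \emph{retract} argument: the maps $m \mapsto m(\unit)(\triv)$ and $t \mapsto \lam{X}{\lam{a}{\suspmap{\unit}{X}{\lam{\_}{a}}(t)}}$ exhibit $\picl{X}{\UKan}{X \to \susp{X}}$ as a retract of $\susp{\unit}$, and the parametricity path is precisely the retract witness; contractibility then follows from contractibility of $\susp{\unit}$ via \citepalias[Lemma~3.11.7]{hott-book}. You instead fix the centre $\lam{X}{\lam{\_}{\north}}$ up front, use contractibility of $\susp{\unit}$ to connect $f(\unit)(\triv)$ to $\north$, and compose. Unwinding the retract lemma with centre $\north \in \susp{\unit}$ produces exactly your argument, so the two are interchangeable; the retract phrasing is slightly cleaner in that it avoids the explicit $\hcom$ composite and the separate function-extensionality step.
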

\begin{proof}
  It suffices to show that $\picl{X}{\UKan}{X \to \susp{X}}$ is a retract of $\susp{\unit}$ \citepalias[Lemma
  3.11.7]{hott-book}; the latter is contractible by a standard argument. We have maps in either direction as
  follows.
  \begin{align*}
    \lam{m}{m(\unit)(\triv)} &\in (\picl{X}{\UKan}{X \to \susp{X}}) \to \susp{\unit} \\
    \lam{t}{\lam{X}{\lam{a}{\suspmap{\unit}{X}{\lam{\_}{a}}(t)}}} &\in \susp{\unit} \to (\picl{X}{\UKan}{X \to \susp{X}})
  \end{align*}
  To establish that these constitute a retract, we need to show that, for every
  $m : \picl{X}{\UKan}{X \to \susp{X}}$, $X : \UKan$, and $a : X$, we have a path from
  $\suspmap{\unit}{X}{\lam{\_}{a}}(m(\unit)(\triv))$ to $mXa$. We construct such a path with a parametricity
  argument, using $\get$ to extract a path from a suspended $\Gr$-type:
  \[
    \ungel{\bm{x}.\bapp{\get{\unit}{X}{\lam{\_}{a}}}{\bm{x}}(m(\Gr{\bm{x}}{\unit}{X}{\lam{\_}{a}})(\gel{\bm{x}}{\triv}{a}{\dlam{\_}{a}}))}
  \]
  has type $\Path{\susp{X}}{\suspmap{\unit}{X}{\lam{\_}{a}}(m(\unit)(\triv))}{mXA}$.
\end{proof}

\begin{theorem}
  There is an equivalence
  $(\picl{X}{\UKan}{\susp{X} \to \susp{X}}) \simeq \bool \times \bool$.
\end{theorem}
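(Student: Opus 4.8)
The plan is to establish the equivalence by exhibiting a quasi-equivalence and then invoking \cref{rec:qequiv-to-equiv}. Two standard preliminaries will be used freely: first, $\susp{\void} \simeq \bool$ — since $\void$ is empty the meridian constructor of $\susp{\void}$ is vacuous, so $\susp{\void}$ is the two-point type; write $\mathsf{code} : \susp{\void} \to \bool$ for this equivalence, with inverse $\mathsf{dec}_{\void}$ sending $\true,\false$ to $\north,\south$, and more generally $\mathsf{dec}_X : \bool \to \susp{X}$ for the pole-picking map at any $X$ — and second, $\susp{\unit}$ is contractible (the argument alluded to around \cref{lem:merid-point-contr}). The forward map reads off the images of the two poles at $\void$: $\Phi \eqdef \lam{f}{\pair{\mathsf{code}(f(\void)(\north))}{\mathsf{code}(f(\void)(\south))}}$.

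For the backward map, given $\pair{b_0}{b_1} : \bool \times \bool$, I would define $\Psi(\pair{b_0}{b_1})$ by specifying, for each $X : \UKan$, a function $\susp{X}\to\susp{X}$ via recursion on $\susp{X}$: send $\north \mapsto \mathsf{dec}_X(b_0)$, $\south \mapsto \mathsf{dec}_X(b_1)$, and the meridian at $a : X$ to a path from $\mathsf{dec}_X(b_0)$ to $\mathsf{dec}_X(b_1)$ chosen by case analysis on $(b_0,b_1)$: the constant path when $b_0 \eq b_1$, the meridian $\dlam{y}{\merid{y}{a}}$ when $(b_0,b_1) = (\true,\false)$, and the $\hcom$-reversal of the meridian when $(b_0,b_1) = (\false,\true)$. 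Since $\Psi(\pair{b_0}{b_1})(\void)$ sends $\north,\south$ to $\mathsf{dec}_{\void}(b_0),\mathsf{dec}_{\void}(b_1)$, the composite $\Phi\circ\Psi$ is pointwise the identity up to a path, by $\bool$-induction on $b_0,b_1$.

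The crux is showing $\Psi(\Phi(f))$ is path-equal to an arbitrary $f : \picl{X}{\UKan}{\susp{X}\to\susp{X}}$. By function extensionality for paths it suffices to build, for each $X : \UKan$ and $t : \susp{X}$, a path $\Path{\susp{X}}{\Psi(\Phi(f))(X)(t)}{f(X)(t)}$, which I would do by induction on $t$. The pole cases are uniformity statements proven by internal parametricity. Fixing $e : \void \to X$, the line $\bm{x}.\Gr{\bm{x}}{\void}{X}{e}$ is a bridge in $\UKan$ from $\void$ to $X$ (\cref{def:gr-types}), so the bridge $\bm{x} \mapsto \bapp{\get{\void}{X}{e}}{\bm{x}}\,(f(\Gr{\bm{x}}{\void}{X}{e})(\north))$ lives over $\bm{x}.\Gr{\bm{x}}{\susp{\void}}{\susp{X}}{\suspmap{\void}{X}{e}}$ with endpoints $f(\void)(\north)$ and $f(X)(\north)$; applying $\ungel$ (\cref{rule:gel-beta,rule:gel-eta}) extracts a path $\Path{\susp{X}}{\suspmap{\void}{X}{e}(f(\void)(\north))}{f(X)(\north)}$. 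Since $\suspmap{\void}{X}{e}$ preserves poles definitionally and $f(\void)(\north) \sim \mathsf{dec}_{\void}(b_0)$ by definition of $\Phi$, concatenation gives a path to $\mathsf{dec}_X(b_0) = \Psi(\Phi(f))(X)(\north)$; the case of $\south$ is symmetric. For the meridian at $a : X$ I would instead use the line $\bm{x}.\Gr{\bm{x}}{\unit}{X}{\lam{\_}{a}}$ and its inhabitant $\gel{\bm{x}}{\triv}{a}{\dlam{\_}{a}}$, exactly as in the proof of \cref{lem:merid-point-contr}: running the same $\get{\unit}{X}{\lam{\_}{a}}$/$\ungel$ argument on $\merid{y}{\gel{\bm{x}}{\triv}{a}{\dlam{\_}{a}}}$ exhibits both $f(X)$ and $\Psi(\Phi(f))(X)$ applied to the meridian at $a$ as the image under $\suspmap{\unit}{X}{\lam{\_}{a}}$ of corresponding paths in $\susp{\unit}$. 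But $\susp{\unit}$ is contractible, so those two paths agree via a $2$-cell (after reconciling endpoints with the pole cases already constructed), and transporting this $2$-cell along $\suspmap{\unit}{X}{\lam{\_}{a}}$ supplies the meridian clause of the induction.

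The main obstacle is the coherent bookkeeping of this last step: the square produced in the meridian clause must restrict at $y = 0,1$ to the \emph{fixed} paths chosen in the $\north$ and $\south$ clauses, so the several parametricity-derived cells (the $\get$-path at $\void$, the $\get$-square at $\unit$, the contractibility $2$-cell in $\susp{\unit}$, and the pole-coincidence paths) must be glued into a single filler with the prescribed boundary via a careful sequence of $\hcom$/$\com$ adjustments. All parametric inputs are already available: the lemma constructing $\get{A}{B}{F}$ exists precisely to push $\susp$ past the graph relation, and the technique of \cref{lem:merid-point-contr} transfers verbatim; what remains is cubical assembly. (A more modular alternative: use the recursion principle $(\susp{X}\to Y) \simeq \sigmacl{y_0}{Y}{\sigmacl{y_1}{Y}{X \to \Path{Y}{y_0}{y_1}}}$ together with the $\Pi$-over-$\Sigma$ equivalence to reduce the goal to two ``baby'' parametricity facts — that $\picl{X}{\UKan}{\susp{X}}$ is equivalent to $\bool$, and that $\picl{X}{\UKan}{X \to \Path{\susp{X}}{\mathsf{dec}_X(b_0)}{\mathsf{dec}_X(b_1)}}$ is contractible, the latter, like \cref{lem:merid-point-contr}, by exhibiting it as a retract of the contractible type $\Path{\susp{\unit}}{\mathsf{dec}_{\unit}(b_0)}{\mathsf{dec}_{\unit}(b_1)}$ — each of which is proven by the same $\get$/$\ungel$ machinery.)
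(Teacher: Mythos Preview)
Your approach is correct and close in spirit to the paper's, but the paper organizes the argument so as to avoid exactly the ``coherent bookkeeping'' you flag as the main obstacle. The forward map and the pole cases of the round-trip are the same as yours (the paper also uses $\Gr$-types at $\void$ and $\ungel$ to produce $P_\north$ and $P_\south$). The difference lies in how the meridian data is handled, both in the definition of the inverse and in the round-trip proof.

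Rather than case-split on $(b_0,b_1)$ to write down an explicit meridian path (constant, $\merid$, or reversed $\merid$), the paper defines the inverse $G$ uniformly: the pole images are $IXn \eqdef \suspmap{\void}{X}{\voidelim{}}(n)$ and $IXs$, and the meridian path is pulled from a term $C : \picl*{X}{\UKan}{\picl{n,s}{\susp{\void}}{X \to \Path{\susp{X}}{IXn}{IXs}}}$, which exists simply because this type is (by function extensionality) a path type of the contractible type $\picl{X}{\UKan}{X \to \susp{X}}$ from \cref{lem:merid-point-contr}. The same trick dispatches the meridian clause of $G\circ F \sim \id$: once $P_\north$ and $P_\south$ are in hand, the required $P_{\merid}$ lives in an \emph{iterated} path type of $\picl{X}{\UKan}{X \to \susp{X}}$, hence is automatically inhabited with the prescribed boundary. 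In other words, the paper works at the level of the polymorphic type rather than per $X$ and per $(b_0,b_1)$, so contractibility absorbs all the coherence you would otherwise have to assemble by hand with $\hcom$/$\com$. Your ``more modular alternative'' at the end is moving in this direction; the paper pushes it one step further by never leaving the contractible type.
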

\begin{proof}
  We will construct an equivalence
  $(\picl{X}{\UKan}{\susp{X} \to \susp{X}}) \simeq \susp{\void} \times \susp{\void}$; it is straightforward to
  check that $\susp{\void}$ is equivalent to $\bool$. As usual, we go by \cref{rec:qequiv-to-equiv}. We will
  construct an inverse to the following map.
  \begin{align*}
    F \eqdef \lam{k}{\pair{k(\void)(\north)}{k(\void)(\south)}} &\in (\picl{X}{\UKan}{\susp{X} \to \susp{X}}) \to \susp{\void} \times \susp{\void}
  \end{align*}
  Set
  $I \eqdef \lam{X}{\suspmap{\void}{X}{\lam{v}{\voidelim{\_.\susp{X}}{v}}}} \in \picl{X}{\UKan}{\susp{\void}
    \to \susp{X}}$. By \cref{lem:merid-point-contr}, we have a term
  $C \in \picl*{X}{\UKan}{\picl{n,s}{\susp{\void}}{X \to \Path{\susp{X}}{IXn}{IXs}}}$, as this type is
  equivalent to
  $\picl{n,s}{\susp{\void}}{\Path{\picl{X}{\UKan}{X \to \susp{X}}}{\lam{X}{\lam{\_}{IXn}}}{\lam{X}{\lam{\_}{IXs}}}}$
  and every path type of a contractible type is contractible \citepalias[Lemmas 3.11.3 and
  3.11.10]{hott-book}. We these in hand, we define the candidate inverse map as follows.
  \begin{align*}
    G \eqdef \lam{\pair{n}{s}}{\lam{X}{\lam{t}{\suspelim{\_.\susp{X}}{t}{IXn}{IXs}{y.a.\dapp{CXnsa}{y}}}}}
  \end{align*}
  It is straightforward to check that for any $d : \susp{\void} \times \susp{\void}$, $F(Gd)$ is connected by
  a path to $d$. For the other inverse condition, we use $\Gel$-types. Let
  $k : \picl{X}{\UKan}{\susp{X} \to \susp{X}}$ be given. We can define
  \begin{align*}
    P_\north &\eqdef \lam{X}{\ungel{\bm{x}.k(\Gr{\bm{x}}{\void}{X}{\lam{v}{\voidelim{\_.\susp{X}}{v}}})(\gel{\bm{x}}{\north}{\north}{\dlam{\_}{\north}})}}
  \end{align*}
  which has type $\picl{X}{\UKan}{\Path{\susp{X}}{G(Fk)X(\north)}{kX(\north)}}$ and an analogous term
  $P_\south$ of type $\picl{X}{\UKan}{\Path{\susp{X}}{G(Fk)X(\south)}{kX(\south)}}$. Finally, we have a term
  \begin{align*}
    P_{\merid} \in \picl*{X}{\UKan}{\picl{a}{X}{\Path{y.\Path{\susp{X}}{G(Fk)X(\merid{y}{a})}{kX(\merid{y}{a})}}{P_\north X}{P_\south X}}},
  \end{align*}
  because this type is an iterated path type of $\picl{X}{\UKan}{X \to \susp{X}}$ and therefore contractible
  by \cref{lem:merid-point-contr}. We assemble these three cases to prove the inverse condition:
  \begin{align*}
    \lam{X}{\lam{t}{\suspelim{t.\Path{\susp{X}}{G(Fk)Xt}{kXt}}{t}{P_\north X}{P_\south X}{y.a.\dapp{P_{\merid} Xa}{y}}}}
  \end{align*}
  has type $\picl*{X}{\UKan}{\picl{t}{\susp{X}}{\Path{\susp{X}}{G(Fk)Xt}{kXt}}}$.
\end{proof}

\section{Related and future work}
\label{sec:related}

\begin{figure}
  \centering
  \[\def\arraystretch{1.2}
    \begin{array}{|c|c|}
      \hline
      \text{Parametric cubical type theory} & \text{\cite{bernardy15}} \\ \hline
      \Bridge{\bm{x}.A}{a_0}{a_1} & A \ni_{\bm{x}} a \\
      \blam{\bm{x}}{a} & a \cdot {\bm{x}} \\
      \bapp{p}{\bm{x}} & (a,_{\bm{x}} p) \\
      \extent{\bm{x}}{b}{a_0.t_0}{a_1.t_1}{a_0.a_1.c.u} & \langle \lam{a}{t} ,_{\bm{x}} \lam{a}{\lam{c}{u}} \rangle (b) \\
      \Gel{\bm{x}}{A_0}{A_1}{a_0.a_1.R} & (a : A) \times_{\bm{x}} R \\
      \gel{\bm{x}}{a_0}{a_1}{c} & (a,_{\bm{x}} c) \\
      \ungel{\bm{x}.a} & a \cdot {\bm{x}} \\
      \hline
    \end{array}
  \]
  \caption{Translation dictionary for parametric type theory.}
  \label{fig:translation}
\end{figure}

\paragraph{Parametric type theory}

The concept of parametricity originates with \cite{reynolds83}, who gave a relational interpretation of
simply-typed $\lambda$-calculus with type variables in order to show that polymorphic functions treat their
type arguments parametrically. Parametricity and relational interpretations have since been used for myriad
purposes; we will keep our attention on the road to internal parametricity. \cite{plotkin93} define an
external relational logic for proving properties of terms in the polymorphic $\lambda$-calculus, with axioms
providing access to parametricity. \cite{bernardy10} observe that for a sufficiently expressive theory, such
as dependent type theory, the relational interpretation can be defined in the \emph{same} theory. However,
their interpretation function remains external. \cite{krishnaswami13} define a relational model of extensional
dependent type theory and observe that its parametricity theorems can be internalized as axioms with
computational content. However, each use of parametricity requires a new modification to the theory. Finally,
\cite{bernardy12} complete the internalization of parametricity, adding internal operators
$- \in \llbracket - \rrbracket$ and $\llbracket - \rrbracket$ which compute the relational interpretations of
types and terms respectively. Notably, these operators have computational content, so the extended theory
remains constructive. Later work substantially simplifies their original theory by using dimension variables
\citep{bernardy13,bernardy15,moulin16}. Beyond being internal, their parametricity is also higher-dimensional:
as relations are internal to the theory, they themselves have relational structure. Higher-dimensional (or
\emph{proof-relevant}) logical relations and parametricity have also been explored by \cite{benton14},
\cite{ghani15}, and \cite{sojakova18}.

Our theory is, for the most part, a direct extension of that of \citeauthor{bernardy15} However, we do make a
few changes beyond the obvious addition of cubical structure. First and most superficially, we use different
notation for its parametricity constructs in order to match the cubical constructs; we include a translation
dictionary in \cref{fig:translation}. Our category of dimension contexts has two constants $(\bm{0},\bm{1})$
where theirs has one ($0$), giving a theory of binary rather than unary parametricity. (As such, pairs of
terms in our notation correspond to single terms in theirs). The analogue of the equivalence
$\Bridge{\bm{x}.\Gel{\bm{x}}{A}{B}{R}}{M}{N} \simeq R\pair{M}{N}$, which we prove using the rules for $\Gel$
(\cref{lem:link-unlink}), is a judgmental equality called \textsc{Pair-Pred}. With this equality, the bridge
abstraction and application operators can do double duty as $\ungel$ and $\gel$, as shown in
\cref{fig:translation}. However, validating this equality apparently requires a change to the usual presheaf
semantics of type theory with dimensions, replacing sets with \emph{$I$-sets} \citep[Chapter 3,
\S5.2]{moulin16}. Although we have not presented a presheaf semantics here, translating operational
definitions to denotational definitions is straightforward enough that we are confident $I$-sets are not
necessary to model our proof theory. As \cite{nuyts17} observe, \citeauthor{bernardy15} do not address the
lack of an identity extension lemma. Our introduction of bridge-discreteness, proofs that various types are
bridge-discrete (\cref{lem:compound-bridge-discrete,thm:bool-bridge}), and observation that the
bridge-discrete sub-universe is relativistic (\cref{thm:bdisc-relativity}) go some way towards addressing this
lacuna. Our approach is maximally noncommittal: no types are made bridge-discrete by fiat, but one may always
work in the bridge-discrete fragment.

A second line of related work is the parametric type theory of \cite{nuyts17}, which builds on the work of
\citeauthor{bernardy15}\ as well as cubical type theory. Like us, they extend dependent type theory with
contexts of bridge and path dimensions. However, their work centers around around modalities which mediate
between the two contexts; in particular, types and elements are checked under different modalities. Thus, the
idea that parametrically polymorphic functions do not inspect the type they are supplied plays a central role,
though they decouple the type/element and parametric/continuous distinctions to some extent. In contrast, our
work (like that of \citeauthor{bernardy15}) focuses to the relational interpretation aspect of
parametricity. Although we use the same ``bridge'' and ``path'' terminology, their category of contexts
differs substantially from ours: bridge variables can be substituted for path variables, and both bridge and
path variables are structural. The former means that the equivalent of $\loosen$ exists on the level of the
base category. This map gives rise to a parametric modality, among others; a function is parametric in its
input when it takes bridges to paths. We can simulate non-dependent parametric functions in our system given a
propositional truncation $\trunc{-}$: a function is parametric when its image \citepalias[Definition
7.6.3]{hott-book} is bridge-discrete. (Simulating dependent functions is also possible, but more involved.)
\[
  \isParametric{A}{B}{F} \eqdef \isBDisc{\sigmacl{b}{B}{\trunc{\Fiber{A}{B}{F}{b}}}}
\]
Their notion of path is also distinct from ours, although they play a similar role. Paths do not support any
kind of Kan operations in general; they are used by way of a \emph{path degeneracy} axiom that turns
homogeneous paths (of the form $\Path{\_.A}{M_0}{M_1}$, where in general $A$ depends on a \emph{bridge}
variable) into elements of the Martin-L\"of identity type $\Id{A}{M_0}{M_1}$. Due to this and other axioms
(such as function extensionality for $\Id$), the theory is not computational. Bridges in the universe can be
defined using one of two operators, $\mathsf{Glue}$ and $\mathsf{Weld}$. The former originates with
\cite{cchm} and is a more general form of $\V$; the latter is its dual. The issues sketched in \cref{sec:gel}
with using a $\V$-like operator for bridges are sidestepped by checking the relation argument under a
modality. Unfortunately, doing so precludes higher-dimensional parametricity. To rectify this, \cite{nuyts18}
introduce a system with an infinite ladder of increasingly permissive relations, paths and bridges being the
first two rungs.

As part of an application, \citeauthor{nuyts17}\ introduce a type $\mathsf{Size}$ which has natural numbers
for elements but codiscrete bridge structure. This raises the question of inductive types with bridge
constructors: higher inductive types for the bridge direction. Unfortunately, here the use of substructural
dimensions is an obstacle, a fact which originally motivated the use of structural dimensions in cubical type
theory. The problem may be seen by considering a type $\bint$ generated by points $\zero$, $\one$ and a bridge
$\seg \in \Bridge{\bint}{\zero}{\one}$. For any type $A$, we would expect an equivalence fitting into the
following diagram.
\[
  \begin{tikzcd}[row sep=4em,column sep=5em]
    (\bint \to A) \ar{d}[left,font=\normalsize]{\lam{b}{\pair{b(\zero)}{b(\one)}}} \ar{r}{\simeq} & \sigmacl{a_0,a_1}{A}{\Bridge{A}{a_0}{a_1}} \ar{d}[font=\normalsize]{\lam{\pair{a_0,a_1}{q}}{\pair{a_0}{a_1}}} \\
    A \times A \ar[equals]{r} & A \times A
  \end{tikzcd}
\]
Observe that on the left hand side, we have a diagonal map
$\delta \eqdef \lam{b}{\lam{i}{bii}} \in (\bint \to \bint \to A) \to (\bint \to A)$ such that
$\delta b (\zero) \eq b(\zero)(\zero)$ and $\delta b (\one) \eq b(\one)(\one)$. But no such map exists on the
right hand side, precisely because bridge dimensions are substructural. In a substructural cubical type theory
like that of \cite{bch}, there is still some hope: although there is no diagonal map in the base category, one
can define a diagonal map on the level of paths using the Kan operations. For bridges, however, this is
impossible. It remains to be seen whether some restricted class of ``bridge inductive types'' can be given a
usable proof theory.

\paragraph{Cubical type theory}

The cubical side of the theory, as well as the operational presentation, is adopted from \cite{angiuli18}
practically without change. \cite{bch} and \cite{cchm} have also developed cubical type theories, which differ
in choice of cube category and formulation of the Kan operations. We believe that the parametric additions we
make to cartesian cubical type theory could easily be replayed on top of any of these theories: one simply
needs to add (a) a substructural context of bridge dimensions and (b) $\bm{r} = \bm{\Ge}$ equations in the
language of composition constraints (i.e., generating cofibrations). The latter addition does not interfere
with the definition of the $\forall x$ operator on constraints used by \citeauthor{cchm}:
$\forall x.({\bm{r}} = {\bm{\Ge}})$ can be defined as ${\bm{r}} = {\bm{\Ge}}$. (Note that a $\forall \bm{x}$
operator is also necessary in order to define $\hcom$ in $\Gel$-types.)

The bridge side of our theory is similar to the cubical type theory of \cite{bch}, mostly insofar as that
theory is (not coincidentally) similar to \citeauthor{bernardy15}'s parametric type theory. As there is no
composition or coercion along bridges, definitions are generally simpler in our setting than in theirs,
especially where the universe is concerned.

\paragraph{Directed type theory}

Parametric cubical type theory bears a close resemblance to the directed type theory of \cite{riehl17}. Like
our theory, it has two directions of higher structure. One direction is given by identity types with the
axioms of homotopy type theory; this corresponds to our path direction. The other is given by dimension
variables (and a language of inequality constraints) and corresponds to our bridge direction. The intended
model of this theory is the category of Reedy fibrant presheaves over $\Delta \times \Delta$, which parallels
our anticipated model in suitably fibrant presheaves over
$\mathbb{C}_{(\mathrm{we},\cdot)} \times \mathbb{C}_{(\mathrm{wec},\cdot)}$ (in the terminology of
\cite{buchholtz17}). In the case of directed type theory, the intent is to carve out the sub-universe of
\emph{Segal types}, those that support a directed composition operation in the ``bridge'' direction. The
question of relativity arises under the name of \emph{directed univalence}, but the most naive formulation
appears to be false \citep{riehl18}.

\paragraph{Future work}

As mentioned in the introduction, one motivation for this work is to prove coherence theorems for functions on
higher inductive types. From \cref{sec:examples:hits}, we see that the behavior of a polymorphic function
$\picl{X}{\UKan}{\susp{X} \to \susp{X}}$ can be analyzed by checking its behavior on zero-dimensional
constructors. As such, one can avoid the higher-dimensional constructions that are necessary to prove
properties of a function $\susp{A} \to \susp{A}$ at a particular $A$. A more complicated case of interest is
that of the \emph{smash product}, a certain binary operator on pointed types (elements of
$\UPtd \eqdef \sigmacl{X}{\UKan}{X}$). To show that the smash product is commutative and associative is
difficult, to show that the commutator and associator satisfy coherence laws even more so
\citep{brunerie18}. However, once the commutator and associator have been constructed, the other theorems can
be posed in terms of characterizing polymorphic pointed functions between smash products, specifically terms
of type $\picl{X_1,\ldots,X_n}{\UPtd}{\bigwedge_{i \le n} X_i \pto \bigwedge_{i \le n} X_i}$ for various
$n$. We conjecture that such terms can be characterized in parametric cubical type theory by their behavior on
zero-dimensional constructors, and that this can be proven for all $n$ uniformly.

Of course, we would like to have such results not only for parametric cubical type theory but for ordinary
cubical type theory. Not every theorem of parametric cubical type theory is true in cubical type theory, but
it is reasonable to suspect that some class of statements is transferable. To our knowledge, the only work in
this vein is the interpretation in used in \citep{bernardy12} to prove strong normalization for their
parametric type theory, which is also discussed in \cite[Chapter 1, \S3.6]{moulin16}.

%%% Local Variables:
%%% mode: latex
%%% TeX-master: "main"
%%% End:

\appendix

\section{$\D$-relation interface}
\label{app:lemmas}

\begin{lemma}[Introduction]
  \label{lem:introduction}
  Let $\Ga$ be a value $\D$-relation. If for every $\tds{\D'}{\psi}{\D}$, either
  $\Ga_\psi(\td{M}{\psi},\td{M'}{\psi})$ or $\Tm{\Ga}_\psi(\td{M}{\psi},\td{M'}{\psi})$, then
  $\Tm{\Ga}(M,M')$.
\end{lemma}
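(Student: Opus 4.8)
The plan is to unfold the definition of $\Tm{\Ga}(M,M') = \Tm{\Ga}_{\id}(M,M')$ and discharge the two-step evaluation requirement by a nested case analysis on the hypothesis. So I would fix substitutions $\tds{\D_1}{\psi_1}{\D}$ and $\tds{\D_2}{\psi_2}{\D_1}$ and aim to produce terms $\tmj[\D_1]{M_1,M_1'}$ and $\tmj[\D_2]{M_2,M_2',M_{12},M_{12}'}$ with $\td{M}{\psi_1}\evals M_1$, $\td{M_1}{\psi_2}\evals M_2$, $\td{M}{\psi_1\psi_2}\evals M_{12}$ (and the primed analogues), all of whose values are pairwise $\Ga_{\psi_1\psi_2}$-related. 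Throughout I would use associativity and the identity laws for dimension substitution together with $\td{(\td{M}{\psi_1})}{\psi_2}=\td{M}{\psi_1\psi_2}$.

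First I would instantiate the hypothesis at $\psi_1$. If $\Tm{\Ga}_{\psi_1}(\td{M}{\psi_1},\td{M'}{\psi_1})$ holds, I would unfold that very instance of $\Tm{\Ga}$ at the pair of substitutions $(\id_{\D_1},\psi_2)$: since $\psi_1\cdot\id_{\D_1}\cdot\psi_2=\psi_1\psi_2$ and $\td{(\td{M}{\psi_1})}{\psi_2}=\td{M}{\psi_1\psi_2}$, the terms and relatedness it supplies are precisely what the goal requires, after renaming.

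Otherwise the hypothesis yields $\Ga_{\psi_1}(\td{M}{\psi_1},\td{M'}{\psi_1})$; since $\Ga$ is a \emph{value} relation, $\td{M}{\psi_1}$ and $\td{M'}{\psi_1}$ are values, which I take as $M_1,M_1'$, making the first evaluation trivial and $\td{M_1}{\psi_2}=\td{M}{\psi_1\psi_2}$. I would then instantiate the hypothesis a second time, now at $\psi_1\psi_2$. In the value branch, $\td{M}{\psi_1\psi_2}$ and $\td{M'}{\psi_1\psi_2}$ are $\Ga_{\psi_1\psi_2}$-related values, which I use as $M_2=M_{12}$ and $M_2'=M_{12}'$; all evaluations are then trivial and the relatedness condition is immediate. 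In the $\Tm{\Ga}$ branch, I would unfold $\Tm{\Ga}_{\psi_1\psi_2}(\td{M}{\psi_1\psi_2},\td{M'}{\psi_1\psi_2})$ at $(\id_{\D_2},\id_{\D_2})$; by determinism of the operational semantics a value steps only to itself, so the resulting $N_1,N_2,N_{12}$ (and primes) collapse to single values serving as $M_2=M_{12}$ and $M_2'=M_{12}'$, with $\Ga_{\psi_1\psi_2}$ holding between them.

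I do not anticipate a genuine obstacle: the argument is entirely bookkeeping with the definition of $\Tm{\Ga}$ and is a minor variant of the corresponding lemma of \citeauthor{chtt-iii}. The only point meriting care is the value-relation branch, where one must notice that $\Ga$ relating $\td{M}{\psi_1}$ and $\td{M'}{\psi_1}$ does not by itself control $\td{M}{\psi_1\psi_2}$, so a second appeal to the hypothesis at the composite is needed — and that determinism is what lets a value play the role of its own evaluant; verifying the evaluation equations in each branch is then routine.
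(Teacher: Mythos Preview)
Your proof is correct and follows essentially the same route as the paper: a case split on which disjunct of the hypothesis holds at $\psi_1$, with a further split at $\psi_1\psi_2$ in the value branch, matching the paper's cases (b$*$), (aa), and (ab) in a different order. The only cosmetic difference is that in the (ab) subcase the paper extracts the single evaluation $\td{M}{\psi_1\psi_2}\evals M_{12}$ directly from $\Tm{\Ga}_{\psi_1\psi_2}$ rather than unfolding at $(\id,\id)$ and invoking determinism, but these amount to the same thing.
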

\begin{proof}
  Let $\tds{\D_1}{\psi_1}{\D}$ and $\tds{\D_2}{\psi_2}{\D_1}$ be given. We
  divide into three cases.
  \begin{description}
  \item{(aa)} $\Ga_{\psi_1}(\td{M}{\psi_1},\td{M'}{\psi_1})$ and $\Ga_{\psi_1\psi_2}(\td{M}{\psi_1\psi_2},\td{M'}{\psi_1\psi_2})$.

    Then we have
    \[
      \begin{array}{ccc}
        \td{M}{\psi_1} \evals \td{M}{\psi_1} & \td{M}{\psi_1\psi_2} \evals \td{M}{\psi_1\psi_2} & \td{M}{\psi_1\psi_2} \evals \td{M}{\psi_1\psi_2} \\
        \td{M'}{\psi_1} \evals \td{M'}{\psi_1} & \td{M'}{\psi_1\psi_2} \evals \td{M'}{\psi_1\psi_2} & \td{M'}{\psi_1\psi_2} \evals \td{M'}{\psi_1\psi_2}
      \end{array}
    \]
    with $\Ga_{\psi_1\psi_2}(\td{M}{\psi_1\psi_2},\td{M'}{\psi_1\psi_2})$.
  \item{(ab)} $\Ga_{\psi_1}(\td{M}{\psi_1},\td{M'}{\psi_1})$ and
    $\Tm{\Ga}_{\psi_1\psi_2}(\td{M}{\psi_1\psi_2},\td{M'}{\psi_1\psi_2})$.

    By $\Tm{\Ga}_{\psi_1\psi_2}(\td{M}{\psi_1\psi_2},\td{M'}{\psi_1\psi_2})$, we have
    $\td{M}{\psi_1\psi_2} \evals M_{12}$ and $\td{M'}{\psi_1\psi_2} \evals M_{12}'$ with
    $\Ga_{\psi_1\psi_2}(M_{12},M_{12}')$. Thus
    \[
      \begin{array}{ccc}
        \td{M}{\psi_1} \evals \td{M}{\psi_1} & \td{M}{\psi_1\psi_2} \evals M_{12} & \td{M}{\psi_1\psi_2} \evals M_{12} \\
        \td{M'}{\psi_1} \evals \td{M'}{\psi_1} & \td{M'}{\psi_1\psi_2} \evals M_{12}' & \td{M'}{\psi_1\psi_2} \evals M_{12}'
      \end{array}
    \]

    with $\Tm{\Ga}_{\psi_1\psi_2}(M_{12},M_{12}')$.
  \item{(b$*$)} $\Tm{\Ga}_{\psi_1}(\td{M}{\psi_1},\td{M'}{\psi_1})$.

    By $\Tm{\Ga}_{\psi_1}(\td{M}{\psi_1},\td{M'}{\psi_1})$, we have
    \[
      \begin{array}{ccc}
        \td{M}{\psi_1} \evals M_1 & \td{M_1}{\psi_2} \evals M_2 & \td{M}{\psi_1\psi_2} \evals M_{12} \\
        \td{M'}{\psi_1} \evals M'_1 & \td{M'_1}{\psi_2} \evals M'_2 & \td{M'}{\psi_1\psi_2} \evals M'_{12}
      \end{array}
    \]
    with $\Ga_{\psi_1\psi_2}(V,V')$ for all $V \in \{M_2,M_{12}\}$ and $V' \in \{M_2',M_{12}'\}$. \qedhere
  \end{description}
\end{proof}

\begin{lemma}[Coherent expansion]
  \label{lem:expansion}
  Let $\Ga$ be a value $\D$-PER and let $\tmj[\D]{M,M'}$. If for every $\tds{D'}{\psi}{\D}$, there exists
  $M''$ such that $\td{M}{\psi} \msteps M''$ and $\Tm{\Ga}_\psi(M'',\td{M'}{\psi})$, then $\Tm{\Ga}(M,M')$.
\end{lemma}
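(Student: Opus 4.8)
The plan is to unfold the definition of $\Tm{\Ga}$ directly. To establish $\Tm{\Ga}(M,M')$, i.e.\ $\Tm{\Ga}_{\id}(M,M')$, fix composable substitutions $\tds{\D_1}{\psi_1}{\D}$ and $\tds{\D_2}{\psi_2}{\D_1}$; the task is to produce values $\tmj[\D_1]{M_1,M_1'}$ and $\tmj[\D_2]{M_2,M_2',M_{12},M_{12}'}$ with $\td{M}{\psi_1}\evals M_1$, $\td{M_1}{\psi_2}\evals M_2$, $\td{M}{\psi_1\psi_2}\evals M_{12}$ (and the three primed analogues for $M'$) such that $\Ga_{\psi_1\psi_2}(V,V')$ holds for all $V\in\{M_2,M_{12}\}$ and $V'\in\{M_2',M_{12}'\}$.

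I would invoke the hypothesis twice. First, at $\psi_1$: this yields a term $M''$, well-formed in $\D_1$ by context-preservation of $\steps$, with $\td{M}{\psi_1}\msteps M''$ and $\Tm{\Ga}_{\psi_1}(M'',\td{M'}{\psi_1})$. Instantiating this instance of $\Tm{\Ga}$ at the composable pair $(\id_{\D_1},\psi_2)$ supplies values: $M_1$ with $M''\evals M_1$, $M_2$ with $\td{M_1}{\psi_2}\evals M_2$, and on the primed side $M_1'$ with $\td{M'}{\psi_1}\evals M_1'$, $M_2'$ with $\td{M_1'}{\psi_2}\evals M_2'$, and $M_{12}'$ with $\td{M'}{\psi_1\psi_2}\evals M_{12}'$, subject to $\Ga_{\psi_1\psi_2}(M_2,M_2')$ and $\Ga_{\psi_1\psi_2}(M_2,M_{12}')$. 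Since $\td{M}{\psi_1}\msteps M''\msteps M_1$ with $M_1$ a value, $\td{M}{\psi_1}\evals M_1$; together with $\td{M_1}{\psi_2}\evals M_2$ and the primed analogues this settles the first two evaluation requirements.

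It remains to supply $M_{12}$, a value of $\td{M}{\psi_1\psi_2}$, and the two relations involving it. I would obtain it by invoking the hypothesis a second time, at the composite $\psi_1\psi_2$: this gives $M^{\flat}$ with $\td{M}{\psi_1\psi_2}\msteps M^{\flat}$ and $\Tm{\Ga}_{\psi_1\psi_2}(M^{\flat},\td{M'}{\psi_1\psi_2})$, and instantiating the latter at $(\id_{\D_2},\id_{\D_2})$ evaluates $M^{\flat}$ to a value $M_{12}$ and $\td{M'}{\psi_1\psi_2}$ to a value which, by determinism of evaluation, must coincide with $M_{12}'$, together with $\Ga_{\psi_1\psi_2}(M_{12},M_{12}')$. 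As before, $\td{M}{\psi_1\psi_2}\msteps M^{\flat}\msteps M_{12}$ gives $\td{M}{\psi_1\psi_2}\evals M_{12}$. The last relation, $\Ga_{\psi_1\psi_2}(M_{12},M_2')$, then follows by symmetry and transitivity of the PER $\Ga_{\psi_1\psi_2}$ from the facts $\Ga_{\psi_1\psi_2}(M_{12},M_{12}')$, $\Ga_{\psi_1\psi_2}(M_2,M_{12}')$, and $\Ga_{\psi_1\psi_2}(M_2,M_2')$; this uses only that $\Ga$ is a $\D$-PER.

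This is, structurally, the argument given for cubical type theory in \cite{chtt-iii}; bridge dimensions play no distinguished role, since everything happens at the level of the abstract bookkeeping of $\D$-relations. The one point that warrants care is the device in the third step: the naive move would be to transport the reduction $\td{M}{\psi_1}\msteps M''$ along $\psi_2$ to obtain a value of $\td{M}{\psi_1\psi_2}$, but this would require that $\steps$ commute with dimension substitution, which is not among the assumed properties of an evaluation system. Re-invoking the hypothesis at $\psi_1\psi_2$ sidesteps this; the only cost is reconciling the two independently produced families of values via determinism, which is the main — and quite mild — obstacle.
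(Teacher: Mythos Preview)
Your proof is correct and follows essentially the same approach as the paper's: invoke the hypothesis at $\psi_1$ and at $\psi_1\psi_2$, instantiate the resulting $\Tm{\Ga}$ conditions at $(\id,\psi_2)$ and $(\id,\id)$ respectively, use determinism to identify the two independently obtained values of $\td{M'}{\psi_1\psi_2}$, and close with the PER axioms. Your remark about why one cannot simply transport $\td{M}{\psi_1}\msteps M''$ along $\psi_2$ is exactly the point that forces the second invocation.
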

\begin{proof}
  Let $\tds{\D_1}{\psi_1}{\D}$ and $\tds{\D_2}{\psi_2}{\D_1}$ be given. By assumption, there exists $M_1''$
  such that $\td{M}{\psi_1} \msteps M_1''$ and $\Tm{\Ga}_{\psi_1}(M_1'',\td{M'}{\psi_1})$.  By
  $\Tm{\Ga}_{\psi_1}(M_1'',\td{M'}{\psi_1})$ applied at the substitutions $\id$ and $\psi_2$, we see that
  \[
    \begin{array}{ccc}
      M_1'' \evals M_1 & \td{M_1}{\psi_2} \evals M_2 & \td{M_1''}{\psi_2} \evals M_{12} \\
      \td{M'}{\psi_1} \evals M'_1 & \td{M'_1}{\psi_2} \evals M'_2 & \td{M'}{\psi_1\psi_2} \evals M'_{12}
    \end{array}
  \]
  with $\Ga_{\psi_1\psi_2}(V,V')$ for $V \in \{M_2,M_{12}\}$ and $V' \in \{M_2',M_{12}'\}$. Likewise, we have
  some $M_{12}''$ such that $\td{M}{\psi_1\psi_2} \msteps M_{12}''$ and
  $\Tm{\Ga}_{\psi_1\psi_2}(M_{12}'',\td{M'}{\psi_1\psi_2})$. By the latter, we have $M_{12}'' \evals N_{12}$
  and $\td{M'}{\psi_1\psi_2} \evals N_{12}'$ with $\Ga_{\psi_1\psi_2}(N_{12},N_{12}')$. Note that
  $M_{12}' = N_{12}'$ by determinism of the operational semantics. As $\Ga$ is a PER, we thus have
  $\Ga_{\psi_1\psi_2}(N_{12},M'_2)$.

  Combining this data, we have
  \[
    \begin{array}{ccc}
      \td{M}{\psi_1} \evals M_1 & \td{M_1}{\psi_2} \evals M_2 & \td{M}{\psi_1\psi_2} \evals N_{12} \\
      \td{M'}{\psi_1} \evals M'_1 & \td{M'_1}{\psi_2} \evals M'_2 & \td{M'}{\psi_1\psi_2} \evals M'_{12}
    \end{array}
  \]
  with $\Ga_{\psi_1\psi_2}(V,V')$ for all $V \in \{M_2,N_{12}\}$ and $V' \in \{M_2',M_{12}'\}$.
\end{proof}

\begin{lemma}[Evaluation]
  \label{lem:evaluation}
  Let $\Ga$ be a value-coherent $\D$-PER and let $\tmj[\D]{M,M'}$ with $\Tm{\Ga}(M,M')$. Then $M \evals V$ and
  $M' \evals V'$ with $\Tm{\Ga}(Q,Q')$ holds for all $Q \in \{M,V\}$ and $Q' \in \{M',V'\}$.
\end{lemma}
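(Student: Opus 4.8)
The plan is to read off the values of $M$ and $M'$ from the definition of $\Tm{\Ga}$ and then obtain the four coherence conditions by gluing together the coherence data already available for $M$ and for $V$ separately, using that $\Ga$ is a PER (hence so is each $\Ga_\psi$, and hence so is each $\Tm{\Ga}_\psi$) together with value-coherence.

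First I would instantiate the hypothesis $\Tm{\Ga}(M,M')$ at $\psi_1 = \psi_2 = \id$. Since $\id$ acts as the identity and the operational semantics is deterministic, the witnesses collapse: from $M \evals M_1$ and $M \evals M_{12}$ we get $M_1 = M_{12}$, and from $M_1 \evals M_2$ (a value does not step) we get $M_2 = M_1$, so there is a single value $V$ with $M \evals V$; symmetrically $M' \evals V'$, and the hypothesis supplies $\Ga(V,V')$. Value-coherence of $\Ga$ (i.e.\ $\Ga \subseteq \Tm{\Ga}$) then yields $\Tm{\Ga}(V,V')$, and since $\Tm{\Ga}$ is a PER we also have $\Tm{\Ga}(M,M)$ and $\Tm{\Ga}(V,V)$.

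The heart of the argument is the single claim $\Tm{\Ga}(M,V)$; granting it, the remaining two conditions follow from $\Tm{\Ga}(M,M')$ and $\Tm{\Ga}(V,V')$ by symmetry and transitivity of $\Tm{\Ga}$ (chaining $M \sim V \sim V'$ for $\Tm{\Ga}(M,V')$, and $V \sim M \sim M'$ for $\Tm{\Ga}(V,M')$). To prove $\Tm{\Ga}(M,V)$ I would first record the following for every $\tds{\D'}{\psi}{\D}$: instantiating $\Tm{\Ga}(M,M)$ at $(\psi_1,\psi_2) = (\id,\psi)$ — whose first witness is forced to be $V$ by determinism — shows that $\td{M}{\psi}$ and $\td{V}{\psi}$ both evaluate, say to $W_\psi$ and $U_\psi$ respectively, and that $\Ga_\psi(W_\psi, U_\psi)$. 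Then I would verify $\Tm{\Ga}(M,V)$ directly against its definition: for given $\psi_1, \psi_2$, take the $M$-side evaluation witnesses from $\Tm{\Ga}(M,M)$ at $(\psi_1,\psi_2)$ and the $V$-side witnesses from $\Tm{\Ga}(V,V)$ at $(\psi_1,\psi_2)$; the ``same-side'' instances of $\Ga_{\psi_1\psi_2}$ are exactly the ones furnished by those two facts, and since the two $\td{(-)}{\psi_1\psi_2}$-witnesses are precisely $W_{\psi_1\psi_2}$ and $U_{\psi_1\psi_2}$, the observation above (taken at $\psi := \psi_1\psi_2$) relates them, so the ``cross-side'' instances follow by transitivity of the PER $\Ga_{\psi_1\psi_2}$.

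I expect the main obstacle to be exactly this last gluing step. Because dimension substitution does not commute with evaluation, $\td{V}{\psi}$ is in general \emph{not} the value of $\td{M}{\psi}$, so one cannot finish by applying Coherent Expansion (\cref{lem:expansion}) pointwise to expand $M$ to $V$; one must instead extract the value-level relation $\Ga_\psi(W_\psi, U_\psi)$ from $\Tm{\Ga}(M,M)$ and splice it onto the independently-available two-step coherence data for $M$ and for $V$. The rest is routine PER bookkeeping.
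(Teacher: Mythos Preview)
Your argument is correct. The approach is essentially the same as the paper's—both hinge on instantiating a known $\Tm{\Ga}$ fact at $(\id,\psi)$ to relate the value of $\td{M}{\psi}$ with the value of $\td{V}{\psi}$, then splicing that into the two-step data for $\psi_1,\psi_2$ via PER transitivity—but the decomposition differs. You reduce to proving $\Tm{\Ga}(M,V)$ (the ``same-term'' relation) and recover the cross cases $\Tm{\Ga}(M,V')$ and $\Tm{\Ga}(V,M')$ by composing with $\Tm{\Ga}(M,M')$ and $\Tm{\Ga}(V,V')$. The paper instead proves $\Tm{\Ga}(M,V')$ directly, tracking $M$ and $M'$ simultaneously and using value-coherence to pass through intermediate relations like $\Tm{\Ga}_{\psi_1}(M_1',V_1')$. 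Your factoring is a bit cleaner: once you target $\Tm{\Ga}(M,V)$, the primed side drops out of the core calculation entirely, and the cross link at $\psi_1\psi_2$ comes from a single instantiation of $\Tm{\Ga}(M,M)$ at $(\id,\psi_1\psi_2)$ rather than from several interleaved applications. The paper's route avoids invoking that $\Tm{\Ga}$ is a PER as a black box, but otherwise the two arguments do the same work.
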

\begin{proof}
  By $\Tm{\Ga}(M,M')$, we have $M \evals V$ and $M' \evals V'$ with $\Ga(V,V')$, which implies
  $\Tm{\Ga}(V,V')$ by value-coherence of $\Ga$. We show $\Tm{\Ga}(M,V')$; the proof of $\Tm{\Ga}(V,M')$ is
  symmetric. Let $\tds{\D_1}{\psi_1}{\D}$ and $\tds{\D_2}{\psi_2}{\D_1}$. By
  $\Tm{\Ga}(M,M')$ applied at the substitutions $\id$ and $\psi_1$, we have
  \[
    \begin{array}{ccc}
      M \evals V & \td{V}{\psi_1} \evals V_1 & \td{M}{\psi_1} \evals M_1 \\
      M' \evals V' & \td{V'}{\psi_1} \evals V'_1 & \td{M'}{\psi_1} \evals M'_1
    \end{array}
  \]
  with, in particular, $\Ga_{\psi_1}(M_1,V_1')$ and $\Ga_{\psi_1}(V_1,M_1')$. By value-coherence, we then have
  $\Tm{\Ga}_{\psi_1}(M_1',V_1')$, which implies that $\td{M_1'}{\psi_2} \evals M_2$ and
  $\td{V_1'}{\psi_2} \evals V_2$ with $\Ga_{\psi_1\psi_2}(M_2,V_2')$; symmetrically, we have
  $\td{V_1}{\psi_2} \evals V_2$ and $\td{M_1'}{\psi_2} \evals M_2'$ with $\Ga_{\psi_1\psi_2}(V_2,M_2')$. We
  now apply $\Tm{\Ga}(M,M')$ at the substitutions $\id$ and $\psi_1\psi_2$ to obtain
  \[
    \begin{array}{ccc}
      M \evals V & \td{V}{\psi_1\psi_2} \evals V_{12} & \td{M}{\psi_1\psi_2} \evals M_{12} \\
      M' \evals V' & \td{V'}{\psi_1\psi_2} \evals V'_{12} & \td{M'}{\psi_1\psi_2} \evals M'_{12}
    \end{array}
  \]
  with, in particular, $\Ga_{\psi_1\psi_2}(M_{12},V'_{12})$. Finally, applying $\Tm{\Ga}(M,M')$ and
  $\Tm{\Ga}(V,V')$ at $\psi_1$ and $\psi_2$ gives us
  $\Ga_{\psi_1\psi_2}(M_{12},M_2')$, $\Ga_{\psi\psi_2}(M_2,M_{12}')$, $\Ga_{\psi_1\psi_2}(V_{12},V_2')$ and
  $\Ga_{\psi\psi_2}(V_2,V_{12}')$. We now have
  \[
    \begin{array}{cccc}
      \Ga_{\psi_1\psi_2}(M_2,V_2') & \Ga_{\psi_1\psi_2}(M_{12},V'_{12}) & \Ga_{\psi_1\psi_2}(M_{12},M_2') & \Ga_{\psi_1\psi_2}(V_{12},V_2') \\
      \Ga_{\psi_1\psi_2}(V_2,M_2') & \Ga_{\psi_1\psi_2}(V_{12},M'_{12}) & \Ga_{\psi_1\psi_2}(M_2,M_{12}') & \Ga_{\psi_1\psi_2}(V_2,V_{12}')
    \end{array}
  \]
  Combining these via transitivity of $\Ga$ gives the desired result: $\Ga_{\psi_1\psi_2}(W,W')$ for all
  $W \in \{M_2,M_{12}\}$ and $W' \in \{V_2',V_{12}'\}$.
\end{proof}

\begin{lemma}[Formation]
  \label{lem:formation}
  Let $\tau$ be a bridge-path type system, let $\tmj[\D]{A,A'}$, and let $\Ga$ be a value
  $\D$-relation. If for every $\tds{\D'}{\psi}{\D}$, either
  $\PTy{\tau}(\D',\td{A}{\psi},\td{A'}{\psi},\td{\Ga}{\psi})$ holds or
  $\tau(\D',\td{A}{\psi},\td{A'}{\psi},\Ga_\psi)$ holds, then $\PTy{\tau}(\D,A,A',\Ga)$.
\end{lemma}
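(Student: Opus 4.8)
The plan is to follow the proof of \cref{lem:introduction} essentially line for line, transposing it from the element level to the type-system level: wherever that proof speaks of $\Tm{\Ga}$ we speak of $\PTy{\tau}$, wherever it uses ``$V,V'$ are values related by $\Ga$'' we use ``$\tau(\D',V,V',\Ga_\psi)$ holds (so $V,V'$ are values)'', and wherever it uses ``$M$ evaluates to such a value'' we use ``$\PTy{\tau}$ holds at the relevant terms''. Concretely, to establish $\PTy{\tau}(\D,A,A',\Ga)$ I fix arbitrary $\tds{\D_1}{\psi_1}{\D}$ and $\tds{\D_2}{\psi_2}{\D_1}$ and must produce evaluations $\td{A}{\psi_1}\evals A_1$, $\td{A_1}{\psi_2}\evals A_2$, $\td{A}{\psi_1\psi_2}\evals A_{12}$ (and the three primed ones) together with $\tau(\D_2,V,V',\Ga_{\psi_1\psi_2})$ at the four corners $V\in\{A_2,A_{12}\}$, $V'\in\{A'_2,A'_{12}\}$. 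I then case-split using the hypothesis at the substitutions $\psi_1$ and $\psi_1\psi_2$, obtaining the same three cases (aa), (ab), (b$*$) as in \cref{lem:introduction}, where ``(a) at $\psi$'' abbreviates ``$\tau(\D',\td{A}{\psi},\td{A'}{\psi},\Ga_\psi)$ holds'' and ``(b) at $\psi$'' abbreviates ``$\PTy{\tau}(\D',\td{A}{\psi},\td{A'}{\psi},\td{\Ga}{\psi})$ holds''.

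In case (aa), both $\td{A}{\psi_1},\td{A'}{\psi_1}$ and $\td{A}{\psi_1\psi_2},\td{A'}{\psi_1\psi_2}$ are values, so every witnessing evaluation is trivial (each term steps to itself), and the required corner condition $\tau(\D_2,\td{A}{\psi_1\psi_2},\td{A'}{\psi_1\psi_2},\Ga_{\psi_1\psi_2})$ is exactly ``(a) at $\psi_1\psi_2$''. In case (ab), $\td{A}{\psi_1},\td{A'}{\psi_1}$ are values, and I instantiate the $\PTy{\tau}$ judgment of ``(b) at $\psi_1\psi_2$'' at the identity substitutions to get $\td{A}{\psi_1\psi_2}\evals A_{12}$, $\td{A'}{\psi_1\psi_2}\evals A'_{12}$ with $\tau(\D_2,A_{12},A'_{12},\Ga_{\psi_1\psi_2})$; I then take $A_1:=\td{A}{\psi_1}$, $A_2:=A_{12}$ (and likewise for the primes), using that $\td{A_1}{\psi_2}=\td{A}{\psi_1\psi_2}$ since $A_1$ is literally $\td{A}{\psi_1}$, so $\td{A_1}{\psi_2}\evals A_{12}$ as required, and the four corners all reduce to the single relation $\tau(\D_2,A_{12},A'_{12},\Ga_{\psi_1\psi_2})$ already obtained.

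In case (b$*$), I simply instantiate the $\PTy{\tau}(\D_1,\td{A}{\psi_1},\td{A'}{\psi_1},\td{\Ga}{\psi_1})$ assumption at the pair of substitutions $(\id,\psi_2)$: this yields all six evaluations and $\tau(\D_2,V,V',(\td{\Ga}{\psi_1})_{\psi_2})$ at the four corners, and $(\td{\Ga}{\psi_1})_{\psi_2}=\Ga_{\psi_1\psi_2}$ by the definition of the restriction of a $\D$-relation. I do not anticipate a real obstacle; the only thing requiring care is the bookkeeping of which context/substitution each judgment lives over, and that the collapses $(\td{\Ga}{\psi})_{\id}=\Ga_\psi$ and $(\td{\Ga}{\psi_1})_{\psi_2}=\Ga_{\psi_1\psi_2}$ are applied correctly. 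It is precisely the fact that $\PTy{\tau}$ already quantifies over a \emph{pair} of substitutions (mirroring the shape of $\Tm{\Ga}$) that lets the (b$*$) case go through in one step, without a further split on what holds at $\psi_1\psi_2$ — exactly as in \cref{lem:introduction}.
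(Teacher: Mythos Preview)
Your proposal is correct and is precisely the adaptation the paper has in mind: the paper's proof reads in full ``A straightforward adaptation of the proof of \cref{lem:introduction},'' and you have carried out that adaptation case by case. The bookkeeping you flag (that $\tau$ relates only values, and the identities $(\td{\Ga}{\psi_1})_{\psi_2}=\Ga_{\psi_1\psi_2}$) is handled correctly.
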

\begin{proof}
  A straightforward adaptation of the proof of \cref{lem:introduction}.
\end{proof}

\begin{lemma}[Coherent type expansion]
  \label{lem:type-expansion}
  Let $\tau$ be a bridge-path type system, let $\tmj[\D]{A,A'}$, and let $\Ga$ be a value
  $\D$-relation. If for all $\tds{\D'}{\psi}{\D}$, there exists $A''$ such that $\td{A}{\psi} \msteps A''$
  and $\PTy{\tau}(\D',A'',\td{A'}{\psi},\td{\Ga}{\psi})$, then $\PTy{\tau}(\D,A,A',\Ga)$.
\end{lemma}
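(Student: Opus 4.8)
The plan is to mirror the proof of \cref{lem:expansion}, the element-level analogue, replacing the judgment $\Tm{\Ga}$ by $\PTy{\tau}$ and the value $\D$-PER $\Ga$ by the family of PERs $\tau(\D,-,-,\phi)$, whose PER-ness is guaranteed by the axioms of a path-bridge type system. So first I would unfold the goal $\PTy{\tau}(\D,A,A',\Ga)$: fix arbitrary $\tds{\D_1}{\psi_1}{\D}$ and $\tds{\D_2}{\psi_2}{\D_1}$; I must exhibit values $A_1,A_1'$ over $\D_1$ and $A_2,A_2',A_{12},A_{12}'$ over $\D_2$ with $\td{A}{\psi_1}\evals A_1$, $\td{A_1}{\psi_2}\evals A_2$, $\td{A}{\psi_1\psi_2}\evals A_{12}$ together with the primed versions, such that $\tau(\D_2,V,V',\Ga_{\psi_1\psi_2})$ for all $V\in\{A_2,A_{12}\}$ and $V'\in\{A_2',A_{12}'\}$.

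Next I would feed the two substitutions into the hypothesis. Applying it at $\psi_1$ yields some $A_1''$ with $\td{A}{\psi_1}\msteps A_1''$ and $\PTy{\tau}(\D_1,A_1'',\td{A'}{\psi_1},\td{\Ga}{\psi_1})$; instantiating the latter at the pair $\id,\psi_2$ produces evaluations $A_1''\evals A_1$, $\td{A_1}{\psi_2}\evals A_2$, $\td{A_1''}{\psi_2}\evals A_{12}$ together with $\td{A'}{\psi_1}\evals A_1'$, $\td{A_1'}{\psi_2}\evals A_2'$, $\td{A'}{\psi_1\psi_2}\evals A_{12}'$, all $\tau$-related in $(\td{\Ga}{\psi_1})_{\psi_2}=\Ga_{\psi_1\psi_2}$. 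Since multi-step reduction composes with evaluation, $\td{A}{\psi_1}\msteps A_1''\evals A_1$ gives $\td{A}{\psi_1}\evals A_1$ outright, so the $\psi_1$-face and the ``diagonal, then $\psi_2$'' face are already in hand.

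The one step that is not pure bookkeeping --- and it is exactly the wrinkle encountered in \cref{lem:expansion} --- is the $\psi_1\psi_2$-face: because $\msteps$ is not stable under dimension substitution, I cannot conclude $\td{A}{\psi_1\psi_2}\evals A_{12}$ from $\td{A_1''}{\psi_2}\evals A_{12}$. Instead I would apply the hypothesis a second time, now directly at $\psi_1\psi_2$, obtaining $A_{12}''$ with $\td{A}{\psi_1\psi_2}\msteps A_{12}''$ and $\PTy{\tau}(\D_2,A_{12}'',\td{A'}{\psi_1\psi_2},\td{\Ga}{\psi_1\psi_2})$; instantiating at $\id,\id$ gives $A_{12}''\evals N_{12}$ and $\td{A'}{\psi_1\psi_2}\evals N_{12}'$ with $\tau(\D_2,N_{12},N_{12}',\Ga_{\psi_1\psi_2})$, hence $\td{A}{\psi_1\psi_2}\evals N_{12}$. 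By determinism of the operational semantics $N_{12}'=A_{12}'$, and since $\tau(\D_2,-,-,\Ga_{\psi_1\psi_2})$ is a PER, transitivity with the $A_{12}$-data yields $\tau(\D_2,N_{12},A_2',\Ga_{\psi_1\psi_2})$ and $\tau(\D_2,N_{12},A_{12}',\Ga_{\psi_1\psi_2})$. Collecting $\td{A}{\psi_1}\evals A_1$, $\td{A_1}{\psi_2}\evals A_2$, $\td{A}{\psi_1\psi_2}\evals N_{12}$ and their primed counterparts then witnesses $\PTy{\tau}(\D,A,A',\Ga)$, with $N_{12}$ playing the role of the required ``$A_{12}$''.

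I expect no genuine obstacle: the argument is a routine transcription of \cref{lem:expansion} into the type-system layer, and the only point of care is precisely that second invocation of the hypothesis together with the PER reconciliation via determinism, just as in the element-level case.
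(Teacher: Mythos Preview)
Your proposal is correct and matches the paper's approach exactly: the paper's proof of \cref{lem:type-expansion} simply reads ``A straightforward adaptation of the proof of \cref{lem:expansion},'' and you have carried out precisely that adaptation, correctly identifying that the PER structure needed for the determinism-and-transitivity step comes from axiom~(3) of a bridge-path type system rather than from~$\Ga$ itself.
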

\begin{proof}
  A straightforward adaptation of the proof of \cref{lem:expansion}.
\end{proof}

\begin{lemma}[Elimination]
  \label{lem:elimination}
  Let $\evalcx{\D_0}{\C,\C',\T}{\D}$, $\bridgesj[\D]{\bm{\rho}}$, and let $\Ga$ be a value-coherent
  $(\apartcx{\D}{\bm{\rho}}\D_0)$-PER. Suppose that for every $\tds{\D'}{\psi}{\D}$ with $\D'$ disjoint from
  $\D_0$, we have
  \begin{enumerate}
  \item $\ceqtypep[\D']{\plug{\td{\T}{\psi}}{M}}{\plug{\td{\T}{\psi}}{M'}}$ for all
    $\Tm{\Ga}_{\apartcx{\psi}{\bm{\rho}} \times \D_0}(M,M')$,
  \item $\td{\C}{\psi}$,$\td{\C'}{\psi}$ are eager and
    $\ceqtm[\D']{\plug{\td{\C}{\psi}}{V}}{\plug{\td{\C'}{\psi}}{V'}}{\plug{\td{\T}{\psi}}{V}}$ for all
    $\Ga_{\apartcx{\psi}{\bm{\rho}} \times \D_0}(V,V')$.
  \end{enumerate}
  Then $\ceqtm[\D]{\plug{\C}{M}}{\plug{\C'}{M'}}{\plug{\T}{M}}$ for every $\Tm{\Ga}(M,M')$.
\end{lemma}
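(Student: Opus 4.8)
The plan is to unfold the goal using the definition of the closed element judgment and then apply coherent expansion. Concretely, $\ceqtm[\D]{\plug{\C}{M}}{\plug{\C'}{M'}}{\plug{\T}{M}}$ amounts to exhibiting a value $\D$-PER $\Gb$ with $\PTy{\tau}(\D, \plug{\T}{M}, \plug{\T}{M}, \Gb)$ together with $\Tm{\Gb}(\plug{\C}{M}, \plug{\C'}{M'})$, so the first task is to see that $\plug{\T}{M}$ is a pretype. Since $\Tm{\Ga}$ is a PER (as $\Ga$ is, by the Remark following its definition), $\Tm{\Ga}(M,M')$ yields $\Tm{\Ga}(M,M)$; applying hypothesis (1) at the identity substitution on $\D$ (which is disjoint from $\D_0$ by the standing assumption on expression contexts) gives $\cwftypep[\D]{\plug{\T}{M}}$, so we may set $\Gb \eqdef \vper{\plug{\T}{M}}$. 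It then remains to verify $\Tm{\Gb}(\plug{\C}{M}, \plug{\C'}{M'})$, which I will do by \cref{lem:expansion}.

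To apply \cref{lem:expansion}, fix $\tds{\D'}{\psi}{\D}$; we may assume $\D'$ disjoint from $\D_0$, the general case following by renaming. Substitution commutes with hole-filling, so $\td{(\plug{\C}{M})}{\psi} = \plug{\td{\C}{\psi}}{\td{M}{\psi}}$, and likewise for $\C'$ and $\T$. Because $M$ omits the dimensions $\bm{\rho}$, its restriction along $\apartcx{\psi}{\bm{\rho}} \times \D_0$ agrees with $\td{M}{\psi}$, so stability of $\Tm{\Ga}$ gives $\Tm{\Ga}_{\apartcx{\psi}{\bm{\rho}} \times \D_0}(\td{M}{\psi}, \td{M'}{\psi})$. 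Applying \cref{lem:evaluation} to the value-coherent PER $\td{\Ga}{\apartcx{\psi}{\bm{\rho}} \times \D_0}$ produces values $W,W'$ with $\td{M}{\psi} \evals W$, $\td{M'}{\psi} \evals W'$, with $\Ga_{\apartcx{\psi}{\bm{\rho}} \times \D_0}(W,W')$, and with $\Tm{\Ga}_{\apartcx{\psi}{\bm{\rho}} \times \D_0}$ relating each of $\td{M}{\psi},W$ to each of $\td{M'}{\psi},W'$. Since $\td{\C}{\psi}$ is eager, $\plug{\td{\C}{\psi}}{\td{M}{\psi}} \msteps \plug{\td{\C}{\psi}}{W}$; I take this reduct as the term demanded by \cref{lem:expansion}. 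Using stability of $\PTy{\tau}$ under dimension substitution to identify the $\psi$-component of $\vper{\plug{\T}{M}}$ with $\vper{\plug{\td{\T}{\psi}}{\td{M}{\psi}}}$, what is left to show is $\ceqtm[\D']{\plug{\td{\C}{\psi}}{W}}{\plug{\td{\C'}{\psi}}{\td{M'}{\psi}}}{\plug{\td{\T}{\psi}}{\td{M}{\psi}}}$.

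This I obtain in two moves. First, hypothesis (2) at $\psi$, applied to the $\Ga$-related values $W,W'$, gives $\ceqtm[\D']{\plug{\td{\C}{\psi}}{W}}{\plug{\td{\C'}{\psi}}{W'}}{\plug{\td{\T}{\psi}}{W}}$, and hypothesis (1) at $\psi$, applied to $\Tm{\Ga}_{\apartcx{\psi}{\bm{\rho}} \times \D_0}(W, \td{M}{\psi})$, lets us rewrite the type as $\plug{\td{\T}{\psi}}{\td{M}{\psi}}$. Second, it suffices by transitivity to see that $\plug{\td{\C'}{\psi}}{W'}$ and $\plug{\td{\C'}{\psi}}{\td{M'}{\psi}}$ are equal in that type — an anti-reduction, since the latter reduces to the former by eagerness of $\td{\C'}{\psi}$ and the two stay related under every further substitution $\psi'$ (because $\td{M'}{\psi\psi'}$ again evaluates and eagerness applies uniformly), which is itself a second application of \cref{lem:expansion}, discharged by chasing $\Tm{\Ga}$-relatedness of the values involved and re-invoking hypotheses (1) and (2) at $\psi\psi'$. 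The main obstacle is exactly this last point: \cref{lem:expansion} reduces only the left-hand component, so one must re-establish by hand — essentially replaying the bookkeeping in the proof of \cref{lem:evaluation}, using that $\Tm{\Ga}$-related terms have coherently $\Ga$-related values — that typed equality is preserved under head expansion of the right-hand component through an eager expression context.
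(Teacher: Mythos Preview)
Your high-level plan—reduce the left argument through the eager context $\C$ via \cref{lem:expansion}, then appeal to hypothesis~(2) on values—is the natural first move, and you correctly identify the residual obstacle: the right-hand term $\plug{\td{\C'}{\psi}}{\td{M'}{\psi}}$ still carries an unevaluated $\td{M'}{\psi}$. But your proposed fix, a second (symmetric) application of \cref{lem:expansion}, does not close the gap. At that inner level you reduce $\plug{\td{\C'}{\psi\psi'}}{\td{M'}{\psi\psi'}}$ to $\plug{\td{\C'}{\psi\psi'}}{W'_{\psi'}}$ with $W'_{\psi'}$ a value, but the other side of the comparison is $\plug{\td{\C'}{\psi\psi'}}{\td{W'}{\psi'}}$, and $\td{W'}{\psi'}$ is in general \emph{not} a value. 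Hypothesis~(2) therefore does not apply directly, and you are left with exactly the same shape of problem one level deeper; iterating \cref{lem:expansion} never bottoms out. Your phrase ``replaying the bookkeeping in the proof of \cref{lem:evaluation}'' gestures in the right direction, but actually doing so means abandoning the recursive use of expansion and instead verifying the two-level condition defining $\Tm{\vper{\plug{\T}{M}}}$ by hand.

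That is what the paper does from the start: it fixes $\psi_1,\psi_2$ once, extracts the full evaluation data for $M,M'$ at both levels from $\Tm{\Ga}(M,M')$, applies hypothesis~(2) to the resulting $\Ga$-related values (at level $\psi_1$ and again at $\psi_1\psi_2$), and then uses eagerness of $\C,\C'$ together with determinism of evaluation to assemble the required diagram for $\plug{\C}{M}$ and $\plug{\C'}{M'}$ symmetrically in a single pass. Because the definition of $\Tm$ is only two levels deep, no regress arises. If you patch your argument by unfolding at the inner step, you recover essentially this proof with an extra layer of \cref{lem:expansion} on the outside that does no real work.
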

\begin{proof}
  Let $\Tm{\Ga}(M,M')$, $\tds{\D_1}{\psi_1}{\D}$ and $\tds{\D_2}{\psi_2}{\D_1}$ be given; by $\Ga$-varying
  $\D_0,M,M'$, we may assume that $\D_0$ is disjoint from $\D_1$ and $\D_2$.  By applying $\Tm{\Ga}(M,M')$ at
  $\apartcx{\psi_1}{\bm{\rho}} \times \D_0$ and $\apartcx{\psi_2}{\td{\bm{\rho}}{\psi_1}} \times \D_0$, we
  have
  \[
    \begin{array}{ccc}
      \td{M}{\psi_1} \evals M_1 & \td{M_1}{\psi_2} \evals M_2 & \td{M}{\psi_1\psi_2} \evals M_{12} \\
      \td{M'}{\psi_1} \evals M'_1 & \td{M'_1}{\psi_2} \evals M'_2 & \td{M'}{\psi_1\psi_2} \evals M'_{12}
    \end{array}
  \]
  where $\Ga_{\apartcx{(\psi_1\psi_2)}{\bm{\rho}} \times \D_0}(V,V')$ for $V \in \{M_2,M_{12}\}$ and
  $V' \in \{M'_2,M'_{12}\}$; by applying it at $\apartcx{\psi_1}{\bm{\rho}} \times \D_0$ and $\id$, we also
  know $\Ga_{\apartcx{\psi_1}{\bm{\rho}} \times \D_0}(M_1,M_1')$.  Thus
  $\ceqtm[\D_1]{\plug{\td{\C}{\psi_1}}{M_1}}{\plug{\td{\C'}{\psi_1}}{M_1'}}{\td{\plug{\T}{M}}{\psi_1}}$;
  applying this at $\id$ and $\psi_2$ gives
  \[
    \begin{array}{ccc}
      \plug{\td{\C}{\psi_1}}{M_1} \evals N_1 & \td{N_1}{\psi_2} \evals N_2 & \td{\plug{\td{\C}{\psi_1}}{M_1}}{\psi_2} \evals N_{12} \\
      \plug{\td{\C'}{\psi_1}}{M'_1} \evals N'_1 & \td{N'_1}{\psi_2} \evals N'_2 & \td{\plug{\td{\C}{\psi_1}}{M'_1}}{\psi_2} \evals N'_{12} \\
    \end{array}
  \]
  with $\vper{\plug{\T}{M}}_{\psi_1\psi_2}(V,V')$ for $V \in \{N_2,N_{12}\}$ and $V' \in \{N'_2,N'_{12}\}$.
  For any $t,t' \in \{2,{12}\}$ we have
  $\ceqtm[\D_2]{\plug{\td{\C}{\psi_1\psi_2}}{M_t}}{\plug{\td{\C'}{\psi_1\psi_2}}{M'_{t'}}}{\td{\plug{\T}{M}}{\psi_1\psi_2}}$,
  which gives
  \[
    \begin{array}{cc}
      \plug{\td{\C}{\psi_1\psi_2}}{M_t} \evals P_t  &
      \plug{\td{\C'}{\psi_1\psi_2}}{M'_{t'}} \evals P'_{t'}
    \end{array}
  \]
  with $\vper{\plug{\T}{M}}_{\psi_1\psi_2}(P_t,P'_{t'})$.  Using that
  $\td{\C}{\psi_1},\td{\C'}{\psi_1},\td{\C}{\psi_1\psi_2},\td{\C'}{\psi_1\psi_2}$ are all eager, we assemble
  the above to get
  \[
    \begin{array}{ccc}
      \td{\plug{\C}{M}}{\psi_1} \evals N_1 & \td{N_1}{\psi_2} \evals N_2 & \td{\plug{\C}{M}}{\psi_1\psi_2} \evals P_{12} \\
      \td{\plug{\C'}{M'}}{\psi_1} \evals N'_1 & \td{N'_1}{\psi_2} \evals N'_2 & \td{\plug{\C'}{M'}}{\psi_1\psi_2} \evals P'_{12}.
    \end{array}
  \]
  Now, observe that we have $\plug{\td{\C}{\psi_1\psi_2}}{M_1\psi_2} \evals N_{12}$ as well as
  $\plug{\td{\C}{\psi_1\psi_2}}{M_2} \evals P_2$; as $\td{M_1}{\psi_2} \evals M_{12}$ and
  $\td{\C}{\psi_1\psi_2}$ is eager, this means $N_{12} = P_2$ by determinism. Likewise $N'_{12} = P'_2$. Using
  transitivity of $\vper{\plug{\T}{M}}$, we can therefore conclude that
  $\vper{\plug{\T}{M}}_{\psi_1\psi_2}(V,V')$ for $V \in \{N_2,P_{12}\}$ and $V' \in \{N_2',P'_{12}\}$.

  Note that throughout this proof, we have implicitly identified indices of $\T$; for example, we use the fact
  that $\td{\vper{\plug{\T}{M}}}{\psi_1}$ and $\vper{\plug{\td{\T}{\psi_1}}{M_1}}$ are equal. Here we use the
  assumption that $\Ga$ is value-coherent, which gives $\Tm{\Ga}_{\apartcx{\psi_1}{\bm{\rho}} \times \D_0}(\td{M}{\psi_1},M_1)$ by
  \cref{lem:evaluation}.
\end{proof}

%%% Local Variables:
%%% mode: latex
%%% TeX-master: "main"
%%% End:

\section{Fixed-point construction}
\label{app:fixed-point}

To construct concrete examples of bridge-path type systems, we use a minor variation on the fixed-point
construction introduced by \cite{chtt-iii} following \cite{allen87}. We sketch here the additions necessary to
accommodate bridge variables and the $\Bridge$ and $\Gel$ types. As mentioned above, we satisfy ourselves with
a single universe, but it is not significantly more difficult to construct a type system with an infinite
hierarchy.

\begin{definition}
  Define an operator $\F_{\mathrm{B}}$ on candidate bridge-path type systems as follows.
  \[
    \begin{array}{rcl}
      \F_{\mathrm{B}}(\tau) &\eqdef& \{(\cx*,\Bridge{\bm{x}.A}{M_0}{M_1},\Bridge{\bm{x}.A}{M_0}{M_1},\BridgeR[\tau]{\bm{x}.A}{M_0}{M_1}_\id) \\
               && \;{ } \mid \relcts*{\tau}{\ceqtypek[\cx!\Phi,\bm{x}!]{A}{A'}} \land (\forall \Ge)\;\relcts*{\tau}{\ceqtm{M_\Ge}{M_\Ge'}{\bsubst{A}{\bm{\Ge}}{\bm{x}}}} \} \\[0.3em]
               &\cup& \{(\cx*!\Phi,\bm{x}!,\Gel{\bm{x}}{A}{B}{a.b.R},\Gel{\bm{x}}{A'}{B'}{a.b.R'},\GelR[\tau]{\bm{x}}{A}{B}{a.b.R}_{\id}) \\
               && \;{ } \mid \relcts*{\tau}{\ceqtypek{A}{A'}} \land \relcts*{\tau}{\ceqtypek{B}{B'}} \land { }\\
               && \;{ } \mathrel{\phantom{\mid}} \relcts*{\tau}{\eqtypek{\oft{a}{A},\oft{b}{B}}{R}{R'}} \} \\[0.3em]
    \end{array}
  \]
  We assume the existence of an analogous operator $\F_{\mathrm{C}}$ on candidate bridge-path type systems
  with clauses for each type former of cubical type theory: pair, function, $\Path$-, $\V$-, and $\fcom$-types
  \citep[\S3.1]{chtt-iii}. These can all be defined uniformly in the bridge context $\Phi$.
\end{definition}

\begin{definition}
  Define the candidate bridge-path type system
  $\tau_0 \eqdef \lfp{\tau.\F_{\mathrm{B}}(\tau) \cup \F_{\mathrm{C}}(\tau)}$, the least fixed-point of
  $\tau \mapsto \F_{\mathrm{B}}(\tau) \cup F_C(\tau)$ in the lattice of candidate bridge-path type systems
  ordered by inclusion (regarded as quaternary relations).
\end{definition}

\begin{definition}
  Define the candidate bridge-path type system $\tau_1$ as follows.
  \[
    \tau_1 \eqdef \lfp{\tau.\F_{\mathrm{B}}(\tau) \cup F_C(\tau) \cup \{(\cx*,\UKan,\UKan,\phi) \mid \phi(A,A') \iff \relcts*{\tau_0}{\ceqtypek{A}{A'}}\}}
  \]
\end{definition}

\begin{proposition}
  $\tau_0$ and $\tau_1$ are bridge-path type systems.
\end{proposition}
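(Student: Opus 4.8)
The plan is to adapt the Allen-style fixed-point argument of \cite{chtt-iii} (itself following \cite{allen87}) to the path-bridge setting, verifying that the monotone operators defining $\tau_0$ and $\tau_1$ preserve the four conditions in the definition of a bridge-path type system. Both $\F_{\mathrm{B}}$ and $\F_{\mathrm{C}}$, as well as the extra universe clause used to build $\tau_1$, are monotone on the lattice of candidate bridge-path type systems ordered by inclusion, so the least fixed points exist; the content is checking conditions (1)--(4). The cubical clauses collected in $\F_{\mathrm{C}}$ are handled exactly as in \cite{chtt-iii}: the utility lemmas of \cref{app:lemmas} are stated for arbitrary bridge-path contexts, so nothing changes when bridge dimension variables are added to the context. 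Thus the real work is in the two new clauses, for $\Bridge$- and $\Gel$-types, plus the universe clause for $\tau_1$.

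For \textbf{functionality} (if $\tau_i(\D,A_0,A_0',\phi)$ and $\tau_i(\D,A_0,A_0',\phi')$ then $\phi = \phi'$) I would argue by induction on the stages of the fixed-point iteration. The clauses for distinct type formers have syntactically disjoint head forms for $A_0$, so a membership quadruple can arise from only one clause; within the $\Bridge$-clause the value relation $\BridgeR{\bm{x}.A}{M_{\bm{0}}}{M_{\bm{1}}}$ is manifestly determined by the syntactic data, and within the $\Gel$-clause the relation $\GelR{\bm{x}}{A}{B}{a.b.R}$ is determined once $\vper[\tau_i]{A}$ and $\vper[\tau_i]{B}$ are (these govern the $\bm{x} = \bm{0},\bm{1}$ cases), which holds by the inductive hypothesis for functionality at an earlier stage. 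For the universe clause the assigned relation is the fixed set $\{(A,A') \mid \relcts*{\tau_0}{\ceqtypek{A}{A'}}\}$, so there is nothing to check. The \textbf{PER} conditions (2) and (3) are likewise checked clause by clause: $\BridgeR$ and $\GelR$ are PERs whenever their type and element arguments are, by a short symmetry/transitivity computation using stability of the typing judgments under dimension substitution (\cref{lem:bridge-coherent} and \cref{lem:gel-coherent} already package the needed value-coherence); and $\tau_i(\D,-,-,\phi)$ is symmetric and transitive in its type arguments because the constituent judgments $\ceqtypek[\D]{A}{A'}$, $\ceqtm[\D]{M}{M'}{A}$, and $\eqtypek{\oft{a}{A},\oft{b}{B}}{R}{R'}$ are, while the universe clause is trivially a PER since it equates only the name $\UKan$ with itself.

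For \textbf{coherence} (condition (4)) I would invoke \cref{lem:bridge-coherent} and \cref{lem:gel-coherent} to get value-coherence of $\BridgeR{\bm{x}.A}{M_{\bm{0}}}{M_{\bm{1}}}$ and $\GelR{\bm{x}}{A}{B}{a.b.R}$, and then apply \cref{lem:formation} for $\Bridge$-types and \cref{lem:type-expansion} for $\Gel$-types---the latter because under the endpoint substitutions $\Gel{\bm{0}}{A}{B}{a.b.R} \steps A$ and $\Gel{\bm{1}}{A}{B}{a.b.R} \steps B$, which the coherent type expansion lemma absorbs, matching the value relation to $\vper[\tau_i]{A}$, resp. $\vper[\tau_i]{B}$, by coherence of $A$, resp. $B$. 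The universe clause is coherent because $\UKan$ is already a value and its associated relation consists of pairs that are themselves coherent since $\tau_0$ is a type system. The cubical clauses again carry over verbatim from \cite{chtt-iii}.

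The main obstacle is the apparent circularity: the premises of $\F_{\mathrm{B}}$ and $\F_{\mathrm{C}}$ refer to the very type system being constructed (through judgments $\relcts*{\tau}{\ceqtypek{A}{A'}}$ and, via $\BridgeR$ and $\GelR$, through $\vper[\tau]{A}$). The standard resolution, which I would spell out, is \emph{stability} of the value assignment along the transfinite iteration computing the least fixed point (equivalently, induction on derivations): once $(\D,A,A',\Ga)$ is placed in the relation at some stage, $\Ga$ is never revised at a later stage. Stability is proved by the same stage-induction as functionality, and it legitimizes reading $\vper[\tau_i]{A}$ off any sufficiently late stage, so the ``recursive'' references in the clauses are well-founded. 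With stability in hand, conditions (1)--(4) all go through by the routine stage-induction sketched above; the construction for $\tau_1$ differs only in that its extra clause refers to the already-completed $\tau_0$ rather than to $\tau_1$ itself, which causes no additional difficulty.
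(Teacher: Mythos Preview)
Your proposal is correct and follows the same approach as the paper, which simply defers the entire argument to \cite[Theorem 16]{chtt-iii}. You have spelled out in considerably more detail than the paper itself how the Allen-style fixed-point construction is adapted to accommodate the new $\Bridge$ and $\Gel$ clauses, correctly invoking \cref{lem:bridge-coherent}, \cref{lem:gel-coherent}, \cref{lem:formation}, and \cref{lem:type-expansion} at the appropriate points.
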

\begin{proof}
  See \cite[Theorem 16]{chtt-iii}.
\end{proof}

\begin{proposition}
  $\tau_0$ and $\tau_1$ is closed under $\Bridge$- and $\Gel$-types and the constructs of cubical type
  theory. Moreover, $\tau_1$ contains a univalent universe $\UKan$ which is closed under these same type
  formers.
\end{proposition}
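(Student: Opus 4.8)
The plan is to follow the fixed-point construction of \cite{chtt-iii} (itself going back to \cite{allen87}), importing the cubical cases wholesale and using the theorems already established in \cref{sec:bridge-types,sec:gel} for the new type formers. Since the preceding Proposition asserts that $\tau_0$ and $\tau_1$ are bridge-path type systems, all the closed judgments of \cref{def:closed-judgments} are available relative to them, and $\vper[\tau_i]{A}$ is well-defined whenever the relevant pretype judgment holds. Closure under the cubical constructs (pair, function, $\Path$-, $\V$-, and $\fcom$-types) follows exactly as in \cite[\S3]{chtt-iii}; the only genuinely new obligation is that these types admit homogeneous compositions whose constraint lists may mention bridge dimensions, but as noted in the footnote to \cref{sec:cubical} this needs only cosmetic changes, since a constraint $\bm{r} = \bm{\Ge}$ behaves formally just like a path constraint $r = \Ge$.

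For $\Bridge$- and $\Gel$-types the pretype structure is immediate. Because $\tau_0$ is a fixed point we have $\F_{\mathrm{B}}(\tau_0) \subseteq \tau_0$, and the defining clauses of $\F_{\mathrm{B}}$ have premises that are literally the hypotheses of \cref{rule:bridge-F} and \cref{rule:gel-F}; hence whenever those hypotheses hold relative to $\tau_0$, the corresponding quadruple lies in $\tau_0$, with assigned value relation $\BridgeR{\bm{x}.A}{M_0}{M_1}_\id$ (resp. $\GelR{\bm{x}}{A}{B}{a.b.R}_\id$), which is value-coherent by \cref{lem:bridge-coherent} (resp. \cref{lem:gel-coherent}); the same reasoning applies verbatim to $\tau_1$. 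The Kan structure then comes for free: \cref{thm:bridge-coe-kan,thm:bridge-hcom-kan} and the $\coe$- and $\hcom$-Kan theorems for $\Gel$ in \cref{sec:gel} were proved for an \emph{arbitrary} type system satisfying precisely this pretype clause, so they apply to $\tau_0$ and $\tau_1$ once the pretype clause is known. There is no circularity here: those theorems consume the $\Bridge$/$\Gel$ pretype data together with the Kan structure of the constituent types, but never presume the $\Bridge$/$\Gel$ types are themselves already Kan.

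It remains to treat the universe in $\tau_1$. Its value relation is $\phi$ with $\phi(A,A') \iff \relcts*{\tau_0}{\ceqtypek{A}{A'}}$, which is a PER since $\tau_0$ is a type system and $\ceqtypek$ is symmetric and transitive. That $\UKan$ is itself Kan uses the constant-line definition of $\coe$ at $\UKan$ together with the $\fcom$-types supplied by $\F_{\mathrm{C}}$ for $\hcom$, exactly as in \cite{chtt-iii}. Closure of $\UKan$ under $\Bridge$- and $\Gel$-types unfolds as follows: by the standard fact that the universe determines types in $\tau_0$, membership $\relcts*{\tau_1}{\coftype[\cx!\Phi,\bm{x}!]{A}{\UKan}}$ amounts to $\relcts*{\tau_0}{\cwftypek[\cx!\Phi,\bm{x}!]{A}}$, and then applying the previous paragraph to $\tau_0$ yields $\relcts*{\tau_0}{\ceqtypek{\Bridge{\bm{x}.A}{M_0}{M_1}}{\Bridge{\bm{x}.A'}{M'_0}{M'_1}}}$ (and likewise for $\Gel$) under the appropriate endpoint hypotheses — which is exactly $\phi$-relatedness, i.e., membership in $\UKan$ at $\tau_1$. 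Univalence of $\UKan$ is inherited from the presence of $\V$-types in $\tau_0$, just as in cubical type theory (see \cref{sec:cubical}); the bridge and $\Gel$ constructs do not interact with it.

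The main obstacle is not any individual step but the care needed to set up the fixed points at all: the clauses of $\F_{\mathrm{B}}$ (like those of $\F_{\mathrm{C}}$) have premises mentioning $\ceqtypek$, so well-definedness of $\tau_0$ and $\tau_1$ rests on the stratified construction of \cite{allen87,chtt-iii} rather than on naive monotonicity, and one must check that the new $\Bridge$/$\Gel$ clauses and the admission of bridge-dimension constraints in $\hcom$ respect the same well-foundedness used there. Once that framework is in place, everything above is routine assembly of results already proven in the body of the paper.
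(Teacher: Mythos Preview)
Your proposal is correct and, in fact, considerably more detailed than what the paper provides: the paper states this proposition without proof, treating it as immediate from the fixed-point construction and the results of \cref{sec:bridge-types,sec:gel} (the analogous preceding proposition simply cites \cite[Theorem 16]{chtt-iii}). Your argument spells out exactly the reasoning the paper leaves implicit---closure from $\F_{\mathrm{B}}(\tau_i) \subseteq \tau_i$, Kan structure from the already-proved theorems, and the universe clause from the definition of $\tau_1$---and your remark about the care needed in the fixed-point setup (since the $\ceqtypek$ premises are not obviously monotone) is a legitimate point that the paper defers entirely to \cite{chtt-iii}.
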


%%% Local Variables:
%%% mode: latex
%%% TeX-master: "main"
%%% End:

\section{Maps of smash products}
\label{sec:smash}

\newcommand{\concinv}[5]{\mathsf{conc{\mh}inv}_{#1}^{#2,#3}(#4,#5)}
\newcommand{\cnxor}[4]{\mathsf{cnx{\mh}or}_{#1}^{#2,#3}(#4)}
\newcommand{\smeta}[2]{\wedge\mh\mathsf{eta}_{#1,#2}}
\newcommand{\gluelpath}[3]{\mathsf{gluel{\mh}path}(#1,#2,#3)}
\newcommand{\gluerpath}[3]{\mathsf{gluer{\mh}path}(#1,#2,#3)}
\newcommand{\smgraph}[1]{\wedge\mh\mathsf{graph}^{\bm{x}}}
\newcommand{\which}[1]{\mathsf{which}(#1)}
\newcommand{\workhorse}[1]{\mathsf{workhorse}(#1)}

In this appendix, added in July 2019, we characterize the pointed maps
$(X, Y : \UPtd) \to X \wedge_* Y \to_* X \wedge_* Y$ between binary smash products. We focus on the part of
the argument that requires internal parametricity directly. The remainder can be conducted in ordinary cubical
type theory extended with an internally expressible parametricity hypothesis; we have formalized this segment
in \citepalias[\texttt{cool/parametric-smash}]{redtt}, so we will be less formal here.

The \emph{smash product} is a higher inductive type that defines a binary operation on \emph{pointed types},
types paired with a specified basepoint. We write $\UPtd \eqdef \sigmacl{X}{\UU}{X}$ for the universe of
pointed types. To improve readability, we adopt a convention of writing $A_* \in \UPtd$ for a pointed type,
$A \eqdef \fst{A_*}$ for its underlying type, and $a_0 \eqdef \snd{A_*}$ for its basepoint. Given two pointed
types $A_*,B_* \in \UPtd$, we have their \emph{pointed function type}
$A_* \pto B_* \eqdef \sigmacl{f}{A \to B}{\Path{B}{fa_0}{b_0}}$ of basepoint-preserving functions, itself
pointed by the pointed constant function. Again, given $F_* \in A_* \pto B_*$, we write $F$ and $f_0$ for its
first and second components. We have pointed $\Gr$-types (\cref{def:gr-types}) for pointed functions: given
$A_*,B_* \in \UPtd$ and $F_* \in A_* \to_* B_*$, we write
$\Gr*{\bm{r}}{A_*}{B_*}{F_*} \eqdef \pair{\Gr{\bm{r}}{A}{B}{F}}{\gel{\bm{r}}{a_0}{b_0}{f_0}} \in \UPtd$.

Given $A_*,B_* \in \UPtd$, their smash product is defined as the following higher inductive type, here
expressed using the schema of \cite{cavallo19}.

\begin{quote}
\dataheading{A_* \wedge B_*}{\UU} \\
\niceconstr{\smpair}[\oft{a}{A},\oft{b}{B}]{A_* \wedge B_*} \\
\niceconstr{\smbasel}{A_* \wedge B_*} \\
\niceconstr{\smbaser}{A_* \wedge B_*} \\
\niceconstr{\smgluel^x}[\oft{b}{B}]{A_* \wedge B_*}{\tube{x=0}{\smbasel} \mid \tube{x=1}{\smpair{a_0}{b}}} \\
\niceconstr{\smgluer^x}[\oft{a}{A}]{A_* \wedge B_*}{\tube{x=0}{\smbaser} \mid \tube{x=1}{\smpair{a}{b_0}}}
\end{quote}

The smash product can itself be made a pointed type:
$A_* \wedge_* B_* \eqdef \pair{A_* \wedge B_*}{\smpair{a_0}{b_0}}$.  The operator $X_* \wedge_* -$ is left
adjoint to the pointed function space $X_* \pto -$, and so plays an important role in homotopy theory. Our
goal is to prove the following.

\begin{proposition*}
  Any function $f_* : \picl{X_*,Y_*}{\UPtd}{X_* \wedge_* Y_* \to_* X_* \wedge_* Y_*}$ is connected by a path
  to either the polymorphic identity or the polymorphic constant function.
\end{proposition*}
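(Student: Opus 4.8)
The plan is to replay the suspension argument of \cref{sec:examples:hits}, with the pointed $0$--sphere $S^0_* \eqdef \pair{\bool}{\false}$ in the role of $\susp{\void}$. The two external ingredients I will use (both provable in ordinary cubical type theory, and belonging to ``the remainder'') are that $S^0_* \wedge_* X_*$ is equivalent to $X_*$ pointedly and naturally in $X_*$, and that a pointed endofunction of $S^0_*$ is path--equal to exactly one of two maps --- the identity and the constant map at the basepoint --- since $\bool$ is a set and such an endofunction is just a choice of where to send $\true$. Consequently, given $f_* : \picl{X_*,Y_*}{\UPtd}{X_* \wedge_* Y_* \to_* X_* \wedge_* Y_*}$, instantiating at $(S^0_*,S^0_*)$ produces a pointed endofunction of $S^0_* \wedge_* S^0_* \simeq S^0_*$, hence a boolean $\which{f_*} : \bool$ recording which of the two it is. The claim will then be that $f_*$ is path--connected to the polymorphic identity $\lam{X_*,Y_*}{\lam{t}{t}}$ when $\which{f_*}$ says ``identity'' and to the polymorphic constant map at the basepoint otherwise.

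The parametricity workhorse is a ``$\mathsf{get}$''--style lemma (the analogue of $\get{\void}{X}{F}$ from \cref{sec:examples:hits}, and of \cref{def:gr-types}) pushing pointed $\Gr*$--types past the smash: for $A_*,B_*,A'_*,B'_* : \UPtd$ and pointed maps $F_* : A_* \to_* A'_*$, $G_* : B_* \to_* B'_*$, I would construct a term of type
\[
  \Bridge{\bm{x}.(\Gr*{\bm{x}}{A_*}{A'_*}{F_*} \wedge_* \Gr*{\bm{x}}{B_*}{B'_*}{G_*}) \to_* \Gr*{\bm{x}}{A_* \wedge_* B_*}{A'_* \wedge_* B'_*}{F_* \wedge_* G_*}}{\lam{t}{t}}{\lam{u}{u}}
\]
by $\wedge$--elimination into the right--hand $\Gr*$--type: the $\smpair$ constructor is handled by $\extent{\bm{x}}$, exactly as pairs of coordinates were handled in \cref{sec:examples:hits} and \cref{thm:function-bridge} (the two coordinates are abstracted into a single $\gel$--cell over a $\Path$); the point constructors $\smbasel,\smbaser$ by constant $\gel$--cells; and the glue constructors $\smgluel,\smgluer$ by $\gel$--cells over the images of the glue paths, all wrapped in an outer $\hcom$ correcting the boundary so that both endpoints compute to the identity (the smash analogue $\smeta{A_*}{B_*}$ of the corrections $\suspeta{A}{T}$). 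Alongside this I would establish the contractibility facts playing the role of \cref{lem:merid-point-contr} --- that the polymorphic ``cell'' types ($\picl{X_*}{\UPtd}{X_*}$ and the iterated path types needed to pin the glue cells) are contractible --- by exhibiting each as a retract of a contractible smash product ($\unit_* \wedge_* \unit_* \simeq \unit_*$, $S^0_* \wedge_* S^0_* \simeq S^0_*$), just as $\picl{X}{\UKan}{X \to \susp{X}}$ was shown contractible via $\susp{\unit}$.

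With these in hand, for arbitrary $A_*,B_* : \UPtd$ and $a : A$, $b : B$ I would form the pointed maps $g_a : S^0_* \to_* A_*$ ($\true \mapsto a$, $\false \mapsto a_0$) and $g_b : S^0_* \to_* B_*$, apply $f$ at the $\Gr*$--types $\Gr*{\bm{x}}{S^0_*}{A_*}{g_a}$, $\Gr*{\bm{x}}{S^0_*}{B_*}{g_b}$ to the element $\smpair{\gel{\bm{x}}{\true}{a}{\dlam{\_}{a}}}{\gel{\bm{x}}{\true}{b}{\dlam{\_}{b}}}$, compose with the bridge above, and then $\ungel$, obtaining a path in $A_* \wedge_* B_*$ from $(g_a \wedge_* g_b)\bigl(f_{S^0_*,S^0_*}(\smpair{\true}{\true})\bigr)$ to $f_{A_*,B_*}(\smpair{a}{b})$. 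Reading off $\which{f_*}$ then yields, uniformly in $a,b$, a path from $f_{A_*,B_*}(\smpair{a}{b})$ to $\smpair{a}{b}$ (identity case) or to the basepoint $\smpair{a_0}{b_0}$ (constant case). Extending this pointwise agreement over $\smbasel,\smbaser$ (forced, as they are glued to pair--points) and over the glue generators (filled using the contractibility facts of the previous step), and assembling by $\wedge$--elimination together with function and $\Path$ extensionality plus a final correction to make the homotopy basepoint--preserving, produces the desired path from $f_*$ to the polymorphic identity or the polymorphic constant map according to $\which{f_*}$.

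The main obstacle is the $\mathsf{get}$--lemma for $\wedge_*$: selecting the $\gel$--cells over the $\smgluel$ and $\smgluer$ generators, and the enclosing $\hcom$, so that the resulting bridge has identity endpoints \emph{on the nose} and the $\beta$--rules line up, is precisely the delicate coherence bookkeeping that makes smash--product arguments notoriously fiddly --- though $\extent$ carries the essential content on the $\smpair$ cell. Everything downstream of it is ordinary cubical reasoning of the kind already formalized in the \redtt\ development, conducted against the internally expressible parametricity consequences of this lemma.
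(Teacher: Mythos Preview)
Your parametric core matches the paper exactly: the graph lemma (your ``$\mathsf{get}$'') is proved by smash induction with $\extent$ on $\smpair$, constant $\gel$-cells on $\smbasel$/$\smbaser$, $\extent$-built $\gel$-cells on the glues, and an enclosing $\hcom$ against $\smeta{A_*}{B_*}$ to force identity endpoints; the paper then instantiates at $\bool_*$ and extracts a path via $\ungel$ just as you describe. The difference is in how the higher cells are dispatched afterward.

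The paper does not argue contractibility of separate polymorphic ``cell types''. Instead it packages \emph{all} of the data---behaviour on $\smpair$, on $\smgluel$/$\smgluer$, and the basepoint-preservation path---as elements or paths in the single type $P \eqdef \picl{X_*,Y_*}{\UPtd}{X \to Y \to X_* \wedge Y_*}$, padding with a dummy argument where necessary (e.g.\ $\dlam{x}{\lam{X_*,Y_*}{\lam{a}{\lam{b}{fX_*Y_*(\smgluel{x}{b})}}}}$ is a path in $P$, and so is $\dlam{x}{\lam{X_*,Y_*}{\lam{a}{\lam{b}{\dapp{f_0X_*Y_*}{x}}}}}$). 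The workhorse lemma shows $P$ is a retract of $\bool$, hence a \emph{set}; every path type in $P$ is then a proposition, and all the glue and basepoint coherences are pinned down uniformly by that one observation. Your plan of establishing contractibility of several individual polymorphic cell types could be made to work, but it is not how the paper proceeds, and your parenthetical is confused: $S^0_* \wedge_* S^0_* \simeq S^0_*$ is not contractible, so it cannot serve as the target of a retract-to-contractible argument in the style of $\susp{\unit}$. The paper's ``$P$ is a set'' trick replaces that whole program with a single line and is precisely what makes the remainder amenable to formalization.
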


First, we introduce a few lemmas of general use in cubical type theory.

\begin{definition}[Concatenation by inverse]
  let $\coftype{M}{A}$, $\dimj{r}$, and $\coftype[\cx{\Psi,x}]{N}{A}$ with
  $\ceqtm[\cx<r=1>]{M}{\dsubst{N}{1}{x}}{A}$ be given. For any $\dimj{s}$, define
  $\coftype{\concinv{A}{r}{s}{M}{x.N}}{A}$ as follows.
  \[
    \concinv{A}{r}{s}{M}{x.N} \eqdef \hcom{A}{1}{s}{M}{\tube{r=0}{\_.M},\tube{r=1}{x.N}}
  \]
  The term $\concinv{A}{r}{0}{M}{x.N}$ is the result of concatenating $M$ (as a path in direction $r$) with
  the inverse of $x.N$; we will need the general form $\concinv{A}{r}{s}{M}{x.N}$ to relate the composite to
  other terms.
\end{definition}

\begin{definition}[$\vee$-connection]
  Let a type $\cwftypek{A}$ and $\coftype[\cx{\Psi,x}]{P}{A}$ be given. For $\dimj{r,s}$, define
  $\coftype{\cnxor{A}{r}{s}{x.P}}{A}$ following \citepalias[\texttt{prelude/connection}]{redtt}.
  \[
    \cnxor{A}{r}{s}{x.P} \eqdef
    \bighcom{A}{1}{0}{\dsubst{P}{1}{x}}{
      \begin{array}{lcl}
        \arraytube{r=0}{y.\hcom{A}{1}{s}{\dsubst{P}{1}{x}}{\tube{y=0}{x.P},\tube{y=1}{\_.\dsubst{P}{1}{x}}}} \\
        \arraytube{r=1}{\_.\dsubst{P}{1}{x}} \\
        \arraytube{s=0}{y.\hcom{A}{1}{r}{\dsubst{P}{1}{x}}{\tube{y=0}{x.P},\tube{y=1}{\_.\dsubst{P}{1}{x}}}} \\
        \arraytube{s=1}{\_.\dsubst{P}{1}{x}}
      \end{array}
    }
  \]
  Note that this term satisfies the following equations, and so plays the role played by the term
  $\dsubst{P}{r \lor s}{x}$ in cubical type theories with connections \citep{cchm,orton16}.
  \begin{align*}
    \cnxor{A}{r}{0}{x.P} &= \dsubst{P}{r}{x} \in A &
    \cnxor{A}{r}{1}{x.P} &= \dsubst{P}{1}{x} \in A \\
    \cnxor{A}{0}{s}{x.P} &= \dsubst{P}{s}{x} \in A &
    \cnxor{A}{1}{s}{x.P} &= \dsubst{P}{1}{x} \in A
  \end{align*}
\end{definition}

\begin{lemma}[$\eta$ for $\smelim$]
  Fix pointed types $A_*,B_* : \UPtd$. We have a term $\smeta{A_*}{B_*}$ of the following type.
  \[
     \picl{c}{A_* \wedge B_*}{\Path{A_* \wedge B_*}{\smelim{A_* \wedge B_*}{c}{a.b.\smpair{a}{b}}{\smbasel}{\smbaser}{x.b.\smgluel{x}{b}}{x.a.\smgluer{x}{a}}}{c}}
  \]
\end{lemma}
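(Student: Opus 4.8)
The plan is to construct $\smeta{A_*}{B_*}$ by induction on $c$ using the eliminator $\smelim$ itself. Abbreviate by $K(c)$ the term $\smelim{A_* \wedge B_*}{c}{a.b.\smpair{a}{b}}{\smbasel}{\smbaser}{x.b.\smgluel{x}{b}}{x.a.\smgluer{x}{a}}$ appearing in the statement, so that the goal is to inhabit $\picl{c}{A_* \wedge B_*}{\Path{A_* \wedge B_*}{K(c)}{c}}$. We apply $\smelim$ again, now with the (in general dependent) motive $c \mapsto \Path{A_* \wedge B_*}{K(c)}{c}$, and supply the evident reflexivity data for each constructor.

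For the point constructors, the $\beta$-rules for $\smelim$ give $K(\smpair{a}{b}) \eq \smpair{a}{b}$, $K(\smbasel) \eq \smbasel$, and $K(\smbaser) \eq \smbaser$, so $\dlam{\_}{\smpair{a}{b}}$, $\dlam{\_}{\smbasel}$, and $\dlam{\_}{\smbaser}$ have the required types. For the path constructor $\smgluel$, the elimination schema of \cite{cavallo19} demands a line $x.g$ in the motive at $\smgluel{x}{b}$ — a term of type $\Path{A_* \wedge B_*}{K(\smgluel{x}{b})}{\smgluel{x}{b}}$ in a context containing the dimension $x$ and $b : B$ — whose $x = 0$ and $x = 1$ faces agree on the nose with the $\smbasel$- and $\smpair{a_0}{b}$-data chosen above. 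Since $K(\smgluel{x}{b}) \eq \smgluel{x}{b}$ by the $\beta$-rule for $\smelim$ on the dimension generator, we take $g \eqdef \dlam{\_}{\smgluel{x}{b}}$; its faces are $\dlam{\_}{\smgluel{0}{b}} \eq \dlam{\_}{\smbasel}$ and $\dlam{\_}{\smgluel{1}{b}} \eq \dlam{\_}{\smpair{a_0}{b}}$, as needed. The $\smgluer$ case is symmetric, with $\dlam{\_}{\smgluer{x}{a}}$.

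There is essentially no obstruction here: the entire argument rests on the fact that in the cubical HIT schema the $\beta$-rules for point constructors and for path constructors applied to dimension generators are judgmental equalities, which is exactly what makes the boundary conditions of the path-constructor cases hold strictly rather than merely up to a path. The only thing one must actually check — that $\dlam{\_}{\smgluel{x}{b}}$ and $\dlam{\_}{\smgluer{x}{a}}$ restrict correctly at $x \in \{0,1\}$ — is immediate from those equalities. (Were the $\beta$-rules available only up to a path, one would instead need to correct the boundary with an $\hcom$, as in several proofs earlier in this appendix, but that is not necessary in this setting.)
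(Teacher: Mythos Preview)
Your proposal is correct and follows exactly the same approach as the paper: both define $\smeta{A_*}{B_*}$ by applying $\smelim$ with motive $c \mapsto \Path{A_* \wedge B_*}{K(c)}{c}$ and supplying the reflexivity paths $\dlam{\_}{\smpair{a}{b}}$, $\dlam{\_}{\smbasel}$, $\dlam{\_}{\smbaser}$, $\dlam{\_}{\smgluel{x}{b}}$, and $\dlam{\_}{\smgluer{x}{a}}$ in the five cases. Your additional commentary on why the boundary conditions hold judgmentally is accurate and merely elaborates what the paper leaves implicit.
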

\begin{proof}
  Set $F \eqdef \lam{c}{\smelim{A_* \wedge B_*}{c}{a.b.\smpair{a}{b}}{\smbasel}{\smbaser}{x.b.\smgluel{x}{b}}{x.a.\smgluer{x}{a}}}$. Define $\smeta{A_*}{B_*}$ to be the following term.
  \[
     \lam{c}{\smelim{c.\Path{A_* \wedge B_*}{Fc}{c}}{c}{a.b.\dlam{\_}{\smpair{a}{b}}}{\dlam{\_}{\smbasel}}{\dlam{\_}{\smbaser}}{x.b.\dlam{\_}{\smgluel{x}{b}}}{x.a.\dlam{\_}{\smgluer{x}{a}}}} \qedhere
  \]
\end{proof}

The smash product also has a functorial action on pointed functions. For the next several lemmas, we fix
pointed types $A_*,B_*,C_*,D_* : \UPtd$ and functions $f_* : A_* \to_* C_*$, $g_* : B_* \to_* D_*$. Our aim is
to prove a \emph{graph lemma} relating the smash product of the $\Gr$-types for
functions $f_*$ and $g_*$ to the $\Gr$-type for $f_* \wedge g_*$.

\begin{definition}
  Define $f_* \wedge g_* : A_* \wedge B_* \to C_* \wedge D_*$ as follows.
  \[
   f_* \wedge g_* \eqdef \lam{k}{\bigsmelim{C_* \wedge D_*}{k}{a.b.\smpair{fa}{gb}}{\smbasel}{\smbaser}{x.b.\concinv{C_* \wedge D_*}{x}{0}{\smgluel{x}{gb}}{y.\smpair{\dapp{f_0}{y}}{gb}}}{x.a.\concinv{C_* \wedge D_*}{x}{0}{\smgluer{x}{fa}}{y.\smpair{fa}{\dapp{g_0}{y}}}}}
 \]
\end{definition}

We will need the following two lemmas, which analyze the behavior of the functorial action on the path
constructors of the smash product.

\begin{lemma}
  Let $b : B$, $d : D$, and $p : \Path{D}{gb}{d}$ be given. Then we have a term of the following type.
  \[
    \coftype{\gluelpath{b}{d}{p}}{\Path{y.\Path{C_* \wedge D_*}{\smbasel}{\smpair{\dapp{f_0}{y}}{\dapp{p}{y}}}}{\dlam{z}{(f_* \wedge g_*)(\smgluel{z}{b})}}{\dlam{z}{\smgluel{z}{d}}}}
  \]
\end{lemma}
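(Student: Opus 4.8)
The plan is to give $\gluelpath{b}{d}{p}$ as an explicit two‑dimensional term built from $\concinv$, exploiting exactly the feature for which the general $\concinv{A}{r}{s}{M}{x.N}$ (with the extra parameter $s$) was introduced: at $s=0$ it is the composite, at $s=1$ it is $M$ itself, and at $r=0,1$ it has simple faces. Concretely, I would set
\[
  \gluelpath{b}{d}{p} \eqdef \dlam{y}{\dlam{z}{\concinv{C_* \wedge D_*}{z}{y}{\smgluel{z}{\dapp{p}{y}}}{w.\smpair{\dapp{f_0}{w}}{\dapp{p}{y}}}}}
\]
and verify the claimed type by unfolding $\concinv$ into an $\hcom$ and reading off its faces.

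First I would check that the inner $\concinv$ is well-formed: its side condition asks that $\smgluel{z}{\dapp{p}{y}}$ at $z=1$ agree with $\smpair{\dapp{f_0}{w}}{\dapp{p}{y}}$ at $w=1$, and both equal $\smpair{c_0}{\dapp{p}{y}}$ by the $z=1$ boundary equation of $\smgluel$ in $C_*\wedge D_*$ together with $\dapp{f_0}{1} = c_0$. Then I would compute the four faces of the resulting $\hcom$. In the $z$ direction: at $z=0$ the $z=0$ tube fires and the term is constantly $\smbasel$; at $z=1$ the $z=1$ tube fires and the term reduces to $\smpair{\dapp{f_0}{y}}{\dapp{p}{y}}$. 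Hence for each $y$ the inner line has type $\Path{C_* \wedge D_*}{\smbasel}{\smpair{\dapp{f_0}{y}}{\dapp{p}{y}}}$, as required, and the outer path is a path in that $y$-indexed family. In the $y$ direction: at $y=0$ we have $\dapp{p}{0}=gb$, so the term becomes $\concinv{C_*\wedge D_*}{z}{0}{\smgluel{z}{gb}}{w.\smpair{\dapp{f_0}{w}}{gb}}$, which is (up to $\alpha$-renaming) the $\smgluel$-clause in the definition of $f_*\wedge g_*$ applied at $\smgluel{z}{b}$, i.e.\ $\dlam{z}{(f_*\wedge g_*)(\smgluel{z}{b})}$; at $y=1$, since $\concinv{A}{r}{1}{M}{x.N} = \hcom{A}{1}{1}{M}{\cdots} \eq M$ by the $r=s$ rule for $\hcom$, the term collapses to $\smgluel{z}{\dapp{p}{1}} \eq \smgluel{z}{d}$, i.e.\ $\dlam{z}{\smgluel{z}{d}}$. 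So all four endpoints match those demanded by the statement.

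The only real content is the two observations in the $y$ direction, and I expect the mild obstacle to be checking them \emph{on the nose} rather than up to a further path: that the $y=0$ face is the defining clause of $f_*\wedge g_*$ verbatim (so no coherence correction is needed) and that the $y=1$ face reduces definitionally to $\smgluel{z}{d}$. Both hold precisely because $\concinv$ is defined by an $\hcom$ with base at $1$ and output at $s$, so setting $s\eqdef0$ recovers the original composite unchanged and setting $s\eqdef1$ triggers the degeneracy rule for $\hcom$. Should a later use of the lemma require the $z$-faces in a different shape, one could instead obtain $\gluelpath{b}{d}{p}$ by based path induction on $p$ — implemented by coercion over the singleton-contractible family of \cref{rec:singleton-contractibility} — reducing to the case $p\eqdef\dlam{\_}{gb}$, $d\eqdef gb$, where the same term with every $\dapp{p}{y}$ replaced by $gb$ works; but the direct construction above is cleaner and is in the style of the $\concinv$ and $\vee$-connection definitions already given in this appendix.
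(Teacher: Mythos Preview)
Your proposal is correct and in fact simpler than the paper's own construction. Both proofs produce an explicit filler for the same square, and both rely on the same definitional fact that $(f_* \wedge g_*)(\smgluel{z}{b})$ computes to $\concinv{C_* \wedge D_*}{z}{0}{\smgluel{z}{gb}}{w.\smpair{\dapp{f_0}{w}}{gb}}$. The difference is in how the square is built. The paper fills it with a four-tube $\hcom{}{1}{0}{}{}$ whose $y=0,y=1,z=0,z=1$ faces are all prescribed explicitly; to make the $z=1$ tube agree with the cap at the starting point and with the required output at the end, it has to invoke the $\vee$-connection $\cnxor{C}{y}{w}{v.\dapp{f_0}{v}}$. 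You instead let the $s$-parameter of $\concinv$ \emph{be} the $y$-direction, writing the whole square as a single two-tube $\hcom{}{1}{y}{}{}$: the $y=1$ face then falls out from the $r=s$ degeneracy of $\hcom$, and the $y=0$ face is literally the defining $\smgluel$-clause of $f_* \wedge g_*$. This is exactly the situation the general $\concinv{A}{r}{s}{M}{x.N}$ was set up to handle, and your construction exploits it fully; the paper's approach is more ``brute force'' but has the minor advantage that all four boundary faces are syntactically visible in the term.
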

\begin{proof}
  Define $\gluelpath{b}{d}{p}$ to be the following term.
  \[
    \dlam{y}{\dlam{z}{\bighcom{C_* \wedge D_*}{1}{0}{\smgluel{z}{\dapp{p}{y}}}{
      \begin{array}{lcl}
        \arraytube{y=0}{w.\concinv{C_* \wedge D_*}{z}{w}{\smgluel{z}{gb}}{y.\smpair{\dapp{f_0}{y}}{gb}}} \\
        \arraytube{y=1}{\_.\smgluel{z}{d}} \\
        \arraytube{z=0}{\_.\smbasel} \\
        \arraytube{z=1}{w.\smpair{\cnxor{A}{y}{w}{v.\dapp{f_0}{v}}}{\dapp{p}{y}}}
      \end{array}
    }}} \qedhere
  \]
\end{proof}

\begin{lemma}
  Let $a : A$, $c : C$, and $p : \Path{C}{fa}{c}$ be given. Then we have a term of the following type.
  \[
    \coftype{\gluerpath{a}{c}{p}}{\Path{y.\Path{C_* \wedge D_*}{\smbaser}{\smpair{\dapp{p}{y}}{\dapp{g_0}{y}}}}{\dlam{z}{(f_* \wedge g_*)(\smgluer{z}{a})}}{\dlam{z}{\smgluer{z}{c}}}}
  \]
\end{lemma}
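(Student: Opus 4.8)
The plan is to exhibit $\gluerpath{a}{c}{p}$ directly, as the mirror image of the term defining $\gluelpath{b}{d}{p}$ under the symmetry exchanging the two smash factors ($A \leftrightarrow B$, $C \leftrightarrow D$, $f_* \leftrightarrow g_*$, $\smgluel \leftrightarrow \smgluer$, and $\smpair{-}{-}$ with its arguments transposed). Concretely, I would set
\[
  \gluerpath{a}{c}{p} \eqdef
  \dlam{y}{\dlam{z}{\bighcom{C_* \wedge D_*}{1}{0}{\smgluer{z}{\dapp{p}{y}}}{
    \begin{array}{lcl}
      \arraytube{y=0}{w.\concinv{C_* \wedge D_*}{z}{w}{\smgluer{z}{fa}}{y.\smpair{fa}{\dapp{g_0}{y}}}} \\
      \arraytube{y=1}{\_.\smgluer{z}{c}} \\
      \arraytube{z=0}{\_.\smbaser} \\
      \arraytube{z=1}{w.\smpair{\dapp{p}{y}}{\cnxor{B}{y}{w}{v.\dapp{g_0}{v}}}}
    \end{array}
  }}}
\]
and then verify that this $\bighcom$-square carries the advertised boundary.

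The verification is a face-by-face computation. For the $z$-faces: at $z=0$ the base $\smgluer{0}{\dapp{p}{y}}$ is $\smbaser$ and every active tube face degenerates to $\smbaser$ (using the defining equations of $\concinv{-}{0}{-}{-}{-}$ and $\cnxor{-}{0}{-}{-}$ at $z=0$), so the composite is $\smbaser$; at $z=1$ the $z=1$ tube at the target end $w=0$ reduces, via $\cnxor{B}{y}{0}{v.\dapp{g_0}{v}} = \dapp{g_0}{y}$, to $\smpair{\dapp{p}{y}}{\dapp{g_0}{y}}$, the right endpoint of the inner $\Path$-type. For the $y$-faces: at $y=0$ the $y=0$ tube at $w=0$ is $\concinv{C_* \wedge D_*}{z}{0}{\smgluer{z}{fa}}{y.\smpair{fa}{\dapp{g_0}{y}}}$, which is exactly $(f_* \wedge g_*)(\smgluer{z}{a})$ by the definition of $f_* \wedge g_*$; at $y=1$ the tube is constantly $\smgluer{z}{c}$. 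It then remains to check that each tube face agrees with the base at $w=1$ and that the tubes agree pairwise along the edges $y=\Ge$ and $z=\Ge$; all of these reduce to the stated equations for $\concinv$ and $\cnxor$ together with the $x=\Ge$ faces of the $\smgluer$-constructor — for instance at $z=1,w=1$ both the base $\smgluer{1}{\dapp{p}{y}}$ and the $z=1$ tube $\smpair{\dapp{p}{y}}{\cnxor{B}{y}{1}{v.\dapp{g_0}{v}}}$ equal $\smpair{\dapp{p}{y}}{d_0}$ — exactly as in the proof of the $\gluelpath$ lemma.

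I do not anticipate a genuine obstacle: the content is entirely bookkeeping, and the only thing demanding care is tracking the degeneracies of $\concinv{C_* \wedge D_*}{z}{w}{-}{-}$ (at $w=1$ and at $z=0,1$) and of $\cnxor{B}{y}{w}{v.\dapp{g_0}{v}}$ (at $w=0,1$ and at $y=0,1$), so that all four faces of the cube and their mutual corners match up. Equivalently, one could derive the statement from $\gluelpath$ applied to $B_* \wedge A_*$ and $g_* \wedge f_*$ and transported across the standard commutativity equivalence $A_* \wedge B_* \simeq B_* \wedge A_*$, but writing the $\bighcom$ out by hand is shorter and self-contained.
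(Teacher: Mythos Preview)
Your construction is essentially identical to the paper's: both define $\gluerpath{a}{c}{p}$ as the evident $\hcom$ with base $\smgluer{z}{\dapp{p}{y}}$ and the four tube faces you list. The only slip is the type subscript on the connection, which should be $D$ rather than $B$ since $\dapp{g_0}{v} \in D$; this is cosmetic and the paper itself has an analogous typo here.
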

\begin{proof}
  Define $\gluerpath{a}{c}{p}$ to be the following term.
  \[
    \dlam{y}{\dlam{z}{\bighcom{C_* \wedge D_*}{1}{0}{\smgluel{z}{\dapp{p}{y}}}{
      \begin{array}{lcl}
        \arraytube{y=0}{w.\concinv{C_* \wedge D_*}{z}{w}{\smgluer{z}{fa}}{y.\smpair{fa}{\dapp{g_0}{y}}}} \\
        \arraytube{y=1}{\_.\smgluer{z}{c}} \\
        \arraytube{z=0}{\_.\smbaser} \\
        \arraytube{z=1}{w.\smpair{\dapp{p}{y}}{\cnxor{A}{y}{w}{v.\dapp{g_0}{v}}}}
      \end{array}
    }}} \qedhere
  \]
\end{proof}

\begin{theorem}[Graph Lemma for $\wedge$]
  \label{thm:smash-graph-lemma}
  For any fresh $\bm{x}$, there is a map
  \[
    \smgraph{\bm{x}} \in \Gr*{\bm{x}}{A_*}{C_*}{f_*} \wedge \Gr*{\bm{x}}{B_*}{D_*}{g_*} \to \Gr{\bm{x}}{A_* \wedge B_*}{C_* \wedge D_*}{f_* \wedge g_*}
  \]
  equal to the identity function on $A_* \wedge_* B_*$ when $\bm{x} = \bm{0}$ and on $C_* \wedge_* D_*$ when
  $\bm{x} = \bm{1}$.
\end{theorem}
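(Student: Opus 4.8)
The plan is to define $\smgraph{\bm{x}}$ by smash induction on its argument $t$, landing with a constant motive in $K \eqdef \Gr{\bm{x}}{\smprod{A_*}{B_*}}{\smprod{C_*}{D_*}}{f_* \wedge g_*}$, and then to correct the two endpoints with an outer $\hcom$ driven by the $\eta$-lemma $\smeta{-}{-}$ for $\smelim$. Concretely, I would take
\[
  \smgraph{\bm{x}} \;\eqdef\; \lam{t}{\hcom{K}{0}{1}{E}{\tube{\bm{x}=\bm{0}}{w.\dapp{(\smeta{A_*}{B_*}\,t)}{w}},\,\tube{\bm{x}=\bm{1}}{w.\dapp{(\smeta{C_*}{D_*}\,t)}{w}}}},
\]
where $E$ denotes $\smelim$ applied to $t$ with constant motive $K$ and with the case terms described below. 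The recurring difficulty is that each constructor argument varies in $\bm{x}$ --- for instance the argument $a$ of the pair case has type $\Gr{\bm{x}}{A}{C}{f} = \Gel{\bm{x}}{A}{C}{a.c.\Path{C}{fa}{c}}$ rather than $A$ or $C$ --- so the uniform move is to inspect it with $\extent_{\bm{x}}$ and recover the underlying path datum with $\ungel$ (\cref{rule:gel-E}).

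For the point constructors I would use $\gel{\bm{x}}{\smbasel}{\smbasel}{\dlam{\_}{\smbasel}}$ and $\gel{\bm{x}}{\smbaser}{\smbaser}{\dlam{\_}{\smbaser}}$, which are well typed since $f_* \wedge g_*$ sends each base to the corresponding base. For the pair case, given $a : \Gr{\bm{x}}{A}{C}{f}$ and $b : \Gr{\bm{x}}{B}{D}{g}$, I would apply $\extent_{\bm{x}}$ to the pair $\pair{a}{b}$, with endpoint branches $\pair{a_0}{b_0} \mapsto \smpair{a_0}{b_0}$ and $\pair{c_0}{d_0} \mapsto \smpair{c_0}{d_0}$ and general branch $\blam{\bm{x}}{\gel{\bm{x}}{\smpair{a_0}{b_0}}{\smpair{c_0}{d_0}}{\dlam{z}{\smpair{\dapp{p}{z}}{\dapp{q}{z}}}}}$, where from the bridge $c$ supplied by $\extent$ I extract $p \eqdef \ungel{\bm{x}.\fst{\bapp{c}{\bm{x}}}}$ and $q \eqdef \ungel{\bm{x}.\snd{\bapp{c}{\bm{x}}}}$, using that $\fst$ and $\snd$ commute with bridge application together with \cref{lem:link-unlink} to present elements of $K$ as such triples. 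By \cref{rule:extent-beta0,rule:extent-beta1} this case reduces, at $\bm{x}=\bm{0}$ and $\bm{x}=\bm{1}$, to $\smpair{a}{b}$, so it agrees there with the identity recursor $\smelim{\smprod{A_*}{B_*}}{t}{a.b.\smpair{a}{b}}{\smbasel}{\smbaser}{z.b.\smgluel{z}{b}}{z.a.\smgluer{z}{a}}$ appearing in $\smeta{A_*}{B_*}$.

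The glue constructors are the heart of the argument. In the $\smgluel$-case, for $b : \Gr{\bm{x}}{B}{D}{g}$, I must produce a path in $K$ from the $\smbasel$-case term to the value of the pair case at $(\gel{\bm{x}}{a_0}{c_0}{f_0}, b)$ (the second factor being the basepoint of $\Gr*{\bm{x}}{A_*}{C_*}{f_*}$); unfolding $\extent$ and $\ungel$ (\cref{rule:extent-betaF,rule:gel-beta}), that target reduces to $\gel{\bm{x}}{\smpair{a_0}{\bsubst{b}{\bm{0}}{\bm{x}}}}{\smpair{c_0}{\bsubst{b}{\bm{1}}{\bm{x}}}}{\dlam{z}{\smpair{\dapp{f_0}{z}}{\dapp{\ungel{\bm{x}.b}}{z}}}}$. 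I would again use $\extent_{\bm{x}}$ on $b$: the $\bm{x}=\bm{0},\bm{1}$ branches return $\dlam{z}{\smgluel{z}{b}}$ (in $\smprod{A_*}{B_*}$, resp.\ $\smprod{C_*}{D_*}$), matching the glue case of the identity recursor, while in the general branch --- after commuting the bridge past the path binder with \cref{thm:path-bridge} and trading bridges in $K$ for paths via \cref{lem:link-unlink} --- the datum required is precisely a coordinate transpose of the term $\gluelpath{b_{\bm{0}}}{b_{\bm{1}}}{\ungel{\bm{x}.\bapp{u}{\bm{x}}}}$, where $b_{\bm{0}},b_{\bm{1}},u$ are the endpoint and bridge data produced by $\extent$. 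The $\smgluer$-case is symmetric, using $\gluerpath{b_{\bm{0}}}{b_{\bm{1}}}{\ungel{\bm{x}.\bapp{u}{\bm{x}}}}$. Finally, the $\beta$-reductions listed above show that $E$ restricts at $\bm{x}=\bm{0}$ (resp.\ $\bm{x}=\bm{1}$) to the identity recursor of $\smprod{A_*}{B_*}$ (resp.\ $\smprod{C_*}{D_*}$), so that there the $\bm{x}=\bm{\Ge}$ tube face of the outer $\hcom$ is total and the whole term reduces to $\dapp{(\smeta{A_*}{B_*}\,t)}{1} = t$ (resp.\ $\dapp{(\smeta{C_*}{D_*}\,t)}{1} = t$), which gives the required equalities with $\lam{t}{t}$.

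The step I expect to be the main obstacle is exactly the glue cases: the path produced by $\extent$ on the glue argument must land in precisely the right type, both of whose endpoints match the (already $\hcom$-laden) definition of $f_* \wedge g_*$ on $\smgluel$ and $\smgluer$, while its $\bm{x}=\bm{\Ge}$ faces must degenerate to the bare glue path so that the outer correction goes through. This is exactly what the auxiliary $\gluelpath$ and $\gluerpath$ lemmas are engineered to supply; lining up their two path directions against the bridge/path reassociation forced by \cref{thm:path-bridge,lem:link-unlink} is the most delicate bookkeeping in the proof.
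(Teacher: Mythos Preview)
Your proposal is correct and follows essentially the same architecture as the paper's proof: define the map by $\smelim$ into the constant motive $K$, handle each constructor via $\extent_{\bm{x}}$ with the general branch built from $\gel_{\bm{x}}$ and (for the glue cases) the auxiliary terms $\gluelpath{-}{-}{-}$ and $\gluerpath{-}{-}{-}$, then wrap the whole thing in an $\hcom$ with $\smeta$ tube faces to force the $\bm{x}=\bm{0},\bm{1}$ boundaries to be the identity. The only structural difference is that the paper treats the $\smpair$ case with two nested applications of $\extent_{\bm{x}}$ (first on the $A$/$C$ argument, then on the $B$/$D$ argument), whereas you apply a single $\extent_{\bm{x}}$ to the pair and recover the two path components by projecting under the bridge and applying $\ungel$; both are valid and lead to the same reduct on the basepoint pair. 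Two cosmetic points: your bound-variable names $a_0,b_0,c_0,d_0$ in the pair case collide with the paper's basepoint notation, and in the $\smgluer$ case the argument to $\gluerpath{-}{-}{-}$ should come from the $A$/$C$ side, not the $B$/$D$ side.
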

\begin{proof}
  Let us abbreviate $G \eqdef \Gr{\bm{x}}{A_* \wedge B_*}{C_* \wedge D_*}{f_* \wedge g_*}$. We define the map
  by smash product induction. We start with the $\smpair$ case, defining
  $Q_{\smpair} \in \Gr{\bm{x}}{A}{C}{f} \to \Gr{\bm{x}}{B}{D}{g} \to G$ as follows.
  \begin{align*}
    T_{a,c,p} &\eqdef \lam{n}{\bigextent{\bm{x}}{n}{b.\smpair{a}{b}}{d.\smpair{c}{d}}{b.d.q.\blam{\bm{x}}{\gel{\bm{x}}{\smpair{a}{b}}{\smpair{c}{d}}{\dlam{y}{\smpair{\dapp{\ungel{p}}{y}}{\dapp{\ungel{q}}{y}}}}}}} \\[0.3em]
    Q_{\smpair} &\eqdef \lam{m}{\bigextent{\bm{x}}{m}{a.\lam{b}{\smpair{a}{b}}}{c.\lam{d}{\smpair{c}{d}}}{a.c.p.\blam{\bm{x}}{T_{a,c,p}}}}
  \end{align*}
  Second, define $Q_\smbasel,Q_\smbaser \in G$ as follows.
  \begin{align*}
    Q_\smbasel &\eqdef \gel{\bm{x}}{\smbasel}{\smbasel}{\dlam{\_}{\smbasel}} &
    Q_\smbaser &\eqdef \gel{\bm{x}}{\smbaser}{\smbaser}{\dlam{\_}{\smbaser}}
  \end{align*}
  Third, define $Q_{\smgluel} \in \picl{n}{\Gr{\bm{x}}{B}{D}{g}}{\Path{G}{Q_\smbasel}{Q_{\smpair}(\gel{\bm{x}}{a_0}{c_0}{f_0})(n)}}$ as follows.
  \begin{align*}
    Q_{\smgluel} &\eqdef \lam{n}{\bigextent{\bm{x}}{n}{b.\dlam{z}{\smgluel{z}{b}}}{d.\dlam{z}{\smgluel{z}{d}}}{b.d.q.\dlam{z}{\gel{\bm{x}}{\smgluel{z}{b}}{\smgluel{z}{d}}{\dlam{y}{\dapp{\dapp{\gluelpath{b}{d}{\ungel{q}}}{y}}{z}}}}}}
  \end{align*}
  Likewise, define
  $Q_{\smgluer} \in
  \picl{m}{\Gr{\bm{x}}{A}{C}{f}}{\Path{G}{Q_\smbaser}{Q_{\smpair}(m)(\gel{\bm{x}}{b_0}{d_0}{g_0}}}$ as
  follows.
  \begin{align*}
    Q_{\smgluer} &\eqdef \lam{m}{\bigextent{\bm{x}}{m}{a.\dlam{z}{\smgluer{z}{a}}}{c.\dlam{z}{\smgluer{z}{c}}}{a.c.p.\gel{\bm{x}}{\smgluer{z}{a}}{\smgluer{z}{c}}{\dlam{y}{\dapp{\dapp{\gluerpath{a}{c}{\ungel{p}}}{y}}{z}}}}}
  \end{align*}
  Finally, we assemble the five cases to define $\smgraph{\bm{x}}$, using the $\eta$-principle for the smash
  product to ensure that the function is the identity when $\bm{x} = \bm{0}$ or $\bm{x} = \bm{1}$.
  \[
    \smgraph{\bm{x}} \eqdef
    \lam{g}{\bighcom{G}{0}{1}{\bigsmelim{G}{g}{m.n.Q_{\smpair}mn}{Q_\smbasel}{Q_\smbaser}{n.Q_{\smgluel}n}{m.Q_{\smgluer}m}}{
        \begin{array}{lcl}
          \arraytube{\bm{x}=\bm{0}}{y.\dapp{\smeta{A_*}{B_*}(g)}{y}} \\
          \arraytube{\bm{x}=\bm{1}}{y.\dapp{\smeta{C_*}{D_*}(g)}{y}}
        \end{array}
        }} \qedhere
  \]
\end{proof}

\begin{lemma}
  For any $a : \bool_* \wedge \bool_*$, there is a term $\which{a}$ of the following type.
  \[
    \sigmacl{k}{\bool}{\Path{\bool_* \wedge \bool_*}{a}{\ifb{\bool_* \wedge \bool_*}{k}{\smpair{\true}{\true}}{\smpair{\false}{\false}}}}
  \]
\end{lemma}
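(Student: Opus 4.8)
The plan is to define $\which{a}$ by induction on $a$ using the smash eliminator $\smelim$, with the dependent motive $c \mapsto \sigmacl{k}{\bool}{\Path{\bool_* \wedge \bool_*}{c}{\ifb{\bool_* \wedge \bool_*}{k}{\smpair{\true}{\true}}{\smpair{\false}{\false}}}}$, taking $\bool_*$ to be $\bool$ pointed at $\true$ (the other convention is mirror-symmetric). The crucial bookkeeping fact is that $\ifb{\_.\bool}{a}{\true}{\true} \eq \true$ and $\ifb{\_.\bool}{\true}{\true}{b} \eq \true$ hold as judgmental equalities for $\oft{a,b}{\bool}$: any closing instance of $a$ or $b$ evaluates to $\true$ or $\false$, and in either case the eliminator reduces to $\true$. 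Consequently, if I take the first component of the output to be $\ifb{\_.\bool}{a}{\true}{b}$ on $\smpair{a}{b}$ and $\true$ on $\smbasel$ and $\smbaser$, this component reduces to $\true$ on both glue faces, so the $\smgluel$ and $\smgluer$ cases only need to supply fillers for the path component, whose target type there reduces to $\Path{\bool_* \wedge \bool_*}{\smgluel{x}{b}}{\smpair{\true}{\true}}$ (resp.\ with $\smgluer$).

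For the point constructors I give explicit pairs. On $\smbasel$ take $\pair{\true}{\dlam{x}{\smgluel{x}{\true}}}$, using $\smgluel{0}{\true} \eq \smbasel$ and $\smgluel{1}{\true} \eq \smpair{\true}{\true}$; dually $\pair{\true}{\dlam{x}{\smgluer{x}{\true}}}$ on $\smbaser$. On $\smpair{a}{b}$ I case on $a$ then $b$: for $\smpair{\true}{\true}$ and $\smpair{\false}{\false}$ the target reduces to $\smpair{\true}{\true}$ resp.\ $\smpair{\false}{\false}$ and the path is reflexivity; for $\smpair{\true}{\false}$ the target reduces to $\smpair{\true}{\true}$, and the path is the concatenation of the reverse of $\dlam{x}{\smgluel{x}{\false}}$ with $\dlam{x}{\smgluel{x}{\true}}$, built as a fixed $\hcom$ (e.g.\ via the $\concinv$ gadget), and symmetrically for $\smpair{\false}{\true}$ using the meridians $\smgluer$.

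For $\smgluel^x$ with $b : \bool$ I must produce a term in $x$ whose restriction at $x=0$ is the $\smbasel$-path $\dlam{x'}{\smgluel{x'}{\true}}$ and at $x=1$ is the $\smpair{\true}{b}$-path. When $b \eq \true$ the latter is reflexivity at $\smpair{\true}{\true}$, and the required square is exactly a $\vee$-connection, namely $\dlam{x'}{\cnxor{\bool_* \wedge \bool_*}{x}{x'}{x''.\smgluel{x''}{\true}}}$, whose $\cnxor$-equations give precisely these two faces. When $b \eq \false$ the $x=1$ face is the composite chosen for $\smpair{\true}{\false}$, and the needed filler is the standard $2$-dimensional $\hcom$ witnessing that composite (faces $x' \mapsto \smgluel{x'}{\false}$, $\dlam{x'}{\smgluel{x'}{\true}}$, the composite itself, and constant $\smpair{\true}{\true}$). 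The $\smgluer^x$ case is symmetric, using $\smgluer$ and the $\smpair{\false}{\true}$ composite. The main obstacle is exactly this pair of glue cases: the concatenation and connection terms in the $\smpair$ cases must be picked so that they agree \emph{definitionally} with the faces dictated by $\smgluel^x$ and $\smgluer^x$, which means committing to one coherent system of $\hcom$s (as with $\concinv$ and $\cnxor$) rather than ad hoc composites; modulo that choice, the typing of the point cases and the reductions of the nested $\ifb$s are routine.
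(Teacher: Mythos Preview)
Your approach is correct but takes a genuinely different route from the paper. The paper simply observes that $\bool_*$ is a unit for the smash product, so $\bool_* \wedge \bool_* \simeq \bool$, and defers to the \redtt\ formalization (\texttt{pointed.smash}) for that equivalence; the lemma then follows by transporting along it. You instead perform a direct smash induction with a dependent motive, handling each constructor by hand.

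Your argument is more self-contained (it does not require the unit lemma for smash) but correspondingly more intricate, particularly in the coherence bookkeeping you identify as the ``main obstacle.'' One point worth flagging: the equality $\ifb{\_.\bool}{a}{\true}{\true} \eq \true$ that you rely on is not among the paper's listed proof-theory rules, but your semantic justification (every closed instance of $a$ evaluates to $\true$ or $\false$, whence the $\ifb$ reduces) is valid in the paper's PER framework, and in fact the same reasoning underlies the strict $\eta$-rule for $\bool$ that the paper already assumes. Note that you will need the analogous equality $\ifb{}{a}{V}{V} \eq V$ at the full $\Sigma$-type (not just at $\bool$) when matching the $x=0$ boundary of the $\smgluer$ case to the $\smbaser$ choice; this holds by the same argument. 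The paper's route via the unit lemma sidesteps all of this boundary matching at the cost of importing an external result.
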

\begin{proof}
  This is a consequence of the fact that $\bool_*$ is a unit for $\wedge$; see
  \citepalias[\texttt{pointed.smash}]{redtt}.
\end{proof}

\begin{lemma}[Workhorse lemma]
  \label{lem:smash-workhorse}
  Write $P \eqdef \picl{X_*,Y_*}{\UPtd}{X \to Y \to X_* \wedge Y_*}$. For any $f : P$, there is a term
  $\workhorse{f}$ of the following type.
  \[
    \sigmacl{k}{\bool}{\Path{P}{f}{\lam{X_*}{\lam{Y_*}{\ifb{X \to Y \to X_* \wedge Y_*}{k}{\lam{\_}{\lam{\_}{\smpair{x_0}{y_0}}}}{\lam{a}{\lam{b}{\smpair{a}{b}}}}}}}}
  \]
\end{lemma}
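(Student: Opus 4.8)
The plan is to probe $f$ on the pointed booleans and then transport the result to arbitrary pointed types by a parametricity argument, using the graph lemma for $\wedge$ (\cref{thm:smash-graph-lemma}) exactly as $\mathsf{get}$ was used in \cref{lem:merid-point-contr}. Write $\bool_*$ for the pointed booleans with basepoint $\true$ (the argument is symmetric under the other convention). For each $X_* : \UPtd$ and $a : X$, let $g_{X,a} : \bool_* \pto X_*$ be the pointed map whose underlying function is $\lam{t}{\ifb{\_.X}{t}{x_0}{a}}$, pointed by $\dlam{\_}{x_0}$; its underlying function sends $\true$ to $x_0$ and $\false$ to $a$, both up to exact equality. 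The whole point is that the graph $\Gr*{\bm{x}}{\bool_*}{X_*}{g_{X,a}}$ is a pointed bridge from $\bool_*$ to $X_*$ along which the non-basepoint element $\false$ ``becomes'' $a$.

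First I would fix the boolean. Applying the previous ($\mathsf{which}$) lemma to $f(\bool_*)(\bool_*)(\false)(\false) : \bool_* \wedge \bool_*$ produces $k : \bool$ together with a path $\rho$ from $f(\bool_*)(\bool_*)(\false)(\false)$ to $\ifb{\bool_* \wedge \bool_*}{k}{\smpair{\true}{\true}}{\smpair{\false}{\false}}$; this $k$ is the first component of $\workhorse{f}$. Next, for arbitrary $X_*, Y_* : \UPtd$, $a : X$, $b : Y$ and a fresh $\bm{x}$, I would form the term
\[
  \ungel{\bm{x}.\smgraph{\bm{x}}\bigl(f(\Gr*{\bm{x}}{\bool_*}{X_*}{g_{X,a}})(\Gr*{\bm{x}}{\bool_*}{Y_*}{g_{Y,b}})(\gel{\bm{x}}{\false}{a}{\dlam{\_}{a}})(\gel{\bm{x}}{\false}{b}{\dlam{\_}{b}})\bigr)},
\]
invoking \cref{thm:smash-graph-lemma} with $A_* = B_* = \bool_*$, $C_* = X_*$, $D_* = Y_*$ and the pointed maps $g_{X,a}, g_{Y,b}$. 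The argument of $\ungel$ has type $\Gr{\bm{x}}{\bool_* \wedge \bool_*}{X_* \wedge Y_*}{g_{X,a} \wedge g_{Y,b}}$, i.e.\ $\Gel{\bm{x}}{\bool_* \wedge \bool_*}{X_* \wedge Y_*}{c.d.\Path{X_* \wedge Y_*}{(g_{X,a} \wedge g_{Y,b})(c)}{d}}$; at $\bm{x} = \bm{0}$ the $\Gr*$-types reduce to $\bool_*$ (\cref{rule:gel-F0,rule:gel-I0}), the $\gel$-terms to $\false$, and $\smgraph{\bm{x}}$ to the identity, so this face is $f(\bool_*)(\bool_*)(\false)(\false)$; at $\bm{x} = \bm{1}$ it is $f(X_*)(Y_*)(a)(b)$ by the dual rules. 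Hence \cref{rule:gel-E} delivers a path
\[
  \pi_{X_*,Y_*,a,b} : \Path{X_* \wedge Y_*}{(g_{X,a} \wedge g_{Y,b})(f(\bool_*)(\bool_*)(\false)(\false))}{f(X_*)(Y_*)(a)(b)}.
\]

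Finally I would case on $k$ using $\rho$. When $k = \true$, $\rho$ is a path to $\smpair{\true}{\true}$; applying the functorial action $g_{X,a} \wedge g_{Y,b}$ and using $(g_{X,a} \wedge g_{Y,b})(\smpair{\true}{\true}) \eq \smpair{x_0}{y_0}$ (the $\smpair$-computation rule plus the endpoints of $g_{X,a},g_{Y,b}$), then concatenating with $\pi_{X_*,Y_*,a,b}$, gives a path from $f(X_*)(Y_*)(a)(b)$ to $\smpair{x_0}{y_0}$ for all arguments; four applications of function extensionality for paths (and $\eta$) then produce a path from $f$ to $\lam{X_*}{\lam{Y_*}{\lam{\_}{\lam{\_}{\smpair{x_0}{y_0}}}}}$. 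When $k = \false$, the identical argument with $(g_{X,a} \wedge g_{Y,b})(\smpair{\false}{\false}) \eq \smpair{a}{b}$ produces a path from $f$ to $\lam{X_*}{\lam{Y_*}{\lam{a}{\lam{b}{\smpair{a}{b}}}}}$. Packaging the branches by $\bool$-elimination on $k$ — with the motive arranged so the target reads off $\ifb{X \to Y \to X_* \wedge Y_*}{k}{\lam{\_}{\lam{\_}{\smpair{x_0}{y_0}}}}{\lam{a}{\lam{b}{\smpair{a}{b}}}}$ under $\lam{X_*}{\lam{Y_*}{-}}$ in each case — gives the second component, and $\workhorse{f} \eqdef \pair{k}{-}$.

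The main obstacle is the boundary bookkeeping in the second step: one must verify that the large bridge under $\ungel$ has $f(\bool_*)(\bool_*)(\false)(\false)$ and $f(X_*)(Y_*)(a)(b)$ as its \emph{exact} $\bm{0}$- and $\bm{1}$-faces, since that is precisely what makes \cref{rule:gel-E} yield $\pi_{X_*,Y_*,a,b}$ with the stated type. This rests on the endpoint reductions of $\Gr*$, $\gel$ and, crucially, the fact that $\smgraph{\bm{x}}$ is \emph{judgmentally} the identity at $\bm{x} = \bm{0}$ and $\bm{x} = \bm{1}$ (which follows from the final $\hcom$-computation step in the proof of \cref{thm:smash-graph-lemma}). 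Everything else is routine cubical path algebra, and no essential use of $\extent$ beyond what is already packaged inside $\smgraph$ and the preceding lemmas is needed.
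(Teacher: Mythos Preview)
Your proposal is correct and follows essentially the same approach as the paper: probe $f$ at $(\bool_*,\bool_*,\false,\false)$ via $\which$ to get $k$, then for arbitrary $X_*,Y_*,a,b$ apply $f$ along the pointed $\Gr*$-bridges induced by the maps $g_{X,a},g_{Y,b}$, push through the smash graph lemma, $\ungel$, and finish by case analysis on $k$. Your write-up is more explicit about the endpoint bookkeeping and the final $\bool$-elimination packaging than the paper's sketch, but the argument is the same.
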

\begin{proof}
  For the first component, we take $\fst{\which{f(\bool_*)(\bool_*)(\false)(\false)}}$. For the second, we go
  by function extensionality. Let $X_*, Y_* : \UPtd$, $a : X$, and $b : Y$ be given. We have a pointed map
  $g^X_* : \bool_* \to X_*$ taking $\true$ to $x_0$ and $\false$ to $a$, likewise $g^Y_* : \bool_* \to Y_*$
  taking $\true$ to $y_0$ and $\false$ to $b$.

  Fix a fresh bridge dimension $\bm{x}$ and define the following pointed $\Gel$ types.
  \begin{align*}
    G^X_* &\eqdef \Gr*{\bm{x}}{\bool_*}{X_*}{g^X_*} &
    G^Y_* &\eqdef \Gr*{\bm{x}}{\bool_*}{Y_*}{g^Y_*}
  \end{align*}
  We apply $f$ at these types, followed by the elements of $G^X$ and $G^Y$ corresponding to $a$ and $b$.
  \[
    f(G^X_*)(G^Y_*)(\gel{\bm{x}}{\false}{a}{\dlam{\_}{a}})(\gel{\bm{x}}{\false}{b}{\dlam{\_}{b}}) \in G^X_* \wedge G^Y_*
  \]
  At $\bm{x} = \bm{0}$, this is $f(\bool_*)(\bool_*)(\false)(\false) : \bool_* \wedge \bool_*$; at
  $\bm{x} = \bm{1}$, it is $fX_*Y_*ab : X_* \wedge Y_*$. By \cref{thm:smash-graph-lemma}, we obtain a term in
  $\Gr{\bm{x}}{\bool_* \wedge \bool_*}{X_* \wedge Y_*}{g^X_* \wedge g^Y_*}$ with the same endpoints. Applying
  $\ungel$, we get a proof that these endpoints are in the graph of $g^X_* \wedge g^Y_*$, i.e., that
  $(g^X_* \wedge g^Y_*)(f(\bool_*)(\bool_*)(\false)(\false))$ is path-equal to $fX_*Y_*ab$.  If $k$ is
  $\true$, this means that $fX_*Y_*ab$ is $\smpair{x_0}{y_0}$; if $k$ is false, that $fX_*Y_*ab$ is
  $\smpair{a}{b}$.
\end{proof}

This concludes the part of the argument that uses internal parametricity directly, i.e., mentions bridge
variables. The remainder of the proof can be conducted in ordinary cubical type theory by assuming
\cref{lem:smash-workhorse} as an axiom; we have done so in \citepalias[\texttt{cool/parametric-smash}]{redtt}.

\begin{corollary}
  \label{cor:smash-workhorse-set}
  The type $P$ defined in \cref{lem:smash-workhorse} is a set.
\end{corollary}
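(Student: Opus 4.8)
The plan is to exhibit $P$ as a retract of $\bool$ and conclude via the closure of $0$-types under retracts. From \cref{lem:smash-workhorse} I would extract two maps: the projection $u \eqdef \lam{f}{\fst{\workhorse{f}}} : P \to \bool$, and the map $v : \bool \to P$ that sends a boolean to the corresponding polymorphic function appearing in the statement of that lemma---$v(\true)$ being the constant map $\lam{X_*}{\lam{Y_*}{\lam{\_}{\lam{\_}{\smpair{x_0}{y_0}}}}}$ and $v(\false)$ the pairing map $\lam{X_*}{\lam{Y_*}{\lam{a}{\lam{b}{\smpair{a}{b}}}}}$. The second projection of $\workhorse{f}$ is precisely a path from $f$ to $v(u(f))$; reversing it for each $f$ packages $u$, $v$, and these paths into a term of $\Retract{P}{\bool}$.

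Next I would note that $\bool$ is a set: as recalled at the start of \cref{sec:examples:bool}, the path type $\Path{\bool}{b}{b'}$ is contractible when $b$ and $b'$ are the same constructor and empty otherwise, hence in every case a proposition, which is exactly the statement that $\bool$ is a $0$-type. Finally, since $0$-types are closed under retracts \citepalias[\S7.1]{hott-book}, the retraction constructed above shows $P$ is a set.

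I do not anticipate a substantive obstacle: \cref{lem:smash-workhorse} has already done all the parametricity work, and what remains is bookkeeping---assembling the $\Retract$ data with the orientation its definition demands, and invoking the standard fact that a retract of a set is a set. The only mild subtlety is that \cref{lem:smash-workhorse} yields $v \circ u \sim \id_P$ (every $f$ is path-equal to $v(u(f))$) rather than an equivalence $P \simeq \bool$, but a one-sided retract is all that is needed to transport setness.
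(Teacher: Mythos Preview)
Your proposal is correct and follows the same approach as the paper: the paper's proof is simply the two-line observation that \cref{lem:smash-workhorse} exhibits $P$ as a retract of $\bool$, and that any retract of a set is a set \citepalias[Theorem 7.1.4]{hott-book}. You have just spelled out explicitly the retract data $(u,v)$ and the r\^ole of the second projection of $\workhorse{f}$, which the paper leaves implicit.
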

\begin{proof}
  The lemma shows that $P$ is a retract of $\bool$; any retract of a set is a set \citepalias[Theorem
  7.1.4]{hott-book}.
\end{proof}

Finally, we prove the main theorem. The central idea is that the behavior of a given
$f_* : \picl{X_*,Y_*}{\UPtd}{X_* \wedge_* Y_* \to_* X_* \wedge_* Y_*}$ on each constructor, as well as its
basepoint-preservation path, can be cast as an element of or path in the type $P$ we have already
characterized.

\begin{theorem}
  For any $f_* : \picl{X_*,Y_*}{\UPtd}{X_* \wedge_* Y_* \to_* X_* \wedge_* Y_*}$, there is a term of the
  following type.
  \[
    \sigmacl{k}{\bool}{\Path{P}{f}{\lam{X_*}{\lam{Y_*}{\ifb{X_* \wedge Y_* \to_* X_* \wedge Y_*}{k}{\pair{\lam{\_}{\smpair{x_0}{y_0}}}{\dlam{\_}{\smpair{x_0}{y_0}}}}{\pair{\lam{s}{s}}{\dlam{\_}{\smpair{x_0}{y_0}}}}}}}}
  \]
\end{theorem}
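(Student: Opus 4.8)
The plan is to derive everything from \cref{lem:smash-workhorse} and then upgrade, by nested function extensionality for paths and smash induction, to a full path between polymorphic pointed maps, in the style of the proofs of \cref{thm:smash-graph-lemma} and the $\eta$-lemma for $\smelim$. Given $f_* : \picl{X_*,Y_*}{\UPtd}{X_* \wedge_* Y_* \to_* X_* \wedge_* Y_*}$, write $F \eqdef \fst{f_*X_*Y_*}$ for the underlying map at each pair of pointed types. First I would extract the \emph{core} of $f_*$, its behaviour on the point constructor:
\[
  g_f \eqdef \lam{X_*}{\lam{Y_*}{\lam{a}{\lam{b}{F(\smpair{a}{b})}}}} \in P .
\]
Applying \cref{lem:smash-workhorse} to $g_f$ yields a Boolean $k \eqdef \fst{\workhorse{g_f}}$ together with a path $\rho$ from $g_f$ to the standard core $h_k$, where $h_{\true}$ is the constant map at $\smpair{x_0}{y_0}$ and $h_{\false}$ is $\lam{a}{\lam{b}{\smpair{a}{b}}}$. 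The target of the theorem is the polymorphic pointed map $\mathsf{std}_k$, namely the constant map when $k = \true$ and the identity when $k = \false$; one checks directly that its core is exactly $h_k$, and that this $k$ is the one appearing in the $\ifb{}{k}{-}{-}$ of the statement. It then remains to promote $\rho$ to a path from $f_*$ to $\mathsf{std}_k$.

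By function extensionality for paths it suffices to fix $X_*,Y_*$ and produce a pointed homotopy between $F$ and $\fst{\mathsf{std}_k X_* Y_*}$, i.e. a homotopy of underlying maps that is compatible with the two basepoint-preservation paths. By function extensionality once more, the underlying homotopy is a family of paths indexed by $c : X_* \wedge Y_*$, which I would build by smash induction. On $\smpair{a}{b}$ the required path is $\rho$ at $a,b$. On $\smbasel$ the required path is obtained by composing the image under $F$ of the glue path $\smgluel{\bullet}{y_0}$ (which connects $F(\smbasel)$ to $F(\smpair{x_0}{y_0}) = g_f X_* Y_* x_0 y_0$) with $\rho$ at $x_0,y_0$ (using $h_k x_0 y_0 = \smpair{x_0}{y_0}$ for either $k$), and then a case split on $k$ decides whether the glue path $\smgluel{\bullet}{y_0}$ must also be concatenated on the $\mathsf{std}_k$ side; the $\smbaser$ case is symmetric using $\smgluer{\bullet}{x_0}$. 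The basepoint-preservation coherence is handled the same way, since both pointedness paths are again expressible through $\rho$ at $x_0,y_0$.

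The genuinely two-dimensional obligations are the $\smgluel$ and $\smgluer$ cases of the smash induction: squares in $X_* \wedge Y_*$ whose boundary consists of the $F$- and $\mathsf{std}_k$-images of a glue path together with the $\smbasel$- (resp.\ $\smbaser$-) and $\smpair$-level paths constructed above. This is the main obstacle, and it cannot be dispatched by set-level reasoning about $X_* \wedge Y_*$, which is an arbitrary type. The resolution is to push these squares back onto $P$: after transporting the smash recursor along $\rho$, each such square is determined by data living over $P$, and $P$ is a set by \cref{cor:smash-workhorse-set}, so the type of fillers is contractible and any two filling strategies agree. Concretely I would assemble the whole homotopy as a single $\hcom$ over a smash eliminator, in the style of \cref{thm:smash-graph-lemma}, using $\vee$-connections and concatenation by inverse to fill the glue squares and invoking set-ness of $P$ to discharge the remaining coherences — precisely the bookkeeping already carried out in \citepalias[\texttt{cool/parametric-smash}]{redtt}. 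Thus, modulo that routine cubical computation, the theorem follows from \cref{lem:smash-workhorse} with $k \eqdef \fst{\workhorse{g_f}}$.
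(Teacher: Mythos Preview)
Your proposal is correct and uses the same two ingredients as the paper: the workhorse lemma for the $\smpair$ core, and set-ness of $P$ (\cref{cor:smash-workhorse-set}) for the remaining coherences. The decomposition, however, differs. You fix $X_*,Y_*$ first and run smash induction on $c : X_* \wedge Y_*$, leaving yourself with squares in $X_* \wedge Y_*$ that you then ``push back onto $P$''. The paper instead never fixes $X_*,Y_*$: it directly reformulates each piece of the polymorphic data as an element or a path in $P$. Concretely, the behavior on $\smgluel$ is packaged as $\dlam{x}{\lam{X_*}{\lam{Y_*}{\lam{a}{\lam{b}{fX_*Y_*(\smgluel{x}{b})}}}}}$, which is literally a path in $P$ and hence determined by its endpoints; the $\smgluer$ case and the basepoint-preservation path $f_0$ are handled identically. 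This buys a cleaner argument with no explicit square-filling or case split on $k$ at the $\smbasel/\smbaser$ stage, whereas your organization requires you to temporarily unfix $X_*,Y_*$ each time you want to invoke set-ness of $P$, which is exactly what your ``push back'' step amounts to. Both routes work; the paper's is shorter because it stays at the polymorphic level throughout.
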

\begin{proof}
  For the first component, we take $\fst{\which{f(\bool_*)(\bool_*)(\false)(\false)}}$. For the second, we go
  by function extensionality. Let $X_*,Y_* : \UPtd$ be given. Given $X_*,Y_*$, we write $fX_*Y_*$ for the
  function underlying $f_*X_*Y_*$.

  Write $P \eqdef \picl{X_*,Y_*}{\UPtd}{X \to Y \to X_* \wedge Y_*}$ as in \cref{lem:smash-workhorse}. First,
  we isolate the behavior of $f_*$ on the $\smpair$ constructor:
  $\lam{X_*}{\lam{Y_*}{\lam{a}{\lam{b}{fX_*Y_*(\smpair{a}{b})}}}} : P$. By \cref{lem:smash-workhorse}, this
  is one of two functions. We aim to show that this is the only degree of freedom available to $f_*$.

  The values of $f$ on the $\smbasel$ and $\smbaser$ constructors are uniquely determined up to a path by the
  fact that $f_*X_*Y_*$ is basepoint-preserving, as $\smbasel$ and $\smbaser$ are connected to the basepoint
  of $X_* \wedge_* Y_*$ by $\smgluel{-}{y_0}$ and $\smgluer{-}{x_0}$ respectively.

  For $\smgluel$, we consider the term
  $H \eqdef \dlam{x}{\lam{X_*}{\lam{Y_*}{\lam{a}{\lam{b}{fX_*Y_*(\smgluel{x}{b})}}}}}$, which is a path in $P$
  from $\lam{X_*}{\lam{Y_*}{\lam{a}{\lam{b}{fX_*Y_*(\smbasel)}}}}$ to
  $\lam{X_*}{\lam{Y_*}{\lam{a}{\lam{b}{fX_*Y_*(\smpair{x_0}{b})}}}}$. By \cref{cor:smash-workhorse-set}, we
  know that the paths types of $P$ are all propositions, so $H$ is uniquely determined up to a path. So, then,
  is the behavior of $f_*$ on $\smgluel$ terms. The same applies to $\smgluer$.

  Finally, write
  $f_0 : \picl{X_*,Y_*}{\UPtd}{\Path{X_* \wedge Y_*}{fX_*Y_*(\smpair{x_0}{y_0})}{\smpair{x_0}{y_0}}}$ for the
  proof that $f$ preserves the basepoint of $X_* \wedge_* Y_*$. As with $\smgluel$, we prove that $f_0$ is
  uniquely determined by reformulating it as a path in $P$, namely the path
  $\lam{x}{\lam{X_*}{\lam{Y_*}{\lam{a}{\lam{b}{\dapp{f_0X_*Y_*}{x}}}}}}$ connecting
  $\lam{X_*}{\lam{Y_*}{\lam{a}{\lam{b}{fX_*Y_*(\smpair{x_0}{y_0})}}}}$ to
  $\lam{X_*}{\lam{Y_*}{\lam{a}{\lam{b}{\smpair{x_0}{y_0}}}}}$.
\end{proof}

%%% Local Variables:
%%% mode: latex
%%% TeX-master: "main"
%%% End:

\bibliographystyle{plainnat}
\bibliography{main}

\end{document}

%%% Local Variables:
%%% mode: latex
%%% TeX-master: t
%%% End: